\newtheorem{theorem}{Theorem}[section]
\newtheorem{proposition}{Proposition}[section]
\newtheorem{lemma}{Lemma}[section]
\theoremstyle{definition}
\newtheorem{definition}{Definition}[section]
\newcommand{\preface}{{\noindent \bfseries \Large Preface}\vskip2cm}
\newcommand{\abstract}[1]{{\noindent \bfseries Abstract.} #1}
\newcommand{\Hc}{{\mathcal H}}
\newcommand{\NN}{{\mathbb N}}
\newcommand{\RR}{{\mathbb R}}
\DeclareMathAlphabet{\mathbfsf}{\encodingdefault}{\sfdefault}{bx}{n}
\DeclareBoldMathCommand\Fb{F}
\DeclareBoldMathCommand\Mb{M}
\DeclareBoldMathCommand\Nb{N}
\DeclareBoldMathCommand\Pb{P}
\DeclareBoldMathCommand\Ob{O}
\DeclareBoldMathCommand\rb{R}
\DeclareBoldMathCommand\ab{a}
\DeclareBoldMathCommand\bb{b}
\DeclareBoldMathCommand\cb{c}
\DeclareBoldMathCommand\eb{e}
\DeclareBoldMathCommand\ib{i}
\DeclareBoldMathCommand\jb{j}
\DeclareBoldMathCommand\kb{k}
\DeclareBoldMathCommand\pb{p}
\DeclareBoldMathCommand\rb{r}
\DeclareBoldMathCommand\ub{u}
\DeclareBoldMathCommand\vb{v}
\DeclareBoldMathCommand\xb{x}
\newcommand*\rel@kern[1]{\kern#1\dimexpr\macc@kerna}
\newcommand*\widebar[1]{%
  \begingroup
  \def\mathaccent##1##2{%
    \rel@kern{0.8}%
    \overline{\rel@kern{-0.8}\macc@nucleus\rel@kern{0.2}}%
    \rel@kern{-0.2}%
  }%
  \macc@depth\@ne
  \let\math@bgroup\@empty \let\math@egroup\macc@set@skewchar
  \mathsurround\z@ \frozen@everymath{\mathgroup\macc@group\relax}%
  \macc@set@skewchar\relax
  \let\mathaccentV\macc@nested@a
  \macc@nested@a\relax111{#1}%
  \endgroup
}
\def\WF{{\mathrm{WF}}}
\def\cA{{\ca A}}
\def\cD{{\ca D}}
\def\cE{{\ca E}}
\def\cF{{\ca F}}
\newcommand\cG{{\ca G}}
\def\cH{{\ca H}}
\def\cI{{\ca I}}
\def\cL{{\ca L}}
\def\cO{{\ca O}}
\def\cP{{\ca P}}
\def\cV{{\ca V}}
\def\cW{{\ca W}}
\def\bC{{\mathbb C}}           %%%  complex numbers and so on
\def\bI{{\mathbb I}}
\def\bN{{\mathbb N}}
\def\bM{{\mathbb M}}
\def\NN{{\mathbb N}}
\def\bR{{\mathbb R}}
\def\bS{{\mathbb S}}
\def\RR{{\mathbb R}}
\def\gA{{\mathfrak A}}       %%% Ghotic
\def\gC{{\mathfrak C}}
\def\gG{{\mathfrak G}}
\def\beq{\begin{eqnarray}}
\def\eeq{\end{eqnarray}}
\def\pa{\partial}
\newcommand{\ca}[1]{{\cal #1}}         %%  calligraphic
\newcommand{\supp}{\mathrm{supp}\;}
\newcommand{\id}{\mathrm{id}}
\newcommand{\obj}{\mathrm{obj}}
\newcommand{\Tr}{\mathrm{Tr}}
\newcommand{\Sol}{\mathrm{Sol}}
\newcommand{\Solsc}{\mathrm{Sol}_\mathrm{sc}}
\newcommand{\kerc}{\text{Ker}_0}
\newcommand{\wick}[1]{:\!{#1}\!:}
\newcommand{\Gg}{{\mathfrak G}}
\newcommand{\Se}{\Gamma}
\newcommand{\Sec}{\Gamma_0}
\newcommand{\Sesc}{\Gamma_\mathrm{sc}}
\newcommand{\Setc}{\Gamma_\mathrm{tc}}
\newcommand{\Seinf}{\Gamma_\infty}
\newcommand{\ip}[1]{\left\langle{#1}\right\rangle}
\newcommand{\ipp}[1]{\left\langle\left\langle{#1}\right\rangle\right\rangle}
\newcommand{\vol}{d_g x\;}
\newcommand\cGsc{\cG_\mathrm{sc}}
\newcommand\cGscc{\cG_{\mathrm{sc},0}}
\newcommand{\Freg}{\cF_\mathrm{reg}}
\newcommand{\Floc}{\cF_\mathrm{loc}}
\newcommand{\Fmuc}{\cF_{\mu\mathrm{c}}}
\newcommand{\Fmucsol}{\cF_{\mu\mathrm{c},\mathrm{sol}}}
\newcommand{\Fregsol}{\cF_{\mathrm{reg},\mathrm{sol}}}
\newcommand{\sym}{{\mathrm{sym}}}
\newcommand{\mat}{{\mathrm{mat}}}
\newcommand{\rad}{{\mathrm{rad}}}
\newcommand{\EL}{{\mathrm{EL}}}
\newcommand{\wSol}{{\widebar{\mathrm{Sol}}}}
\newcommand{\wSolsc}{{\widebar{\mathrm{Sol}}_\mathrm{sc}}}
\newcommand\wcG{{\widebar{\cG}}}
\newcommand\wcGsc{{\widebar{\cG}_\mathrm{sc}}}
\newcommand\wcGscc{{\widebar{\cG}_{\mathrm{sc},0}}}
\begin{document}

\author{Thomas--Paul Hack}
\title{Cosmological Applications of Algebraic Quantum
Field Theory in Curved Spacetimes}
%\subtitle{-- Monograph --}
\maketitle

\frontmatter%%%%%%%%%%%%%%%%%%%%%%%%%%%%%%%%%%%%%%%%%%%%%%%%%%%%%%

%\include{dedic}
%\include{foreword}
%%%%%%%%%%%%%%%%%%%%%%preface.tex%%%%%%%%%%%%%%%%%%%%%%%%%%%%%%%%%%%%%%%%%
% sample preface
%
% Use this file as a template for your own input.
%
%%%%%%%%%%%%%%%%%%%%%%%% Springer %%%%%%%%%%%%%%%%%%%%%%%%%%

\preface

%% Please write your preface here
This monograph provides a largely self--contained and broadly accessible exposition of two cosmological applications of algebraic quantum field theory (QFT) in curved spacetime: a fundamental analysis of the cosmological evolution according to the Standard Model of Cosmology and a fundamental study of the perturbations in Inflation. The two central sections of the book dealing with these applications are preceded by sections containing a pedagogical introduction to the subject as well as introductory material on the construction of linear QFTs on general curved spacetimes with and without gauge symmetry in the algebraic approach, physically meaningful quantum states on general curved spacetimes, and the backreaction of quantum fields in curved spacetimes via the semiclassical Einstein equation. The target reader should have a basic understanding of General Relativity and QFT on Minkowski spacetime, but does not need to have a background in QFT on curved spacetimes or the algebraic approach to QFT. In particular, I took a great deal of care to provide a thorough motivation for all concepts of algebraic QFT touched upon in this monograph, as they partly may seem rather abstract at first glance. Thus, it is my hope that this work can help non--experts to make `first contact' with the algebraic approach to QFT.

I would like to thank my colleagues and friends Claudio Dappiaggi, Klaus Fredenhagen, Valter Moretti,  Nicola Pinamonti	and Alexander Schenkel, among others, for their past and ongoing support and the fruitful collaborations on some of the topics covered in this monograph. Special thanks are due to Jan M\"oller for the persistent encouragement to apply algebraic quantum field theory to cosmology. I would also like to thank Aldo Rampioni and Kirsten Theunissen at Springer for their patient collaboration on the realisation of this monograph.

The research reported on in this monograph has been supported by the Hamburg research cluster LEXI `Connecting Particles with the Cosmos' as well as a research fellowship of the Deutsche Forschungsgemeinschaft (DFG).

\vspace{\baselineskip}
\begin{flushright}\noindent
June 2016,\hfill {\it Thomas Hack}\\
\end{flushright}

\tableofcontents

\mainmatter%%%%%%%%%%%%%%%%%%%%%%%%%%%%%%%%%%%%%%%%%%%%%%%%%%%%%%%

\chapter{Introduction}
\label{chap:intro}
% use \chaptermark{}
% to alter or adjust the chapter heading in the running head

\abstract{In this chapter we give a pedagogical introduction to algebraic quantum field theory and explain the concepts and the relevance of this framework in quantum field theory on curved spacetimes. This introduction should serve as a guide for the next chapter, where many concepts of and constructions in algebraic quantum field theory on curved spacetimes are reviewed in detail. Afterwards, we give a non--technical overview of the cosmological applications discussed in the final chapter of this monograph.}

\section{A Pedagogical Introduction to Algebraic Quantum Field Theory on Curved Spacetimes}
\label{sec:pedagogical_aqft}

{\em Algebraic Quantum Field Theory} (AQFT) \cite{chap1_Haag:1992hx} is a framework which focusses on the local and algebraic properties of QFT and thus aims for understanding structural properties of relativistic quantum field theory from first principles and in a model--independent fashion. In standard textbook treatments of QFT in Minkowski spacetime, the formalism of QFT is developed by constructing operators and deriving relations based on the vacuum state and the associated Hilbert space. However, this approach is not directly generalisable to curved spacetimes as we shall explain now. 

To this avail, we consider a quantized Hermitean scalar field $\phi(x)$ and assume that it can be decomposed in two different ways as
\beq\label{eq_fockdecomp}
\phi(x)=\sum\limits_{i} A_i(x) a_i +  \overline{A_i}(x)a^\dagger_i=\sum\limits_{i} B_i(x) b_i +  \overline{B_i}(x)b^\dagger_i
\eeq
where $a^\dagger_i, a_i$ and $b^\dagger_i, b_i$ are two sets of creation and annihilation operators with corresponding modes $A_i(x)$, $B_i(x)$ and vacua $|\Omega_a\rangle$, $|\Omega_b\rangle$
$$a_i|\Omega_a\rangle = 0\,,\qquad b_i|\Omega_b\rangle = 0\,.$$
The two sets of creating and annihilation operators are related by a Bogoliubov transformation
$$
b_i = \alpha_i a_i + \beta_i a^\dagger_i\,. 
$$
Mathematically the two possible decompositions of $\phi(x)$ seem to be equivalent, whereas physically we would ask the question: Which of the two decompositions of $\phi(x)$ is better, or, alternatively, is there a preferred way to decompose $\phi(x)$? In Minkowski spacetime, we have Poincar\'e symmetry at our disposal, in particular Minkowski spacetime is time--translation invariant and we can construct a Hamilton operator $H$ and obtain a related notion of `energy'. If $H|\Omega_a\rangle=0$, but $H|\Omega_b\rangle\neq 0$, we would call $|\Omega_a\rangle$ the {\em ground state} or {\em vacuum state} and choose to work with the decomposition of $\phi(x)$ in terms of $a^\dagger_i, a_i$. In other words, we would consider -- i.e. {\em represent} -- $\phi(x)$ as an operator in the Fock space of the vacuum state. In curved spacetimes, these ideas fail in general because generic curved spacetimes are not time--translation invariant; prominent examples of such backgrounds are cosmological spacetimes with a metric line element
\beq\label{eq_comsosimple}
ds^2 =-dt^2 + a(t)^2 d\vec{x}^2\,,
\eeq
where the function $a(t)$ is non--constant in time. In the absence of time--translation invariance, no meaningful notion of a Hamilton operator or `energy' exists, and thus we have no means to select or define a vacuum state. Even under these circumstances, one might still think that various possible decompositions of the form \eqref{eq_fockdecomp} are in a sense equivalent, so that it does not really matter which of these one chooses to work with. However, in general one is facing the additional problem that
$$N_{ba}\doteq \sum\limits_i\langle \Omega_a| b_i^\dagger b_i|\Omega_a\rangle = \infty\,,$$
i.e. the `$a$--vacuum' contains infinitely many `$b$--particles', which in mathematical terms implies that $|\Omega_a\rangle$ and $|\Omega_b\rangle$ can not lie in the same Hilbert space. An example of this situation can be constructed by considering a cosmological spacetime of the form \eqref{eq_comsosimple} with $a(t)=\tanh(c t)$ where $c$ is a constant with dimension of inverse time. The asymptotic regions of this spacetime for $t\to\pm\infty$ are time--translation invariant and one can define corresponding asymptotic vacua $|\Omega_\pm\rangle$. One may then compute that e.g. the `$+$'--vacuum contains infinitely many `$-$'--particles which can be physically interpreted by saying that the expansion of space encoded in the functional form of $a(t)$ creates infinitely many particles. This occurs because the quantum field $\phi(x)$ has infinitely many degrees of freedom, which are all excited by the expansion. Consequently, the second equality sign in \eqref{eq_fockdecomp} is in general purely heuristic because the two decompositions of $\phi(x)$ listed there are in general not related in a physically and mathematically meaningful way.

From the above discussion we can infer that in the context of curved spacetimes the very notion of `particle' is strictly speaking meaningless, because it relies on a preferred choice of vacuum state, which in general does not exist. Consequently, one should not regard particles as a fundamental concept in quantum field theory, but at most as a derived concept which is meaningful in an approximative sense if the time scales on which the spacetime background is changing -- or more general, the spacetime curvature scales -- are large compared to the scales relevant for the physical processes we would like to discuss. On more conceptual grounds, we see that a Hilbert space can not be a primary object in the general construction of QFT models. 

Consequently, in algebraic quantum field theory one aims for constructing a QFT in a purely algebraic fashion without recourse to a Hilbert space. In fact, one first constructs an algebra which contains all observables of the theory and encodes their algebraic relations such as commutation relations and equations of motion. Hilbert spaces then appear in a second step as the spaces on which the algebra of observables can be represented. Different physical situations, e.g. states with a different temperature on Minkowski spacetime, are mathematically modelled by in general inequivalent representations. Thus, the algebraic approach to QFT has the advantage that it enables us to discuss the physical properties of a physical system and the physical properties of a state of this system separately. Moreover, it allows us to treat {\em all} states of a physical system described by a QFT at once and in a coherent fashion.

To understand the essential concepts of algebraic quantum field theory in curved spacetimes, it is advisable to investigate the easiest field model, the free Hermitean scalar field. We thus briefly sketch how a Klein--Gordon field $\phi$ propagating on a curved spacetime is treated in the algebraic framework. All concepts and notions we touch upon in the following will be explained in detail in the next chapter of this monograph. 

The first fundamental algebraic property of $\phi$ is the Klein--Gordon equation
$$P\phi\doteq(-\Box+m^2+\xi R)\phi=0\,,$$
where $\Box=\nabla_\mu \nabla^\mu$ is the d'Alembert operator, $m$ is the mass, $R$ is the Ricci curvature scalar and $\xi$ quantifies a non--minimal coupling of $\phi$ to the scalar curvature. $P$ is a {\em (normally) hyperbolic differential operator}\index{normally hyperbolic differential operator}, which means that classical solutions of the Klein--Gordon equation {\em exist} if we prescribe initial data, i.e. the value of $\phi(x)$ and its time--derivative $\partial_t \phi(x)$ at a fixed time $t$. Moreover, the solutions for given initial data are {\em unique} and the value of a solution $\phi$ at a point $x$ depends only on the initial data in the past (or future) lightcone of $x$. Consequently, physical systems described by hyperbolic equations propagate in a {\em causal} and {\em predictive} fashion. In fact, all these statements are not correct on general spacetimes, but they hold on {\em globally hyperbolic spacetimes}\index{globally hyperbolic spacetime}. These are spacetimes which are of the form $(M,g)$, where the spacetime manifold $M$ may be decomposed as $\bR\times\Sigma$, with $\bR$ corresponding to `time' and the manifold $\Sigma$ corresponding to `space', and where the causal (i.e. lightcone) structure specified by the metric $g$ is such that the worldline of any physical observer, i.e. any inextendible timelike curve, hits an `equal--time surface' $\{t_0\}\times \Sigma$ exactly once. Minkowski spacetime and e.g. cosmological spacetimes are globally hyperbolic.

Being a hyperbolic operator, the Klein--Gordon operator on a globally hyperbolic spacetime has unique advanced and retarded Green's functions $E_R(x,y)$, $E_A(x,y)=E_R(y,x)$ which satisfy $P_x E_R(x,y) = P_x E_A(x,y)=\delta(x,y)$ and are such that $E_R(x,y)$ ($E_A(x,y)$) vanishes if $x$ is not in the forward (backward) lightcone of $y$. Given these Green's functions, we may construct the antisymmetric {\em causal propagator}\index{causal propagator} $E(x,y)\doteq E_R(x,y)-E_A(x,y)$ which is sometimes also called {\em commutator function}, {\em spectral function} or {\em Pauli--Jordan distribution}. The second fundamental algebraic property of the quantum field $\phi$ are the {\em canonical commutation relations}\index{canonical commutation relations} (CCR)
$$[\phi(x),\phi(y)]=iE(x,y)\,.$$
Clearly, $E(x,y)$ vanishes if $x$ and $y$ are not causally related and thus the CCR encode the physical requirement that causally unrelated observables commute. Given coordinates $(t,\vec{x})\in M=\bR\times\Sigma$ one can show that $\partial_{t_1} E(t_1,\vec{x}_1,t_2,\vec{x}_2)|_{t_1=t_2}=\delta(\vec{x}_1,\vec{x}_2)$ and $E(t_1,\vec{x}_1,t_2,\vec{x}_2)|_{t_1=t_2}=0$. Consequently, the above covariant CCR are equivalent to equal--time CCR
$$[\partial_t\phi(t,\vec{x}_1),\phi(t,\vec{x}_2)]=i\delta(\vec{x}_1,\vec{x}_2)\,,\qquad [\phi(t,\vec{x}_1),\phi(t,\vec{x}_2)]=0\,.$$
$E(x,y)$ is a singular object -- a distribution -- and diverges for $x$ and $y$ which are lightlike related. Consequently, the quantum field $\phi(x)$ is a singular object as well, which is rooted in the fact that it encodes infinitely many degrees of freedom. For this reason one often considers in the algebraic approach to QFT `smeared fields' 
$$
\phi(f)\doteq \langle f,\phi\rangle\doteq \int\limits_M dx\sqrt{|\det g|}\;\phi(x) f(x)\,,
$$
where $f$, called a `test function', is infinitely often differentiable and has compact support in spacetime. Physically, $f$ has to be interpreted as a `weighting function' such that $\phi(f)$ models a `weighted measurement' of the observable $\phi(x)$. The compact and thus bounded support of $f$ in spacetime reflects the physically realistic situation that detectors have a finite spatial size and measurements are performed in a finite time interval.

The basic algebra of observables $\cA(M)$ of the Hermitean scalar field $\phi$ is constructed by considering sums of products of smeared fields $\phi(f)$ where $f$ ranges over all possible test functions. The Klein--Gordon equation is encoded by identifying $\phi(f)$ with $0$ if $f$ is of the form $P h$ with a test function $h$ and the `smeared CCR' read $[\phi(f_1),\phi(f_2)]=iE(f_1,f_2)$, where $E(f_1,f_2)$ is the causal propagator integrated with the test functions $f_1$ and $f_2$. To have a notion of `taking the adjoint', one introduces a $*$-operation specified by e.g. $(\phi(f)\phi(g))^*=\phi(g)^*\phi(f)^*=\phi(\overline{g})\phi(\overline{f})$. A state $\omega$ on $\cA(M)$ is a linear functional $\omega:\cA(M)\to \bC$, which is positive and normalised, namely, $\omega(A^*A)\ge 0$ for all $A\in\cA(M)$, and $\omega(1)=1$. In this context, $\omega(A)$ for $A\in\cA(M)$ has the physical interpretation of being the expectation value of $A$. Given an algebraic state $\omega$, one obtains a canonical representation $\pi_\omega$ of $\cA(M)$ on a Hilbert space $\cH_\omega$ with vacuum $|\Omega_\omega\rangle$ via the so--called {\em GNS construction}\index{GNS construction}, such that e.g. $\omega(\phi(f))=\langle\Omega_\omega|\pi_\omega(\phi(f))|\Omega_\omega\rangle$. In particular, the algebraic positivity condition $\omega(A^*A)\ge 0$ ensures that the Hilbert space vector $\pi_\omega(A)|\Omega_\omega\rangle$ has positive norm for all $A$ and the normalisation condition $\omega(1)=1$ ensures that $|\Omega_\omega\rangle$ has unit norm.  Conversely, given a Hilbert space $\cH$ with the Klein--Gordon field realised as an operator (valued distribution) on $\cH$, the algebra constituted by these operators together with a normalised Hilbert space state are naturally a field algebra and a state in the abstract sense. Finally, an algebraic state $\omega$ on $\cA(M)$ is uniquely determined, once all its {\em $n$--point correlation functions}\index{correlation functions} $\omega_n(x_1,\cdots, x_n)=\omega(\phi(x_1)\cdots\phi(x_n))$ are known. 

The algebra $\cA(M)$ may be interpreted as the canonical quantization of a space of classical observables with a {\em Poisson bracket} defined by the causal propagator $E$. It is convenient to discuss physical properties such as e.g. gauge--invariance on the level of this classical space before passing to the quantized algebra $\cA(M)$, because -- in the simple case of free field theories -- many such properties of the classical theory automatically carry over to the quantum theory.

The algebra $\cA(M)$ contains only products of quantum fields at {\em different} points, in particular it does not contain quantized versions of the products of scalar fields at the same point such as $\phi(x)^2$. The naive definition of $\phi(x)^2$ as $\lim_{x\to y} \phi(x)\phi(y)$ leads to divergences because the two--point correlation function $\omega_2(x,y)=\omega(\phi(x)\phi(y))$ of any quantum state $\omega$ is singular for $x=y$; this essentially follows from the singularities of the causal propagator $E(x,y)$ and the fact that for every state $\omega$, $\omega_2(x,y)-\omega_2(y,x)=iE(x,y)$ must hold on account of the canonical commutation relations. The singularity of $\omega_2(x,y)$ for $x=y$ is nothing but the `tadpole singularity' well--known from perturbative QFT in Minkowski spacetime. A way to cure this singularity in Minkowski spacetime is to decompose the quantum field $\phi(x)$ represented on the Hilbert space corresponding to a state $\omega$ into creation and annihilation operators and to define a {\em normal ordered} $\wick{\phi(x)^2}_\omega$ by `normal ordering' the creation and annihilation operators in the naive square $\phi(x)^2$. One the algebraic level, this is equivalent to defining
\beq\label{eq_wickintro}
\wick{\phi(x)^2}_\omega \;\doteq\; \lim_{x\to y}\left(\phi(x)\phi(y)-\omega_2(x,y)\right).
\eeq
If $\omega$ is the vacuum state on Minkowski spacetime, then one finds that products of $\wick{\phi(x)^2}_\omega$ at different points are well--defined and may be computed by the {\em Wick theorem} which implies e.g.
$$
\omega\left(\wick{\phi(x)^2}_\omega\;\wick{\phi(y)^2}_\omega\right) = 2\omega_2(x,y)^2\,.
$$
Consequently, normal--ordered quantities form an algebra themselves. This observation relies heavily on the UV--regularity properties of the Minkowski vacuum state, i.e. loosely speaking products of $\wick{\phi(x)^2}_\omega$ at different points are well--defined because the two--point correlation function of the Minkowski vacuum is `singular but not too singular'. Mathematically, one has that the expression $\omega_2(x,y)^2$ is a well--defined distribution such that integrating it with any pair of test functions $f_1$, $f_2$ gives a finite result. If we consider for example the massless case, then the two--point correlation function of the Minkowski vacuum has the form 
$$\omega_2(x,y)=\frac{1}{4\pi^2}\frac{1}{(x-y)^2}\,.$$
In order to mimic the procedure of normal ordering in Minkowski spacetime also in curved spacetimes, we need to consider the proper generalisation of the UV--regularity properties of the Minkowski vacuum. In turns out that {\em Hadamard states}\index{Hadamard state} are precisely the class of states in QFT on curved spacetimes which have the property that products of their correlation functions such as $\omega_2(x,y)^2$ are well--defined. Hadamard states are characterised by having two--point functions of the form
$$\omega_2(x,y)=\frac{1}{8\pi^2}\left(\frac{u(x,y)}{\sigma(x,y)}+v\log(\sigma(x,y))+w(x,y)\right)=H(x,y)+\frac{w(x,y)}{8\pi^2}\,,$$
where $\sigma(x,y)$ is one half the squared geodesic distance, and $u$, $v$, $w$ are infinitely often differentiable (smooth) functions. In this functional form, the UV divergences of $\omega_2(x,y)$ are clearly visible, and they are completely contained in $H(x,y)$. Moreover, we see that the massless Minkowski vacuum is in fact a Hadamard state with $u=1$ and $v=w=0$. It turns out that the {\em Hadamard coefficients}\index{Hadamard coefficients} $u$, $v$, and, hence, $H$ are completely specified by the parameters in the Klein--Gordon operator $P$ and the local curvature in the neighbourhood of the points $x$ and $y$. Hence, the singular part $H$ is completely state--independent, and the two--point functions of two Hadamard states differ only in the regular part $w$.

Following the above discussion, given any Hadamard state $\omega$, we can define meaningful normal ordered field expressions such as $\wick{\phi(x)^2}_\omega$ by \eqref{eq_wickintro}. However, the paradigm in algebraic QFT is to define observables in a state--independent way and $\wick{\phi(x)^2}_\omega$ clearly fails to satisfy this property. Even worse, the regular part $w$ in the correlation function $\omega_2(x,y)$ of any Hadamard state is a highly non--local object because $\omega_2(x,y)$ satisfies the Klein--Gordon equation in both arguments and thus is e.g. sensitive to the functional form of the metric $g$ in the full past lightcone of $x$ and $y$. Consequently $\wick{\phi(x)^2}_\omega$ is a non--local observable, which is conceptually unsatisfactory. In order to cure this problem while still maintaining the property to have well--defined products at different points, we can define 
$$
\wick{\phi(x)^2}_H \;\doteq\; \lim_{x\to y}\left(\phi(x)\phi(y)-H(x,y)\right),
$$
i.e. we subtract only the state--independent singular part and obtain a truly local observable which is independent of the geometry of spacetime far away from $x$. However, this definition of local normal ordering is not the only possibility. Imposing further algebraic properties such as canonical commutation relations with the linear field $\phi(x)$ and particular scaling and regularity properties with respect to the metric and the parameters in the Klein--Gordon equation, one finds that any expression which satisfies these conditions as well as locality must be of the form
$$
\wick{\phi(x)^2} \;=\;\wick{\phi(x)^2}_H + (\alpha R + \beta m^2)\bI\,,
$$
where $\alpha$ and $\beta$ are arbitrary dimensionless real constants. Consequently, we find that normal ordering -- or equivalently the renormalisation of tadpoles -- is inherently ambiguous on curved spacetimes. 

In this section we have only sketched many concepts of algebraic QFT on curved spacetimes and have explained them mostly at the basis of examples. Even though we will provide more details in the next chapter, we would like to stress already at this point that the algebraic construction of perturbative interacting models on curved spacetimes with and without local gauge symmetries is by now well--understood in conceptual terms \cite{chap1_Brunetti:1995rf, chap1_Brunetti:1999jn, chap1_HW01, chap1_HW02, chap1_Hollands:2002ux, chap1_HW04, chap1_HollandsRuan, chap1_Hollands:2007zg, chap1_Chilian:2008ye, chap1_Brunetti:2009qc, chap1_Fredenhagen:2011mq}.

\section{Outline of the Cosmological Applications}
\label{sec:cosmo_outline}

We briefly outline the two cosmological applications of algebraic quantum field theory discussed in detail in the main and final chapter of this monograph.

\subsection{The Cosmological Expansion in QFT on Curved Spacetimes}
\label{sec:cosmo_outline_lcdm}

According to the Standard Model of Cosmology\index{LCDM@$\Lambda$CDM--model} -- the $\Lambda$CDM--model -- our universe contains matter, radiation, and Dark Energy, whose combined energy density determines the expansion of the universe. In the $\Lambda$CDM--model, these three kinds of matter--energy are modelled macroscopically as a perfect fluid and are thus completely determined by an energy density $\rho$ and a pressure $p$, with different equations of state $p = p(\varrho) = w \varrho$, $w=0, \frac13, -1$ for matter, radiation and Dark Energy (assuming that the latter is just due to a cosmological constant) respectively. 

However, at least the contributions to the macroscopic matter and radiation energy densities which are in principle well--understood originate microscopically from particle physics. Hence, it should be possible to derive these contributions from first principles within QFT on curved spacetimes. However, in the standard literature usually a mixed classical/quantum analysis is performed on the basis of effective Boltzmann equations in which the collision terms are computed within QFT on flat spacetime whereas the expansion/curvature of spacetime is taken into account by means of redshift/dilution--terms, see e.g. \cite{chap1_Kolb:1990vq}. After a sufficient amount of cosmological expansion, i.e. in the late universe, the collision terms become negligible and the energy densities of matter and radiation just redshift as dictated by their equation of state.

As a first application of AQFT on curved spacetimes to cosmology, we aim to improve on this situation and to demonstrate that it is indeed possible to derive the form of the energy density in the $\Lambda$CDM--model microscopically within quantum field theory on curved spacetime: we model matter and radiation by quantum fields propagating on a cosmological spacetime and we show that there exist states for these quantum fields in which the energy density has the form assumed in the $\Lambda$CDM--model up to small corrections. Indeed, we find that these small corrections are a possible explanation for the phenomenon of Dark Radiation, which shows that a fundamental analysis of the $\Lambda$CDM--model is not only interesting from the conceptual point of view but also from the phenomenological one.

Due to the complexity of the problem and for the sake of clarity we shall make a few simplifying assumptions. On the one hand, we shall model both matter and radiation by scalar and neutral quantum fields for the ease of presentation, but all concepts and principal constructions we shall use have been developed for fields of higher spin and non--trivial charge as well and we shall mention the relevant literature whenever appropriate. Thus, a treatment taking into account these more realistic fields is straightforward. On the other hand, we shall consider only non--interacting quantum fields and thus the effects of the field interactions which presumably played an important role in the early universe will only appear indirectly as characteristics of the states of the free quantum fields in our description. Notwithstanding, all concepts necessary to extend our treatment to interacting fields are have already been developed as pointed out at the end of the previous section. Finally, in this work we are only interested in modelling the history of the universe from the time of Big Bang Nucleosynthesis (BBN) until today. This restriction also justifies our approximation of considering non--interacting quantum fields, as one usually assumes that field interactions can be neglected on cosmological scales after electron--positron annihilation, which happened roughly at the same time as BBN \cite{chap1_Kolb:1990vq}. 

The quantum states which we will find to microscopically model the macroscopic energy densities in the $\Lambda$CDM--model are generalised thermal excitations of so--called states of low energy\index{states of low energy}, which are homogeneous and isotropic Hadamard states that minimise the energy density integrated against a weighting function $f$ \cite{chap1_Olbermann:2007gn}. Whereas in e.g. Minkowski spacetime the vacuum is the only state of low energy in this sense, in general cosmological spacetimes states of low energy depend on the sampling function $f$ and are thus non--unique as expected. Notwithstanding, we shall compute that for sufficiently large width of the sampling function $f$, the energy density in states of low energy on cosmological spacetimes of $\Lambda$CDM--type is negligible in comparison to the macroscopic energy density in the $\Lambda$CDM--model. This generalises the results found in \cite{chap1_Degner} for the special case of de Sitter spacetime. Consequently, states of low energy with sufficiently large width of their characteristic sampling function {\em all} deserve to be considered as `generalised vacuum states', i.e. as a good approximation to the concept of `vacuum' in cosmological spacetimes. The generalised thermal excitations of the states of low energy we shall consider are phenomenologically well--motivated. For the case of the massless, conformally coupled scalar field modelling radiation, they are just conformal transformations of thermal states in Minkowski spacetime (conformal KMS states) which phenomenologically reflect the observation that the Cosmic Microwave Background (CMB)\index{Cosmic Microwave Background} radiation is thermal. On the other hand, Dark Matter\index{Dark Matter}, which constitutes the major part of the cosmological matter density in the $\Lambda$CDM--model, is in many models considered to be of thermal origin and the quantum states we shall consider for the massive conformally coupled scalar field are thought to model this thermal origin in simple terms.

The analysis in Section \ref{sec:lcdmqft} is essentially identical to \cite{chap1_Hack:2013aya} and constitutes the first full publication of the results reported there.

\subsection{A Birds--Eye View of Perturbations in Inflation}
\label{sec:cosmo_outline_inflation}

The inflationary paradigm\index{Inflation} is by now an important cornerstone of modern cosmology. In the simplest models of Inflation, one assumes that a classical real Klein--Gordon field $\phi$ with a suitable potential $V(\phi)$, coupled to spacetime metric via the Einstein equations, drives a phase of exponential expansion in the early universe. After this phase, the universe respectively its matter--energy content is thought to be almost completely homogenised, whereby the quantized perturbations of the scalar field and the metric are believed to constitute the seeds for the small--scale inhomogeneities in the universe that we observe today.

Mathematically, this idea is usually implemented by considering the coupled Einstein--Klein--Gordon system on a Friedmann--Lema\^itre--Robertson--Walker (FLRW) spacetime\index{FLRW (Friedmann-Lema\^itre-Robertson-Walker) spacetimes}. Given a suitable potential $V(\phi)$, this coupled system will have solutions which display the wanted exponential behaviour. In order to analyse the perturbations in Inflation, the Einstein--Klein--Gordon system is linearised and the resulting linear field theory is quantized on the background solution in the framework of quantum field theory in curved spacetimes. The theory of perturbations in Inflation thus constitutes one of the major applications of this framework.

However, a direct quantization of the linearised Einstein--Klein--Gordon system is potentially obstructed by the fact that this system has gauge symmetries. Thus the usual approach to the quantization of perturbations in Inflation, see e.g. the reviews \cite{chap1_Ellis,chap1_Mukhanov:2005sc,chap1_Straumann:2005mz} and the recent work \cite{chap1_Eltzner:2013soa}, consists of first splitting the degrees of freedom of the perturbed metric into components which transform as scalars, vectors and tensors under the isometry group of the FLRW background, the Euclidean group. Subsequently, gauge--invariant linear combinations of these components as well as the perturbed scalar field are identified, which are then quantized in the standard manner. Thereby it turns out that the tensor components of the perturbed metric are manifestly gauge--invariant, whereas the vector components are essentially pure gauge and thus unphysical. The scalar perturbations instead are usually quantized in terms of the gauge--invariant Mukhanov--Sasaki variable\index{Mukhanov--Sasaki variable}, which is essentially a conformally coupled Klein--Gordon field with a time-dependent mass. In the recent work \cite{chap1_Eltzner:2013soa}, this choice of dynamical variable has been shown to be uniquely fixed by certain natural requirements. The relation between the quantized perturbations of the Einstein--Klein--Gordon system and the small--scale inhomogeneities in the present universe is usually established by relating the power-spectrum of the latter to the power spectrum of the former in several non--trivial steps, cf. e.g. \cite{chap1_Ellis,chap1_Mukhanov:2005sc,chap1_Straumann:2005mz}. An approach which differs in the way this relation is made, and is closer to the spirit of stochastic gravity, may be found in the recent work \cite{chap1_Pinamonti:2013zba}.

The conceptual drawback of the standard approach to quantizing perturbations in Inflation is that this approach makes heavy use of the isometry group and the related preferred coordinate system of FLRW spacetimes and is thus inherently non--covariant. In that sense, it is a bottom--up approach, which is of course well--motivated by the fact that it allows one to make explicit computations. Notwithstanding, it seems advisable to check whether the same results can be obtained in a rather top--down approach, as this would provide a firm conceptual underpinning of the standard approach. Motivated by this, the quantum theory of the linearised Einstein--Klein--Gordon system on arbitrary on--shell backgrounds, and with arbitrary potential $V(\phi)$ and non--minimal coupling to the scalar curvature $\xi$, has been developed in \cite{chap1_ThomasInflation}. In order to deal with the gauge symmetries of this system, \cite{chap1_ThomasInflation} follows ideas of \cite{chap1_DimockVector}, which deals with the gauge--invariant quantization of the vector potential on curved spacetimes. This approach was later used in \cite{chap1_Fewster:2012bj} for quantizing linearised pure gravity on cosmological vacuum spacetimes and generalised in \cite{chap1_Hack:2012dm} in order to encompass arbitrary (Bosonic and Fermionic) linear gauge theories on curved spacetimes. In contrast to the BRST/BV approach to quantum gauge theories, see e.g. \cite{chap1_Hollands:2007zg,chap1_Fredenhagen:2011mq}, and \cite{chap1_Brunetti:2013maa} for an application to perturbative pure quantum gravity on curved spacetimes, the formalism used in \cite{chap1_ThomasInflation} works without the introduction of auxiliary fields, at the expense of being applicable only to linear field theories. We shall review this general formalism to quantize linear gauge theories in Section \ref{sec:linear_gauge} and the quantization of the linearised Einstein--Klein--Gordon system on arbitrary on--shell backgrounds in the first part of Section \ref{sec:inflation}.

In the second part of that section, we consider the special case of FLRW backgrounds and review the results of \cite{chap1_ThomasInflation} on comparing the quantum theory obtained from the general quantization of the linearised Einstein--Klein--Gordon system on on--shell backgrounds with the standard approach to the quantization of perturbations in Inflation. Thereby it turns out that the set of quantum observables in the standard approach, which is spanned by local observables of scalar and tensor type, is contained in the set of observables obtained in the general construction, but strictly smaller. However, one further finds that this discrepancy seems to be alleviated if one restricts to configurations of the linearised Einstein--Klein--Gordon system which vanish at spatial infinity, which apparently is a general assumption in the standard approach, see e.g. \cite{chap1_Makino:1991sg}, because these configurations are considered to be `small' and thus truly perturbative. Namely, it is argued in \cite{chap1_ThomasInflation} that local observables of scalar and tensor type are sufficient for measuring this subset of configurations.

\chapter{Algebraic Quantum Field Theory on Curved Spacetimes}
\label{chap_aqft}
% use \chaptermark{}
% to alter or adjust the chapter heading in the running head

\abstract{In this chapter, we review all background material on algebraic quantum field theory on curved spacetimes which is necessary for understanding the cosmological applications discussed in the next chapter. Starting with a brief account of globally hyperbolic curved spacetimes and related geometric notions, we then explain how the algebras of observables generated by products of linear quantum fields at different points are obtained by canonical quantization of spaces of classical observables. This discussion will be model--independent and will cover both Bosonic and Fermionic models with and without local gauge symmetries. Afterwards, we review the concept of Hadamard states which encompass all physically reasonable quantum states on curved spacetimes. The modern paradigm in QFT on curved spacetimes is that observables and their algebras should be constructed in a local and covariant way. We briefly review the theoretical formulation of this concept and explain how it is implemented in the construction of an extended algebra of observables of the free scalar field which also contains products of quantum fields at coinciding points. Finally, we discuss the quantum stress--energy tensor as a particular example of such an observable as well as the related semiclassical Einstein equation.}

\section{Globally Hyperbolic Spacetimes and Related Geometric Notions}
\label{sec:ghst}

The philosophy of algebraic quantum field theory in curved spacetimes is to set up a framework which is valid on all physically reasonable curved Lorentzian spacetimes and independent of their particular properties. Given this framework, one may then exploit particular properties of a given spacetime such as symmetries in order obtain specific results or to perform explicit calculations. A class of spacetimes which encompasses most cases which are of physical interest are {\em globally hyperbolic spacetimes}\index{globally hyperbolic spacetime}. These include Friedmann--Lema\^itre--Robertson--Walker spacetimes -- in particular Minkowski spacetime -- as well as Black Hole spacetimes such as Schwarzschild-- and Kerr--spacetime, whereas prominent examples of spacetimes which are not globally hyperbolic are Anti de Sitter--spacetime (see e.g. \cite[Chapter 3.5]{chap2_Bar}) and a portion of Minkowski spacetime obtained by restricting one of the spatial coordinates to a finite interval such as the spacetimes relevant for discussing the Casimir effect. The constructions we shall review in the following are well--defined on all globally hyperbolic spacetimes. The physically relevant spacetime examples which are not globally hyperbolic are usually such that sufficiently small portions still have this property. Consequently, the algebraic constructions on globally hyperbolic spacetimes can be extended to these cases by patching together local constructions, see 
for instance \cite{chap2_Ribeiro:2007hv,chap2_Dappiaggi:2014gea}. In this section we shall review the definition of globally hyperbolic spacetimes and a few related differential geometric notions which we shall use throughout this monograph.
 
To this end, in this work a {\em spacetime} $(M,g)$ is meant to be a Hausdorff, connected, smooth manifold $M$, endowed with a Lorentzian metric $g$, the invariant volume measure of which shall be denoted by $\vol \doteq \sqrt{|\det g|}dx$. We will mostly consider four--dimensional spacetimes. However, most notions and results can be formulated and obtained for Lorentzian spacetimes with a dimension $d$ differing from four and we will try to point out how the spacetime dimension affects them whenever it seems interesting and possible. We will follow the monograph by Wald \cite{chap2_WaldBook} regarding most conventions and notations and, hence, work with the metric signature $(-,+,+,+)$. It is often required that a spacetime be second countable, or, equivalently, paracompact, i.e. that its topology has a countable basis. Though, as proven by Geroch in \cite{chap2_GerochSpin1}, paracompactness already follows from the properties of $(M,g)$ listed above. In addition to the attributes already required, we demand that the spacetime under consideration is orientable and time--orientable and that an orientation has been chosen in both respects. We will often omit the spacetime metric $g$ and denote a spacetime by $M$ in brief.

For a point $x\in M$, $T_xM$ denotes the tangent space of $M$ at $x$ and $T^*_xM$ denotes the respective cotangent space; the tangent and cotangent bundles of $M$ shall be denoted by $TM$ and $T^*M$, respectively. If $\chi:M_1\to M_2$ is a diffeomorphism, we denote by $\chi^*$ the {\em pull--back} of $\chi$ and by $\chi_*$ the {\em push--forward} of $\chi$.  $\chi^*$ and $\chi_*$ map tensors on $M_2$ to tensors on $M_1$ and tensors on $M_1$ to tensors on $M_2$,  respectively; they furthermore satisfy $\chi_*=(\chi^{-1})^*$ \cite[Appendix C]{chap2_WaldBook}. In case $g_1$ and $g_2$ are the chosen Lorentzian metrics on $M_1$ and $M_2$ and $\chi_*g_1=g_2$, we call $\chi$ an {\em isometry}; if $\chi_*g_1=\Omega^2g_2$ with a strictly positive smooth function $\Omega$, $\chi$ shall be called a {\em conformal isometry} and $\Omega^2g$ a {\em conformal transformation}\index{conformal transformation} of $g$. Note that this definition differs from the one often used in the case of highly symmetric or flat spacetimes since one does not rescale coordinates, but the metric. A conformal transformation according to our definition is sometimes called {\em Weyl transformation} in the literature. If $\chi$ is an {\em embedding} $\chi:M_1\hookrightarrow M_2$, i.e. $\chi(M_1)$ is a submanifold of $M_2$ and $\chi$ a diffeomorphism between $M_1$ and $\chi(M_1)$, it is understood that a push--forward $\chi_*$ of $\chi$ is only defined on $\chi(M_1)\subset M_2$. In case an embedding $\chi:M_1\hookrightarrow M_2$ between the manifolds of two spacetimes $(M_1,g_1)$ and $(M_2,g_2)$ is an isometry between $(M_1,g_1)$ and $(\chi(M_1),g_2|_{\chi(M_1)})$, we call $\chi$ an {\em isometric embedding}, whereas an embedding which is a conformal isometry between $(M_1,g_1)$ and $(\chi(M_1),g_2|_{\chi(M_1)})$ shall be called a {\em conformal embedding}.

Some works make extensive use of the {\em abstract index notation}, i.e. they use Latin indices to denote tensorial identities which hold in any basis to distinguish them from identities which hold only in specific bases. As this distinction will not be necessary in the present work, we will not use abstract index notation, but shall use Greek indices to denote general tensor components in a coordinate basis $\{\partial_\mu\}_{\mu=0,\ldots,3}$ and shall reserve Latin indices for other uses. We employ the Einstein summation convention, e.g. $A_\mu^\mu\doteq\sum^{3}_{\mu=0}A_\mu^\mu$, and we shall lower Greek indices by means of $g_{\mu\nu}\doteq g(\pa_\mu,\pa_\nu)$ and raise them by $g^{\mu\nu}\doteq (g^{-1})_{\mu\nu}$.

Every smooth Lorentzian manifold admits a unique metric--compatible and torsion--free linear connection, the {\em Levi--Civita connection}, and we shall denote the associated {\em covariant derivative} along a vector field $v$, i.e. a smooth section of $TM$, by $\nabla_v$. We will abbreviate $\nabla_{\pa_\mu}$ by $\nabla_\mu$ and furthermore use the shorthand notation $T_{;\mu_1\cdots\mu_n}\doteq \nabla_{\mu_1}\cdots \nabla_{\mu_n} T$ for covariant derivatives of a tensor field $T$. Our definitions for the {\em Riemann tensor}\index{Riemann tensor@$R_{\alpha\beta\gamma\delta}$, Riemann tensor} $R_{\alpha\beta\gamma\delta}$, the {\em Ricci tensor} $R_{\alpha\beta}$\index{Ricci tensor@$R_{\alpha\beta}$, Ricci tensor}, and the {\em Ricci scalar}\index{Ricci scalar@$R$, Ricci scalar} $R$ are \beq\label{eq_DefinitionRicci} v_{\alpha;\beta\gamma}-v_{\alpha;\gamma\beta}\doteq R_{\alpha\phantom{\lambda}\beta\gamma}^{\phantom{\alpha}\lambda}v_\lambda\,,\qquad R_{\alpha\beta}\doteq R_{\alpha\phantom{\lambda}\beta\lambda}^{\phantom{\alpha}\lambda}\,,\qquad R\doteq R^\alpha_{\phantom{\alpha}\alpha}\,,\eeq where $v_\alpha$ are the components of an arbitrary covector. The Riemann tensor possesses the symmetries \beq \label{eq_RiemannSymmetries}R_{\alpha\beta\gamma\delta}=-R_{\beta\alpha\gamma\delta}=R_{\gamma\delta\alpha
\beta}\,,\qquad  R_{\alpha\beta\gamma\delta}+R_{\alpha\delta\beta\gamma}+R_{\alpha\gamma\delta\beta}=0\eeq
 and fulfils the {\em Bianchi identity}
\beq
\label{eq_BianchiIdentity}
R_{\alpha\beta\gamma\delta;\epsilon}+R_{\alpha\beta\epsilon\gamma;\delta}+R_{\alpha\beta\delta\epsilon;\gamma}=0\,.
\eeq
Moreover, its trace--free part, the {\em Weyl tensor}\index{Cabcd@$C_{\alpha\beta\gamma\delta}$, Weyl tensor}\index{Weyl tensor}, is defined as 
\begin{align*}
C_{\alpha\beta\gamma\delta}&=R_{\alpha \beta \gamma \delta}-\frac{1}{6}\left(g_{\alpha \delta}g_{\beta \gamma}-g_{
\alpha \gamma}g_{\beta \delta}\right)R\\
&\quad -\frac{1}{2}\left(g_{\beta \delta}R_{\alpha \gamma}-g_{\beta \gamma}R_{
\alpha \delta}-g_{\alpha \delta}R_{\beta \gamma}+g_{\alpha \gamma}R_{\beta\delta}\right),
\end{align*}
 where the appearing coefficients differ in spacetimes with $d\neq 4$.  In addition to the covariant derivative, we can define the notion of a {\em Lie derivative} along a vector field $v$: the integral curves $c(s)$ of $v$ with respect to a curve parameter $s$ define, in general only for small $s$ and on an open neighbourhood of $c(0)$, a one--parameter group of diffeomorphisms $\chi^v_s$ \cite[Chapter 2.2]{chap2_WaldBook}. Given a tensor field $T$ of arbitrary rank, we can thus define the Lie derivative of $T$ along $v$ as
$$\cL_v T\doteq \lim_{s\to 0}\left(\frac{(\chi^v_{-s})^*T-T}{s}\right).$$ If $\chi^v_s$ is a one-parameter group of isometries, we call $v$ a {\em Killing vector field}, while in case of $\chi^v_s$ being a one-parameter group of conformal isometries, we shall call $v$ a {\em conformal Killing vector field}. It follows that a Killing vector field $v$ fulfils $\cL_v g=0$, while a conformal Killing vector field $v$ fulfils $\cL_vg=fg$ with some smooth function $f$ \cite[Appendix C.3]{chap2_WaldBook}.

In order to define what it means for a spacetime to be globally hyperbolic, we need a few additional standard notions related to Lorentzian spacetimes. To wit, following our sign convention, we call a vector $v_x\in T_xM$ {\em timelike} if $g(v_x,v_x)<0$, {\em spacelike} if $g(v_x,v_x)>0$, {\em lightlike} or {\em null} if $g(v_x,v_x)=0$, and {\it causal} if it is either timelike or null. Extending this, we call a vector field $v: M\to TM$ spacelike, timelike, lightlike, or causal if it possesses this property at every point. Finally, we call a curve $c: \bR\supset I\to M$, with $I$ an interval, spacelike, timelike, lightlike, or causal if its tangent vector field bears this property. Note that, according to our definition, a trivial curve $c\equiv x$ is lightlike. As $(M,g)$ is time orientable, we can split the lightcones in $TM$ at all points in $M$ into `future' and `past' in a consistent way and say that a causal curve is {\em future directed} if its tangent vector field at a point is always in the future lightcone at this point; {\em past directed} causal curves are defined analogously.

For the definition of global hyperbolicity, we need the notion of inextendible causal curves; these are curves that `run off to infinity' or `run into a singular point'. Hence, given a future directed curve $c$ parametrised by $s$, we call $x$ a {\em future endpoint} of $c$ if, for every neighbourhood $\cO$ of $x$, there is an $s_0$ such that $c(s)\in\cO$ for all $s>s_0$. With this in mind, we say that a future directed causal curve is {\em future inextendible} if, for all possible parametrisations, it has {\em no} future endpoint and we define {\em past inextendible} past directed causal curves similarly. A related notion is the one of a {\em complete geodesic}. A geodesic $c$ is called complete if, in its {\em affine parametrisation} defined by $\nabla_{dc/ds}\frac{dc}{ds}=0$, the affine parameter $s$ ranges over all $\bR$. A manifold $M$ is called {\em geodesically complete} if all geodesics on $M$ are complete.

In the following, we are going to define the generalisations of flat spacetime lightcones in curved spacetimes. By $I^+(x, M)$ we denote the {\em chronological future} of a point $x$ relative to $M$, i.e. all points in $M$ which can be reached by a future directed timelike curve starting from $x$, while $J^+(x, M)$ denotes the {\em causal future} of a point $x$, {\it viz.} all points in $M$ which can be reached by future directed causal curve starting from $x$. Notice that, generally, $x\in J^+(x, M)$ and $I^+(x, M)$ is an open subset of $M$ while the situations $x\notin I^+(x, M)$ and $J^+(x, M)$ being a closed subset of $M$ are not generic, but for instance present in globally hyperbolic spacetimes \cite{chap2_WaldBook}. In analogy to the preceding definitions, we define the {\em chronological past} $I^-(x, M)$ and {\em causal past} $J^-(x, M)$ of a point $x$ by employing past directed timelike and causal curves, respectively. We extend this definition to a general subset $\cO\subset M$ by setting $$I^\pm(\cO, M)\doteq \bigcup\limits_{x\in\cO}I^\pm(x, M)\qquad J^\pm(\cO, M)\doteq \bigcup\limits_{x\in\cO}J^\pm(x, M)\,;$$
additionally, we define $I(\cO,M)\doteq I^+(\cO,M)\cup I^-(\cO,M)$ and $J(\cO,M)\doteq J^+(\cO,M)\cup J^-(\cO,M)$.\index{I@$I^\pm(\cO, M)$, chronological future/past of $\cO\subset M$}\index{J@$J^\pm(\cO, M)$, causal future/past of $\cO\subset M$}
As the penultimate prerequisite for the definition of global hyperbolicity, we say that a subset $\cO$ of $M$ is {\em achronal} if $I^+(\cO, M)\cap\cO$ is empty, i.e. an achronal set is such that every timelike curve meets it at most once. Given a closed achronal set $\cO$, we define its {\em future domain of dependence} $D^+(\cO, M)$ as the set containing all points $x\in M$ such that every past inextendible causal curve through $x$ intersects $\cO$. By our definitions, $D^+(\cO, M)\subset J^+(\cO, M)$, but note that $J^+(\cO, M)$ is in general considerably larger than $D^+(\cO, M)$. We define $D^-(\cO, M)$ analogously and set $D(\cO, M)\doteq D^+(\cO, M)\cup D^-(\cO, M)$. $D(\cO, M)$ is sometimes also called the {\em Cauchy development} of $\cO$. With this, we are finally in the position to state the definition of global hyperbolicity (valid for all spacetime dimensions).

\begin{definition}
\label{def_GloballyHyperbolic}
A {\em Cauchy surface}\index{Cauchy surface} is a closed achronal set $\Sigma\subset M$ with $D(\Sigma, M)=M$. A spacetime $(M,g)$ is called {\em globally hyperbolic \index{globally hyperbolic spacetime}} if it contains a Cauchy surface.
\end{definition}

Although the geometric intuition sourced by our knowledge of Minkowski spacetime can fail us in general Lorentzian spacetimes, it is essentially satisfactory in globally hyperbolic spacetimes. According to Definition \ref{def_GloballyHyperbolic}, a Cauchy surface is a `non--timelike' set on which every `physical signal' or `worldline' must register exactly once. This is reminiscent of a constant time surface in flat spacetime and one can indeed show that this is correct. In fact, Geroch has proved in \cite{chap2_GerochCauchy} that globally hyperbolic spacetimes are topologically $\bR\times\Sigma$ and Bernal and Sanchez \cite{chap2_Bernal1,chap2_Bernal2,chap2_Bernal3} have been able to improve on this and to show that every globally hyperbolic spacetime has a {\em smooth} Cauchy surface $\Sigma$ and is, hence, even diffeomorphic to $\bR\times\Sigma$. This implies in particular the existence of a (non--unique) smooth global {\em time function} $t:M\to \bR$\index{time function}, i.e. $t$ is a smooth function with a timelike and future directed gradient field $\nabla t$; $t$ is, hence, strictly increasing along any future directed timelike curve. In the following, we shall always consider smooth Cauchy surfaces, even in the cases where we do not mention it explicitly.

In the remainder of this chapter, we will gradually see that globally hyperbolic curved spacetimes have many more nice properties well--known from flat spacetime and, hence, seem to constitute the perfect compromise between a spacetime which is generically curved and one which is physically sensible. Particularly, it will turn out that second order, linear, hyperbolic partial differential equations have well--defined global solutions on a globally hyperbolic spacetime. Hence, whenever we speak of a spacetime in the following and do not explicitly demand it to be globally hyperbolic, this property shall be understood to be present implicitly.

On globally hyperbolic spacetimes, there can be no closed timelike curves, otherwise we would have a contradiction to the existence of a smooth and strictly increasing time function. There is a causality condition related to this which can be shown to be weaker than global hyperbolicity, namely, {\it strong causality}. A spacetime is called strongly causal if it can not contain almost closed timelike curves, i.e. for every $x\in M$ and every neighbourhood $\cO_1\ni x$, there is a neighbourhood $\cO_2\subset\cO_1$ of $x$ such that no causal curve intersects $\cO_2$ more than once. One might wonder if this weaker condition can be filled up to obtain full global hyperbolicity and indeed some references, e.g. \cite{chap2_Bar,chap2_HawkingBook}, define a spacetime $(M,g)$ to be globally hyperbolic if it is strongly causal and $J^+(x)\,\cap\,J^-(y)$ is compact for all $x,y\in M$. One can show that the latter definition is equivalent to Definition \ref{def_GloballyHyperbolic} \cite{chap2_Bar,chap2_WaldBook} which is, notwithstanding, the more intuitive one in our opinion.

We close this section by introducing a few additional sets with special causal properties. To this avail, we denote by $\exp_x$ the exponential map at $x\in M$. A set $\cO\subset M$ is called {\em geodesically starshaped} with respect to $x\in\cO$ if there is an open subset $\cO^\prime$ of $T_xM$ which is starshaped with respect to $0\in T_xM$ such that $\exp_x:\cO^\prime\to\cO$ is a diffeomorphism. We call a subset $\cO\subset M$ {\em geodesically convex} if it is geodesically starshaped with respect to all its points. This entails in particular that each two points $x$, $y$ in $\cO$ are connected by a unique geodesic which is completely contained in $\cO$. A related notion are {\em causal domains}, these are subsets of geodesically convex sets which are in addition globally hyperbolic. Finally, we would like to introduce {\em causally convex regions}, a generalisation of geodesically convex sets. They are open, non-empty subsets $\cO\subset M$ with the property that, for all $x, y \in \cO$, all causal curves connecting $x$ and $y$ are entirely contained in $\cO$. One can prove that every point in a spacetime lies in a geodesically convex neighbourhood and in a causal domain \cite{chap2_Friedlander} and one might wonder if the case of a globally hyperbolic spacetime which is geodesically convex is not quite generic. However, whereas Friedmann--Lema\^itre--Robertson--Walker spacetimes with flat spatial sections are geodesically convex, even de Sitter spacetime, which is both globally hyperbolic and maximally symmetric and could, hence, be expected to share many properties of Minkowski spacetime, is not.

\section{Linear Classical Fields on Curved Spacetimes}
\label{sec:linear_class}

As outlined in Section \ref{sec:pedagogical_aqft}, the `canonical' route to quantize linear classical field theories on curved spacetimes in the algebraic language is to first construct the canonical covariant classical Poisson bracket (or a symmetric equivalent in the case of Fermionic theories) and then to quantize the model by enforcing canonical (anti)commutation relations defined by this bracket. In this section, we shall first review how this is done for free field theories without gauge symmetry before discussing the case where local gauge symmetries are present.

\subsection{Models without Gauge Symmetry}
\label{sec:linear_nongauge}

We shall start our discussion of classical field theories without local gauge symmetries by looking at the example of the free Klein--Gordon field, which is the `harmonic oscillator' of QFT on curved spacetimes. In discussing this example it will become clear what the basic ingredients determining a linear field theoretic model are and how they enter the definition and construction of this model in the algebraic framework.

\subsubsection{The Free Neutral Klein--Gordon Field}
\label{ssec_EOMScalar}

In Physics, we are used to describe dynamics by (partial) differential equations and initial conditions. The relevant equation for the neutral scalar field is the free Klein--Gordon equation\index{Klein--Gordon equation}
$$(-\Box+ f)\phi\doteq(-\nabla_\mu\nabla^\mu+ f)\phi=0\,$$
with the {\em d'Alembert operator} $\Box$ and some scalar function $f$ of mass dimension 2. The function $f$ determines the `potential' $V(\phi) = \frac12 f \phi^2$ of the Klein--Gordon field and may be considered as a background field just like the metric $g$. Usually one considers the case $f=m^2 + \xi R$, i.e. 
\beq
\label{eq_KleinGordenMass}
P\phi\doteq\left(-\Box+\xi R+m^2\right)\phi=0
\eeq
such that $f$ is entirely determined in terms of a constant mass $m\ge 0$ and the Ricci scalar $R$ where the dimensionless constant $\xi$ parametrises the strength of the coupling of $\phi$ to $R$. In principle one could consider more non--trivial coupling terms with the correct mass dimension such as $f=(R^{\alpha\beta}R_{\alpha\beta})^2 /(m^4 R)$, however these may be ruled out by either invoking Occam's razor or by demanding analytic dependence of $f$ on the metric and $m$ like in \cite[Section 5.1]{chap2_HW01}.

The case \eqref{eq_KleinGordenMass} with $\xi=0$ is usually called {\em minimal coupling}, whereas, in four dimensions, the case $\xi=1/6$ is called {\em conformal coupling}\index{conformal coupling}. While the former name refers to the fact that the Klein--Gordon field is coupled to the background metric only via the covariant derivative, the reason for the latter is rooted in the behaviour of this derivative under conformal transformations. Namely, if we consider the conformally related metrics $g$ and $\widetilde g\doteq \Omega^2 g$ with a strictly positive smooth function $\Omega$, denote by $\nabla$, $\Box$, and $R$ the quantities associated to $g$ and by $\widetilde\nabla$, $\widetilde\Box$, and $\widetilde R$ the quantities associated to $\widetilde g$, then the respective metric compatibility of the covariant derivatives $\nabla$ and $\widetilde\nabla$ and their agreement on scalar functions imply \cite[Appendix D]{chap2_WaldBook}
\beq
\label{eq_ConformalInvarianceScalar}
\left(-\widetilde\Box+\frac16\widetilde R\right)\frac1\Omega = \frac{1}{\Omega^3}\left(-\Box+\frac16 R\right).
\eeq
This entails that a function $\phi$ solving $(-\Box+\frac16 R)\phi=0$ can be mapped to a solution $\widetilde\phi$ of $(-\widetilde\Box+\frac16 \widetilde R)\widetilde\phi=0$ by multiplying it with the conformal factor $\Omega$ to the power of the {\em conformal weight} $-1$, i.e. $\widetilde\phi=\Omega^{-1}\phi$. We shall therefore call a scalar field $\phi$ with an equation of motion $(-\Box+\frac16 R)\phi=0$ {\em conformally invariant}. In other spacetimes dimensions $d\neq 4$, the conformal weight and the magnitude of the conformal coupling are different, see \cite[Appendix D]{chap2_WaldBook}.

Having a partial differential equation for a free scalar field at hand, one would expect that giving sufficient initial data would determine a unique solution on all $M$. However, this is, in case of the Klein--Gordon operator at hand, in general only true for globally hyperbolic spacetimes. To see a simple counterexample, let us consider Minkowski spacetime with a compactified time direction and the massless case, i.e. the equation $(\pa_t^2-\pa_x^2-\pa_y^2-\pa_z^2)\phi=0$. Giving initial conditions $\phi|_{t=0}=0$, $\pa_t\phi|_{t=0}=1$, a possible {\em local} solution is $\phi\equiv t$. But this can of course never be a global solution, since one would run into contradictions after a full revolution around the compactified time direction.

In what follows, the {\em fundamental solutions}\index{fundamental solution} or {\em Green's functions}\index{Green's function} of the Klein--Gordon equation shall play a distinguished role. Before stating their existence, as well as the existence of general solutions, let us define the function spaces we shall be working with in the following, as well as their topological duals, see e.g. \cite[Chapter VI]{chap2_ChoquetBruhat} for an introduction.

\begin{definition}
 \label{def_FunctionSpacesScalar}
By $\Se(M)\doteq C^\infty(M,\bR)$ we denote the {\em smooth (infinitely often continuously differentiable), real--valued functions} on $M$ equipped with the usual {\em locally convex topology}, i.e. a sequence of functions $f_n\in\Se(M)$ is said to converge to $f\in\Se(M)$ if all derivatives of $f_n$ converge to the ones of $f$ uniformly on all compact subsets of $M$.

The space $\Sec(M)\doteq C_0^\infty(M,\bR)$ is the subset of $\Se(M)$ constituted by the {\em smooth, real--valued functions with compact support}. We equip $\Sec(M)$ with the locally convex topology determined by saying that a sequence of functions $f_n\in\Sec(M)$ converges to $f\in\Sec(M)$ if there is a compact subset $K\subset M$ such that all $f_n$ and $f$ are supported in $K$ and all derivatives of $f_n$ converge to the ones of $f$ uniformly in $K$.

By $\Se^\bC_0(M)\doteq  \Sec(M)\otimes_\bR \bC$, $\Se^\bC(M)\doteq  \Se(M)\otimes_\bR \bC$ we denote the complexifications of $\Sec(M)$ and $\Se(M)$ respectively.

The spaces $\Sesc(M)$ and $\Setc(M)$ denote the subspaces of $\Se(M)$ consisting of functions with {\em spacelike--compact} and {\em timelike--compact support} respectively\index{spacelike--compact}\index{timelike--compact}. I.e. $\supp f\cap \Sigma$ is compact for all Cauchy surfaces $\Sigma$ of $(M,g)$ and all $f\in \Sesc(M)$, whereas for all $f\in\Setc(M)$ there exist two Cauchy surfaces $\Sigma_1$, $\Sigma_2$ with $\supp f\subset J^-(\Sigma_1,M)\cap J^+(\Sigma_2,M)$.

By $\Se^\prime_0(M)$ we denote the space of {\em distributions}, i.e. the topological dual of $\Sec(M)$ provided by continuous,  linear functionals $\Sec(M)\to \bR$, whereas $\Se^\prime(M)$ denotes the topological dual of $\Se(M)$, i.e. the space of {\em distributions with compact support}. $\Se^{\prime \bC}(M)$ and $\Sec^{\prime \bC}(M)$ denote the complexified versions of the real--valued spaces.

For $f\in\Se(M)$ and $u\in\Se^\prime_0(M)\supset \Se(M)\supset \Sec(M)$ with compact overlapping support, we shall denote the (symmetric and non--degenerate) {\em dual pairing} of $f$ and $u$ by
$$\langle u, f\rangle\doteq \int\limits_M\vol u(x)f(x)\,.$$
\index{Gamma0(M)@$\Sec(M)$ ($\Se^\bC_0(M)$), space of smooth (complex--valued) functions on $M$}
\index{Gamma(M)@$\Se(M)$ ($\Se^\bC(M)$), space of smooth (complex--valued) functions on $M$}
\index{Gamma0'(M)@$\Se^\prime_0(M)$ ($\Se^{\prime \bC}_0(M)$), space of (complex--valued) distributions on $M$}
\index{Gamma'(M)@$\Se^\prime(M)$ ($\Se^{\prime \bC}(M)$), space of compactly supported (complex--valued) distributions on $M$}
\end{definition}

The physical relevance of the above spaces is that functions in $\Sec(M)$, so--called {\em test functions}, should henceforth essentially be viewed as encoding the localisation of some observable in space and time, reflecting the fact that a detector is of finite spatial extent and a measurement is made in a finite time interval. From the point of view of dynamics, initial data for a partial differential equation may be encoded by distributions or functions with both compact and non--compact support, whereas solutions of hyperbolic partial differential equations like the Klein--Gordon one are typically distributions or smooth functions which do {\em not} have compact support on account of the causal propagation of initial data; having a solution with compact support in time would entail that data `is lost somewhere'. Moreover, fundamental solutions of differential equations will always be singular distributions, as can be expected from the fact that they are solutions with a singular $\delta$--distribution as source. Finally, since (anti)commutation relations of quantum fields are usually formulated in terms of fundamental solutions, the quantum fields and their expectation values will also turn out to be singular distributions quite generically. Physically this stems from the fact that a quantum field has infinitely many degrees of freedom.

Let us now state the theorem which guarantees us existence and properties of solutions and fundamental solutions (also termed Green's functions or propagators) of the Klein--Gordon operator $P$. We refer to the monograph \cite{chap2_Bar} for the proofs.

\begin{theorem}
\label{thm_FundamentalSolutionScalar}
Let $P:\Se(M)\to\Se(M)$ be a {\em normally hyperbolic operator}\index{normally hyperbolic operator} on a globally hyperbolic spacetime $(M,g)$, i.e. in each coordinate patch of $M$, $P$ can be expressed as
$$P=-g^{\mu\nu}\pa_\mu\pa_\nu+A^\mu\pa_\mu+B$$\index{P@$P$, partial differential operator specifying the equation of motion}
with smooth functions $A^\mu$, $B$ and the {\em metric principal symbol} $-g^{\mu\nu}\pa_\mu\pa_\nu$. Then, the following results hold.
\begin{enumerate}
\item Let $f\in \Sec(M)$, let $\Sigma$ be a smooth Cauchy surface of $M$, let $(u_0,u_1)\in \Sec(\Sigma)\times\Sec(\Sigma)$, and let $N$ be the future directed timelike unit normal vector field of $\Sigma$. Then, the {\em Cauchy problem}\index{Cauchy problem}
$$Pu=f,\qquad u|_\Sigma\equiv u_0,\qquad \nabla_N u|_\Sigma\equiv u_1$$
has a unique solution $u\in\Se(M)$. Moreover, $$\supp u\subset J\left(\supp f\cup \supp u_0 \cup \supp u_1,M\right).$$ A unique solution to the Cauchy problem also exists if the assumptions on the compact support of $f$, $u_0$ and $u_1$ are dropped. 
\item There exist unique {\em retarded} $E_R$ and {\em advanced} $E_A$ {\em fundamental solutions (Green's functions, propagators)} of $P$. Namely, there are unique continuous maps $E_{R/A}:\Sec(M)\to\Se(M)$ satisfying $P\circ E_{R/A}=E_{R/A}\circ  P=\id_{\Sec(M)}$ and $\supp E_{R/A} f \subset J^\pm(\text{supp }f, M)$ for all $f\in\Sec(M)$.%
\index{E@$E_{R}$, $E_{A}$ retarded / advanced fundamental solution (propagator, Green's function) of the Klein--Gordon operator}
\index{E@$E$, causal propagator of the Klein--Gordon operator}

\item Let $f$, $g\in \Sec(M)$. If $P$ is {\em formally selfadjoint}\index{formally selfadjoint}, i.e. $\langle f,Pg\rangle=\langle Pf,g\rangle$, then $E_{R}$ and $E_{A}$ are the formal adjoints of one another, namely,
    $\langle f, E_{R/A} g\rangle=\langle E_{A/R} f, g\rangle$.
\item The {\em causal propagator (Pauli--Jordan function)}\index{causal propagator} of $P$ defined as $E\doteq E_{R}-E_{A}$ is a continuous map $\Sec(M)\to\Sesc(M)\subset \Se(M)$ satisfying: for all solutions $u$ of $Pu=0$ with compactly supported initial conditions on a Cauchy surface there is an $f\in\Sec(M)$ such that $u=E f$. Moreover, for every $f\in\Sec(M)$ satisfying $E f=0$ there is a $g\in\Sec(M)$ such that $f=Pg$. Finally if $P$ is formally self--adjoint, then $E$ is formally skew--adjoint, i.e.
    $\langle f, E g\rangle=-\langle E f, g\rangle$\,.
\end{enumerate}
\end{theorem}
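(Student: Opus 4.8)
The whole theorem rests on the solvability of the Cauchy problem in part~(1), so the plan is to establish that first and then to derive parts~(2)--(4) from it by essentially formal manipulations. For part~(1) I would argue by \emph{energy estimates}. Using the smooth global time function $t$ furnished by global hyperbolicity, I foliate $M$ by the Cauchy surfaces $\Sigma_\tau=\{t=\tau\}$ and, for a prospective solution $u$, form an energy density built from $u$ and its first derivatives contracted with the future normal $N$. Integrating the divergence of the associated current over the lens--shaped region bounded by two such surfaces and applying Stokes' theorem produces, thanks to the timelike and hence coercive character of $N$, a bound on the first--order Sobolev norm of $u$ on $\Sigma_\tau$ in terms of its norm on the initial surface plus the contribution of the source $Pu=f$. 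A Gronwall argument then controls the energy for all $\tau$, which yields uniqueness at once (zero data forces zero energy, hence $u\equiv 0$) and, by localising the same estimate to truncated cones, the causal support statement $\supp u\subset J(\supp f\cup\supp u_0\cup\supp u_1,M)$. Local existence near $\Sigma$ I would take from the standard theory of second--order hyperbolic equations (reduction to a symmetric hyperbolic first--order system, or a Leray--type parametrix construction), and global existence then follows by covering $M$ with such local slabs and gluing, consistency being guaranteed by the uniqueness just established.

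Granting part~(1), part~(2) is a construction: to define $E_R f$ for $f\in\Sec(M)$, pick a Cauchy surface $\Sigma$ lying to the past of $\supp f$ and let $E_R f$ be the unique solution of $Pu=f$ with vanishing Cauchy data on $\Sigma$; the support statement of part~(1) gives $\supp E_R f\subset J^+(\supp f,M)$, and independence of the auxiliary $\Sigma$ follows from uniqueness. Continuity of $E_R:\Sec(M)\to\Se(M)$ is inherited from the continuous dependence of the solution on its data in the energy estimate, and $E_A$ is constructed symmetrically with a future surface. Uniqueness of $E_{R/A}$ is again a Cauchy--uniqueness argument: any two candidates differ by a homogeneous solution whose support is bounded to one causal side of a compact set, hence vanishes together with its normal derivative on an entire Cauchy surface, and is therefore identically zero.

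Parts~(3) and~(4) are then formal, the one geometric input being that in a globally hyperbolic spacetime $J^+(K_1,M)\cap J^-(K_2,M)$ is compact for compact $K_1,K_2$. For part~(3) I note that for $f,g\in\Sec(M)$ the supports of $E_A f$ and $E_R g$ meet in the compact set $J^-(\supp f,M)\cap J^+(\supp g,M)$, so the pairings below are well defined and Green's formula carries no boundary term there; using $PE_A f=f$, formal self--adjointness across the compact overlap, and $PE_R g=g$ in turn, $\langle f,E_R g\rangle=\langle PE_A f,E_R g\rangle=\langle E_A f,PE_R g\rangle=\langle E_A f,g\rangle$. For part~(4), $PEf=PE_R f-PE_A f=0$ shows $Ef$ solves the equation, and $\supp Ef\subset J(\supp f,M)$ meets every Cauchy surface in a compact set, so $Ef\in\Sesc(M)$, while skew--adjointness of $E$ drops out of part~(3) applied to both terms. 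To see that every $u\in\Solsc(M)$ has the form $Ef$, I would take a smooth $\chi$ equal to $1$ in the far future and $0$ in the far past of $\supp u$ and set $f=P(\chi u)=[P,\chi]u$, which is compactly supported since the commutator is first order with coefficients in a slab meeting the spacelike--compact $\supp u$ in a compact set; comparing $\chi u$ and $(1-\chi)u$ with $E_R f$ and $E_A f$ through uniqueness identifies $Ef$ with $u$ up to sign. Finally, if $Ef=0$ then $g\doteq E_R f=E_A f$ is supported in the compact set $J^+(\supp f,M)\cap J^-(\supp f,M)$, so $g\in\Sec(M)$ and $f=Pg$, while the converse $EPg=0$ is immediate from $E_{R/A}P=\id$.

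The genuine obstacle is part~(1): the energy current must be arranged so that the boundary terms on the Cauchy surfaces are definite and the lower--order terms of $P$ are absorbed by Gronwall, and the passage from local slabs to a global solution needs the gluing to respect the causal support bound uniformly along the foliation. Once this analytic core is in place, the remaining parts are algebraic consequences of Cauchy uniqueness and of the compactness of causal diamonds.
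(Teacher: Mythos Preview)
The paper does not prove this theorem at all; immediately after the statement it simply says ``We refer to the monograph \cite{chap2_Bar} for the proofs.'' So there is no in--paper argument to compare against, only the cited reference of B\"ar, Ginoux and Pf\"affle.

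Your sketch is a correct and standard route to these results, and the formal derivations of parts~(3) and~(4) from~(1) and~(2) are carried out properly (in particular your use of the compactness of $J^+(\supp f)\cap J^-(\supp g)$ to justify the integration by parts in~(3), and your cutoff argument $f=P(\chi u)$ for surjectivity onto $\Solsc$ in~(4), are exactly the right moves). It is worth noting that the monograph the paper cites actually organises the hard analysis in the reverse order from yours: B\"ar--Ginoux--Pf\"affle first build \emph{local} advanced and retarded fundamental solutions via Riesz distributions and a Hadamard--type formal series, then use uniqueness (which they do obtain from an energy inequality) to glue these into global Green's operators, and only afterwards deduce the well--posedness of the Cauchy problem from the existence of $E_{R/A}$. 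Your plan instead takes the Cauchy problem as primary and constructs $E_{R/A}$ by solving $Pu=f$ with zero data on a surface to the past or future of $\supp f$. Both orderings are valid; yours is perhaps closer in spirit to the classical PDE treatment (Leray, symmetric--hyperbolic reduction), while the cited monograph's approach has the advantage of producing the Hadamard parametrix as a by--product, which the paper later needs anyway in Section~\ref{sec:hadamard}.
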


\noindent The Klein--Gordon operator $P$ is manifestly normally hyperbolic. Moreover, one can check by partial integration that $P$ is also formally self--adjoint. Hence, all above--mentioned results hold for $P$. 

By continuity and the fact that $\Se(M)\subset \Se^\prime_0(M)$, the operators $E_{R/A}$ and $E$ define bi--distributions $E_{R/A}, E\in \Sec^\prime(M^2)$ which we denote by the same symbol via e.g.
$$E_{R/A}(f,g)\doteq \langle f,E_{R/A} g\rangle = \int\limits_{M^2}\vol  d_gy\;E_{R/A}(x,y)f(x)g(y)\,.$$
In terms of integral kernels of these distributions, some of the identities stated in Theorem \ref{thm_FundamentalSolutionScalar} read
$$P_x E_{R/A}(x,y) = \delta(x,y)\,,\qquad E_A(x,y)=E_R(y,x)\,,\qquad E(y,x)= -E(x,y)\,.$$

The support properties of $E_{R/A}$ entail that $E(f,g)$ vanishes if the supports of $f$ and $g$ are spacelike separated. On the level of distribution kernels, this implies that $E(x,y)$ vanishes for spacelike separated $x$ and $y$. In anticipation of the quantization of the free Klein--Gordon field, this qualifies $E(x,y)$ as a {\em commutator function}. In the classical theory instead, $E(x,y)$ defines a {\em Poisson bracket} or {\em symplectic form}. To see this, we first need to specify the vector space on which this bracket should be evaluated.

\begin{definition}\label{def_labellingspaceKleinGordon}By $\Sol $ ($\Solsc$) we denote the space of real (spacelike--compact) solutions of the Klein--Gordon equation
$$\Sol \doteq \{\phi\in \Se(M)\,|\,P\phi=0\}\,,\qquad \Solsc\doteq \Sol \cap \Sesc(M)\,.$$
\index{Sol@$\Sol$, space of smooth solutions of the equation of motion}
\index{Solsc@$\Solsc$, space of smooth solutions of the equation of motion with spacelike--compact support}
By $\cE $ we denote the quotient space 
$$\cE \doteq \Sec(M)/P[\Sec(M)]\,,$$
which is the {\em labelling space of linear on--shell observables of the free neutral Klein--Gordon field}.
\index{Ecal@$\cE$, labelling space of linear (gauge--invariant) on--shell observables}
\end{definition}
The fact that $\cE $ is the labelling space of (classical) linear on--shell observables of the free neutral Klein--Gordon field follows from the observation that each equivalence class $[f]\in\cE $ defines a linear functional on $\Sol $ by 
$$\Sol \ni\phi\mapsto \cO_{[f]}(\phi)\doteq \langle f, \phi \rangle\,,$$
where we note that, in the classical theory, $\Sol $ plays the role of the space of pure states of the model. As $\phi$ is a solution of the Klein--Gordon equation $\cO_{[f]}(\phi)$ does not depend on the representative $f\in[f]$ and is well--defined. The observable $ \langle f, \phi \rangle$ may be interpreted as the `smeared classical field' $\phi(f)\simeq \langle f, \phi \rangle$. The classical observable $\phi(x)$, i.e. the observable that gives the value of a configuration $\phi$ at the point $x$, may be obtained by formally considering $\phi(f)$ with $f=\delta_x$.

We know that every $\phi\in\Sol $ is in one--to--one correspondence with initial data given on an arbitrary but fixed Cauchy surface $\Sigma$ of $(M,g)$. Analogously the support of a representative $f\in[f]\in\cE $ can be chosen to lie in an arbitrarily small neighbourhood of an arbitrary Cauchy surface.

\begin{lemma}
\label{lem_TimeSliceClassicalScalar}
Let $[f]\in\cE $ be arbitrary and let $\Sigma$ be any Cauchy surface of $(M,g)$. Then, for any bounded neighbourhood $\cO(\Sigma)$ of $\Sigma$, we can find a $g\in\Sec(M)$ with $\supp g\subset \cO(\Sigma)$ and $g\in [f]$.
\end{lemma}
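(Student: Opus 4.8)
The plan is to pass from the observable $[f]$ to its associated solution and to re-express that solution as $Eg$ for a suitably localised $g$. Concretely, it suffices to produce a $g\in\Sec(M)$ with $\supp g\subset\cO(\Sigma)$ and $Eg=Ef$: for then $E(g-f)=0$, and by Theorem \ref{thm_FundamentalSolutionScalar}(4) there is an $h\in\Sec(M)$ with $g-f=Ph$, i.e. $g\in[f]$. So I would set $u\doteq Ef$, which by Theorem \ref{thm_FundamentalSolutionScalar}(4) is a solution $Pu=0$ with spacelike--compact support, and try to realise $u$ again as $Eg$ with $g$ supported in an arbitrarily thin two--sided slab around $\Sigma$.

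\textbf{Construction of $g$ by a time cutoff.} Since $(M,g)$ is globally hyperbolic it is foliated by smooth Cauchy surfaces, so inside the given neighbourhood $\cO(\Sigma)$ I would choose two Cauchy surfaces $\Sigma_+\subset I^+(\Sigma,M)$ and $\Sigma_-\subset I^-(\Sigma,M)$ whose enclosed slab $J^+(\Sigma_-,M)\cap J^-(\Sigma_+,M)$ is contained in $\cO(\Sigma)$, together with a cutoff $\chi\in\Se(M)$ that equals $1$ on $J^+(\Sigma_+,M)$ and $0$ on $J^-(\Sigma_-,M)$, so that $\mathrm{d}\chi$ is supported in the slab. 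Define $g\doteq P(\chi u)$. Using $Pu=0$ and that multiplication by $\xi R+m^2$ commutes with $\chi$, one computes $g=[P,\chi]u=-(\Box\chi)\,u-2\,\nabla^\mu\chi\,\nabla_\mu u$, which is supported where $\mathrm{d}\chi\neq0$, hence in the slab. Intersecting this timelike--compactly supported expression with the spacelike--compact $\supp u$ yields a compact set contained in $\cO(\Sigma)$, so $g\in\Sec(M)$ with $\supp g\subset\cO(\Sigma)$, as required.

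\textbf{Verifying $Eg=u$.} The key identity is that $g=P(\chi u)=-P\bigl((1-\chi)u\bigr)$, the two expressions agreeing because $P(\chi u)+P((1-\chi)u)=Pu=0$. Now $\chi u$ has \emph{past--compact} support (it is supported in $J^+(\Sigma_-,M)$ and is spacelike--compact), while $(1-\chi)u$ has \emph{future--compact} support. On these support classes the retarded and advanced Green's operators extend and still satisfy $E_R\circ P=\id$ respectively $E_A\circ P=\id$ (the refined support calculus for wave operators; cf.\ \cite{chap2_Bar}). Applying this gives $E_R g=E_R P(\chi u)=\chi u$ and $E_A g=-E_A P\bigl((1-\chi)u\bigr)=-(1-\chi)u$, whence
$$Eg=E_R g-E_A g=\chi u+(1-\chi)u=u=Ef.$$
Combined with Theorem \ref{thm_FundamentalSolutionScalar}(4) this closes the argument.

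\textbf{Main obstacle.} The delicate point is not the algebra but the functional--analytic justification of the step $Eg=u$: the functions $\chi u$ and $(1-\chi)u$ are \emph{not} compactly supported, so one must invoke the extension of $E_{R/A}$ to past-- and future--compactly supported functions and the validity of $E_{R/A}\circ P=\id$ there, rather than only the compact--support statement of Theorem \ref{thm_FundamentalSolutionScalar}(2). A secondary technical point is the geometric claim that, inside an arbitrary (bounded) neighbourhood $\cO(\Sigma)$, one can fit Cauchy surfaces $\Sigma_\pm$ whose enclosed slab lies in $\cO(\Sigma)$; this I would settle using the smooth Cauchy foliation guaranteed by Bernal and S\'anchez, as quoted earlier in this chapter.
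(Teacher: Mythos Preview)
Your argument is correct, but the paper takes a more elementary route that sidesteps precisely the obstacle you flag. Instead of passing to the solution $u=Ef$ and then invoking the extension of $E_{R/A}$ to past/future--compactly supported data, the paper first reduces (by a case distinction) to the situation where $\supp f$ lies entirely to the past of $\cO(\Sigma)$, picks $\Sigma_1,\Sigma_2\subset\cO(\Sigma)$ and a cutoff $\chi$ equal to $1$ to the past of $\Sigma_1$ and $0$ to the future of $\Sigma_2$, and sets $g\doteq f-P(\chi E_R f)$. Since $E_R f$ is already defined for compactly supported $f$ and $\chi E_R f$ is supported in the compact set $J^+(\supp f,M)\cap J^-(\Sigma_2,M)$, one sees immediately that $g\in[f]$ with $\supp g\subset\cO(\Sigma)$, using only Theorem~\ref{thm_FundamentalSolutionScalar}(2) on $\Sec(M)$ and nothing more. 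Your approach is more symmetric---it needs no case split on the position of $\supp f$ and is exactly the ``future part'' construction that reappears in Definition~\ref{def_futurepart} and Theorem~\ref{thm_covariantequaltime}---but the price is the refined support calculus for $E_{R/A}$. In fact, when $\supp f$ lies to the past of the slab the two constructions produce the \emph{same} representative $g$, so the difference is one of packaging rather than substance.
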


\begin{proof}
Let us assume that $\cO(\Sigma)$ lies in the future of $\supp f$, i.e. $J^-(\supp f,M)\cap \cO(\Sigma)=\emptyset$, the other cases can be treated analogously. Let us consider two auxiliary Cauchy surfaces $\Sigma_1$ and $\Sigma_2$ which are both contained in $\cO(\Sigma)$ and which are chosen such that $\Sigma_2$ lies in the future of $\Sigma$ whereas $\Sigma_1$ lies in the past of $\Sigma$. Moreover, let us take a smooth function $\chi\in\Gamma(M)$ which is identically vanishing in the future of $\Sigma_2$ and fulfils $\chi\equiv 1$ in the past of $\Sigma_1$ and let us define $g\doteq f-P\chi E_R f$. By construction and on account of the properties of both a globally hyperbolic spacetime $(M,g)$ and a retarded fundamental solution $E_R$ on $M$, $\chi E_R f$ has compact support, hence $g\in[f]$. Finally, $\supp g$ is contained in a compact subset of $J^+(\supp f,M)\cap \cO(\Sigma)$.
\end{proof}

We now observe that the causal propagator $E$ induces a meaningful Poisson bracket on $\cE $.

\begin{proposition}\label{prop_presymplectic} The tuple $(\cE ,\tau)$ with $\tau:\cE \times\cE \to\bR$ defined by 
$$\tau([f],[g])\doteq \langle f, E g\rangle$$
\index{tau@$\tau$, (pre--)symplectic or (pre--)inner product on $\cE$}
is a {\em symplectic space}\index{symplectic space}. In particular
\begin{enumerate}
\item $\tau$ is well--defined and independent of the chosen representatives, 
\item $\tau$ is antisymmetric,
\item $\tau$ is (weakly) non--degenerate, i.e. $\tau([f],[g])=0$ for all $[g]\in\cE $ implies $[f]=[0]$.
\end{enumerate}
\end{proposition}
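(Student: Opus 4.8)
The plan is to obtain all three properties directly from the algebraic relations satisfied by the causal propagator $E$ that were collected in Theorem \ref{thm_FundamentalSolutionScalar}, together with the formal self--adjointness of $P$. Two facts do essentially all of the work: first, the composition identities $P\circ E = E\circ P = 0$ on $\Sec(M)$, which I would deduce from $P\circ E_{R/A} = E_{R/A}\circ P = \id$ by subtracting the retarded and advanced pieces; and second, the formal skew--adjointness $\langle f, Eg\rangle = -\langle Ef, g\rangle$ from part 4 of that theorem. I would also record at the outset that the defining expression is always meaningful: since $f$ has compact support and $Eg\in\Sesc(M)$, the integrand $f\cdot Eg$ has compact support, so the dual pairing $\langle f, Eg\rangle$ converges.

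For well--definedness (item 1) I would verify invariance under a change of representative in each slot separately. Replacing $g$ by $g+Pk$ with $k\in\Sec(M)$ gives $E(g+Pk)=Eg+(E\circ P)k=Eg$, so the second argument is unchanged. Replacing $f$ by $f+Ph$, I would use the formal self--adjointness of $P$ to move the operator across the pairing, $\langle Ph, Eg\rangle=\langle h, P(Eg)\rangle = \langle h,(P\circ E)g\rangle = 0$; here the integration by parts is legitimate because $h$ has compact support. Hence $\tau$ descends to the quotient $\cE=\Sec(M)/P[\Sec(M)]$. Antisymmetry (item 2) then follows in one line from skew--adjointness and the symmetry of the pairing: $\tau([f],[g])=\langle f, Eg\rangle=-\langle Ef, g\rangle=-\langle g, Ef\rangle=-\tau([g],[f])$.

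The only step carrying genuine content is non--degeneracy (item 3), and even this is essentially handed to us by the characterisation of $\ker E$ on test functions in part 4 of Theorem \ref{thm_FundamentalSolutionScalar}. I would suppose $\tau([f],[g])=0$ for every $[g]\in\cE$; unwinding the quotient this says $\langle f, Eg\rangle=0$ for all $g\in\Sec(M)$. By skew--adjointness this is equivalent to $\langle Ef, g\rangle=0$ for all test functions $g$, and since $Ef$ is a smooth function and $g$ ranges over all of $\Sec(M)$, this forces $Ef=0$. At this point I would invoke the statement that every $f\in\Sec(M)$ with $Ef=0$ is of the form $f=Pg$ for some $g\in\Sec(M)$, which yields $f\in P[\Sec(M)]$, i.e.\ $[f]=[0]$, as required. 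The main obstacle is therefore not located in this proposition at all: it is absorbed into Theorem \ref{thm_FundamentalSolutionScalar}, whose part 4 encodes precisely that the kernel of $E$ acting on test functions coincides with the range of $P$. Everything remaining here is bookkeeping with support properties and the self--adjointness of $P$.
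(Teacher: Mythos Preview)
Your proof is correct and follows essentially the same approach as the paper's own proof, which simply cites $P\circ E=0$ for well--definedness, the formal skew--adjointness of $E$ together with the symmetry of $\langle\cdot,\cdot\rangle$ for antisymmetry, and the characterisation of $\ker E$ on test functions plus non--degeneracy of $\langle\cdot,\cdot\rangle$ for item 3. You have unpacked these citations more explicitly (in particular, handling each slot of $\tau$ separately and spelling out why the pairing $\langle f, Eg\rangle$ is meaningful and why the integration by parts is justified), but the underlying route is identical.
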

                                                         
\begin{proof}
$\tau$ is independent of the chosen representatives because $P\circ E = 0$. $\tau$ is antisymmetric because $E$ is formally skew--adjoint, cf. the last item of Theorem \ref{thm_FundamentalSolutionScalar}, and because $\langle\cdot,\cdot\rangle$ is symmetric. The non--degeneracy of $\tau$ follows again from the last item of Theorem \ref{thm_FundamentalSolutionScalar} and the fact that $\langle\cdot,\cdot\rangle$ is non--degenerate.
\end{proof}

In standard treatments on scalar field theory, one usually defines Poisson brackets at `equal times', but as realised by Peierls in \cite{chap2_Peierls}, one can give a covariant version of the Poisson bracket which does not depend on a splitting of spacetime into space and time, and this is what we have given above. To relate the covariant form $\tau$ to an equal--time version, we need the definition of a `future part'\index{future part}\index{past part} of a function $f\in\Se(M)$. 

\begin{definition}\label{def_futurepart}
We consider a temporal cutoff function $\chi$ of the form discussed in the proof of Lemma \ref{lem_TimeSliceClassicalScalar}, i.e. a smooth function $\chi$ which is identically vanishing in the future of some Cauchy surface $\Sigma_2$ and identically one in the past of some Cauchy surface $\Sigma_1$ in the past of $\Sigma_2$. Given such a $\chi$, we define for an arbitrary $f\in\Se(M)$ the {\em future part} $f^+$ and the {\em past part} $f^-$ by 
$$f^+\doteq (1-\chi)f\,,\qquad f^- = \chi f\,.$$
\end{definition}

The relation of the covariant picture to the equal time--picture can be now shown in several steps.

\begin{theorem}\label{thm_covariantequaltime}
Let $\ip{\cdot,\cdot}_{\Sol}$ be defined on tuples of solutions with compact overlapping support by 
$$\Sol \times\Sol \ni(\phi_1,\phi_2)\mapsto \ip{\phi_1,\phi_2}_{\Sol}\doteq \ip{P\phi^+_1,\phi_2}\,.$$ 
\index{\$ipsol@$\ip{\cdot,\cdot}_{\Sol}$, bilinear form on $\Sol$}%
Moreover, let $\Sigma$ be an arbitrary Cauchy surface of $(M,g)$ with future--pointing unit normal vectorfield $N$ and canonical measure $d\Sigma$ induced by $\vol $. 
\begin{enumerate}
\item The causal propagator $E:\Sec(M)\to\Sesc(M)$ descends to a bijective map $E:\cE \to\Solsc$.
\item $\ip{\cdot,\cdot}_{\Sol}$ is antisymmetric and well--defined on all tuples of solutions with compact overlapping support, in particular this bilinear form does not depend on the choice of cutoff $\chi$ entering the definition of the future part.
\item For all $f\in\Sec(M)$ and all $\phi\in\Sol $, $\ip{f,\phi}=\ip{E f,\phi}_{\Sol}$. In particular, 
$\ip{\cdot,\cdot}_{\Sol}$ is well--defined on all tuples of solutions with spacelike--compact overlapping support.
\item For all $f,g\in\Sec(M)$, $\tau([f],[g])=\ip{E f,E g}_{\Sol}$, thus the causal propagator $E:\Sec(M)\to\Sesc(M)$ descends to an isomorphism between the symplectic spaces $(\cE ,\tau)$ and $(\Solsc,\ip{\cdot,\cdot}_{\Sol})$. 
\item For all $\phi_1,\phi_2\in\Sol$ with spacelike--compact overlapping support, 
$$\ip{\phi_1,\phi_2}_{\Sol}=\int\limits_\Sigma d\Sigma\; N^\mu j_\mu(\phi_1,\phi_2)\,,\qquad j_\mu(\phi_1,\phi_2) \doteq \phi_1 \nabla_\mu \phi_2 - \phi_2 \nabla_\mu \phi_1\,.$$
\item 
For all $f\in\Sec(\Sigma)$ it holds
$$\nabla_N E f|_\Sigma=f\,,\qquad E f|_\Sigma=0\,.$$ On the level of distribution kernels, this entails that
$$\nabla_NE(x,y)|_{\Sigma\times \Sigma}=\delta_\Sigma(x,y)\,,\qquad E(x,y)|_{\Sigma\times \Sigma}\equiv 0\,,$$
where $\delta_\Sigma$ is the $\delta$-distribution with respect to the canonical measure on $\Sigma$.
\end{enumerate}
\end{theorem}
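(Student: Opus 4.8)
The plan is to treat the six items not strictly in their listed order but according to their logical dependence, isolating the computational identity of item~3 as the engine that drives items~2 and~4, and then reading off the geometric items~5 and~6 from it. Throughout, the recurring technical device is to combine the support cones of $E_R$, $E_A$ with the formal self--adjointness of $P$ (Theorem~\ref{thm_FundamentalSolutionScalar}), so that every integration by parts is performed on a pairing with compact overlapping support and hence carries no boundary term.

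First I would dispose of item~1. That $E$ descends to $\cE$ is immediate, since $E\circ P=E_R\circ P-E_A\circ P=\id-\id$ annihilates $P[\Sec(M)]$, and its image lies in $\Solsc$ because $P\circ E=0$ together with $E:\Sec(M)\to\Sesc(M)$. Injectivity on $\cE$ is precisely the statement in Theorem~\ref{thm_FundamentalSolutionScalar} that $Ef=0$ forces $f\in P[\Sec(M)]$, and surjectivity onto $\Solsc$ is its companion, that every solution with compact Cauchy data equals $Ef$ for some $f$, a spacelike--compact solution having compact data on every Cauchy surface. The heart of the matter is then item~3. Writing the future part as in Definition~\ref{def_futurepart} with a cutoff $\chi$ adapted to $f$, identically $1$ to the past and $0$ to the future of two Cauchy surfaces $\Sigma_1$, $\Sigma_2$ bracketing $\supp f$, I would establish the decomposition
$$(Ef)^+ = E_R f - \chi E_R f - (1-\chi)E_A f,$$
in which both correction terms are compactly supported: $\chi E_R f$ lives in $J^+(\supp f)\cap J^-(\Sigma_2)$ and $(1-\chi)E_A f$ in $J^-(\supp f)\cap J^+(\Sigma_1)$, each a compact causal diamond. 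Hence $(Ef)^+=E_R f + c$ with $c\in\Sec(M)$, so that $P(Ef)^+=P E_R f + Pc = f + Pc$ by $PE_R=\id$. Pairing against $\phi\in\Sol$ and using self--adjointness (legitimate because $c$ is compactly supported) gives $\ip{P(Ef)^+,\phi}=\ip{f,\phi}+\ip{c,P\phi}=\ip{f,\phi}$, which is item~3. Item~4 is a one--line consequence: $\ip{Ef,Eg}_{\Sol}=\ip{f,Eg}=\tau([f],[g])$ by item~3 and Proposition~\ref{prop_presymplectic}, and combined with item~1 this exhibits $E$ as a bijection intertwining $\tau$ with $\ip{\cdot,\cdot}_{\Sol}$.

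For item~2 I would argue cutoff--independence by the same device: two cutoffs differ by a function $\chi'-\chi$ of timelike--compact support, so $P\big((\chi'-\chi)\phi_1\big)=[P,\chi'-\chi]\phi_1$ is supported in a timelike--compact slab and, when paired with $\phi_2$, has compact overlapping support; self--adjointness moves $P$ onto $\phi_2$, where $P\phi_2=0$ kills it. Antisymmetry I would obtain by splitting both solutions with a common cutoff into past and future parts, using $P\phi_i^+=-P\phi_i^-$ and self--adjointness on the four resulting terms, each pair having compact overlap, which organises them into two cancelling pairs. Item~5 is the geometric translation: inserting $\ip{\phi_1,\phi_2}_{\Sol}=\ip{P\phi_1^+,\phi_2}$ and noting that the mass and potential terms cancel in $\phi_2 P\phi_1^+-\phi_1^+P\phi_2$ while $P\phi_2=0$, the integrand becomes the total divergence $\nabla^\mu j_\mu(\phi_1^+,\phi_2)$. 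Integrating over the slab between a far--past and a far--future Cauchy surface and applying Gauss' theorem, the past boundary contributes nothing because $\phi_1^+$ and its derivative vanish there, while on the future boundary $\phi_1^+=\phi_1$; conservation $\nabla^\mu j_\mu=0$ (from $P\phi_i=0$) together with spacelike--compactness of the overlap then lets me slide the surface to the arbitrary Cauchy surface $\Sigma$.

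Finally, for item~6 the vanishing $E(x,y)|_{\Sigma\times\Sigma}=0$ is immediate from the support property in Theorem~\ref{thm_FundamentalSolutionScalar}, since distinct points of the spacelike Cauchy surface $\Sigma$ are spacelike separated and the diagonal is controlled by the antisymmetry $E(x,y)=-E(y,x)$; the normalisation $\nabla_N E(x,y)|_{\Sigma\times\Sigma}=\delta_\Sigma(x,y)$ I would extract by feeding the boundary expression of item~5 back into item~3, comparing $\ip{f,\phi}$ with $\int_\Sigma N^\mu\big(Ef\,\nabla_\mu\phi-\phi\,\nabla_\mu Ef\big)\,d\Sigma$ as $\phi$ ranges over all of $\Solsc$, equivalently over all compactly supported Cauchy data by item~1, and reading off the data of $Ef$ on $\Sigma$. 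I expect the main obstacle to be item~3, precisely the support bookkeeping that makes $(Ef)^+$ agree with $E_R f$ up to a compactly supported remainder: this is where the retarded and advanced support cones must be combined correctly, and it is the step on which items~2 and~4 ultimately rest. A secondary subtlety is pinning down the exact $\delta_\Sigma$ normalisation in item~6, where the induced measure $d\Sigma$ and the unit normalisation $g(N,N)=-1$ must be tracked carefully.
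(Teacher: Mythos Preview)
Your proposal is correct and follows essentially the same approach as the paper's sketch: item~1 via Theorem~\ref{thm_FundamentalSolutionScalar}, item~3 via the observation that $E_Rf$ serves as a future part of $Ef$ up to a compactly supported remainder (the paper states this more tersely as ``$E_Rf$ is a valid future part of $Ef$''), item~2 by comparing two cutoffs and moving $P$ across by self--adjointness, item~4 as an immediate corollary, and items~5--6 via the divergence identity for $j_\mu$ and Stokes' theorem. Your write--up is considerably more detailed than the paper's sketch (which defers items~5 and~6 to Dimock's work), but the underlying ideas and their logical dependencies are the same.
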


\begin{proof}We sketch the proof. The first statement follows from the last item of Theorem \ref{thm_FundamentalSolutionScalar}. The fact that $\ip{\cdot,\cdot}_{\Sol}$ is well--defined follows from the observation that two different definitions $\phi^+$, $\phi^{+\prime}$ of the future part differ by a compactly supported smooth function $f=\phi_1^+-\phi_1^{+\prime}$; consequently the supposedly different definitions of the bilinear form  differ by $\ip{\phi_1,\phi_2}_{\Sol}-\ip{\phi_1,\phi_2}^\prime_{\Sol}=\ip{Pf,\phi_2} = \ip{f,P\phi_2}=0$. Note that this partial integration is only possible because $f$ has compact support, in particular, $\ip{\cdot,\cdot}_{\Sol}$ is non--vanishing in general. The antisymmetry of $\ip{\cdot,\cdot}_{\Sol}$ follows by similar arguments and $P\phi^+ = P(\phi-\phi^-) = -P\phi^-$. The third statement follows from the fact that $E_R f$ is a valid future part of $E f$, thus $\ip{E f,\phi}_{\Sol} = \ip{P E_R f, \phi} = \ip{f,\phi}$. The fourth statement follows immediately from the first and third one, whereas the fifth one follows from $\nabla^\mu j_\mu(\phi_1,\phi_2)=\phi_2P\phi_1 - \phi_1P\phi_2$ by an application of Stokes theorem, see e.g. \cite{chap2_DimockScalar}, where also a proof of the last statement can be found.
\end{proof}

We now interpret the previous results. As argued above, elements $[f]\in\cE $ label linear on--shell observables $\phi(f)\simeq \ip{f,\phi}$, i.e. the classical field $\phi$ smeared with the test function $f$. The causal propagator $E$ induces a non--degenerate symplectic form $\tau$ on $\cE $, which we may interpret as $\{\phi(f),\phi(g)\}\simeq \tau([f],[g])=\ip{f,E g}$, or, formally, as $\{\phi(x),\phi(y)\}=E(x,y)$. On the other hand, since $(\cE ,\tau)$ and $(\Solsc,\ip{\cdot,\cdot}_{\Sol})$ are symplectically isomorphic, we can equivalently label linear on--shell observables by $\Solsc\ni u$, i.e. by $\ip{u,\phi}_{\Sol}$, the classical field `symplectically smeared' with the test solution $u$, where this symplectic smearing consists of integrating a particular expression at equal times. The last result of the above theorem implies that the covariant Poisson bracket $\{\phi(x),\phi(y)\}=E(x,y)$ has the well--known equal--time equivalent 
$$\{\nabla_N\phi(x)|_\Sigma,\phi(y)|_\Sigma\}=\nabla_NE(x,y)|_{\Sigma\times \Sigma}=\delta_\Sigma(x,y)\,,$$
$$\{\phi(x)|_\Sigma,\phi(y)|_\Sigma\}=E(x,y)|_{\Sigma\times \Sigma}=0\,,$$ 
which may be interpreted as equal--time Poisson brackets of the field $\phi(x)$ and its `canonical momentum' $\nabla_N\phi(x)$. Further details on the relation between the equal--time and covariant picture can be found e.g. in \cite[Chapter 3]{chap2_WaldBook2}.

\subsubsection{General Models without Gauge Symmetry}
\label{sec_generalnongauge}

The previous discussion of the classical free neutral Klein--Gordon field revealed the essential ingredients defining this model. Following e.g. \cite{chap2_Sahlmann:2000zr,chap2_Bar:2011iu}, this can be generalised to define an arbitrary linear field--theoretic model on a curved spacetime.
\begin{definition}\label{def_linearnongauge}
A real Bosonic linear field--theoretic model without local gauge symmetries on a curved spacetime is defined by the data $(M,\cV,P)$, where
\begin{enumerate}
\item $M\simeq (M,g)$ is a globally hyperbolic spacetime,
\item $\cV$ is a real vector bundle over $M$, the space of smooth sections $\Se(\cV)$ of $\cV$ is endowed with a symmetric and non--degenerate bilinear form $\ip{\cdot,\cdot}_\cV$
\index{\$bilinearV@$\ip{\cdot,\cdot}_\cV$, non--degenerate (anti)symmetric bilinear form on $\Se(\cV)$}
\index{\$bilinearVfibre@$\ipp{\cdot,\cdot}_\cV$, non--degenerate (anti)symmetric fibrewise bilinear form on $\cV$}
which is well--defined on sections with compact overlapping support and given by the integral of a fibrewise symmetric and non--degenerate bilinear form $\ipp{\cdot,\cdot}_\cV:\Se(\cV)\times\Se(\cV)\to\Se(M)$,
\item $P:\Se(\cV)\to\Se(\cV)$ is a Green--hyperbolic\index{Green--hyperbolic operator} partial differential operator, i.e. there exist unique advanced $E^P_R$ and retarded $E^P_A$ fundamental solutions of $P$ 
\index{EP@$E^P_R$, $E^P_A$, retarded / advanced fundamental solution of a Green--hyperbolic operator $P$}
which satisfy $P\circ E^P_{R/A}=E^P_{R/A}\circ P=\id|_{\Sec(\cV)}$ and $\supp E^P_{R/A} f \subset J^\pm(\supp f,M)$ for all $f\in\Sec(\cV)$; moreover $P$ is formally self--adjoint with respect to $\ip{\cdot,\cdot}_\cV$.
\end{enumerate}
A real Fermionic linear field--theoretic model without local gauge symmetries on a curved spacetime is defined analogously with the only difference being that $\ipp{\cdot,\cdot}_\cV$ and $\ip{\cdot,\cdot}_\cV$ are not symmetric but antisymmetric. Complex theories can be obtained from the real ones by complexification.
\end{definition}

The relevance of the given data is as follows. Classical configurations $\Phi$ of the linear field model under consideration are smooth sections $\Phi\in\Se(\cV)$ of the vector bundle $\cV$. We recall that $\cV$ is locally of the form $M\times V$ with a real vector space $V$ which implies that locally $\Phi$ is a smooth function from $M$ to $V$, see e.g. \cite{chap2_Nakahara,chap2_Kobayashi} for background material on vector bundles. We shall denote by $\Sec(\cV), \Setc(\cV), \Sesc(\cV)$ the subspaces of $\Sec(\cV)$ consisting of smooth sections of $\cV$ with compact, timelike--compact and spacelike--compact support, respectively.%
\index{V@$\cV$, vector bundle}%
\index{Gamma(V)@$\Se(\cV)$, smooth sections of a vector bundle $\cV$}%
\index{Gamma0(V)@$\Sec(\cV)$, $\Setc(\cV)$, $\Sesc(\cV)$, smooth sections of a vector bundle $\cV$ with compact, timelike--compact, spacelike--compact support}%

The operator $P$ specifies the equation of motion for the field model, the formal self--adjointness of $P$ is motivated by the fact that equations of motion arising as Euler--Lagrange equations of a Lagrangean are generally given by a formally self--adjoint $P$. In fact the (formal) action $S(\Phi)=\frac12 \ip{\Phi,P\Phi}_\cV$ leads to the Euler--Lagrange equation $P\Phi=0$.

In the Klein--Gordon case we are dealing with an operator which is normally hyperbolic, i.e. the leading order term is of the form $-g^{\mu\nu}\partial_\mu\partial_\nu$. As reviewed in Theorem \ref{thm_FundamentalSolutionScalar}, this operator has a well--defined Cauchy problem, i.e. it is {\em Cauchy--hyperbolic}\index{Cauchy--hyperbolic operator}, and consequently unique advance and retarded fundamental solutions exist such that the operator is {\em Green--hyperbolic}. Example of partial differential operators which are Cauchy--hyperbolic, but not normally hyperbolic are the Dirac operator and the Proca operator which defines the equation of motion for a massive vector field, see e.g. \cite{chap2_Bar:2011iu}. On the other hand, the distinction between Cauchy--hyperbolic operators and Green--hyperbolic operators does not matter in most examples although one can construction operators which are Green--hyperbolic but not Cauchy--hyperbolic, cf. \cite{chap2_Bar:2011iu} for details.

Based on the data given in Definition \ref{def_linearnongauge}, a symplectic space (Bosonic case) or inner product space (Fermionic case) can be constructed in full analogy to the Klein--Gordon case, in particular, the following can be shown.

\begin{theorem}\label{thm_symplecticnongauge}Under the assumptions of Definition \ref{def_linearnongauge}, let $E^P\doteq E^P_R-E^P_A$ denote the causal propagator of $P$\index{EP@$E^P$, causal propagator of a Green--hyperbolic operator $P$}, and let $\Sol\subset \Se(\cV)$, $\Solsc \subset \Sesc(\cV)$ denote the space of smooth (smooth and spacelike--compact) solutions of $P\Phi=0$.
\index{Sol@$\Sol$, space of smooth solutions of the equation of motion}
\index{Solsc@$\Solsc$, space of smooth solutions of the equation of motion with spacelike--compact support}
\begin{enumerate}
\item The tuple $(\cE,\tau)$, where 
\index{Ecal@$\cE$, labelling space of linear (gauge--invariant) on--shell observables}
\index{tau@$\tau$, (pre--)symplectic or (pre--)inner product on $\cE$}
$$\cE\doteq \Sec(\cV)/P[\Sec(\cV)]\,,$$
$$\tau:\cE\times \cE\to \bR\,,\qquad \tau([f],[g])\doteq \ip{f,E^P g}_\cV\,,$$
is a well--defined symplectic (Bosonic case) or inner product (Fermionic case) space. In particular $\tau$ is well--defined and independent of the chosen representatives and moreover non--degenerate and antisymmetric (Bosonic case) or symmetric (Fermionic case).
\item Let $[f]\in\cE $ be arbitrary and let $\Sigma$ be any Cauchy surface of $(M,g)$. Then, for any bounded neighbourhood $\cO(\Sigma)$ of $\Sigma$, we can find a $g\in\Sec(\cV)$ with $\supp g\subset \cO(\Sigma)$ and $g\in [f]$.
\item The causal propagator $E^P:\Sec(\cV)\to\Sesc(\cV)$ descends to a bijective map $\cE\to\Solsc $ and for all $f\in\Sec(\cV)$ and all $\Phi\in\Sol$, $$\ip{f,\Phi}_\cV=\ip{E^P f,\Phi}_{\Sol}\,,$$ where for all $\Phi_1, \Phi_2\in\Sol$ with spacelike--compact overlapping support, the bilinear form $\ip{\cdot,\cdot}_{\Sol}$ is defined as 
$$\ip{\Phi_1,\Phi_2}_{\Sol}\doteq \ip{P\Phi^+_1,\Phi_2}_\cV\,.$$%
\item $\ip{\cdot,\cdot}_{\Sol}$ may be computed as a suitable integral over an arbitrary but fixed Cauchy surface $\Sigma$ of $(M,g)$ with future--pointing normal vector field $N$ and induced measure $d\Sigma$. If there exists a `current' $j:\Se(\cV)\times\Se(\cV)\to T^*M$ such that $\nabla^\mu j_\mu(\Phi_1,\Phi_2) = \langle\langle\Phi_1, P \Phi_2\rangle\rangle_\cV - \langle\langle\Phi_2, P \Phi_1\rangle\rangle_\cV$ for all $\Phi_1,\Phi_2\in\Se(\cV)$, then
$$\ip{\Phi_1,\Phi_2}_\Sol = \int\limits_\Sigma d\Sigma\; N^\mu j_\mu(\Phi_1,\Phi_2)\,.$$
\item The tuple $(\Solsc ,\ip{\cdot,\cdot}_{\Sol})$ is a well--defined symplectic (Bosonic case) or inner product (Fermionic case) space which is isomorphic to $(\cE,\tau)$.
\index{\$ipsol@$\ip{\cdot,\cdot}_{\Sol}$, bilinear form on $\Sol$}%
\end{enumerate}
\end{theorem}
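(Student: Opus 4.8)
The plan is to reproduce, in the vector--bundle setting, the chain of arguments already carried out for the neutral Klein--Gordon field in Proposition~\ref{prop_presymplectic}, Lemma~\ref{lem_TimeSliceClassicalScalar} and Theorem~\ref{thm_covariantequaltime}, with the normally hyperbolic scalar operator replaced by the Green--hyperbolic, formally self--adjoint $P$ of Definition~\ref{def_linearnongauge}. Almost every step is then a purely algebraic manipulation together with an integration by parts licensed by the formal self--adjointness of $P$. Before starting, I would isolate the two genuinely analytic facts about Green--hyperbolic operators that the scalar proof drew from the last item of Theorem~\ref{thm_FundamentalSolutionScalar} and that must now be imported from the general theory \cite{chap2_Bar:2011iu}: first, that $E^P f=0$ with $f\in\Sec(\cV)$ forces $f=Pg$ for some $g\in\Sec(\cV)$, i.e. $\ker E^P|_{\Sec(\cV)}=P[\Sec(\cV)]$; and second, that $E^P[\Sec(\cV)]=\Solsc$, i.e. every spacelike--compact solution is of the form $E^P f$. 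These two statements are the only non--formal input, and I expect them to be the main obstacle in the sense that everything else is bookkeeping around them.

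For part~(1) I would first note $P\circ E^P=E^P\circ P=0$ which, together with the formal self--adjointness of $P$, makes $\tau([f],[g])=\ip{f,E^P g}_\cV$ independent of the chosen representatives. Formal self--adjointness propagates, exactly as in the third item of Theorem~\ref{thm_FundamentalSolutionScalar}, to the statement that $E^P_R$ and $E^P_A$ are formal adjoints of one another, whence $E^P$ is formally skew--adjoint and $\ip{f,E^P g}_\cV=-\ip{E^P f,g}_\cV$. Combining this with the symmetry (Bosonic) or antisymmetry (Fermionic) of $\ip{\cdot,\cdot}_\cV$ then yields $\tau([f],[g])=\mp\tau([g],[f])$, i.e. $\tau$ is antisymmetric in the Bosonic and symmetric in the Fermionic case; this sign bookkeeping is exactly where the two cases must be kept apart. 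Non--degeneracy follows from $\tau([f],[g])=-\ip{E^P f,g}_\cV$: if this vanishes for all $g$, non--degeneracy of $\ip{\cdot,\cdot}_\cV$ gives $E^P f=0$, and the kernel property above gives $f\in P[\Sec(\cV)]$, i.e. $[f]=0$. Part~(2) I would prove verbatim as Lemma~\ref{lem_TimeSliceClassicalScalar}, setting $g\doteq f-P\chi E^P_R f$ for a temporal cutoff $\chi$; then $g-f\in P[\Sec(\cV)]$ because $\chi E^P_R f$ has compact support, and $\supp g$ is squeezed into $\cO(\Sigma)$.

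For part~(3), descent of $E^P$ to a map $\cE\to\Solsc$ is immediate from $E^P\circ P=0$, injectivity is the kernel property and surjectivity is the second analytic fact, so $E^P$ is a bijection. The identity $\ip{f,\Phi}_\cV=\ip{E^P f,\Phi}_{\Sol}$ I would obtain by observing that $E^P_R f$ is a legitimate future part of $E^P f=E^P_R f-E^P_A f$ (it vanishes in the far past and agrees with $E^P f$ in the far future, since $E^P_A f$ is supported in $J^-(\supp f,M)$), so that $\ip{E^P f,\Phi}_{\Sol}=\ip{P E^P_R f,\Phi}_\cV=\ip{f,\Phi}_\cV$; well--definedness of $\ip{\cdot,\cdot}_{\Sol}$, i.e. independence of the cutoff, follows because two future parts differ by a compactly supported section $h$, and $\ip{Ph,\Phi}_\cV=\ip{h,P\Phi}_\cV=0$ by formal self--adjointness and $P\Phi=0$. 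Part~(4) reduces to Stokes' theorem applied to the assumed current: using $P\Phi_2=0$ and the current identity one writes $\ip{\Phi_1,\Phi_2}_{\Sol}=\ip{P\Phi_1^+,\Phi_2}_\cV$ as the integral over $M$ of a total divergence $\nabla^\mu j_\mu(\Phi_1^+,\Phi_2)$; the integrand is supported in the compact transition region of $\chi$ intersected with the spacelike--compact overlap of $\Phi_1,\Phi_2$, so integrating over $J^-(\Sigma,M)$ leaves only the boundary integral over $\Sigma$, where $\Phi_1^+=\Phi_1$, reproducing the stated Cauchy--surface formula. Finally, part~(5) follows by combining (1) and (3): taking $\Phi=E^P g$ in the identity of part~(3) gives $\tau([f],[g])=\ip{f,E^P g}_\cV=\ip{E^P f,E^P g}_{\Sol}$, so the bijection $E^P:\cE\to\Solsc$ intertwines $\tau$ with $\ip{\cdot,\cdot}_{\Sol}$; hence $\ip{\cdot,\cdot}_{\Sol}$ inherits the appropriate (anti)symmetry and the non--degeneracy from $\tau$, and the two spaces are isomorphic as symplectic (Bosonic) or inner product (Fermionic) spaces.
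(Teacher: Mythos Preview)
Your proposal is correct and matches the paper's approach exactly: the paper does not give an explicit proof of this theorem, stating only that it ``can be shown'' in full analogy to the Klein--Gordon case, i.e.\ by transporting the arguments of Proposition~\ref{prop_presymplectic}, Lemma~\ref{lem_TimeSliceClassicalScalar} and Theorem~\ref{thm_covariantequaltime} to the vector--bundle setting, which is precisely what you outline. Your identification of the two analytic inputs from \cite{chap2_Bar:2011iu} (the kernel and image of $E^P$ on $\Sec(\cV)$) as the only non--formal ingredients is also on point.
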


As with the Klein--Gordon field, the last statement implies in physical terms that the symplectic respectively inner product space can be constructed both in a covariant and in an equal--time fashion, and that the two constructions give equivalent results. In many cases, the equal--time point of view is better suited for practical computations and for proving particular further properties of the bilinear form $\tau$, cf. the following discussion of theories with local gauge invariance.

\subsection{Models with Gauge Symmetry}
\label{sec:linear_gauge}

The discussion of linear field theoretic models with local gauge symmetries on curved spacetimes is naturally more involved than the case where such symmetries are absent. However, as in this monograph we will only be dealing with linear models and simple observables, it will not be necessary to introduce auxiliary fields like in the BRST/BV formalism \cite{chap2_Hollands:2007zg,chap2_Fredenhagen:2011mq}. Instead, we shall review an approach which has been developed in \cite{chap2_Dimock:1992ff} for the Maxwell field, used for linearised gravity in \cite{chap2_Fewster:2012bj} and then further generalised to arbitrary linear gauge theories in \cite{chap2_Hack:2012dm}. For linear models and simple observables this approach and the BRST/BV formalism give equivalent results, however, non--linear models and more general observables are not tractable in the way we shall review in the following.

\subsubsection{A Toy Model}

We outline the essential ideas of this approach at the example of a toy model. We consider as a gauge field $\Phi=(\phi_1,\phi_2)^t\subset \Se(\cV)$ a tuple of two scalar fields on a spacetime $(M,g)$ satisfying the equation of motion
$$P\Phi =\begin{pmatrix}-\Box & \phantom{-}\Box\\\phantom{-}\Box & -\Box\end{pmatrix}\begin{pmatrix}\phi_1\\\phi_2\end{pmatrix}=0\,,$$
where $\cV$ is the (trivial) vector bundle $\cV\doteq M\times \bR^2$.
The gauge transformations are given by the following translations on configuration space $\Phi\mapsto \Phi + K \epsilon$, where the gauge transformation operator $K:\Se(M)\to\Se(\cV)$ is the linear operator defined by $K \epsilon \doteq (\epsilon,\epsilon)^t$ for a smooth function $\epsilon\in\Se(M)$. One may check that $P\circ K=0$ holds which is equivalent to the gauge--invariance of the action $S(\Phi)\doteq \frac12\langle \Phi,P\Phi\rangle_\cV$ with $\langle \Phi,\Phi^\prime\rangle_\cV\doteq \int_M \vol ( \phi^{\phantom{\prime}}_1 \phi^\prime_1+\phi^{\phantom{\prime}}_2 \phi^\prime_2)$. 

Clearly, the linear combination $\psi\doteq \phi_1-\phi_2$ is gauge--invariant and satisfies $-\Box\psi=0$, and it would be rather natural to quantize $\Phi$ by directly quantizing $\psi$ as a massless, minimally coupled scalar field. This would be much in the spirit of the usual quantization of perturbations in Inflation, where gauge--invariant linear combinations of the gauge field components, e.g. the Bardeen--Potentials or the Mukhanov--Sasaki variable, are taken as the fundamental fields for quantization, see the last chapter of this monograph. However, in general it is rather difficult to directly identify a gauge--invariant fundamental field like $\psi$ whose classical and quantum theory is equivalent to the classical and quantum theory of the original gauge field. Notwithstanding, an indirect characterisation of such a gauge--invariant linear combination of gauge--field components, which can serve as a fundamental field for quantization, is still possible. In the toy model under consideration we consider a tuple  $f=(f_1,f_2)\in\Sec(\cV)$ of test functions $f_i\in\Sec(M)$. We ask that $K^\dagger f\doteq f_1+f_2=0$, where $K^\dagger:\Se(\cV)\to\Se(M)$ is the adjoint of the gauge transformation operator $K$ i.e. $\int_M \vol  \epsilon K^\dagger f=\langle K\epsilon, f\rangle_\cV$. Clearly, any $f$ satisfying this condition is of the form $f=(h,-h)^t$ for a test function $h$. We now observe that the pairing between a gauge field configuration $\Phi$ and such an $f$ is gauge--invariant, i.e. $\langle \Phi + K\epsilon, f\rangle_\cV = \langle f,\Phi\rangle_\cV + \int_M \vol  \epsilon K^\dagger f=\langle \Phi, f\rangle_\cV$. Thus we can consider the `smeared field' $\Phi(f)\simeq \langle f,\Phi\rangle_\cV$, with $f=(h,-h)^t$ and arbitrary $h$, as a gauge--invariant linear combination of gauge--field components which is suitable for playing the role of a fundamental field for quantization. We can compute $\langle f,\Phi\rangle_\cV=\int_M\vol \psi h$, and observe that, up to the `smearing' with $h$, this indirect choice of gauge--invariant fundamental field is exactly the one discussed in the beginning. If one chooses $h$ to be the delta distribution $\delta(x,y)$ rather than a test function, one even finds $\langle f,\Phi\rangle=\psi(x)$, whereas for general $h$, $\langle f,\Phi\rangle$ can be interpreted as a weighted, gauge--invariant measurement of the field configuration $\Phi$. Moreover, as already anticipated, in general gauge theories with more complicated gauge transformation operators $K$ it it usually extremely difficult to classify all solutions of $K^\dagger f=0$, which would be equivalent to a direct characterisation of one or several fundamental gauge--invariant fields such as $\psi$, whereas working implicitly with the condition $K^\dagger f=0$ is always possible.

\subsubsection{General Models}

From the previous discussion we can already infer most of the additional data which is needed in addition to the data mentioned in Definition \ref{def_linearnongauge} in order to specify a linear field theoretic model with local gauge symmetries on curved spacetimes.

\begin{definition}\label{def_lineargauge}
A real Bosonic linear field--theoretic model with local gauge symmetries on a curved spacetime is defined by the data $(M,\cV,\cW,P,K)$, where
\begin{enumerate}
\item $M\simeq (M,g)$ is a globally hyperbolic spacetime,
\item $\cV$ and $\cW$ are real vector bundles over $M$, the spaces of smooth sections $\cV$ and $\cW$ are endowed with symmetric and non--degenerate bilinear forms $\ip{\cdot,\cdot}_\cV$ and $\ip{\cdot,\cdot}_\cW$ which are well--defined on sections with compact overlapping support and given by the integral of  fibrewise symmetric and non--degenerate bilinear forms $\ipp{\cdot,\cdot}_\cV:\Se(\cV)\times\Se(\cV)\to\Se(M)$ and $\ipp{\cdot,\cdot}_\cW:\Se(\cW)\times\Se(\cW)\to\Se(M)$,
\index{Wcal@$\cW$, vector bundle}
\item $P:\Se(\cV)\to\Se(\cV)$ is a partial differential operator which is formally self--adjoint with respect to $\ip{\cdot,\cdot}_\cV$,
\item $K:\Se(\cW)\to\Se(\cV)$\index{K@$K$, gauge--transformation operator} is a partial differential operator such that $P\circ K=0$; moreover $R\doteq K^\dagger\circ K:\Se(\cW)\to\Se(\cW)$ is Cauchy--hyperbolic and there exists an operator $T:\Se(\cW)\to\Se(\cV)$\index{T@$T$, gauge--fixing operator} such that a) $\widetilde P\doteq P+T\circ K^\dagger$\index{Ptilde@$\widetilde{P}$, gauge--fixed equation of motion operator} is Green--hyperbolic and b) $Q\doteq K^\dagger \circ T$ is Cauchy--hyperbolic.
\end{enumerate}
A real Fermionic linear field--theoretic model with local gauge symmetries on a curved spacetime is defined analogously with the only difference being that $\ipp{\cdot,\cdot}_\cV$, $\ip{\cdot,\cdot}_\cV$, $\ipp{\cdot,\cdot}_\cW$ and $\ip{\cdot,\cdot}_\cW$ are not symmetric but antisymmetric. Complex theories can be obtained from the real ones by complexification.
\end{definition}

These data have the following meaning. Sections of $\cV$ are configurations of the gauge field $\Phi$, whereas local gauge transformations are parametrised via the gauge transformation operator $K$ by sections of $\cW$. The differential operator $P$ defines the equation of motion for the gauge field $\Phi$ via $P\Phi=0$. The formal self--adjointness of $P$ is motivated by $P$ being the Euler--Lagrange operator of a local action $S(\Phi)$, e.g. $S(\Phi)=\frac12 \langle \Phi,P\Phi\rangle_\cV$, whereas the gauge--invariance condition $P\circ K=0$ implies gauge--invariance of the action $S(\Phi)$. This condition implies (for $K\neq 0$) that $P$ can {\em not} be Cauchy--hyperbolic, because any `pure gauge configuration' $\Phi_\epsilon=K\epsilon$ with $\epsilon\in\Sec(\cW)$ of compact support solves the equation of motion $P\Phi_\epsilon=0$ with vanishing initial data in the distant past, whereas for Cauchy--hyperbolic $P$ the unique solution with vanishing initial data is identically zero.

The Cauchy--hyperbolicity of $R=K^\dagger\circ K$ implies that for every $\Phi\in\Se(\cV)$ there exists an $\epsilon\in\Se(\cW)$ such that $\Phi^\prime \doteq \Phi+K\epsilon$ satisfies the `canonical gauge--fixing condition' $K^\dagger \Phi^\prime =0$. The existence of the gauge--fixing operator $T$ such that the gauge--fixed equation of motion operator $\widetilde P= P+T\circ K^\dagger$ is Green--hyperbolic implies that every solution of $P\Phi=0$ in fact satisfies $\widetilde P \Phi=0$ up to gauge--equivalence; consequently, the dynamics of the `physical degrees of freedom' is ruled by a hyperbolic equation of motion even if $P$ is not hyperbolic. Finally, the condition that $Q=K^\dagger \circ T$ is Cauchy--hyperbolic implies that the gauge--fixing $K^\dagger \Phi=0$ is compatible with the hyperbolic dynamics of $\widetilde P \Phi=0$. $T$ is in general not canonical and the following constructions will not depend on the particular choice of $T$ in case several $T$ with the required properties exist, thus we do not consider the gauge--fixing operator $T$ as part of the data specifying the model.

Apart from the toy model discussed above, a simple example of a linear gauge theory which fits into Definition \ref{def_lineargauge} is the Maxwell field (on a trivial principal $U(1)$--bundle) which after all was the inspiration for the formulation of this definition. This model is specified by (in differential form notation)

$$\cW=M\times \bR\,,\qquad \cV=\cW\otimes T^*M = T^*M\,,$$
$$\ip{\epsilon_1, \epsilon_2}_\cW \doteq  \int\limits_M \epsilon_1\wedge
*\epsilon_2\,,\qquad \ip{\Phi_1, \Phi_2}_\cV \doteq  \int\limits_M \Phi_1\wedge
*\Phi_2\,,$$
$$P=d^\dagger d\,,\qquad K=T= d\,,$$
$$\widetilde P = d^\dagger d + d d^\dagger\,, \qquad K^\dagger K = K^\dagger T = d^\dagger d = \Box\,.$$

We would like to construct a (pre--)symplectic or (pre--)inner product space corresponding to the data given in Definition \ref{def_lineargauge} by following as much as possible the logic of the case without gauge symmetry. To this avail we need a few further definitions of section spaces.

\begin{definition}\label{def_sectionspacesgauge}
As before, we denote by $\Sol\subset\Se(\cV)$ ($\Solsc\subset\Sesc(\cV)$) the spaces of smooth solutions of the equation $P\Phi=0$ (with spacelike--compact support). By $\cG$ and $\cGsc$ we denote the space of gauge configurations (with spacelike--compact support), by $\cGscc$ we denote the gauge configurations induced by spacelike--compact gauge transformation parameters
$$\cG\doteq K\left[\Se(\cW)\right]\,,\qquad \cGsc\doteq \cG\cap \Sesc(\cW)\,,\qquad \cGscc\doteq K\left[\Sesc(\cW)\right]\,.$$
In general, $\cGscc\subsetneq\cGsc$. By $\ker_0(K^\dagger)$ we denote the space of gauge--invariant test--sections and by $\cE$ the {\em labelling space of linear gauge--invariant on--shell observables}
$$\ker_0(K^\dagger)\doteq \{f\in \Sec(\cV)\,|\,K^\dagger f =0\}\,,\qquad \cE\doteq \ker_0(K^\dagger)/P\left[\Sec(\cV)\right]\,.$$

\index{ker0@$\ker_0(K^\dagger)$, compactly supported kernel of $K^\dagger$}
\index{Ecal@$\cE$, labelling space of linear (gauge--invariant) on--shell observables}
\index{Gcal@$\cG$, space of gauge configurations}
\index{Gcalsc@$\cGsc$, space of gauge configurations with spacelike--compact support}
\index{Gcalscc@$\cGscc$, space of gauge configurations induced by spacelike--compact gauge transformation parameters}
\index{Sol@$\Sol$, space of smooth solutions of the equation of motion}
\index{Solsc@$\Solsc$, space of smooth solutions of the equation of motion with spacelike--compact support}
\end{definition}

Our discussion of the toy model in the previous subsection already indicated why $\cE$ defined above is a good candidate for a labelling space of linear gauge--invariant on--shell observables. First of all we observe that $\cE$ is well--defined because $P\left[\Sec(\cV)\right]\subset \ker_0(K^\dagger)$ owing to $P\circ K=0$. Moreover, we have by construction for arbitrary $\Phi\in\Sol$, $\epsilon\in\Se(\cW)$, $f\in\ker_0(K^\dagger)$ and $g\in\Sec(\cV)$
$$\ip{f+Pg,\Phi+K\epsilon}_\cV = \ip{f,\Phi}_\cV\,.$$
Consequently, every element $[f]$ of $\cE$ induces a well--defined linear functional on $\Sol/\cG$, i.e. on gauge--equivalence classes of on--shell configurations, by $\Sol/\cG\ni[\Phi]\mapsto \cO_{[f]}([\Phi])\doteq \ip{f,\Phi}_\cV$. Being gauge--invariant, these functionals correspond to meaningful (physical) observables. On the level of classical observables, the fact that the physical degrees of freedom of the gauge field propagate in a causal fashion is reflected in the following generalisation of Lemma \ref{lem_TimeSliceClassicalScalar} which is proved in \cite{chap2_Hack:2012dm}.

\begin{lemma}
\label{lem_TimeSliceClassicalGauge}
Let $[f]\in\cE $ be arbitrary and let $\Sigma$ be any Cauchy surface of $(M,g)$. Then, for any bounded neighbourhood $\cO(\Sigma)$ of $\Sigma$, we can find a $g\in\ker_0(K^\dagger)$ with $\supp g\subset \cO(\Sigma)$ and $g\in [f]$.
\end{lemma}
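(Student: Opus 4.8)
The plan is to reduce the statement to the scalar time--slice lemma, Lemma~\ref{lem_TimeSliceClassicalScalar}, by first manufacturing a retarded/advanced inverse of $P$ on the gauge--invariant test sections $\ker_0(K^\dagger)$ out of the gauge--fixed propagators $E^{\widetilde P}_{R/A}$, and then copying the cutoff construction. The key preliminary step, and the only genuinely new ingredient, is the claim that for every $f\in\ker_0(K^\dagger)$ one has $K^\dagger E^{\widetilde P}_{R/A}f=0$ and hence $P\,E^{\widetilde P}_{R/A}f=f$. To prove it I would set $h\doteq E^{\widetilde P}_Rf$, so that $\widetilde P h=f$ and $\supp h\subset J^+(\supp f,M)$, and apply $K^\dagger$. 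Using $\widetilde P=P+T\circ K^\dagger$ together with the identity $K^\dagger P=0$ -- which holds because $P$ is formally self--adjoint and $P\circ K=0$, so that $K^\dagger P=(P\circ K)^\dagger=0$ -- I obtain $K^\dagger\widetilde P=Q\circ K^\dagger$ with $Q=K^\dagger\circ T$, whence $Q(K^\dagger h)=K^\dagger f=0$. Since $\supp K^\dagger h\subset J^+(\supp f,M)$, the section $K^\dagger h$ vanishes in the past of any Cauchy surface lying to the past of $\supp f$, so it has trivial Cauchy data there; uniqueness for the Cauchy--hyperbolic operator $Q$ then forces $K^\dagger h=0$. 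Feeding this back into $\widetilde P h=f$ gives $P h=\widetilde P h-T K^\dagger h=f$, as claimed. The advanced case is identical with $J^+$ replaced by $J^-$.

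With this in hand I would mimic the construction of Lemma~\ref{lem_TimeSliceClassicalScalar}. Choosing two auxiliary Cauchy surfaces $\Sigma_1,\Sigma_2\subset\cO(\Sigma)$ to the past and future of $\Sigma$, and a smooth cutoff $\chi$ with $\chi\equiv1$ in the past of $\Sigma_1$ and $\chi\equiv0$ in the future of $\Sigma_2$, I would set
$$g\doteq f-P\bigl(\chi\,E^{\widetilde P}_Rf\bigr)-P\bigl((1-\chi)\,E^{\widetilde P}_Af\bigr).$$
Global hyperbolicity renders both $\chi\,E^{\widetilde P}_Rf$ (supported in the compact set $J^+(\supp f,M)\cap J^-(\Sigma_2,M)$) and $(1-\chi)\,E^{\widetilde P}_Af$ (supported in $J^-(\supp f,M)\cap J^+(\Sigma_1,M)$) compactly supported, so that $g-f\in P[\Sec(\cV)]$ and hence $g\in[f]$. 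The support statement follows exactly as in the scalar proof: in the past of $\Sigma_1$ one has $\chi\equiv1$, so the first correction term locally equals $P\,E^{\widetilde P}_Rf=f$ while the second vanishes, giving $g\equiv0$; in the future of $\Sigma_2$ one has $\chi\equiv0$ and the analogous cancellation using $P\,E^{\widetilde P}_Af=f$ again gives $g\equiv0$. Thus $\supp g$ lies in the slab between $\Sigma_1$ and $\Sigma_2$, which, for an appropriate choice of the two auxiliary surfaces, is contained in $\cO(\Sigma)$.

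It remains to verify the defining condition $g\in\ker_0(K^\dagger)$, and here the reduction pays off immediately: because $K^\dagger P=0$, each correction term is annihilated by $K^\dagger$, so $K^\dagger g=K^\dagger f=0$ with no additional work. I expect the main obstacle to lie in the preliminary claim rather than in the cutoff bookkeeping -- specifically, in justifying rigorously that $Q(K^\dagger h)=0$ together with the one--sided support of $K^\dagger h$ forces $K^\dagger h=0$. This is precisely a Cauchy--uniqueness statement for the Cauchy--hyperbolic operator $Q$ applied to a section with past--compact (respectively future--compact) support, of the kind guaranteed by Green-- and Cauchy--hyperbolicity; some care is needed to confirm that the relevant support and uniqueness results of \cite{chap2_Bar:2011iu} apply to $K^\dagger h$ and not only to compactly supported Cauchy data.
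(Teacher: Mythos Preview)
Your argument is correct, and the key lemma you isolate---that $E^{\widetilde P}_{R/A}$ restricted to $\ker_0(K^\dagger)$ acts as a retarded/advanced Green's operator for $P$ itself---is exactly the mechanism that makes the gauge--theoretic time--slice statement work. The paper does not give its own proof of Lemma~\ref{lem_TimeSliceClassicalGauge} but refers to \cite{chap2_Hack:2012dm}; the argument there proceeds along the same lines as yours, via the identity $K^\dagger\circ\widetilde P=Q\circ K^\dagger$ together with Cauchy uniqueness for $Q$, to conclude $K^\dagger E^{\widetilde P}_{R/A}f=0$ on gauge--invariant test sections and then run the cutoff construction.

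Two minor remarks. First, your symmetric formula $g=f-P(\chi E^{\widetilde P}_Rf)-P((1-\chi)E^{\widetilde P}_Af)$ is a slight streamlining over the scalar proof of Lemma~\ref{lem_TimeSliceClassicalScalar} in the paper, which treats only the case where $\cO(\Sigma)$ lies entirely to the future of $\supp f$ and leaves the remaining cases to analogy; your version handles all relative positions of $\supp f$ and $\Sigma$ at once. Second, the worry you flag about uniqueness for $Q$ is not a genuine obstacle: $K^\dagger h$ has support in $J^+(\supp f,M)$ and hence vanishes together with all its derivatives on any Cauchy surface strictly to the past of $\supp f$, and Cauchy--hyperbolicity of $Q$ in the sense of Definition~\ref{def_lineargauge} is precisely the statement that zero Cauchy data forces the zero solution.
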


In constructing the classical bracket for models without gauge symmetry the last statement of Theorem \ref{thm_FundamentalSolutionScalar}, which in fact holds for the causal propagator $E^P$ of any Green--hyperbolic operator $P$ on an arbitrary vector bundle $\cV$, has been crucial. In the following, we review results obtained in \cite[Theorem 3.12+Theorem 5.2]{chap2_Hack:2012dm} which essentially imply that, although $P$ is not hyperbolic, the causal propagator $E^{\widetilde{P}}$ of the gauge--fixed equation of motion operator $\widetilde{P}=P+T\circ K^\dagger$ is effectively a causal propagator for $P$ up to gauge--equivalence. The crucial observation here is that $P$ and $\widetilde{P}$ coincide on $\ker_0(K^\dagger)$ which implies that $E^{\widetilde{P}}$ restricted to $\ker_0({K}^\dagger)$ is independent of the particular form of the gauge fixing operator $T$.

\begin{theorem}\label{prop_prop_G} The causal propagator $E^{\widetilde{P}}$ of $\widetilde{P}=P+T\circ {K}^\dagger$ satisfies the following relations.
\begin{enumerate}
\item $h\in\ker_0({K}^\dagger)$ and $E^{\widetilde{P}}h\in\cGscc$ if and only if $h\in P[\Sec(\cV)]$, with $\cGscc$ defined in Definition \ref{def_sectionspacesgauge}.
\item Every $h\in\Solsc$ 
can be split as $h=h_1+h_2$ with $h_1\in E^{\widetilde{P}}\left[\ker_0({K}^\dagger)\right]$ and $h_2\in \cGscc$.
\item $E^{\widetilde{P}}$ descends to a bijective map $\cE\to \Solsc/\cGscc$.
\item $E^{\widetilde{P}}$ is formally skew--adjoint w.r.t. $\ip{\cdot,\cdot}_{\cV} $ on $\ker_0({K}^\dagger)$, i.e. $$\ip{ h_1,E^{\widetilde{P}}h_2}_\cV =-\ip{ E^{\widetilde{P}}h_1,h_2}_\cV$$ for all $h_1,h_2\in\ker_0({K}^\dagger)$.
\item Let $T^\prime:\Se(\cW)\to\Se(\cV)$ be any differential operator satisfying the properties required for the operator $T$ in Definition \ref{def_lineargauge} and let $E^{\widetilde{P}^\prime}$ be the causal propagator of $\widetilde{P}^\prime\doteq P+T^\prime\circ {K}^\dagger$. Then $E^{\widetilde{P}^\prime}$ satisfies the four properties above.
\end{enumerate}
\end{theorem}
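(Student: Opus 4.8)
The plan is to extract two elementary identities, establish one key lemma about how $E^{\widetilde P}$ acts on gauge-invariant data, and then deduce the five items in the order $4,1,2,3,5$. The first identity is $K^\dagger P=0$, obtained by taking the formal adjoint of $P\circ K=0$ and using self-adjointness of $P$. The second describes $E^{\widetilde P}$ on gauge-type sources: for every $\epsilon\in\Sec(\cW)$,
\[
E^{\widetilde P}T\epsilon = K\,E^{R}\epsilon,
\]
where $E^{R}$ is the causal propagator of the Cauchy-hyperbolic operator $R=K^\dagger K$. I would prove this by comparing retarded pieces: since $\widetilde P K\eta = T R\eta$, both $KE^{R}_R\epsilon$ and $E^{\widetilde P}_R T\epsilon$ solve $\widetilde P\,\cdot=T\epsilon$ with support in the causal future of a compact set, so their difference is a past-compact solution of the Green-hyperbolic $\widetilde P$ and vanishes; subtracting the advanced analogue yields the identity. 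The key lemma is that $E^{\widetilde P}$ sends $\ker_0(K^\dagger)$ into $\Solsc$ with $K^\dagger E^{\widetilde P}h=0$: for $h\in\ker_0(K^\dagger)$ put $w=E^{\widetilde P}_R h$, apply $K^\dagger$ to $\widetilde P w=h$, and use $K^\dagger P=0$ together with $K^\dagger\widetilde P=Q K^\dagger$ to get $Q(K^\dagger w)=0$; as $K^\dagger w$ is past-compact and $Q=K^\dagger T$ is Cauchy-hyperbolic, $K^\dagger w=0$, hence $Pw=h$. The advanced case is identical, so $K^\dagger E^{\widetilde P}h=0$ and $PE^{\widetilde P}h=\widetilde P E^{\widetilde P}h-TK^\dagger E^{\widetilde P}h=0$. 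The same computation shows $E^{\widetilde P}_R h$ is the \emph{unique} section with $Pw=h$, $K^\dagger w=0$ and past-compact support, any difference being a past-compact $\widetilde P$-solution.

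For item~4 I would route the argument through the genuine, formally self-adjoint operator $P$ rather than through $\widetilde P$. By the lemma, $v_i\doteq E^{\widetilde P}h_i$ solve $Pv_i=0$, so the antisymmetric solution bilinear form $\ip{\cdot,\cdot}_{\Sol}$ defined exactly as in Theorem~\ref{thm_covariantequaltime}, whose well-definedness and antisymmetry use only formal self-adjointness of $P$, is available on them. Since $E^{\widetilde P}_R h_1$ obeys $P(E^{\widetilde P}_R h_1)=h_1$ and differs from a genuine future part of $v_1$ only by a compactly supported section, it is a valid future part, so $\ip{h_1,E^{\widetilde P}h_2}_\cV=\ip{P E^{\widetilde P}_R h_1,v_2}_\cV=\ip{v_1,v_2}_{\Sol}$; antisymmetry of $\ip{\cdot,\cdot}_{\Sol}$ and symmetry of $\ip{\cdot,\cdot}_\cV$ then give the skew-adjointness. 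This is the main obstacle, precisely because $\widetilde P$ is \emph{not} formally self-adjoint (one has $(TK^\dagger)^\dagger=KT^\dagger$), so skew-adjointness of $E^{\widetilde P}$ cannot be invoked directly and must be transported to $P$.

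For item~1, the direction $h=Pg\Rightarrow E^{\widetilde P}h\in\cGscc$ follows from $E^{\widetilde P}\widetilde P=0$ and the two identities: $E^{\widetilde P}Pg=-E^{\widetilde P}TK^\dagger g=-KE^{R}K^\dagger g\in K[\Sesc(\cW)]$. Conversely, if $h\in\ker_0(K^\dagger)$ and $E^{\widetilde P}h=K\eta\in\cGscc$, the lemma gives $R\eta=K^\dagger E^{\widetilde P}h=0$, so $\eta=E^{R}\zeta$ for some compactly supported $\zeta$ by surjectivity of $E^{R}$ onto spacelike-compact $R$-solutions (the last item of Theorem~\ref{thm_FundamentalSolutionScalar}, valid for any Green-hyperbolic operator). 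Then $E^{\widetilde P}(h-T\zeta)=E^{\widetilde P}h-KE^{R}\zeta=0$, so $h-T\zeta=\widetilde P g=Pg+TK^\dagger g$ with $g$ compactly supported; applying $K^\dagger$ and using $K^\dagger h=0$, $K^\dagger P=0$ yields $Q(K^\dagger g+\zeta)=0$ with compactly supported argument, whence $\zeta=-K^\dagger g$ by Cauchy-hyperbolicity of $Q$, and finally $h=Pg\in P[\Sec(\cV)]$.

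Item~2 is the gauge-fixing step: given $h\in\Solsc$, solve $R\epsilon=K^\dagger h$ for $\epsilon\in\Sesc(\cW)$ by splitting the spacelike-compact source into past- and future-compact pieces treated by $E^{R}_R$ and $E^{R}_A$. Then $h_1\doteq h-K\epsilon\in\Solsc$ satisfies $K^\dagger h_1=0$, hence $\widetilde P h_1=0$ and $h_1=E^{\widetilde P}f$ for some compactly supported $f$; from $K^\dagger E^{\widetilde P}f=E^{Q}K^\dagger f=0$ one gets $K^\dagger f=Q\rho$, so $f-T\rho\in\ker_0(K^\dagger)$ and $E^{\widetilde P}(f-T\rho)=h_1-KE^{R}\rho$, giving the splitting $h=E^{\widetilde P}(f-T\rho)+K(E^{R}\rho+\epsilon)$ with the second term in $\cGscc$. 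Item~3 is then immediate, well-definedness and injectivity being exactly item~1 and surjectivity being item~2. Finally, item~5 follows because the uniqueness characterisation of $E^{\widetilde P}_{R/A}$ on $\ker_0(K^\dagger)$ established in the first paragraph makes no reference to $T$, so any admissible $T'$ yields $E^{\widetilde P'}=E^{\widetilde P}$ on $\ker_0(K^\dagger)$; as each of the four properties probes the propagator only through its action on $\ker_0(K^\dagger)$ (together with the $T$-independent decomposition $E^{\widetilde P}T\epsilon=KE^{R}\epsilon$), they transfer verbatim to $E^{\widetilde P'}$.
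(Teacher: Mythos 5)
Your proposal is correct, but there is a caveat on the comparison: the paper does not actually prove this theorem. It is quoted from \cite[Theorems 3.12 and 5.2]{chap2_Hack:2012dm}, accompanied only by the heuristic remark that $P$ and $\widetilde P$ coincide on $\ker_0(K^\dagger)$, so that $E^{\widetilde P}$ restricted to $\ker_0(K^\dagger)$ should be independent of $T$. Measured against the strategy of that reference, your proof is essentially the standard one: intertwining relations established via uniqueness of solutions with past-- or future--compact support, the key lemma that $P E^{\widetilde P}_{R/A}h=h$ and $K^\dagger E^{\widetilde P}_{R/A}h=0$ on $\ker_0(K^\dagger)$, gauge--fixing plus the exact sequences of Green--hyperbolic operators for items 1--3, and transport of skew--adjointness to the formally self--adjoint $P$ through the solution bilinear form $\ip{\cdot,\cdot}_{\Sol}$. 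Your remark that this detour is forced because $\widetilde P$ itself is \emph{not} formally self--adjoint, $(T\circ K^\dagger)^\dagger=K\circ T^\dagger$, identifies exactly why item 4 cannot be read off from Theorem \ref{thm_FundamentalSolutionScalar}; and your uniqueness characterisation of $E^{\widetilde P}_R h$ (the unique past--compact $w$ with $Pw=h$, $K^\dagger w=0$) is precisely what turns the paper's heuristic remark into an actual proof of item 5, since agreement of the mere operators $\widetilde P$ and $\widetilde P'$ on $\ker_0(K^\dagger)$ does not by itself imply agreement of their inverses there.

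Two steps are asserted rather than proved, though both follow from the uniqueness technique you already deploy, so I would call them presentational gaps rather than genuine ones. First, the intertwining $K^\dagger E^{\widetilde P}f=E^{Q}K^\dagger f$ invoked in item 2 needs its own (short) argument: $Q\,K^\dagger E^{\widetilde P}_{R}f=K^\dagger\widetilde P E^{\widetilde P}_R f=K^\dagger f=Q\,E^{Q}_R K^\dagger f$, and the difference is a past--compact solution of the Cauchy--hyperbolic $Q$, hence zero; note that knowing $K^\dagger E^{\widetilde P}f=0$ alone does not let you conclude $E^Q K^\dagger f=0$ without this. Second, in the gauge--fixing step of item 2 you must check that $\epsilon$ solving $R\epsilon=K^\dagger h$ is actually \emph{spacelike--compact}: the retarded/advanced pieces $E^R_{R/A}$ applied to the (past--/future--compact and spacelike--compact) halves of $K^\dagger h$ have spacelike--compact image because, e.g., $J^+\bigl(J(K)\cap J^+(\Sigma)\bigr)$ is again spacelike--compact for compact $K$ and a Cauchy surface $\Sigma$; equivalently one may solve the Cauchy problem for $R$ with vanishing data on an early Cauchy surface and invoke finite propagation speed. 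With these details filled in, your appeals to the Green--hyperbolic generalisation of Theorem \ref{thm_FundamentalSolutionScalar} and to the well--definedness and antisymmetry of $\ip{\cdot,\cdot}_{\Sol}$ (which rest only on formal self--adjointness of $P$, cf. the first items of Theorem \ref{thm_proptau2}) are non--circular, and the proof is complete.
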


Given these results, we can now construct a meaningful bracket on $\cE$ by generalising Proposition \ref {prop_presymplectic}.

\begin{proposition}\label{prop_presymplectic_gauge} The tuple $(\cE ,\tau)$ with $\tau:\cE \times\cE \to\bR$ defined by 
$$\tau([f],[g])\doteq \ip{f, E^{\widetilde{P}} g}_\cV$$
\index{tau@$\tau$, (pre--)symplectic or (pre--)inner product on $\cE$}%
is a {\em pre--symplectic space} (Bosonic case) or {\em pre--inner product space} (Fermionic case). In particular, 
\begin{enumerate}
\item $\tau$ is well--defined and independent of the chosen representatives,
\item $\tau$ is antisymmetric (Bosonic case) or symmetric (Fermionic case).
\item Let $T^\prime:\Se(\cW)\to\Se(\cV)$ be any differential operator satisfying the properties required for the operator $T$ in Definition \ref{def_lineargauge} and define $\tau^{\prime}$ in analogy to $\tau$ but with the causal propagator $E^{{\widetilde{P}}^\prime}$ of $\widetilde{P}^\prime\doteq P+T^\prime\circ {K}^\dagger$ instead of $E^{{\widetilde{P}}}$. Then $\tau^{\prime}=\tau$.
\end{enumerate}
\end{proposition}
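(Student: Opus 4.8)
Since this is only a \emph{pre}-symplectic (resp.\ pre-inner product) statement, non-degeneracy is not claimed --- gauge freedom may well render $\tau$ degenerate --- so only the three listed items require proof, and my plan is to extract all of them from Theorem~\ref{prop_prop_G} together with two structural facts: that $K^\dagger$ is the formal adjoint of $K$, so $\ip{K\psi,h}_\cV=\ip{\psi,K^\dagger h}_\cW$ whenever the supports overlap compactly, and that $P\circ K=0$ with $P$ formally self-adjoint forces $K^\dagger\circ P=0$, whence $P[\Sec(\cV)]\subset\ker_0(K^\dagger)$ and the quotient $\cE$ and the formula for $\tau$ are meaningful.

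For item~1 I would check the two admissible changes of representative separately. Replacing $g$ by $g+Pk$ with $k\in\Sec(\cV)$ alters $\tau$ by $\ip{f,E^{\widetilde{P}}Pk}_\cV$; item~1 of Theorem~\ref{prop_prop_G} gives $E^{\widetilde{P}}Pk\in\cGscc$, so $E^{\widetilde{P}}Pk=K\psi$ with $\psi\in\Sesc(\cW)$ and therefore $\ip{f,K\psi}_\cV=\ip{K^\dagger f,\psi}_\cW=0$ because $f\in\ker_0(K^\dagger)$. Replacing $f$ by $f+Pk$ alters $\tau$ by $\ip{Pk,E^{\widetilde{P}}g}_\cV$; since $Pk$ and $g$ both lie in $\ker_0(K^\dagger)$, the skew-adjointness of item~4 of Theorem~\ref{prop_prop_G} rewrites this as $-\ip{E^{\widetilde{P}}Pk,g}_\cV$, which vanishes by the same argument, now using $K^\dagger g=0$.

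Item~2 is then immediate from item~4: for $f,g\in\ker_0(K^\dagger)$ one has $\tau([f],[g])=\ip{f,E^{\widetilde{P}}g}_\cV=-\ip{E^{\widetilde{P}}f,g}_\cV$, and the symmetry (Bosonic) resp.\ antisymmetry (Fermionic) of $\ip{\cdot,\cdot}_\cV$ turns the right-hand side into $-\tau([g],[f])$ resp.\ $+\tau([g],[f])$.

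Item~3 is the real content and I expect it to be the main obstacle, since item~5 of Theorem~\ref{prop_prop_G} only tells us that $E^{\widetilde{P}^\prime}$ shares the four structural properties, not that it yields the same bracket. The plan is to prove the sharper claim that $E^{\widetilde{P}}$ and $E^{\widetilde{P}^\prime}$ literally coincide on $\ker_0(K^\dagger)$. The engine is the intertwining identity $K^\dagger\circ\widetilde{P}=Q\circ K^\dagger$ with $Q=K^\dagger\circ T$ Cauchy-hyperbolic (hence Green-hyperbolic), which follows from $K^\dagger\circ P=0$; tracking supports and using uniqueness of the retarded and advanced fundamental solutions of $Q$ upgrades this to $K^\dagger\circ E^{\widetilde{P}}_{R/A}=E^{Q}_{R/A}\circ K^\dagger$, and analogously for $\widetilde{P}^\prime$. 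Consequently, for $g\in\ker_0(K^\dagger)$ we get $K^\dagger E^{\widetilde{P}^\prime}_R g=E^{Q^\prime}_R K^\dagger g=0$; writing $\widetilde{P}=\widetilde{P}^\prime+(T-T^\prime)K^\dagger$ then gives $\widetilde{P}\,E^{\widetilde{P}^\prime}_R g=g=\widetilde{P}\,E^{\widetilde{P}}_R g$, so $E^{\widetilde{P}}_R g-E^{\widetilde{P}^\prime}_R g$ solves the homogeneous gauge-fixed equation with support in $J^+(\supp g,M)$; by uniqueness of solutions with such past-compact support for the Green-hyperbolic $\widetilde{P}$, this difference vanishes, and likewise for the advanced parts. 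Hence $E^{\widetilde{P}}g=E^{\widetilde{P}^\prime}g$ on $\ker_0(K^\dagger)$ and $\tau^\prime([f],[g])=\ip{f,E^{\widetilde{P}^\prime}g}_\cV=\ip{f,E^{\widetilde{P}}g}_\cV=\tau([f],[g])$. Throughout, the points needing care are the compactness of the overlapping supports that make every pairing legitimate and the careful use of the retarded/advanced cones in the support bookkeeping, exactly as in Theorem~\ref{thm_FundamentalSolutionScalar}.
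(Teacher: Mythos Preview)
Your proof is correct and aligns with the paper's approach. The paper does not give an explicit proof of this proposition but treats it as a corollary of Theorem~\ref{prop_prop_G} (citing Hack--Schenkel for the details); your argument is precisely the unpacking of that reference, and in particular your item~3 argument --- showing via the intertwiner $K^\dagger\circ\widetilde P=Q\circ K^\dagger$ and uniqueness of retarded/advanced solutions that $E^{\widetilde P}$ and $E^{\widetilde P'}$ literally coincide on $\ker_0(K^\dagger)$ --- is exactly the sharper fact the paper states informally just before Theorem~\ref{prop_prop_G} (``$E^{\widetilde{P}}$ restricted to $\ker_0(K^\dagger)$ is independent of the particular form of the gauge fixing operator $T$'').
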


We stress that $\tau$ is in general {\em not} weakly non--degenerate, cf. the last statement of Theorem \ref{thm_proptau2}. This is a particular feature of gauge theories, cf. e.g. \cite{chap2_Sanders:2012sf,chap2_Benini:2013ita} for a discussion of the physical interpretation of this non--degeneracy in the case of the Maxwell field. The last statement above indicates that $\tau$ is independent of the gauge--fixing operator $T$ and in this sense, gauge--invariant. Indeed, we shall see in what follows that $\tau$ can be rewritten in a manifestly gauge--invariant form. The form of $\tau$ given here can be derived directly from the action $S(\Phi)=\frac12 \ip{\Phi,P\Phi}_\cV$ by Peierls' method  in analogy to the derivation for electromagnetism in \cite{chap2_Sanders:2012sf}, see also \cite{chap2_Khavkine:2012jf,chap2_Khavkine:2014kya}) for a broader context.

As in the case without local gauge symmetries it is interesting and useful to observe that the covariant pre--symplectic or pre--inner product space $(\cE,\tau)$ can be understood equivalently in an equal--time fashion. In fact the following statements have been proved in \cite[Proposition 5.1+Theorem 5.2]{chap2_Hack:2012dm} (or can be proved by slightly generalising the arguments used there).

\begin{theorem}\label{thm_proptau2}Under the assumptions of Definition \ref{def_lineargauge}, let $\ip{\cdot,\cdot}_{\Sol}$ be the bilinear form on $\Sol$ defined for $\Phi_1, \Phi_2\in\Sol$ with spacelike--compact overlapping support by
$$\ip{\Phi_1,\Phi_2}_\Sol \doteq \ip{P\Phi^+_1,\Phi_2}_\cV\,,$$
\index{\$ipsol@$\ip{\cdot,\cdot}_{\Sol}$, bilinear form on $\Sol$}%
where $\Phi^+$ denotes the future part of $\Phi$, see Definition \ref{def_futurepart}. This bilinear form has the following properties.
\begin{enumerate}
\item $\ip{\Phi_1,\Phi_2}_{\Sol}$ is well--defined for all $\Phi_1, \Phi_2\in\Sol$ with spacelike--compact overlapping support. In particular, it is independent of the choice of future part entering its definition.
\item $\ip{\cdot, \cdot}_{\Sol}$ is antisymmetric (Bosonic case) or symmetric (Fermionic case).
\item $\ip{\cdot, \cdot}_{\Sol}$ is gauge--invariant, i.e. 
$$\ip{ \Phi_1, \Phi_2+K\epsilon }_\Sol=\ip{\Phi_1, \Phi_2}_{\Sol}$$
for all $\Phi_1,\Phi_2\in\Sol$, $\epsilon\in \Se(\cW)$ s.t. $\Phi_1$ and $\epsilon$ have spacelike--compact overlapping support.
\item $\ip{\cdot,\cdot}_{\Sol}$ may be computed as a suitable integral over an arbitrary but fixed Cauchy surface $\Sigma$ of $(M,g)$ with future--pointing normal vector field $N$ and induced measure $d\Sigma$. If there exists a `current' $j:\Se(\cV)\times\Se(\cV)\to T^*M$ such that $\nabla^\mu j_\mu(\Phi_1,\Phi_2) = \langle\langle\Phi_1, P \Phi_2\rangle\rangle_\cV - \langle\langle\Phi_2, P \Phi_1\rangle\rangle_\cV$ for all $\Phi_1,\Phi_2\in\Se(\cV)$, then
$$\ip{\Phi_1,\Phi_2}_\Sol = \int\limits_\Sigma d\Sigma\; N^\mu j_\mu(\Phi_1,\Phi_2)\,.$$
\item For all $\Phi\in\Sol$ and all $h\in\ker_0(K^\dagger)$, 
$$\ip{  E^{\widetilde{P}}h,\Phi}_{\Sol}=\ip{h,\Phi}_\cV \,.$$
\item $E^{\widetilde{P}}$ descends to an isomorphism of pre--symplectic (Bosonic case) or pre--inner product spaces (Fermionic case) $E^{\widetilde{P}}:(\cE,\tau)\to (\Solsc /\cGscc,\langle\cdot,\cdot\rangle_{\Sol})$.
\item If $\cGscc\subsetneq\cGsc$, then $\tau$ is degenerate, i.e. there exists $[0]\neq[h]\in\cE$ s.t. $\tau([h],\cE)=0$.
\end{enumerate}
\end{theorem}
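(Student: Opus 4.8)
The plan is to transport the statement to the solution side via the isomorphism $E^{\widetilde{P}}:(\cE,\tau)\to(\Solsc/\cGscc,\ip{\cdot,\cdot}_\Sol)$ furnished by the sixth item of Theorem~\ref{thm_proptau2}, and there to produce an explicit nonzero vector in the radical of $\ip{\cdot,\cdot}_\Sol$. Combining the fifth item, $\ip{E^{\widetilde{P}}h,\Phi}_\Sol=\ip{h,\Phi}_\cV$ for $h\in\ker_0(K^\dagger)$ and $\Phi\in\Sol$, with the definition of $\tau$ gives $\tau([f],[g])=\ip{f,E^{\widetilde{P}}g}_\cV=\ip{E^{\widetilde{P}}f,E^{\widetilde{P}}g}_\Sol$, so $\tau$ is degenerate exactly when $\ip{\cdot,\cdot}_\Sol$ is degenerate on $\Solsc/\cGscc$. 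It therefore suffices to find $[0]\neq[\Phi_2]\in\Solsc/\cGscc$ with $\ip{\Phi_1,\Phi_2}_\Sol=0$ for all $\Phi_1\in\Solsc$.

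Invoking the hypothesis $\cGscc\subsetneq\cGsc$, I would choose a spacelike--compact gauge configuration $\Phi_2\in\cGsc\setminus\cGscc$ and write $\Phi_2=K\epsilon$ with $\epsilon\in\Se(\cW)$. Since $\Phi_2\notin\cGscc=K[\Sesc(\cW)]$, the parameter $\epsilon$ cannot be chosen spacelike--compact, and moreover $[\Phi_2]\neq[0]$ in $\Solsc/\cGscc$. Note that $\Phi_2$ genuinely lies in $\Solsc$, because $P\Phi_2=(P\circ K)\epsilon=0$ and $\Phi_2$ is spacelike--compact by construction.

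The decisive step is to show that $\Phi_2$ pairs trivially with every $\Phi_1\in\Solsc$, and here I would apply the gauge--invariance of the third item, $\ip{\Phi_1,0+K\epsilon}_\Sol=\ip{\Phi_1,0}_\Sol=0$. The main obstacle is that this item requires $\Phi_1$ and $\epsilon$ to have spacelike--compact overlapping support, whereas $\epsilon$ itself is \emph{not} spacelike--compact. This is resolved by a support argument: the overlapping support $\supp\Phi_1\cap\supp\epsilon$ is a closed subset of the spacelike--compact set $\supp\Phi_1$, and a closed subset of a spacelike--compact set is again spacelike--compact, since for any Cauchy surface $\Sigma$ the set $(\supp\Phi_1\cap\supp\epsilon)\cap\Sigma$ is a closed subset of the compact set $\supp\Phi_1\cap\Sigma$, hence compact. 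Thus the third item applies and $\ip{\Phi_1,\Phi_2}_\Sol=0$ for all $\Phi_1\in\Solsc$.

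Finally I would pull this back through the isomorphism. Choose $[h]\in\cE$ with $E^{\widetilde{P}}[h]=[\Phi_2]$; this is nonzero because $E^{\widetilde{P}}$ is a bijection and $[\Phi_2]\neq[0]$. Then for every $[g]\in\cE$ one has $\tau([h],[g])=\ip{E^{\widetilde{P}}h,E^{\widetilde{P}}g}_\Sol=\ip{\Phi_2,E^{\widetilde{P}}g}_\Sol=0$, where the middle equality uses that $\ip{\cdot,\cdot}_\Sol$ is well--defined on $\Solsc/\cGscc$ (the third item together with antisymmetry) and that $E^{\widetilde{P}}h$ represents $[\Phi_2]$, and the last equality uses $E^{\widetilde{P}}g\in\Solsc$ together with the radical property established above. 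Hence $\tau([h],\cE)=0$ with $[h]\neq[0]$, so $\tau$ is degenerate, as claimed.
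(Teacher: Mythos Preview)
Your proof is correct. The paper itself does not prove this item but defers to \cite{chap2_Hack:2012dm}; your argument---exhibiting a nonzero class $[\Phi_2]\in\Solsc/\cGscc$ with $\Phi_2\in\cGsc\setminus\cGscc$ and using gauge--invariance (item 3) together with the isomorphism of item 6 to conclude it lies in the radical---is precisely the natural route and is the one taken in that reference.
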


The fourth and fifth statement in the above theorem show that one can view the observable $\Sol/\cG\ni[\Phi]\mapsto \ip{h,\Phi}_\cV\simeq \Phi(h)$, i.e. the `covariantly smeared classical field', equivalently as an `equal--time smeared classical field' $\Sol/\cG\ni[\Phi]\mapsto \ip{H,\Phi}_\Sol$ with $H=E^{\widetilde{P}}h\in[H]\in \Solsc /\cGscc$.

\section{Linear Quantum Fields on Curved Spacetimes}
\label{sec:linear_quant}

Given the (pre--)symplectic or (pre--)inner product spaces of classical linear observables constructed in the previous section for Bosonic and Fermionic theories with or without local gauge symmetries, there are several `canonical' ways to construct corresponding algebras of observables in the associated quantum theories; these constructions differ mainly in technical terms. 

In the Bosonic case, one can consider the {\em Weyl algebra}\index{Weyl algebra} corresponding to the pre--symplectic space $(\cE,\tau)$, which essentially means to quantize exponentials $\exp(i\ip{f,\Phi}_\cV)$ of the smeared classical field $\Phi(f)\simeq \ip{f,\Phi}_\cV$ rather than the smeared classical field itself. This mainly has the technical advantage that one is dealing with a $C^*$--algebra corresponding to bounded operators on a Hilbert space, i.e. to operators with a bounded spectrum. However, sometimes it is also advisable in physical terms to consider exponential observables rather than linear ones as fundamental building blocks, e.g. in case one is dealing with {\em finite} gauge transformations, cf. \cite{chap2_Benini:2013ita}. The Weyl algebra is initially constructed under the assumption that the form $\tau$ is non--degenerate such that the the pre--symplectic space $(\cE,\tau)$ is in fact symplectic. However, the construction of the Weyl algebra is also well--defined in the degenerate case\cite{chap2_degenerateCCR} and even in case $\cE$ is not a vector space, but only an Abelian group \cite{chap2_Benini:2013ita}. In the Fermionic case it is not necessary to consider exponential observables in order to access the advantages of a $C^*$--algebraic framework, as a $C^*$--algebra can already be constructed based on linear observables due to the anticommutation relations of Fermionic quantum fields, see e.g. \cite{chap2_Bratteli2}.

As in this monograph we shall not make use of the $C^*$--algebraic framework, it will be sufficient to consider the algebra of quantum observables constructed by directly quantizing the smeared classical fields $\Phi(f)\simeq \ip{f,\Phi}_\cV$ themselves both in the Bosonic case and in the Fermionic case. We shall first construct the {\em Borchers--Uhlmann algebra} $\cA(M)$ corresponding to the linear model defined by the data $(M,\cV,\cW,P,K)$ cf. Definition \ref{def_lineargauge}, where we can consider a model $(M,\cV,P)$ without gauge invariance as the subclass $(M,\cV,\cW=\cV,P,K=0)$. The algebra $\cA(M)$ contains only the most simple observables, in particular it does not contain observables which correspond to pointwise powers of the field such as $\Phi(x)^2$ or the stress--energy tensor. We shall discuss a larger algebra which contains also these observables in Section \ref{sec:wick}. 

In order to construct the Borchers--Uhlmann algebra, we recall that the labelling space $\cE$ consists of equivalence classes of test sections $[f]$ corresponding to the classical observables $\Phi(f)\simeq \ip{f,\Phi}_\cV$ with $\Phi\in[\Phi]\in\Sol/\cG$ ($=\Sol$ in the absence of gauge symmetries, i.e. for $K=0$) and $f\in[f]\in \cE=\ker_0(K^\dagger)/P[\Sec(\cV)]$ ($=\Sec(\cV)/P[\Sec(\cV)]$ for $K=0$). We recall that omitting the equivalence classes in the notation $\Phi(f)\simeq \ip{f,\Phi}_\cV$ is meaningful because the latter expression is independent of the chosen representatives of the equivalence classes. For the quantum theory, we need complex expressions and therefore consider the complexification $\cE_\bC\doteq \cE\otimes_\bR \bC$ of the labelling space $\cE$.

With this in mind, we represent the quantum product of two smeared fields $\Phi(f_1)\Phi(f_2)$ by the tensor product $[f_1]\otimes [f_2]$. On the long run, we would like to represent $\Phi(f)$ as an operator on a Hilbert space, we therefore need an operation which encodes `taking the adjoint with respect to a Hilbert space inner product' on the abstract algebraic level. We define such a $*$--operation by setting $[\Phi(f)]^*\doteq\Phi(\overline{f})$, corresponding to $[f_1]^*\doteq [\overline{f_1}]$, and $[\Phi(f_1)\cdots\Phi(f_n)]^*=[\Phi(f_n)]^*\cdots[\Phi(f_1)]^*$. Observables would then be polynomials $\cP$ of smeared fields (tensor polynomials of elements of $\cE_\bC$) which fulfil $\cP^*=\cP$ (in the Fermionic case we need further conditions e.g. $\cP$ has to be an even polynomial). To promote $\Phi(f)$ to a proper quantum field with the correct (anti)commutation relations, we define
\beq\label{eq_CCR}\left[\Phi(f),\Phi(g)\right]_\mp\doteq\Phi(f)\Phi(g)\mp\Phi(g)\Phi(f)=i\tau([f],[g])\bI=i\ip{f,E^{\widetilde{P}},g}_\cV\bI\,,\eeq
where we recall that $E^{\widetilde{P}}$ is the causal propagator of the gauge--fixed equation of motion operator $\widetilde P = P + T\circ K^\dagger$ ($=P$ in the absence of gauge symmetries $K=0$), $\bI$ is the identity and the $-$ ($+$) sign applies for the Bosonic (Fermionic) case. Recall that $\tau([f],[g])$ vanishes if the supports of $f$ and $g$ are spacelike separated, the above {\em canonical (anti)commutation relations} (CCR/CAR)\index{CCR, canonical commutation relations}\index{CAR, canonical anticommutation relations}\index{canonical (anti)commutation relations} therefore assure that observables commute at spacelike separations.
In the case without gauge symmetries, one can write the CCR/CAR formally as $$\left[\Phi(x),\Phi(y)\right]_\mp=iE^P(x,y)\bI\,.$$ Recall that this is nothing but the covariant version of the well--known equal--time CCR/CAR. In the case where gauge symmetries are present, this expression does not make sense because the equality holds only when smeared with `gauge--invariant' test sections $f\in\ker_0(K^\dagger)$. Finally, we remark that dynamics is already encoded by the fact that $\cE_\bC$ consists of equivalence classes with $[Pg]=[0]\in\cE_\bC$ for all $g\in\Sec^\bC(\cV)$ and that it is convenient to have a topology on the algebra $\cA(M)$ in order to be able to quantify to which extent two abstract observables are `close', i.e. similar in physical terms. We subsume the above discussion in the following definition.

\begin{definition}
 \label{def_BorchersUhlmannScalar}\index{Borchers--Uhlmann algebra}
 Consider a linear Bosonic or Fermionic (gauge) field theory defined by $(M,\cV,\cW,P,K)$ (with $\cW=\cV$ and $K=0$ in the absence of gauge symmetries), cf. Definition \ref{def_linearnongauge} and Definition \ref{def_lineargauge} and let $(\cE,\tau)$ be the corresponding (pre--)symplectic or (pre--)inner product space constructed as in Theorem \ref{thm_symplecticnongauge} and Proposition \ref{prop_presymplectic_gauge}. The {\em Borchers--Uhlmann algebra} $\cA(M)$ of the model $(M,\cV,\cW,P,K)$ is defined as
$$\cA(M)\doteq \cA_0(M)/\cI \,,$$ where $\cA_0(M)$ is the direct sum
$$\cA_0(M)\doteq\bigoplus\limits_{n=0}^\infty \cE_\bC^{\otimes n}$$ ($\cE_\bC^{\otimes 0}\doteq \bC$) equipped with a product defined by the linear extension of the tensor product of $\cE_\bC^{\otimes n}$, a $*$--operation defined by the antilinear extension of $([f_1]\otimes\cdots\otimes [f_n])^*=[\overline{f_n}]\otimes\cdots\otimes [\overline{f_1}]$, and it is required each element of $\cA_0(M)$ is a linear combination of elements of $\cE_\bC^{\otimes n}$ with $n\le n_\mathrm{max}<\infty$. Additionally, we equip $\cA_0(M)$ with the topology induces by the locally convex topology of $\Sec(\cV)$. Moreover, $\cI$ is the closed $*$--ideal generated by elements of the form $-i\tau([f],[g])\oplus([f]\otimes [g]\mp [g]\otimes [f])$, where $-$ ($+$) stands for the Bosonic (Fermionic) case, and $\cA(M)$ is thought to be equipped with the product, $*$--operation, and topology descending from $\cA_0(M)$. 
\
If $\cO$ is an open subset of $M$, $\cA(\cO)$ denotes the algebra obtained by allowing only test sections with support in $\cO$.
\end{definition}

$\cA(M)$, in contrast to $\cA_0(M)$, depends explicitly on the metric $g$ of a spacetime $(M,g)$ via the causal propagator and the equation of motion. However, now and in the following we shall omit this dependence in favour of notational simplicity.

We recall that, by Lemma \ref{lem_TimeSliceClassicalGauge}, every equivalence class $[f]\in\cE_\bC$ is so large that it contains elements with support in an arbitrarily small neighbourhood of any Cauchy surface of $M$. This implies the following well--known result, which on physical grounds entails the predictability of observables.

\begin{lemma}
 \label{lem_TimeSliceAxiomQuantumScalar}
The Borchers--Uhlmann algebra $\cA(M)$ fulfils the {\em time--slice axiom}\index{time--slice axiom}. Namely, let $\Sigma$ be a Cauchy surface of $(M,g)$ and let $\cO$ be an arbitrary neighbourhood of $\Sigma$. Then $\cA(\cO)$=$\cA(M)$.
\end{lemma}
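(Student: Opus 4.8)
The plan is to derive the quantum time--slice axiom directly from its classical counterpart, Lemma \ref{lem_TimeSliceClassicalGauge}, exploiting that $\cA(M)$ is generated, as a unital $*$--algebra, by the degree--one elements $[f]\in\cE_\bC$ and depends on $M$ only through the labelling space $\cE$ and the bracket $\tau$.

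First I would dispose of the easy inclusion together with a geometric reduction. The inclusion $\cA(\cO)\subseteq\cA(M)$ is immediate, since $\cA(\cO)$ is by definition generated by those classes admitting a representative supported in $\cO$, which form a subset of the generators of $\cA(M)$. For the converse it is harmless to shrink $\cO$: if $\cO'\subseteq\cO$ is a smaller neighbourhood of $\Sigma$ for which $\cA(\cO')=\cA(M)$, then $\cA(\cO')\subseteq\cA(\cO)\subseteq\cA(M)$ forces $\cA(\cO)=\cA(M)$. Using the smooth splitting $M\cong\bR\times\Sigma$ guaranteed by the Bernal--Sanchez results recalled in Section \ref{sec:ghst}, every neighbourhood of $\Sigma$ contains a bounded neighbourhood of the slab type required by Lemma \ref{lem_TimeSliceClassicalGauge}; I therefore assume from now on that $\cO$ is such a bounded neighbourhood $\cO(\Sigma)$.

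Next I would show that every generator of $\cA(M)$ already lies in $\cA(\cO)$. Fix $[f]\in\cE_\bC$ and write $f=f_1+if_2$ with $f_1,f_2\in\ker_0(K^\dagger)$. Applying Lemma \ref{lem_TimeSliceClassicalGauge} to $f_1$ and $f_2$ separately, with the bounded neighbourhood $\cO(\Sigma)\subseteq\cO$, yields representatives $g_1,g_2\in\ker_0(K^\dagger)$ with $\supp g_j\subset\cO$ and $g_j\in[f_j]$. Setting $g\doteq g_1+ig_2$ gives $[g]=[f]$ in $\cE_\bC$ with $\supp g\subset\cO$. Since the generators of $\cA(M)$ are precisely the equivalence classes in $\cE_\bC$, the elements $[f]$ and $[g]$ coincide as generators, whence $[f]\in\cA(\cO)$.

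Finally I would conclude: products, the $*$--operation, and linear combinations all preserve membership in the subalgebra $\cA(\cO)$, and since $\cA(M)$ is generated by the $[f]$, each of which lies in $\cA(\cO)$ by the previous step, we obtain $\cA(M)\subseteq\cA(\cO)$ and hence $\cA(\cO)=\cA(M)$. The genuine content of the argument lies entirely in the classical Lemma \ref{lem_TimeSliceClassicalGauge}, which we may assume; the only additional point demanding care is the geometric reduction from an arbitrary neighbourhood $\cO$ of $\Sigma$ to a bounded slab neighbourhood to which that lemma applies, which is where global hyperbolicity of $(M,g)$ enters. I expect this geometric step to be the main---though minor---obstacle.
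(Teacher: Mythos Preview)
Your proposal is correct and follows exactly the approach the paper indicates: the paper gives no explicit proof but simply remarks, just before stating the lemma, that by Lemma \ref{lem_TimeSliceClassicalGauge} every class $[f]\in\cE_\bC$ contains a representative supported in an arbitrarily small neighbourhood of any Cauchy surface, and that this implies the result. You have merely spelled out the routine details (the trivial inclusion, the reduction to a bounded slab neighbourhood, and the passage from generators to the full algebra) that the paper leaves implicit.
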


We now turn our attention to states\index{state}\index{w@$\omega$, (algebraic) state}. Let $\gA$ be a topological, unital $*$--algebra, i.e. $\gA$ is endowed with an operation $^*$ which fulfils $(AB)^*=B^*A^*$ and $(A^*)^*=A$ for all elements $A,B$ in $\gA$. A {\em state} $\omega$ on $\gA$ is defined to be a continuous linear functional $\gA\to\bC$ which is normalised, i.e. $\omega(\bI)=1$ and positive, namely, $\omega(A^*A)\ge 0$ must hold for any $A\in\gA$. Considering the special topological and unital $*$--algebra $\cA(M)$, a state on $\cA(M)$ is determined by its $n$--point correlation functions
$$\omega_n(f_1,\cdots,f_n)\doteq\omega\left(\Phi(f_1)\cdots\Phi(f_n)\right)\,.$$ 
which are distributions in $\Sec^{\prime\bC}(M^n)$. Given a state $\omega$ on $\gA$, one can represent $\gA$ on a Hilbert space $\cH_\omega$ by the so--called {\em GNS construction}\index{GNS construction} (after Gel'fand, Naimark, and Segal), see for instance \cite{chap2_Haag,chap2_Araki}. By this construction, algebra elements $A\in\gA$ are represented as operators $\pi_\omega(A)$ on a common dense and invariant subspace of $\cH_\omega$, while $\omega$ is represented as a vector of $|\Omega_\omega\rangle\in\cH_\omega$ such that for all $A\in\gA$
$$\omega(A) = \langle \Omega_\omega|\pi_\Omega(A)|\Omega_\omega\rangle\,.$$
Conversely, every vector in a Hilbert space $\cH$ gives rise to an algebraic state on the algebra of linear operators on $\cH$.

Among the possible states on $\cA(M)$ there are several special classes, which we collect in the following definition. Some of the definitions are sensible for general $*$--algebras, as we point out explicitly.

\begin{definition}
\label{def_StatesScalar}
Let $\gA$ denote a general $*$--algebra and let $\cA(M)$ denote the Borchers--Uhlmann algebra of a linear Bosonic or Fermionic (gauge) field theory.
\begin{enumerate}
\item A state $\omega$ on $\gA$ is called {\em mixed}\index{state!mixed}, if it is a convex linear combination of states, i.e. $\omega=\lambda\omega_1+(1-\lambda)\omega_2$, where $\lambda<1$ and $\omega_i\neq \omega$ are states on $\gA$. A state is called {\em pure}\index{state!pure} if it is not mixed.
\item A state $\omega$ on $\cA(M)$ is called {\em even}, if it is invariant under $\Phi(f)\mapsto -\Phi(f)$, i.e. it has vanishing $n$--point functions for all odd $n$.
\item An even state on $\cA(M)$ is called {\em quasifree} or {\em Gaussian}\index{state!quasifree}\index{state!Gaussian} if, for all even $n$,
$$\omega_n(f_1, \cdots, f_n)=\sum\limits_{\pi_n\in S^\prime_n}\prod\limits_{i=1}^{n/2}\omega_2\left(f_{\pi_n(2i-1)},f_{\pi_n(2i)}\right)\,.$$
Here, $S^\prime_n$ denotes the set of ordered permutations of $n$ elements, namely, the following two conditions are satisfied for $\pi_n\in S^\prime_n$:
$$
\pi_n(2i-1)<\pi_n(2i)\quad\text{for} \quad 1\leq i\leq n/2\,,
$$
$$
\pi_n(2i-1)<\pi_n(2i+1)\quad\text{for}  \quad 1\leq i <  n/2\,.
$$
\item Let $\alpha_t$ denote a one-parameter group of $*$--automorphisms on $\gA$, i.e. for arbitrary elements $A$, $B$ of $\gA$, $$\alpha_t\left(A^*B\right)=\left(\alpha_t(A)\right)^*\alpha_t(B)\,,\qquad \alpha_t\left(\alpha_s(A)\right)=\alpha_{t+s}(A)\,,\qquad \alpha_0(A)=A\,.$$ A state $\omega$ on $\gA$ is called {\em $\alpha_t$-invariant} if $\omega(\alpha_t(A))=\omega(A)$ for all $A\in\gA$.
\item An $\alpha_t$-invariant state $\omega$ on $\gA$ is said to satisfy the {\em KMS condition} for an inverse temperature $\beta=T^{-1}>0$ if, for arbitrary elements $A$, $B$ of $\gA$, the two functions
$$F_{AB}(t)\doteq \omega\left(B\alpha_t(A)\right)\,,\qquad G_{AB}(t)\doteq \omega\left(\alpha_{t}(A)B\right)$$
extend to functions $F_{AB}(z)$ and $G_{AB}(z)$ on the complex plane which are analytic in the strips $0<\text{Im}\; z< \beta$ and $-\beta<\text{Im}\; z< 0$ respectively, continuous on the boundaries $\text{Im}\;z\in\{0,\beta\}$, and fulfil
$$F_{AB}(t+i\beta)=G_{AB}(t)\,.$$
\end{enumerate}
\end{definition}

The KMS condition (after Kubo, Martin, and Schwinger) holds naturally for Gibbs states of {\em finite} systems in quantum statistical mechanics, i.e. for states that are given as $\omega_\beta(A)=\Tr \rho A$ with a density matrix $\rho=\exp{(-\beta H)} (\Tr \exp{(-\beta H)})^{-1}$, $H$ the Hamiltonian operator of the system, and $\Tr$ denoting the trace over the respective Hilbert space. This follows by setting $$\alpha_t(A)=e^{itH}Ae^{-itH}\,,$$ making use of the cyclicity of the trace, and considering that $\exp{(-\beta H)}$ is bounded and has finite trace in the case of a finite system. In the thermodynamic limit, $\exp{(-\beta H)}$ does not possess these properties any more, but the authors of \cite{chap2_HHW} have shown that the KMS condition is still a reasonable condition in this infinite-volume limit. Physically, KMS states are states which are in (thermal) equilibrium with respect to the time evolution encoded in the automorphism $\alpha_t$. In general curved spacetimes, there is no `time evolution' which acts as an automorphism on $\cA(M)$. One could be tempted to introduce a time evolution by a canonical time-translation with respect to some time function of a globally hyperbolic spacetime. However, the causal propagator $E^P$ will in general not be invariant under this time translation if the latter does not correspond to an isometry of $(M,g)$. Hence, such time--translation would not result in an automorphism of $\cA(M)$. There have been various proposals to overcome this problem and to define generalised notions of thermal equilibrium in curved spacetimes, see \cite{chap2_VerchRegensburg} for a review. Ground states (vacuum states) may be thought of as KMS states with inverse temperature $\beta = T^{-1} = \infty$.

Quasifree or Gaussian states which are in addition pure are closely related to the well--known Fock space picture in the sense that the the GNS representation of a pure quasifree state is an irreducible representation on Fock space, see e.g. \cite[Section 3.2]{chap2_Kay}. In this sense, a pure quasifree state is in one--to--one correspondence to a specific definition of a `particle'.   

For the remainder of this chapter and most of the remainder of this monograph, the only model we shall discuss is the free neutral Klein--Gordon field for simplicity. However, all concepts we shall review will be applicable to more general models.

\section{Hadamard States}
\label{sec:hadamard}

The power of the algebraic approach lies in its ability to separate the algebraic relations of quantum fields from the Hilbert space representations of these relations and thus in some sense to treat {\it all} possible Hilbert space representations at once. However, the definition of an algebraic state reviewed in the previous subsection is too general and thus further conditions are necessary in order to select the physically meaningful states among all possible ones on $\cA(M)$.

To this avail it seems reasonable to look at the situation in Minkowski spacetime. Physically interesting states there include the Fock vacuum state and associated multiparticle states as well as coherent states and states describing thermal equilibrium situations. All these states share the same ultraviolet (UV) properties, {\it i.e.}~the same high-energy behaviour, namely they satisfy the so--called {\em Hadamard condition}, which we shall review in a few moments. A closer look at the formulation of quantum field theory in Minkowski spacetime reveals that the Hadamard condition is indeed essential for the mathematical consistency of QFT in Minkowski spacetime, as we would like to briefly explain now. In the following we will only discuss real scalar fields for simplicity. Analyses of Hadamard states for fields of higher spin can be found  e.g. in \cite{chap2_Dappiaggi:2011cj,chap2_Fewster:2003ey,chap2_Hollands:1999fc,chap2_Sahlmann:2000zr}.

The Borchers--Uhlmann algebra $\cA(M)$ of the free neutral Klein--Gordon field $\phi$ contains only very basic observables, namely, linear combinations of products of free fields at separate points, e.g. $\phi(x)\phi(y)$. However, if one wants to treat interacting fields in perturbation theory, or the backreaction of quantum fields on curved spacetimes via their stress--energy tensor, ones needs a notion of normal ordering, i.e. a way to define field monomials like $\phi^2(x)$ at the same point. To see that this requires some work, let us consider the massless scalar field in Minkowski spacetime. Its two point function reads
\beq\label{eq_MinkowskiMassless}\omega_2(x,y)=\omega(\phi(x)\phi(y))=\lim_{\epsilon\downarrow 0}\frac{1}{4\pi^2}\frac1{(x-y)^2+i\epsilon(x_0-y_0)+\epsilon^2}\,,\eeq
where $(x-y)^2$ denotes the Minkowskian product induced by the Minkowski metric and the limit has to be understood as being performed after integrating $\omega_2$ with at least one test function (weak limit). $\omega_2(x,y)$ is a smooth function if $x$ and $y$ are spacelike or timelike separated. It is singular at $(x-y)^2=0$, but the singularity is `good enough' to give a finite result when smearing $\omega_2(x,y)$ with two test functions. Hence, $\omega_2$ is a well--defined (tempered) distribution. Loosely speaking, this shows once more that the product of fields $\phi(x)\phi(y)$ is `well--defined' at non--null related points. However, if we were to define $\phi^2(x)$ by some `limit' like $$\phi^2(x)\doteq\lim_{x\to y}\phi(x)\phi(y)\,,$$ the expectation value of the resulting object would `blow up' and would not be any meaningful object. The well--known solution to this apparent problem is to define field monomials by appropriate regularising subtractions. For the squared field, this is achieved by setting
$$\wick{\phi^2(x)}\;\doteq \lim_{x\to y}\left(\phi(x)\phi(y)-\omega_2(x,y)\bI\right)\,,$$
where of course one would have to specify in which sense the limit should be taken. Omitting the details of this procedure, it seems still clear that the {\em Wick square} $\wick{\phi^2(x)}$ is a meaningful object, as it has a sensible expectation value, i.e. $\omega(\wick{\phi^2(x)})=0$. In the standard Fock space picture, one heuristically writes the field (operator) in terms of creation and annihilation operators in momentum space, i.e.
$$\phi(x)=\frac{1}{\sqrt{2\pi}^3}\int\limits_{\bR^3} \frac{d\vec{k}}{\sqrt{2k_0}}\; \left(a^\dagger_{\vec{k}}\;e^{ikx}+a^{\phantom{\dagger}}_{\vec{k}}\;e^{-ikx}\right)\,,$$
and defines $\wick{\phi^2(x)}$ by writing the mode expansion of the product $\phi(x)\phi(y)$, `normal ordering' the appearing products of creation and annihilation operators such that the creation operators are standing on the left hand side of the annihilation operators, and then finally taking the limit $x\to y$. It is easy to see that this procedure is equivalent to the above defined subtraction of the vacuum expectation value. However, having defined the Wick polynomials is not enough. We would also like to multiply them, i.e., we would like them to constitute an algebra. Using the mode-expansion picture, one can straightforwardly compute
$$\wick{\phi^2(x)}\wick{\phi^2(y)}\;=\;\wick{\phi^2(x)\phi^2(y)}+4\wick{\phi(x)\phi(y)}\omega_2(x,y)+2\left(\omega_2(x,y)\right)^2\,,$$
which is a special case of the well--known {\it Wick theorem}, see for instance \cite{chap2_Zuber}. The right hand side of the above equation is a sensible object if the appearing square of of the two--point function $\omega_2(x,y)$ is well--defined. In more detail, we know that $\omega_2(x,y)$ has singularities, and that these are integrable with test functions. Obviously, $(\omega_2(x,y))^2$ has singularities as well, and the question is whether the singularities are still good enough to be integrable with test functions. In terms of a mode decomposition, one could equivalently wonder whether the momentum space integrals appearing in the definition of $\wick{\phi^2(x)\phi^2(y)}$ via normal ordering creation and annihilation operators converge in a sensible way. The answer to these questions is `yes' because of the energy positivity property of the Minkowskian vacuum state, and this is the reason why one usually never worries about whether normal ordering is well--defined in quantum field theory on Minkowski spacetime. In more detail, the Fourier decomposition of the massless two--point function $\omega_2$ reads
\beq\label{eq_FourierMinkowskivaccuum}
\omega_2(x,y)=\lim_{\epsilon\downarrow 0}\frac{1}{(2\pi)^3}\int\limits_{\bR^4} dk \;\Theta(k_0)\delta(k^2)\;e^{ik(x-y)}e^{-\epsilon k_0}\,,
\eeq
where $\Theta(k_0)$ denotes the Heaviside step function. We see that the Fourier transform of $\omega_2$ has only support on the forward lightcone (or the positive mass shell in the massive case); this corresponds to the fact that we have associated the positive frequency modes to the creation operator in the above mode expansion of the quantum field. This insight allows to determine (or rather, define) the square of $\omega_2(x,y)$ by a convolution in Fourier space
\begin{align*}(\omega_2(x,y))^2&=\lim_{\epsilon\downarrow 0}\frac{1}{(2\pi)^6}\int\limits_{\bR^4} dq \int\limits_{\bR^4} dp \;\Theta(q_0)\;\delta(q^2)\;\Theta(p_0)\;\delta(p^2)\;e^{i(q+p)(x-y)}e^{-\epsilon (q_0+p_0)}\\
&=\lim_{\epsilon\downarrow 0}\frac{1}{(2\pi)^6}\int\limits_{\bR^4} dk \int\limits_{\bR^4} dq \;\Theta(q_0)\;\delta(q^2)\;\Theta(k_0-q_0)\;\delta((k-q)^2)\;e^{ik(x-y)}e^{-\epsilon k_0}\,.\end{align*}
Without going too much into details here, let us observe that the above expression can only give a sensible result (a distribution) if the integral over $q$ converges, i.e. if the integrand is rapidly decreasing in $q$. To see that this is the case, note that for an arbitrary but fixed $k$ and large $q$ where here `large' is meant in the Euclidean norm on $\bR^4$, the integrand is vanishing on account of $\delta(q^2)$ and $\Theta(k_0-q_0)$ as $k_0-q_0<0$ for large $q$. Loosely speaking, we observe the following: by the form of a convolution, the Fourier transform of $\omega_2$ is multiplied by the same Fourier transform, but with negative momentum. Since the $\omega_2$ has only Fourier support in one `energy direction', namely the positive one, the intersection of its Fourier support and the same support evaluated with negative momentum is compact, and the convolution therefore well--defined. Moreover, as this statement only relies on the large momentum behaviour of Fourier transforms, it holds equally in the case of massive fields, as the mass shell approaches the light cone for large momenta.

The outcome of the above considerations is the insight that, if we want to define a sensible generalisation of normal ordering in curved spacetimes, we have to select states whose two--point functions are singular, but regular enough to allow for pointwise multiplication. Even though general curved spacetimes are not translationally invariant and therefore do not allow to define a global Fourier transform and a related global energy positivity condition, one could think that this task can be achieved by some kind of a `local Fourier transform' and a related `local energy positivity condition' because only the `large momentum behaviour' is relevant. In fact, as showed in the pioneering work of Radzikowski \cite{chap2_Radzikowski,chap2_Radzikowski2}, this heuristic idea can be made precise in terms of {\em microlocal analysis}, a modern branch of Mathematics. Microlocal analysis gives a rigorous way to define the `large momentum behaviour' of a distribution in a coordinate-independent manner and in the aforementioned works \cite{chap2_Radzikowski,chap2_Radzikowski2}, it has been shown that so--called {\em Hadamard states}, which have already been known to allow for a sensible renormalisation of the stress--energy tensor \cite{chap2_Wald2,chap2_Wald3,chap2_WaldBook2}, indeed fulfil a local energy positivity condition in the sense that their two--point function has a specific {\em wave front set}. Based on this, Brunetti, Fredenhagen, K\"ohler, Hollands, and Wald \cite{chap2_BFK, chap2_BF00, chap2_HW01, chap2_HW02, chap2_HW04} have been able to show that one can as a matter of fact define normal ordering and perturbative interacting quantum field theories based on Hadamard states essentially in the same way as on Minkowski spacetimes. Though, it turned out that there is a big conceptual difference to flat spacetime quantum field theories, namely, new regularisation freedoms in terms of curvature terms appear. Although these are finitely many, and therefore lead to the result that theories which are perturbatively renormalisable in Minkowski spacetime retain this property in curved spacetimes \cite{chap2_BF00,chap2_HW01}, the appearance of this additional renormalisation freedom may have a profound impact on the backreaction of quantum fields on curved spacetimes, as we will discuss in the last chapter of this monograph.

As we have already seen at the example of the massless Minkowski vacuum, Hadamard states can be approached from two angles. One way to discuss them is to look at the concrete realisation of their two--point function in `position space'. This treatment has lead to the insight that Hadamard states are the sensible starting point for the definition of a regularised stress--energy tensor \cite{chap2_Wald2,chap2_Wald3,chap2_WaldBook2}, and it is well--suited for actual calculations in particular. On the other hand, the rather abstract study of Hadamard states based on microlocal analysis is useful in order to tackle and solve conceptual problems. Following our discussion of the obstructions in the definition of normal ordering, we shall start our treatment by considering the microlocal aspects of Hadamard states. A standard monograph on microlocal analysis is the book of H\"ormander \cite{chap2_Hormander}, who has also contributed a large part to this field of Mathematics \cite{chap2_Hormander0,chap2_Hormander02}. Introductory treatments can be found in \cite{chap2_Reed, chap2_BF00, chap2_Kratzert, chap2_Strohmeier}.

Let us start be introducing the notion of a wave front set. To motivate it, let us recall that a smooth function  on $\bR^m$ with compact support has a rapidly decreasing Fourier transform. If we take an distribution $u$ in $\Gamma^\prime_0(\bR^m)$ and multiply it by an $f\in\Sec(\bR^m)$ with $f(x_0)\neq 0$, then $uf$ is an element of $\Se^\prime(\bR^m)$, i.e., a distribution with compact support. If $fu$ were smooth, then its Fourier transform $\widehat{fu}$ would be smooth and rapidly decreasing. The failure of $fu$ to be smooth in a neighbourhood of $x_0$ can therefore be quantitatively described by the set of directions in Fourier space where $\widehat{fu}$ is not rapidly decreasing. Of course it could happen that we choose $f$ badly and therefore `cut' some of the singularities of $u$ at $x_0$. To see the full singularity structure of $u$ at $x_0$, we therefore need to consider all test functions which are non--vanishing at $x_0$. With this in mind, one first defines the wave front set of distributions on $\bR^m$ and then extends it to curved manifolds in a second step.

\begin{definition}
 \label{def_WaveFrontSet}\index{wave front set}
A neighbourhood $\Gamma$ of $k_0\in\bR^m$ is called {\em conic} if $k\in\Gamma$ implies $\lambda k\in\Gamma$ for all $\lambda\in(0,\infty)$. Let $u\in\Gamma^\prime_0(\bR^m)$. A point $(x_0,k_0)\in \bR^m\times (\bR^m\setminus\{0\})$ is called a regular directed point of $u$ if there is an $f\in\Sec(\bR^m)$ with $f(x_0)\neq 0$ such that, for every $n\in\bN$, there is a constant $C_n\in\bR$ fulfilling
$$|\widehat{fu}(k)|\le C_n (1+|k|)^{-n}$$
for all $k$ in a conic neighbourhood of $k_0$. The {\em wave front set} $\WF(u)$ is the complement in $\bR^m\times (\bR^m\setminus\{0\})$ of the set of all regular directed points of $u$.
\end{definition}

We immediately state a few important properties of wave front sets, the proofs of which can be found in \cite{chap2_Hormander} (see also \cite{chap2_Strohmeier}).

\begin{theorem}
 \label{thm_PropertiesWavefront} Let $u\in\Gamma^\prime_0(\bR^m)$.
\begin{itemize}
 \item[a)] If $u$ is smooth, then $\WF(u)$ is empty.
 \item[b)] Let $P$ be an arbitrary partial differential operator. It holds $$\WF(Pu)\subset \WF(u)\,.$$
\item[c)] Let $U$, $V\subset \bR^m$, let $u\in\Gamma^\prime_0(V)$, and let $\chi:U\to V$ be a diffeomorphism. The pull-back $\chi^*(u)$ of $u$ defined by $\chi^*u(f)=u(\chi_*f)$ for all $f\in\Sec(U)$ fulfils
$$\WF(\chi^*u)=\chi^*\WF(u)\doteq \left\{(\chi^{-1}(x),\chi^*k)\;|\;(x,k)\in \WF(u)\right\}\,,$$
where $\chi^*k$ denotes the push--forward of $\chi$ in the sense of {\em cotangent vectors}. Hence, the wave front set transforms covariantly under diffeomorphisms as an element of $T^*\bR^m$, and we can extend its definition to distributions on general curved manifolds $M$ by patching together wave front sets in different coordinate patches of $M$. As a result, for $u\in\Gamma^\prime_0(M)$, $\WF(u)\subset T^*M\setminus\{{\mathbf 0}\}$, where ${\mathbf 0}$ denotes the zero section of $T^*M$.
\item[d)] Let $u_1$, $u_2\in\Gamma^\prime_0(M)$ and let
$$\WF(u_1)\oplus \WF(u_2)\doteq \left\{(x, k_1+k_2)\;|\;(x,k_1)\in \WF(u_1),\; (x,k_2)\in \WF(u_2)\right\}\,.$$
If $\WF(u_1)\oplus \WF(u_2)$ does not intersect the zero section, then one can define the product $u_1u_2$ in such a way that it yields a well--defined distribution in $\Gamma^\prime_0(M)$ and that it reduces to the standard pointwise product of smooth functions if $u_1$ and $u_2$ are smooth. Moreover, the wave front set of such product is bounded in the following way
$$\WF(u_1u_2)\subset \WF(u_1)\cup \WF(u_2)\cup \left(\WF(u_1)\oplus \WF(u_2)\right)\,.$$
\end{itemize}
\end{theorem}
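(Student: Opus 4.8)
The plan is to treat the four assertions in increasing order of difficulty, since (a) feeds directly into (b) and the localisation idea behind both is what makes (d) work, whereas (c) is essentially independent and is where the real labour lies. Throughout I would rely on the elementary fact that a compactly supported smooth function has a Fourier transform which is rapidly decreasing in every direction, together with the observation that rapid decrease of $\widehat{fu}$ in a conic neighbourhood is precisely the defining property of a regular directed point. For (a), if $u$ is smooth then for any $x_0$ and any $f\in\Sec(\bR^m)$ with $f(x_0)\neq 0$ the product $fu$ is smooth with compact support, hence $\widehat{fu}$ is Schwartz and in particular satisfies $|\widehat{fu}(k)|\le C_n(1+|k|)^{-n}$ in all of $\bR^m$; thus every $(x_0,k_0)$ is regular directed and $\WF(u)=\emptyset$.

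For (b) I would first record the standard refinement that $(x_0,k_0)\notin\WF(u)$ precisely when there is a single $f$ with $f(x_0)\neq0$ and a conic neighbourhood $\Gamma$ of $k_0$ in which $\widehat{fu}$ decays rapidly, the point being that this decay is insensitive to shrinking $\supp f$ and to multiplying $u$ by smooth factors. Given this, for a differential operator $P=\sum_\alpha a_\alpha\partial^\alpha$ with smooth coefficients I would use the identity $\widehat{f\,\partial_j v}(k)=ik_j\widehat{fv}(k)-\widehat{(\partial_j f)v}(k)$ and iterate. Terms of the first kind multiply a rapidly decreasing function by a polynomial in $k$, which is still rapidly decreasing in $\Gamma$, while terms of the second kind produce $\widehat{g u}$ with $g$ smooth and supported near $x_0$, controlled by the same regularity hypothesis. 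Hence any direction regular for $u$ is regular for $Pu$, giving $\WF(Pu)\subset\WF(u)$.

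For (d) the key is the convolution theorem $\widehat{(fu_1)(fu_2)}=(2\pi)^{-m}\,\widehat{fu_1}*\widehat{fu_2}$, localising both distributions by the same $f$ with $f(x_0)\neq0$. Fixing a direction $k_0$ not lying in $\WF(u_1)\cup\WF(u_2)\cup(\WF(u_1)\oplus\WF(u_2))$ over $x_0$, I would split the convolution $\int\widehat{fu_1}(k-\eta)\,\widehat{fu_2}(\eta)\,d\eta$ according to whether $\eta$ lies in a small conic neighbourhood of the singular directions of $u_2$: where $\eta$ is away from $\WF(u_2)$ the factor $\widehat{fu_2}(\eta)$ decays rapidly, and where $\eta$ is near a singular direction of $u_2$ the hypothesis that $\WF(u_1)\oplus\WF(u_2)$ avoids $k_0$ forces $k-\eta$ away from $\WF(u_1)$, so the other factor decays. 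A bookkeeping estimate then shows the convolution decays rapidly along $k_0$, simultaneously establishing that the product is a well-defined element of $\Se^\prime_0(M)$ and yielding the stated bound on $\WF(u_1u_2)$.

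The hard part is (c), covariance under diffeomorphisms, because a change of variables mixes base point and frequency nonlinearly, so the naive localisation argument breaks down. The standard route is to write $\widehat{\chi^*(fu)}$ as an oscillatory integral with phase $\langle\chi(x),k\rangle$ and apply the method of (non-)stationary phase: contributions decay rapidly unless $k$ is aligned with $(d\chi)^t$ applied to a frequency in the original chart. Making this estimate uniform over conic neighbourhoods, and verifying that the resulting correspondence $(x,k)\mapsto(\chi^{-1}(x),(d\chi)^t k)$ is exactly the transformation law of a covector, is the delicate step, and it is precisely this computation that legitimises regarding $\WF(u)$ as a subset of $T^*M\setminus\{\mathbf 0\}$ and patching it across charts. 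I would carry out this stationary-phase argument in full and treat the remaining three parts as comparatively routine consequences of the definition and the convolution theorem.
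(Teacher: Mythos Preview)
Your proposal is essentially correct and follows the standard approach found in H\"ormander's monograph, but you should be aware that the paper itself does \emph{not} prove this theorem: it simply states the result and refers the reader to \cite{chap2_Hormander} (see also \cite{chap2_Strohmeier}) for the proofs. So there is no paper-internal argument to compare against; what you have written is, in outline, precisely the textbook route that those references take.

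A few remarks on the details of your sketch. Part (a) is fine as written. In (b) your iteration via $\widehat{f\,\partial_j v}(k)=ik_j\widehat{fv}(k)-\widehat{(\partial_j f)v}(k)$ is the right mechanism, but you should be slightly more careful about the claim that ``this decay is insensitive to \ldots\ multiplying $u$ by smooth factors'': one really needs that if $\widehat{fu}$ decays rapidly in a cone then so does $\widehat{gu}$ for any $g$ with $\supp g\subset\{f\neq 0\}$, which requires writing $g=hf$ and using that $\widehat{hfu}=(2\pi)^{-m}\widehat{h}*\widehat{fu}$ with $\widehat{h}$ Schwartz, plus a polynomial growth bound on $\widehat{fu}$ outside the cone. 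This is routine but not entirely trivial. For (d) your convolution splitting is the right idea; the same polynomial-growth-versus-rapid-decay bookkeeping is needed to make the integrals converge, and one also needs a brief argument that the product so defined is independent of the localising $f$ and extends by a partition of unity to a global distribution. Your assessment of (c) as the substantive part, handled by non-stationary phase in the oscillatory integral with phase $\langle\chi(y),k\rangle$, is exactly right and matches the standard treatment.
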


Note that the wave front set transforms as a subset of the cotangent bundle on account of the covector nature of $k$ in $\exp(ikx)$. The last of the above statements is exactly the criterion for pointwise multiplication of distributions we have been looking for. Namely, from  \eqref{eq_FourierMinkowskivaccuum} and \eqref{eq_MinkowskiMassless} one can infer that the wave front set of the Minkowskian two--point function (for $m\ge0$) is \cite{chap2_Reed}
\beq\label{eq_MinkwoskiWF} \WF(\omega_2)=\left\{(x,y,k,-k)\in T^*\bM^2 \;|\; x\neq y,\;(x-y)^2=0,\;k||(x-y),\; k_0>0\right\}\eeq
$$\cup\left\{(x,x,k,-k)\in T^*\bM^2 \;|\; k^2=0,\; k_0>0\right\}\,,$$
particularly, it is the condition $k_0>0$ which encodes the energy positivity of the Minkowskian vacuum state. We can now rephrase our observation that the pointwise square of $\omega_2(x,y)$ is a well--defined distribution by noting that $\WF(\omega_2)\oplus \WF(\omega_2)$ does not contain the zero section. In contrast, we know that the $\delta$-distribution $\delta(x)$ is singular at $x=0$ and that its Fourier transform is a constant. Hence, its wave front set reads
$$\WF(\delta)=\{(0,k)\;|\;k\in\bR\setminus\{0\}\}\,,$$
and we see that the $\delta$-distribution does not have a `one-sided' wave front set and, hence, can not be squared. The same holds if we view $\delta$ as a distribution $\delta(x,y)$ on $\Sec(\bR^2)$. Then
$$\WF(\delta(x,y))=\{(x,x,k,-k)\;|\;k\in\bR\setminus\{0\}\}\,.$$

The previous discussion suggests that a generalisation of \eqref{eq_MinkwoskiWF} to curved spacetimes is the sensible requirement to select states which allow for the construction of Wick polynomials. We shall now define such a generalisation.

\begin{definition}
 \label{def_HadamardScalar}
Let $\omega$ be a state on $\cA(M)$. We say that $\omega$ fulfils the {\em Hadamard condition} and is therefore a {\em Hadamard state} if its two--point function $\omega_2$ fulfils
$$\WF(\omega_2)=\left\{(x,y,k_x,-k_y)\in T^* M^2\setminus \{\mathbf 0\}\;|\;
(x,k_x)\sim(y,k_y),\;k_x\triangleright 0\right\}\,.$$
Here, $(x,k_x)\sim(y,k_y)$ implies that there exists a null geodesic $c$ connecting $x$ to $y$ such
that $k_x$ is coparallel and cotangent to $c$ at $x$ and $k_y$ is the parallel transport of $k_x$ from $x$ to $y$ along $c$. Finally, $k_x\triangleright 0$ means that the covector $k_x$ is future--directed.
\end{definition}

Having discussed the rather abstract aspect of Hadamard states, let us now turn to their more concrete realisations. To this avail, let us consider a geodesically convex set $\cO$ in $(M,g)$, see Section \ref{sec:ghst}. By definition, there are open subsets $\cO^\prime_x\subset T_xM$ such that the exponential map $\exp_x: \cO^\prime_x\to\cO$ is well--defined for all $x\in \cO$, i.e. we can introduce Riemannian normal coordinates on $\cO$. For any two points $x$, $y\in \cO$, we can therefore define the {\em half squared geodesic distance}\index{sigma@$\sigma$, half squared geodesic distance, Synge's world function} $\sigma(x,y)$ as
$$\sigma(x,y)\doteq\frac12 g\left(\exp_x^{-1}(y), \exp_x^{-1}(y)\right)\,.$$
This entity is sometimes also called {\em Synge's world function} and is both smooth and symmetric on $\cO\times\cO$. Moreover, one can show that it fulfils the following identity
\beq\label{eq_PDESigma}\sigma_{;\mu}\sigma_{;}^{\phantom{;}\mu}=2\sigma\,,\eeq
where the covariant derivatives are taken with respect to $x$ (even though this does not matter by the symmetry of $\sigma$), see for instance \cite{chap2_Friedlander,chap2_Poisson}. Let us introduce a couple of standard notations related to objects on $\cO\times \cO$ such as $\sigma$. If $\cV$ and $\cW$ are vector bundles over $M$ with typical fibers constituted by the vector spaces $V$ and $W$ respectively, then we denote by $\cV\boxtimes \cW$ the {\em exterior tensor product} of $\cV $ and $\cW $. $\cV \boxtimes \cW $ is defined as the vector bundle over $M\times M$ with typical fibre $V\otimes W$. The more familiar notion of the tensor product bundle $\cV \otimes \cW $ is obtained by considering the pull-back bundle of $\cV \boxtimes \cW $ with respect to the map $M\ni x\mapsto (x,x)\in M^2$. Typical exterior product bundles are for instance the tangent bundles of Cartesian products of $M$, e.g. $T^*M\boxtimes T^*M = T^*M^2$. A section of $\cV \boxtimes \cW $ is called a {\em bitensor}\index{bitensor}. We introduce the {\em Synge bracket notation} for the coinciding point limits of a bitensor. Namely, let $B$ be a smooth section of $\cV \boxtimes \cW $. We define

$$[B](x)\doteq \lim_{y\to x}B(x,y)\,.$$
With this definition, $[B]$ is a section of $\cV \otimes \cW $. In the following, we shall denote by unprimed indices tensorial quantities at $x$, while primed indices denote tensorial quantities at $y$. As an example, let us state the well--known {\em Synge rule}, proved for instance in \cite{chap2_Christensen0,chap2_Poisson}.
\begin{lemma}
 \label{lem_SyngeRule} Let $B$ be an arbitrary smooth bitensor. Its covariant derivatives at $x$ and $y$ are related by {\em Synge's rule}. Namely,
$$[B_{;\mu^\prime}]=[B]_{\;\mu}-[B_{;\mu}]\,.$$
Particularly, let $\cV $ be a vector bundle,  let $f_a$ be a local frame of $\cV $ defined on $\cO\subset M$ and let $x$, $y\in\cO$. If $B$ is {\em symmetric}, i.e. the coefficients $B_{ab^\prime}(x,y)$ of $$B(x,y)\doteq B^{ab^\prime}(x,y)\;f_a(x)\otimes f_{b^\prime}(y)$$ fulfil $$B^{ab^\prime}(x,y)=B^{b^\prime a}(y,x)\,,$$
then
$$[B_{;\mu^\prime}]=[B_{;\mu}]=\frac12 [B]_{;\mu}\,.$$
\end{lemma}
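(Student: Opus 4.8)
The plan is to reduce the statement to the elementary chain rule for the coincidence limit of an ordinary function on $M\times M$, exploiting the freedom to choose a convenient local frame. Throughout, I would fix a point $x_0\in\cO$ and note that both sides of each asserted identity are tensors at $x_0$; hence it suffices to verify the identities in the frame components relative to one conveniently chosen frame, and then appeal to tensoriality and to the arbitrariness of $x_0$.

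First I would establish the general rule $[B]_{;\mu}=[B_{;\mu}]+[B_{;\mu'}]$, which is equivalent to the asserted $[B_{;\mu'}]=[B]_{;\mu}-[B_{;\mu}]$. The key device is to take the local frame $f_a$ to be \emph{parallel--propagated along the geodesics emanating from $x_0$}, so that $\nabla_\mu f_a|_{x_0}=0$ for every $\mu$, i.e. the connection coefficients of this frame vanish at $x_0$. Writing $B(x,y)=B^{ab'}(x,y)\,f_a(x)\otimes f_{b'}(y)$, the covariant derivative $B_{;\mu}$ acts through the connection only on the unprimed index $a$ (the primed index sits at $y$ and is inert under $\nabla^x_\mu$), and $B_{;\mu'}$ only on the primed index $b'$. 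Evaluating at coincidence $x=y=x_0$, where all these connection terms drop out, the frame components reduce to $[B_{;\mu}]^{ab}=[\partial_\mu B^{ab'}]$ and $[B_{;\mu'}]^{ab}=[\partial_{\mu'}B^{ab'}]$, while $[B]_{;\mu}$ has components $\partial_\mu[B^{ab'}]$. Since $[B^{ab'}]=B^{ab'}(x,x)$, the ordinary multivariable chain rule gives $\partial_\mu[B^{ab'}]=[\partial_\mu B^{ab'}]+[\partial_{\mu'}B^{ab'}]$, which is exactly $[B]_{;\mu}=[B_{;\mu}]+[B_{;\mu'}]$ in components; tensoriality then yields the rule at $x_0$, and hence everywhere.

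For the symmetric case I would first show $[B_{;\mu}]=[B_{;\mu'}]$ and then substitute into the general rule. The symmetry condition states that $B$ is invariant under simultaneously swapping the two arguments and the two tensor slots, which in components reads $B^{ab'}(x,y)=B^{b'a}(y,x)$. Differentiating this relation with respect to the first argument turns a first--slot derivative on the left into a second--slot derivative on the right, and passing to the coincidence limit (where the primed and unprimed slots are identified) gives $[B_{;\mu}]=[B_{;\mu'}]$; for a biscalar such as Synge's world function $\sigma$ this identification is immediate and unambiguous. Feeding $[B_{;\mu}]=[B_{;\mu'}]$ back into $[B]_{;\mu}=[B_{;\mu}]+[B_{;\mu'}]$ gives $[B]_{;\mu}=2[B_{;\mu}]$, i.e. $[B_{;\mu}]=[B_{;\mu'}]=\tfrac12[B]_{;\mu}$.

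The only genuinely delicate step is the symmetry identity $[B_{;\mu}]=[B_{;\mu'}]$ for bitensors carrying nontrivial slot structure: the component computation in fact yields $[B_{;\mu}]^{ab}=[B_{;\mu'}]^{ba}$, so the tensorial equality holds cleanly for biscalars and, more generally, precisely when the coincidence--limit derivative is itself symmetric under interchange of the two $\cV$--factors. This is exactly the situation for the symmetric objects to which the lemma is applied in the sequel, namely Synge's function $\sigma$ and its symmetrised covariant derivatives, so no loss is incurred there. By contrast, the general rule of the first part requires no such proviso, the existence of a frame with $\nabla_\mu f_a|_{x_0}=0$ being the only ingredient beyond the chain rule.
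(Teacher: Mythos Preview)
The paper does not supply its own proof of this lemma; it is stated as a well--known result with references to Christensen and Poisson. Your argument---choosing a frame parallel--propagated from $x_0$ so that covariant derivatives reduce to partials at that point, and then invoking the chain rule for $B^{ab'}(x,x)$---is precisely the standard derivation found in those references, so in that sense your approach coincides with what the paper defers to.

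Your caveat in the symmetric case is well taken and worth keeping explicit: the component computation really yields $[B_{;\mu}]^{ab}=[B_{;\mu'}]^{ba}$, so the tensorial equality $[B_{;\mu}]=[B_{;\mu'}]$ is unambiguous for biscalars and, more generally, requires that the coincidence limit be symmetric in its two $\cV$--slots. The paper's later applications (to $\sigma$, the Hadamard coefficients $u$, $v_n$, and the parallel propagator) all fall under this proviso, so nothing is lost. If anything, you are being slightly more careful here than the lemma's bare statement suggests; the general Synge rule in the first part is unaffected and your proof of it is complete.
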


The half squared geodesic distance is a prototype of a class of bitensors of which we shall encounter many in the following. Namely, $\sigma$ fulfils a partial differential equation \eqref{eq_PDESigma} which relates its higher order derivatives to lower order ones. Hence, given the initial conditions
$$[\sigma]=0\,,\qquad [\sigma_{;\mu}]=0\,,\qquad [\sigma_{;\mu\nu}]=g_{\mu\nu}$$
which follow from the very definition of $\sigma$, one can compute the coinciding point limits of its higher derivatives by means of an inductive procedure, see for instance \cite{chap2_DeWitt, chap2_Christensen0, chap2_FullingBook, chap2_Poisson}. As an example, in the case of $[\sigma_{;\mu\nu\rho}]$, one differentiates
\eqref{eq_PDESigma} three times and then takes the coinciding point limit. Together with the already known relations, one obtains $$[\sigma_{;\mu\nu\rho}]=0\,.$$ At a level of fourth derivative, the same procedure yields a linear combination of three coinciding fourth derivatives, though with different index orders. To relate those, one has to commute derivatives to rearrange the indices in the looked-for fashion, and this ultimately leads to the appearance of Riemann curvature tensors and therefore to
$$[\sigma_{;\mu\nu\varrho\tau}]=-\frac{1}{3}(R_{\mu\varrho\nu\tau}+R_{\mu\tau\nu\varrho})\,.$$

A different bitensor of the abovementioned kind we shall need in the following is the {\em bitensor of parallel transport} $g^\mu_{\rho^\prime}(x,y)$. Namely, given a geodesically convex set $\cO$, $x$,
$y\in \cO$, and a vector $v=v^{\mu^\prime}\pa_{\mu^\prime}$ in $T_yM$, the parallel transport of $v$ from $y$ to $x$ along the unique geodesic in $\cO$ connecting $x$ and $y$ is given by the vector $\tilde v$ in $T_xM$ with components
$$\tilde v^\mu = g^\mu_{\rho^\prime}v^{\rho^\prime}\,.$$
This definition of the bitensor of parallel transport entails
$$[g^\mu_{\rho^\prime}]=\delta^\mu_\rho\,,\qquad g^\mu_{\rho^\prime;\alpha}\sigma_{;}^{\phantom{;}\alpha}=0\,,\qquad g^\mu_{\rho^\prime}\sigma_{;}^{\phantom{;}\rho^\prime}=-\sigma_{;}^{\phantom{;}\mu}\,.$$
In fact, the first two identities can be taken as the defining partial differential equation of $g^\mu_{\rho^\prime}$ and its initial condition (one can even show that the mentioned partial differential equation is an ordinary one). Out of these, one can obtain by the inductive procedure outlined above
$$[g^\mu_{\rho^\prime;\alpha}]=0\,, \qquad[g^\mu_{\rho^\prime;\alpha\beta}]=\frac12 R^\mu_{\phantom{\mu}\nu\alpha\beta}\,.$$
With these preparations at hand, let us now provide the explicit form of Hadamard states.

\begin{definition}
 \label{def_HadamardFormScalar}
Let $\omega_2$ be the two--point function of a state on $\cA(M)$, let $t$ be a time function on $(M,g)$, let
$$\sigma_\epsilon(x,y)\doteq\sigma(x,y)+2i\epsilon(t(x)-t(y))+\epsilon^2\,,$$ and let $\lambda$ be an arbitrary length scale. We say that $\omega_2$ if of {\em local Hadamard form}\index{Hadamard form} if, for every $x_0\in M$ there exists a geodesically convex neighbourhood $\cO$ of $x_0$ such that $\omega_2(x,y)$ on $\cO\times \cO$ is of the form
\begin{align*}\omega_2(x,y)&=\lim_{\epsilon\downarrow 0}\frac{1}{8\pi^2}\left(\frac{u(x,y)}{\sigma_\epsilon(x,y)}+v(x,y)\log\left(\frac{\sigma_\epsilon(x,y)}{\lambda^2}\right)+w(x,y)\right)\\
&\doteq\lim_{\epsilon\downarrow 0}\frac{1}{8\pi^2}\left(h_\epsilon(x,y)+w(x,y)\right)\,.\end{align*}
Here, the {\em Hadamard coefficients}\index{Hadamard coefficients} $u$, $v$, and $w$ are smooth, real--valued biscalars, where $v$ is given by a series expansion in $\sigma$ as
$$v=\sum\limits^\infty_{n=0}v_n \sigma^n$$
with smooth biscalar coefficients $v_n$. The bidistribution $h_\epsilon$ shall be called {\em Hadamard parametrix}\index{Hadamard parametrix}, indicating that it solves the Klein--Gordon equation up to smooth terms.
\end{definition}

Note that the above series expansion of $v$ does not necessarily converge on general smooth spacetimes, however, it is known to converge on analytic spacetimes \cite{chap2_Garabedian}. One therefore often truncates the series at a finite order $n$ and asks for the $w$ coefficient to be only of regularity $C^n$, see \cite{chap2_Kay}. Moreover, the local Hadamard form is special case of the {\em global Hadamard form} defined for the first time in \cite{chap2_Kay}. The definition of the global Hadamard form in \cite{chap2_Kay} assures that there are no (spacelike) singularities in addition to the lightlike ones visible in the local form and, moreover, that the whole concept is independent of the chosen time function $t$. However, as proven by Radzikowski in \cite{chap2_Radzikowski2} employing the microlocal version of the Hadamard condition, the local Hadamard form already implies the global Hadamard form on account of the fact that $\omega_2$ must be positive, have the causal propagator $E$ as its antisymmetric part, and fulfil the Klein--Gordon equation in both arguments. It is exactly this last fact which serves to determine the Hadamard coefficients $u$, $v$, and $w$ by a recursive procedure.

To see this, let us omit the subscript $\epsilon$ and the scale $\lambda$ in the following, since they do not influence the result of the following calculations, and let us denote by $P_x$ the Klein--Gordon operator $P=-\Box + \xi R + m^2$ acting on the $x$-variable. Applying $P_x$ to $h$, we obtain potentially singular terms proportional to $\sigma^{-n}$ for $n=1,2,3$ and to $\log \sigma$, as well as smooth terms proportional to positive powers of $\sigma$. We know, however, that the total result is smooth because $P_x(h+w)=0$ since $\omega_2$ is a bisolution of the Klein--Gordon equation and $w$ is smooth. Consequently, the potentially singular terms in $P_x h$ have to vanish identically at each order in $\sigma$ and $\log \sigma$. This immediately implies
\beq\label{eq_PDEVScalar}P_xv= 0\,.\eeq
and, further, the following recursion relations
\begin{gather}
-P_xu+2v_{0;\mu}\sigma^\mu+(\square_x\sigma-2)v_0=0\,,\label{eq_PDEV0Scalar}\\
2u_{;\mu}\sigma^\mu+(\square_x\sigma-4)u=0\,,\label{eq_PDEUScalar}\\
-P_x v_n+2(n+1)v_{n+1;\mu}\sigma_{;}^{\phantom{;}\mu}+\left(n+1
\right)\left(\square_x\sigma+2n\right)v_{n+1}=0\,,\quad\forall n\geq 0\,.\label{eq_PDEVNScalar}
\end{gather}

To solve these recursive partial differential equations, let us now focus on \eqref{eq_PDEUScalar}. Since the only derivative appearing in this equation is the derivative along the geodesic connecting $x$ and $y$, \eqref{eq_PDEUScalar} is in fact an ordinary differential equation with respect to the affine parameter of the mentioned geodesic. $u$ is therefore uniquely determined once a suitable initial condition is given. Comparing the Hadamard form with the Minkowskian two--point function \eqref{eq_MinkowskiMassless}, the initial condition is usually chosen as
$$[u]=1\,,$$
which leads to the well--known result that $u$ is given by the square root of the so--called {\em Van Vleck-Morette} determinant, see for instance \cite{chap2_DeWitt, chap2_Christensen0, chap2_FullingBook, chap2_Poisson}. Similarly, given $u$, the differential equation \eqref{eq_PDEV0Scalar} is again an ordinary one with respect to the geodesic affine parameter, and it can be immediately integrated since taking the coinciding point limit of \eqref{eq_PDEV0Scalar} and inserting the properties of $\sigma$ yields the initial condition
$$[v_0]=\frac12 [P_xu]\,.$$
It is clear how this procedure can be iterated to obtain solutions for all $v_n$. Particularly, one obtains the initial conditions
$$[v_{n+1}]=\frac{1}{2(n+1)(n+2)}[P_xv_n]$$
 for all $n>0$. Moreover, one finds that $u$ depends only on the local geometry of the spacetime, while the $v_n$ and, hence, $v$ depend only on the local geometry and the parameters appearing in the Klein--Gordon operator $P$, namely, the mass $m$ and the coupling to the scalar curvature $\xi$. These observations entail that the state dependence of $\omega_2$ is encoded in the smooth biscalar $w$, which furthermore has to be symmetric because it is bound to vanish in the difference of two--point functions yielding the antisymmetric causal propagator $E$, {\it viz.} $$\omega_2(x,y)-\overline{\omega_2(x,y)}=\omega_2(x,y)-\omega_2(y,x)=iE(x,y)\,.$$ More precisely, this observation ensues from the following important result obtained in \cite{chap2_Moretti2,chap2_Moretti}.
\begin{theorem}
 \label{thm_SymmetryHadamard} The Hadamard coefficients $v_n$ are symmetric biscalars.
\end{theorem}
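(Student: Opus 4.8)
The plan is to prove the symmetry $v_n(x,y)=v_n(y,x)$ by induction on $n$, exploiting uniqueness of solutions of the Hadamard transport equations \eqref{eq_PDEV0Scalar}--\eqref{eq_PDEVNScalar} together with the fact that the coefficients satisfy not only these recursions in the $x$-variable but also a mirror family of recursions in the $y$-variable. As a base case I would first record that the leading coefficient $u$ is symmetric: since $u$ is the square root of the Van Vleck--Morette determinant, which is assembled from $\sigma_{;\mu\nu^\prime}$ and the metric densities at $x$ and $y$ in a way that is manifestly invariant under interchange of the two arguments, one has $u(x,y)=u(y,x)$ directly. Alternatively this already follows from the uniqueness argument below applied to the transport equation \eqref{eq_PDEUScalar} for $u$.

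The central observation is that the parametrix $h$ is annihilated by $P$ modulo smooth terms in \emph{both} arguments. Indeed, for any Hadamard state with $\omega_2=H+w/8\pi^2$ and $w$ smooth and symmetric, the two--point function is a bisolution, $P_x\omega_2=P_y\omega_2=0$, so both $P_xH$ and $P_yH$ are smooth. Expanding $P_yH$ in powers of $\sigma$ and $\log\sigma$ exactly as was done for $P_xH$, and using that $\sigma$, $\Box\sigma$ and the $v_n$ are biscalars, yields a second family of transport equations for the \emph{same} coefficients $u,v_n$ in which every $x$--derivative and every $\Box_x\sigma$ is replaced by its $y$--counterpart. Because $\sigma(x,y)=\sigma(y,x)$, the $x$-- and $y$--transport operators are carried into one another by the interchange map $s:(x,y)\mapsto(y,x)$.

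With this in place the induction is immediate. Write $v_{n+1}^s(x,y)\doteq v_{n+1}(y,x)$ and suppose $u$ and $v_0,\dots,v_n$ are symmetric. Applying the $x$--transport operator of \eqref{eq_PDEVNScalar} to $v_{n+1}^s$ equals, by the swap property, the $y$--transport operator applied to $v_{n+1}$ and evaluated at $(y,x)$, whose source is $(P_yv_n)(y,x)$. The inductive symmetry of $v_n$ gives $(P_yv_n)(y,x)=(P_xv_n)(x,y)$, so $v_{n+1}^s$ obeys exactly the same ordinary differential equation along the connecting geodesic as $v_{n+1}$, with the same coinciding--point initial value $[v_{n+1}]=[v_{n+1}^s]$. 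Uniqueness of the solution of this geodesic transport equation forces $v_{n+1}=v_{n+1}^s$, closing the induction.

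The step I expect to be the real obstacle is justifying the mirror recursion, that is, that the coefficients generated by the $x$--recursion genuinely also solve the $y$--transport equations, equivalently that $P_yH$ is smooth. On analytic spacetimes the series for $v$ converges and $H$ is a genuine bisolution modulo smooth terms, so this is clear; in the general $C^\infty$ case the series need not converge and the required smoothness is delicate. It is precisely here that one invokes Moretti's technique: a local Wick rotation turns $P$ into an elliptic operator whose Hadamard coefficients coincide with the heat--kernel (Seeley--DeWitt) coefficients, which are manifestly symmetric by symmetry of the heat kernel of a formally self--adjoint operator, and the symmetry is then transported back by analytic continuation. I would use this to secure the smoothness of $P_yH$, and let the uniqueness argument above carry the rest.
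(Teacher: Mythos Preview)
The paper does not give its own proof of this theorem; it simply quotes the result and cites Moretti's two papers \cite{chap2_Moretti2,chap2_Moretti}. So there is no in--paper argument to compare against, only Moretti's method, which you yourself invoke at the end.

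Your inductive scaffold is the natural first attempt and is essentially correct in the analytic case: once one knows that the $v_n$ built from the $x$--recursion also satisfy the mirror $y$--recursion, the swap map together with uniqueness of geodesic transport closes the induction. Two small remarks. First, when you justify that $P_yH$ is smooth you write ``$w$ smooth and symmetric''; the symmetry of $w$ is in fact a \emph{consequence} of the theorem you are proving (cf.\ the paragraph right after the statement), so invoking it is circular. Fortunately you only use smoothness there, so just drop the word ``symmetric''. Second, in the $C^\infty$ case you cannot write $\omega_2=H+w/8\pi^2$ with $w$ smooth, only $\omega_2=H_N+w_N/8\pi^2$ with $w_N\in C^N$ for each truncation order $N$; one then has $P_yH_N\in C^{N-2}$, which kills the singular terms order by order and hence yields the $y$--transport equations for $v_0,\dots,v_{N-1}$. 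Letting $N\to\infty$ recovers the full mirror recursion. This line can in principle be made to work, but it needs exactly this truncation bookkeeping rather than the formal--series argument you sketch.

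Where your proposal becomes genuinely muddled is the last paragraph. You describe Moretti's technique correctly---local Wick rotation identifies the Lorentzian Hadamard coefficients with the Riemannian Seeley--DeWitt coefficients, and the latter are symmetric because the heat kernel of a formally self--adjoint elliptic operator is symmetric---but then say you will ``use this to secure the smoothness of $P_yH$, and let the uniqueness argument above carry the rest''. That is backwards: Moretti's argument already delivers the symmetry of every $v_n$ outright. There is nothing left for your induction to do once you have it, and conversely the Wick--rotation step does not produce ``smoothness of $P_yH$'' as an intermediate lemma that feeds into a transport--equation argument. Either run the truncation argument above on its own, or appeal to Moretti directly; the hybrid you propose is redundant.
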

This theorem proves the folklore knowledge that the causal propagator $E$ is locally given by $$iE=\lim_{\epsilon\downarrow 0}\frac{1}{8\pi^2}(h_\epsilon-h_{-\epsilon})\,.$$

Even though we can in principle obtain the $v_n$ as unique solutions of ordinary differential equations, we shall only need their coinciding point limits and coinciding points limit of their derivatives in what follows. In this respect, the symmetry of the $v_n$ will prove very valuable in combination with Lemma \ref{lem_SyngeRule}. In fact, employing the Hadamard recursion relations, we find the following results \cite{chap2_Mo03}.

\begin{lemma}
 \label{lem_PHadamardScalar}
The following identities hold for the Hadamard parametrix $h(x,y)$
$$[P_xh]=[P_yh]=-6[v_1]\,,\quad [(P_xh)_{;\mu}]=[(P_yh)_{;\mu^\prime}]=-4[v_1]_{;\mu}\,,$$
$$\quad [(P_xh)_{;\mu^\prime}]=[(P_yh)_{;\mu}]=-2[v_1]_{;\mu}\,.$$
\end{lemma}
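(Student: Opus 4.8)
The plan is to reduce all six identities to the single smooth remainder obtained by applying $P_x$ to the Hadamard parametrix $h=u/\sigma+v\log\sigma$, and then to extract its coinciding point limits term by term. First I would compute $P_x(u/\sigma)$ and $P_x(v\log\sigma)$ directly, using only \eqref{eq_PDESigma} (so that $\sigma_{;\mu}\sigma_{;}^{\phantom{;}\mu}=2\sigma$) together with the recursion relations \eqref{eq_PDEUScalar}, \eqref{eq_PDEV0Scalar}, \eqref{eq_PDEVNScalar} and the bisolution property \eqref{eq_PDEVScalar}. The relation \eqref{eq_PDEUScalar} is exactly what cancels the $\sigma^{-2}$ term in $P_x(u/\sigma)$, leaving $P_x(u/\sigma)=P_xu/\sigma$; collecting powers of $\sigma$ in $P_x(v\log\sigma)$, the $\log\sigma$ contribution drops by \eqref{eq_PDEVScalar}, the recursion relations convert the remaining coefficients, and the two $\sigma^{-1}$ poles (both proportional to $P_xu$) cancel, leaving the manifestly smooth series
\begin{equation*}
P_xh=-\sum_{m=0}^{\infty}\left(\tfrac{1}{m+1}P_xv_m+2(m+1)v_{m+1}\right)\sigma^m\doteq-\sum_{m=0}^{\infty}c_m\sigma^m,\qquad c_0=P_xv_0+2v_1.
\end{equation*}

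Next I would take coinciding point limits. Since $[\sigma]=0$ and $[\sigma_{;\mu}]=0$, only the $m=0$ term of $P_xh$ and of its first derivatives survives, so that $[P_xh]=-[c_0]$, $[(P_xh)_{;\mu}]=-[c_{0;\mu}]$ and $[(P_xh)_{;\mu^\prime}]=-[c_{0;\mu^\prime}]$. The limit $[P_xv_0]$ I would read off from the $n=0$ case of \eqref{eq_PDEVNScalar}, namely $P_xv_0=2v_{1;\mu}\sigma_{;}^{\phantom{;}\mu}+(\Box\sigma)v_1$, which at coincidence gives $[P_xv_0]=[\Box\sigma][v_1]=4[v_1]$ (using $[\Box\sigma]=g^{\mu\nu}[\sigma_{;\mu\nu}]=4$). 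Hence $[P_xh]=-([P_xv_0]+2[v_1])=-6[v_1]$.

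For the derivative identities the key technical point is that $P_xv_0$ is \emph{not} a symmetric bitensor (the operator acts on $x$ only), so Synge's rule cannot be applied to it blindly; the remedy is again to differentiate the explicit expression $P_xv_0=2v_{1;\mu}\sigma_{;}^{\phantom{;}\mu}+(\Box\sigma)v_1$, whose right-hand side only involves $\sigma$ and the \emph{symmetric} coefficient $v_1$. Differentiating with respect to the primed point and using the standard mixed limits $[\sigma_{;\mu\nu}]=g_{\mu\nu}$, $[\sigma_{;\mu\nu^\prime}]=-g_{\mu\nu}$, the vanishing of all third derivatives of $\sigma$ at coincidence, $[\Box\sigma]=4$, and the consequence $[v_{1;\mu}]=[v_{1;\mu^\prime}]=\tfrac12[v_1]_{;\mu}$ of Synge's rule (Lemma \ref{lem_SyngeRule}) for the symmetric biscalar $v_1$ (Theorem \ref{thm_SymmetryHadamard}), I expect $[(P_xv_0)_{;\mu^\prime}]=[v_1]_{;\mu}$. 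The unprimed limit then follows without further computation from the general Synge rule in the form $[B]_{;\mu}=[B_{;\mu}]+[B_{;\mu^\prime}]$ applied to $B=P_xv_0$, whose coincidence limit $4[v_1]$ is already known, giving $[(P_xv_0)_{;\mu}]=3[v_1]_{;\mu}$. Assembling $c_0=P_xv_0+2v_1$ yields $[(P_xh)_{;\mu}]=-[c_{0;\mu}]=-4[v_1]_{;\mu}$ and $[(P_xh)_{;\mu^\prime}]=-[c_{0;\mu^\prime}]=-2[v_1]_{;\mu}$.

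Finally, the $P_y$ statements follow from the symmetry of the parametrix: $\sigma$ and $u$ are symmetric and the $v_n$ are symmetric by Theorem \ref{thm_SymmetryHadamard}, so $h(x,y)=h(y,x)$ and hence $(P_yh)(x,y)=(P_xh)(y,x)$; swapping the two points interchanges primed and unprimed derivatives, giving $[(P_yh)_{;\mu^\prime}]=[(P_xh)_{;\mu}]=-4[v_1]_{;\mu}$, $[(P_yh)_{;\mu}]=[(P_xh)_{;\mu^\prime}]=-2[v_1]_{;\mu}$ and $[P_yh]=[P_xh]=-6[v_1]$. The main obstacle I anticipate is precisely the bookkeeping of the mixed coincidence limits together with the non-symmetry of $P_xv_0$: one must resist applying Synge's rule to $P_xv_0$ before rewriting it through \eqref{eq_PDEVNScalar}, and one must track the sign in $[\sigma_{;\mu\nu^\prime}]=-g_{\mu\nu}$ (so that the raised-index mixed limit contributes $-\delta^\mu_\nu$), since it is exactly this that splits the symmetric quantity $4[v_1]_{;\mu}$ asymmetrically into $3[v_1]_{;\mu}$ and $[v_1]_{;\mu}$ and ultimately produces the distinct coefficients $-4$ and $-2$.
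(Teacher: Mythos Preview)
Your proposal is correct and follows precisely the route the paper indicates: the paper does not spell out a proof but states that the identities follow ``employing the Hadamard recursion relations'' and cites \cite{chap2_Mo03}. Your computation does exactly this---using \eqref{eq_PDEUScalar}, \eqref{eq_PDEV0Scalar}, \eqref{eq_PDEVNScalar} and \eqref{eq_PDEVScalar} to reduce $P_xh$ to the smooth series with leading coefficient $c_0=P_xv_0+2v_1$, then extracting the coincidence limits via the known limits of $\sigma$ and Synge's rule---and your careful treatment of the non--symmetric biscalar $P_xv_0$ (rewriting it through the $n=0$ recursion before differentiating, and using $[\sigma_{;\mu\nu^\prime}]=-g_{\mu\nu}$) is exactly the subtle point that produces the distinct coefficients $-4$ and $-2$.
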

It is remarkable that these rather simple computations will be essentially sufficient for the construction of a conserved stress--energy tensor of a free scalar quantum field \cite{chap2_Mo03}. Particularly, the knowledge of the explicit form of, say, $[v_1]$ is not necessary to accomplish such a task. However, if one is interested in computing the actual backreaction of a scalar field on curved spacetimes, one needs the explicit form of $[v_1]$. One can compute this straightforwardly by the inductive procedure already mentioned at several occasions and the result is \cite{chap2_Mo03,chap2_Decanini2}

\begin{align}
[v_1]&=\frac{m^4}{8}+\frac{\left(6\xi-1\right)m^2R}{24}+\frac{\left(6\xi-1\right)R^2}{288}+\frac{(1-5\xi)\Box R}{120}\notag\\
&\qquad -\frac{R_{\alpha\beta}R^{\alpha\beta}}{720}+\frac{R_{\alpha\beta\gamma\delta}R^{\alpha\beta\gamma\delta}}{720}\notag\\
&=\frac{m^4}{8}+\frac{\left(6\xi-1\right)m^2R}{24}+\frac{\left(6\xi-1\right)R^2}{288}+\frac{(1-5\xi)\Box R}{120}\label{eq_V1Scalar}\\
&\qquad +\frac{C_{\alpha\beta\gamma\delta}C^{\alpha\beta\gamma\delta}+R_{\alpha\beta}R^{\alpha\beta}-\frac{R^2}{3}}{720}\,.\notag
\end{align}

The Hadamard coefficients are related to the so--called {\em DeWitt-Schwinger} coefficients, see for instance \cite{chap2_Moretti2,chap2_Moretti,chap2_Decanini1}, which stem from an {\it a priori} completely different expansion of two--point functions. The latter have been computed for the first time in \cite{chap2_Christensen0,chap2_Christensen}
  and can also be found in many other places like, e.g. \cite{chap2_Decanini1,chap2_FullingBook}.
  
Having discussed the Hadamard form to a large extent, let us state the already anticipated equivalence result obtained by Radzikowski in \cite{chap2_Radzikowski}. See also \cite{chap2_Sahlmann:2000zr} for a slightly different proof, which closes a gap in the proof of \cite{chap2_Radzikowski}.

\begin{theorem}
 \label{thm_HadamardScalar} Let $\omega_2$ be the two--point function of a state on $\cA(M)$. $\omega_2$ fulfils the Hadamard condition of Definition \ref{def_HadamardScalar} if and only if it is of global Hadamard form.
\end{theorem}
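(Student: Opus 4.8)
The plan is to prove both implications by comparing $\omega_2$ with the local Hadamard parametrix $h_\epsilon$ of Definition \ref{def_HadamardFormScalar}, the key analytic input being a single computation of its wave front set. Throughout I use that, being the two-point function of a state on $\cA(M)$, $\omega_2$ is a bisolution, $P_x\omega_2=P_y\omega_2=0$, and that its antisymmetric part is fixed by the commutator, $\omega_2(x,y)-\omega_2(y,x)=iE(x,y)$. I write $\cC^+$ for the set on the right-hand side of Definition \ref{def_HadamardScalar} and $\cC^-$ for its image under the flip $(x,y,k_x,k_y)\mapsto(y,x,k_y,k_x)$; since a future-directed covector is sent to a past-directed one, $\cC^+\cap\cC^-=\emptyset$.

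First I would establish the central lemma: $\WF(\tfrac{1}{8\pi^2}h_\epsilon)=\cC^+$ on a geodesically convex neighbourhood of any point. Since $h_\epsilon$ solves the Klein--Gordon equation up to a smooth remainder (it is a parametrix, by the recursion relations \eqref{eq_PDEVScalar}--\eqref{eq_PDEVNScalar}), Theorem \ref{thm_PropertiesWavefront}(b) confines $\WF(h_\epsilon)$ to the characteristic set of $P$, i.e. to null covectors, in each variable; hence it is carried by pairs $(x,k_x)\sim(y,k_y)$ conormal to a common null geodesic. The remaining task is to fix the sign $k_x\triangleright 0$, which is exactly where the $i\epsilon$-prescription $2i\epsilon(t(x)-t(y))$ enters: it selects the positive-frequency branch. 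I would pin this down via the propagation-of-singularities theorem for operators of real principal type (H\"ormander--Duistermaat), whose hypotheses $P$ satisfies, using the explicit short-distance form of $h_\epsilon$ --- modelled on the Minkowski vacuum \eqref{eq_MinkowskiMassless} with wave front set \eqref{eq_MinkwoskiWF} --- as the datum that propagates along the bicharacteristic strips.

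Granting the lemma, the implication ``global Hadamard form $\Rightarrow$ microlocal condition'' is immediate: on each convex chart $\omega_2=\tfrac{1}{8\pi^2}(h_\epsilon+w)$ with $w$ smooth, so $\WF(w)=\emptyset$ by Theorem \ref{thm_PropertiesWavefront}(a) and $\WF(\omega_2)=\WF(\tfrac{1}{8\pi^2}h_\epsilon)=\cC^+$. For the converse, assume $\WF(\omega_2)=\cC^+$ and set $u\doteq\omega_2-\tfrac{1}{8\pi^2}h_\epsilon$ locally. Then $u$ is a bisolution modulo smooth terms and $\WF(u)\subseteq\WF(\omega_2)\cup\WF(\tfrac{1}{8\pi^2}h_\epsilon)=\cC^+$. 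The decisive point is that $u$ is \emph{symmetric} modulo $C^\infty$: because $h_\epsilon(y,x)=h_{-\epsilon}(x,y)$ (symmetry of $\sigma,u,v$) and $iE=\lim_{\epsilon\downarrow 0}\tfrac{1}{8\pi^2}(h_\epsilon-h_{-\epsilon})$, the folklore identity following Theorem \ref{thm_SymmetryHadamard}, the antisymmetric part of $\tfrac{1}{8\pi^2}h_\epsilon$ reproduces $iE$ and cancels that of $\omega_2$. Hence $\WF(u)$ is invariant under the flip, so $\WF(u)\subseteq\cC^+\cap\cC^-=\emptyset$, i.e. $u$ is smooth. This exhibits $\omega_2$ in local Hadamard form with smooth coefficient $w$.

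Finally I would promote the local form to the global one: the microlocal condition, holding at every point, excludes the spacelike singularities that the global definition of \cite{chap2_Kay} forbids, while the recursion fixes $u$ and $v$ intrinsically, so the patched local expressions are mutually consistent and independent of the time function $t$ --- precisely Radzikowski's refinement recorded before Theorem \ref{thm_SymmetryHadamard}. The main obstacle is the central lemma: making the sign determination rigorous requires the full propagation-of-singularities apparatus on the manifold together with careful bookkeeping of the $i\epsilon$-prescription in the comparison with the Minkowski model. The local-to-global passage is comparatively soft, but it still relies on the positivity of $\omega_2$ to rule out spurious singularities off the light cone.
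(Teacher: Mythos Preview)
The paper does not prove this theorem; it attributes the result to Radzikowski \cite{chap2_Radzikowski} and notes that \cite{chap2_Sahlmann:2000zr} closes a gap in the original argument. So there is no in-text proof to compare your proposal against directly.

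Your sketch is sound and correctly isolates the hard analytic step as the computation of $\WF(h_\epsilon)$. The symmetry argument you give for the converse --- the antisymmetric parts of $\omega_2$ and of $\tfrac{1}{8\pi^2}h_\epsilon$ both equal $\tfrac{i}{2}E$ and hence cancel, forcing $\WF(\omega_2-\tfrac{1}{8\pi^2}h_\epsilon)\subseteq\cC^+\cap\cC^-=\emptyset$ --- is clean and correct, and is in the spirit of the streamlined later proofs. Radzikowski's original argument is organised differently: it passes through the Duistermaat--H\"ormander classification of distinguished parametrices for $P$, identifying the Hadamard parametrix with (essentially) the Feynman distinguished parametrix and reading off the wave front set from that structural result rather than from a direct computation. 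Your route trades that machinery for a hands-on determination of $\WF(h_\epsilon)$, which you rightly flag as the main obstacle; the local-to-global passage via \cite{chap2_Radzikowski2} and positivity is also correctly identified.
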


\noindent By the result of \cite{chap2_Radzikowski2}, that a state which is locally of Hadamard form is already of global Hadamard form, we can safely replace `global' by `local' in the above theorem. Moreover, from the above discussion it should be clear that the two--point functions of two Hadamard states differ by a smooth and symmetric biscalar.

In past works on (algebraic) quantum field theory in curved spacetimes, one has often considered only on quasifree Hadamard states. For non--quasifree states, a more general {\em microlocal spectrum condition} has been proposed in \cite{chap2_BFK} which requires certain wave front set properties of the higher order $n$--point functions of a non--quasifree state. However, as shown in \cite{chap2_Sanders,chap2_Sanders:2009sw}, the Hadamard condition of the two--point function of a non--quasifree state alone already determines the singularity structure of all higher order $n$--point functions by the CCR. It is therefore sufficient to specify the singularity structure of $\omega_2$ also in the case of non--quasifree states. Note however, that certain technical results on the structure of Hadamard states have up to now only been proven for the quasifree case  \cite{chap2_VerchQ}.
  
We close the general discussion of Hadamard states by providing examples and non--examples.
  
\paragraph{Examples of Hadamard states}

\begin{enumerate}
\item Given a Hadamard state $\omega$ on $\cA(M)$, any `finite excitation of $\omega$' is again Hadamard, i.e. for all $A\in\cA(M)$, $\omega_A$ defined for all $B\in\cA(M)$ by $\omega_A(B)\doteq \omega(A^* B A)/\omega(A^*A)$ is Hadamard \cite{chap2_Sanders:2009sw}.
\item All vacuum states and KMS (thermal equilibrium) states on ultrastatic spacetimes (i.e.  spacetimes with a metric $ds^2 = -dt^2 + h_{ij}dx^i dx^j$, with $h_{ij}$ not depending on time) are Hadamard states \cite{chap2_Fulling,chap2_Sahlmann:2000fh}.
\item Based on the previous statement, it has been proven in in \cite{chap2_Fulling} that Hadamard states exist on {\it any} globally hyperbolic spacetime by means of a spacetime deformation argument.
\item The {\it Bunch--Davies state} on de Sitter spacetime is a Hadamard state \cite{chap2_Allen,chap2_Dappiaggi:2007mx,chap2_Dappiaggi:2008dk}. It has been shown in \cite{chap2_Dappiaggi:2007mx,chap2_Dappiaggi:2008dk} that this result can be generalised to asymptotically de Sitter spacetimes, where distinguished Hadamard states can be constructed by means of a holographic argument; these states are generalisations of the Bunch--Davies state in the sense that the aforementioned holographic construction yields the Bunch--Davies state in de Sitter spacetime.
\item Similar holographic arguments have been used in \cite{chap2_Dappiaggi:2011cj,chap2_Dappiaggi:2005ci,chap2_Moretti:2005ty,chap2_Moretti:2006ks} to construct distinguished Hadamard states on asymptotically flat spacetimes, to rigorously construct the Unruh state in Schwarzschild spacetimes and to prove that it is Hadamard in \cite{chap2_Dappiaggi:2009fx}, to construct asymptotic vacuum and thermal equilibrium states in certain classes of Friedmann--Robertson--Walker spacetimes in -- see \cite{chap2_Dappiaggi:2010gt} and the next chapter -- and to construct Hadamard states in bounded regions of any globally hyperbolic spacetime in \cite{chap2_Dappiaggi:2010iq}.
\item A interesting class of Hadamard states in general Friedmann-Robertson-Walker are the {\it states of low energy} constructed in \cite{chap2_Olbermann:2007gn}. These states minimise the energy density integrated in time with a compactly supported weight function and thus loosely speaking minimise the energy in the time interval specified by the support of the weight function. This construction has been generalised to encompass almost equilibrium states in \cite{chap2_Kusku:2008zz} and to expanding spacetimes with less symmetry in \cite{chap2_Them:2013uka}. We shall review the construction of these states in the next chapter.
\item A construction of Hadamard states which is loosely related to states of low energy has been given in \cite{chap2_Brum:2013bia}. There the authors consider for a given spacetime $(M,g)$ with the spacetime $(N,g)$ where $N$ is a finite--time slab of $M$. Given a smooth timelike--compact function $f$ on $M$ which is identically 1 on $N$, one considers $A\doteq if Ef$ which can be shown to be a bounded and self--adjoint operator on $L^2(N,d_gx)$. The positive part $A^+$ of $A$ constructed with standard functional calculus defines a two--point function of a quasifree state $\omega$ on $\cA(N)$ via $\omega_2(f,g)\doteq \ip{f,A^+ g}$ which can be shown to be Hadamard (at least on classes of spacetimes) \cite{chap2_Brum:2013bia}. Taking for $f$ the characteristic function of $N$ gives the Sorkin--Johnston states proposed in \cite{chap2_Afshordi:2012jf} which are in general not Hadamard \cite{chap2_Fewster:2012ew}.
\item Hadamard states which possess an approximate local thermal interpretation have been constructed in \cite{chap2_Schlemmer}, see \cite{chap2_VerchRegensburg} for a review.
\item Given a Hadamard state $\omega$ on the algebra $\cA(M)$ and a smooth solution $\Psi$ of the field equation $P\Psi=0$, one can construct a coherent state by redefining the quantum field $\phi(x)$ as $\phi(x)\mapsto \phi(x) + \Psi(x)\mathbb{I}$. The thus induced coherent state has the two--point function $\omega_{\Psi,2}(x,y)=\omega_{2}(x,y)+\Psi(x)\Psi(y)$, which is Hadamard since $\Psi(x)$ is smooth.
\item A construction of Hadamard states via pseudodifferential calculus was developed in \cite{chap2_Gerard:2012wb}.
\end{enumerate}

\paragraph{Non-examples of Hadamard states}

\begin{enumerate}
\item The so--called {\em $\alpha$-vacua} in de Sitter spacetime \cite{chap2_Allen} violate the Hadamard condition as shown in \cite{chap2_Brunetti:2005pr}.
\item As already mentioned the  {\em Sorkin--Johnston states} proposed in \cite{chap2_Afshordi:2012jf} are in general not Hadamard \cite{chap2_Fewster:2012ew}.
\item A class of states related to Hadamard states, but in general not Hadamard, is constituted by {\it adiabatic states}. These have been introduced in \cite{chap2_Parker} and put on rigorous grounds by \cite{chap2_Luders}. Effectively, they are states which approximate ground states if the curvature of the background spacetime is only slowly varying. In \cite{chap2_Junker}, the concept of adiabatic states has been generalised to arbitrary curved spacetimes. There, it has also been displayed in a quantitative way how adiabatic states are related to Hadamard states. Namely, an adiabatic state of a specific order $n$ has a certain {\it Sobolev wave front set} (in contrast to the $C^\infty$ wave front set introduced above) and hence, loosely speaking, it differs from a Hadamard state by a biscalar of finite regularity $C^n$. In this sense, Hadamard states are adiabatic states of `infinite order'. We will review the concept of adiabatic states in the next chapter
\end{enumerate}  
  
Finally, let us remark that one can define the Hadamard form also in spacetimes with dimensions differing from 4, see for instance \cite{chap2_Sahlmann:2000zr,chap2_Mo03}. Moreover, the proof of the equivalence of the concrete Hadamard form and the microlocal Hadamard condition also holds in arbitrary spacetime dimensions, as shown in \cite{chap2_Sahlmann:2000zr}. A recent detailed exposition of Hadamard states may be found in \cite{chap2_Khavkine:2014mta}.

\section{Locality and General Covariance}
\label{sec:lcqft}

And important aspect of QFT on curved spacetimes is the backreaction of quantum fields in curved spacetimes, i.e. the effect of quantum matter--energy on the curvature of spacetime. This of course necessitates the ability to define quantum field theory on a curved spacetime {\em without knowing the curved spacetime beforehand}. It is therefore advisable to employ only generic properties of spacetimes in the construction of quantum fields. This entails that we have to formulate a quantum field theory in a {\em local} way, i.e. only employing local properties of the underlying curved manifold. In addition, we would like to take into account the diffeomorphism--invariance of General Relativity and therefore construct {\em covariant} quantum fields. This concept of a {\em locally covariant quantum field theory} goes back to many works, of which the first one could mention is \cite{chap2_DimockScalar}, followed by many others such as \cite[Chapter 4.6]{chap2_WaldBook2} and \cite{chap2_VerchSpin,chap2_HW01}. Building on these works, the authors of \cite{chap2_BFV} have given the first complete definition of a locally covariant quantum field theory.
  
As shown in \cite{chap2_BFV}, giving such a definition in precise mathematical terms requires the language of {\em category theory}, a branch of mathematics which basically aims to unify {\em all} mathematical structures into one coherent picture. A category is essentially a class $\gC$ of {\em objects} denoted by $\text{\it obj}(\gC)$, with the property that, for each two objects $A$, $B$ in $\gC$ there is (at least) one {\em morphism} or {\em arrow} $\phi :A\to B$ relating $A$ and $B$. The collection of all such arrows is denoted by $\text{\it hom}_\gC(A,B)$. Morphisms relating a chain of three objects are required to be associative with respect to compositions, and one demands that each object has an {\em identity morphism} $\text{\it id}_A:A\to A$ which leaves all morphisms $\phi :A\to B$ starting from $A$ invariant upon composition, i.e. $\phi\circ \text{\it id}_A =\phi$. An often cited simple example of a category is the category of sets $\mathfrak{Set}$. The objects of $\mathfrak{Set}$ are sets, while the morphism are maps between sets, the identity morphism of an object just being the identity map of a set. Given two categories $\gC_1$ and $\gC_2$, a {\em functor} $F:\gC_1\to\gC_2$ is a map between two categories which maps objects to objects and morphisms to morphisms such that identity morphisms in $\gC_1$ are mapped to identity morphism in $\gC_2$ and the composition of morphisms is preserved under the mapping. This paragraph was only a very brief introduction to category theory and we refer the reader to the standard monograph \cite{chap2_MacLane} and to the introduction in \cite[Section 1.7]{chap2_Szekeres} for further details. A locally covariant quantum field theory according to \cite{chap2_BFV} should be a functor from a category of spacetimes to a category of suitable algebras. The first step in understanding such a construction if of course the definition of a suitable category of spacetimes.  

We have already explained in the previous sections of this chapter that four--dimensional, oriented and time--oriented, globally hyperbolic spacetimes are the physically sensible class of spacetimes among all curved Lorentzian manifolds. It is therefore natural to take them as the objects of a potential category of spacetimes. Regarding the morphisms, one could think of various possibilities to select them among all possible maps between the spacetimes under consideration. However, to be able to emphasise the local nature of a quantum field theory, we shall take embeddings between spacetimes. This will allow us to require locality by asking that a quantum field theory on a `small' spacetime can be easily embedded into a larger spacetime without `knowing anything' about the remainder of the larger spacetime. Moreover, a sensible quantum field theory will depend on the orientation and time--orientation and the causal structure of the underlying manifold, we should therefore only consider embeddings that preserve these structures. To this avail, the authors in \cite{chap2_BFV} have chosen isometric embeddings with causally convex range (see Section \ref{sec:ghst} regarding an explanation of these notions), but since the causal structure of a spacetime is left invariant by conformal transformations, one could also choose conformal embeddings, as done in \cite{chap2_NicolaConformal}. We will nevertheless follow the choice of \cite{chap2_BFV}, since it will be sufficient for our purposes. Let us now subsume the above considerations in a definition.
  
\begin{definition}
\label{def_CategorySpacetimes}
The {\em category of spacetimes} $\mathfrak{Man}$ is the category having as its class of objects $\text{\it obj}(\mathfrak{Man})$ the globally hyperbolic, four-dimensional, oriented and time-oriented spacetimes $(M,g)$. Given two spacetimes $(M_1,g_1)$ and $(M_2,g_2)$ in $\text{\it obj}(\mathfrak{Man})$, the considered morphisms $\text{\it hom}_{\mathfrak{Man}}\left((M_1,g_1),(M_2,g_2)\right)$ are isometric embeddings $\chi:(M_1,g_1)\hookrightarrow (M_2,g_2)$ preserving the orientation and time-orientation and having causally convex range $\chi(M_1)$. Moreover, the identity morphism $\text{\it id}_{(M,g)}$ of a spacetime in $\text{\it obj}(\mathfrak{Man})$ is just the identity map of $M$ and the composition of morphisms is defined as the usual composition of embeddings.
\end{definition}

The just defined category is sufficient to discuss locally covariant Bosonic quantum field theories. However, for Fermionic quantum field theories, one needs a category which incorporates spin structures as defined in \cite{chap2_VerchSpin,chap2_Sanders}. At this point we briefly remark that our usage of the words `Boson' and `Fermion' for integer and half--integer spin fields respectively is allowed on account of the spin--statistics theorem in curved spacetimes proved in \cite{chap2_VerchSpin}, see also \cite{chap2_Fewster:2015yga} for a more recent and general work.  
  
To introduce the notion of a locally covariant quantum field theory and the related concept of a locally covariant quantum field, we need a few categories in addition to the one introduced above. By ${\mathfrak{TAlg}}$ we denote the category of unital topological $*$--algebras, where for two $\cA_1$, $\cA_2$ in $\obj(\mathfrak{TAlg})$, the considered morphisms $\hom_\mathfrak{TAlg}(\cA_1,\cA_2)$ are continuous, unit--preserving, injective $*$--homomorphisms. In addition, we introduce the category $\mathfrak{Test}$ of test function spaces $\Gamma_0(M)$ of objects $(M,g)$ in $\mathfrak{Man}$, where here the morphisms $\hom_\mathfrak{Test}(\Sec(M_1),\Sec(M_2))$ are push--forwards $\chi_*$ of the isometric embeddings $\chi:M_1\hookrightarrow M_2$. In fact, by $\cD$ we shall denote the functor between $\mathfrak{Man}$ and $\mathfrak{Test}$ which assigns to a spacetime $(M,g)$ in $\mathfrak{Man}$ its test function space $\Gamma_0(M)$ and to a morphism in $\mathfrak{Man}$ its push--forward. For reasons of nomenclature, we consider $\mathfrak{TAlg}$ and $\mathfrak{Test}$ as subcategories of the category $\mathfrak{Top}$ of all topological spaces with morphisms given by continuous maps. Let us now state the first promised definition.

\begin{definition}
 \label{def_LCQFT}
A {\em locally covariant quantum field theory}\index{locally covariant quantum field theory} is a (covariant) functor $\cA$ between the two categories $\mathfrak{Man}$ and $\mathfrak{TAlg}$. Namely, let us denote by $\alpha_\chi$ the mapping $\cA(\chi)$ of a morphism $\chi$ in $\mathfrak{Man}$ to a morphism in $\mathfrak{TAlg}$ and by $\cA(M,g)$ the mapping of an object in $\mathfrak{Man}$ to an object in $\mathfrak{TAlg}$, see the following diagram.
\begin{equation*}
\begin{CD}
(M_1,g_1) @>\chi>> (M_2,g_2)\\
@V{\cA}VV     @VV{\cA}V\\
\cA(M_1,g_1)@>{\alpha_\chi}>> \cA(M_2,g_2)
\end{CD}
\end{equation*}
Then, the following relations hold for all morphisms $\chi_{ij}\in \hom_\mathfrak{Man}((M_i,g_i),(M_j,g_j))$:
$$\alpha_{\chi_{23}}\circ \alpha_{\chi_{12}}=\alpha_{\chi_{23}\circ \chi_{12}}\,,\qquad \alpha_{\text{\it id}_M}=\text{\it id}_{\cA(M,g)}\,.$$
\indent A locally covariant quantum field theory is called {\em causal} if in all cases where $\chi_i\in \hom_\mathfrak{Man}((M_i,g_i),(M,g))$ are such that the sets $\chi_1(M_1)$ and $\chi_2(M_2)$ are spacelike separated in $(M,g)$,
$$\left[\alpha_{\chi_1}(\cA(M_1,g_1)),\alpha_{\chi_2}(\cA(M_2,g_2))\right]=\{0\}$$
in the sense that all elements in the two considered algebras are mutually commuting.\\\indent
Finally, one says that a locally covariant quantum field theory fulfils the {\em time--slice axiom}, if the situation that $\chi\in \hom_\mathfrak{Man}((M_1,g_1),(M_2,g_2))$ is such that $\chi(M_1,g_1)$ contains a Cauchy surface of $(M_2,g_2)$ entails
$$\alpha_\chi\left(\cA(M_1,g_1)\right)=\cA(M_2,g_2)\,.$$
\end{definition}

The authors of \cite{chap2_BFV} also give the definition of a state space of a locally covariant quantum field theory and this turns out to be dual to a functor, by the duality relation between states and algebras. One therefore chooses the notation of {\em covariant functor} for a functor in the strict sense, and calls such a mentioned dual object a {\em contravariant functor}. We stress once more that the term `local' refers to the size of spacetime regions. A locally covariant quantum field theory is such that it can be constructed on arbitrarily small (causally convex) spacetime regions without having any information on the remainder of the spacetime. In more detail, this means that the algebraic relations of observables in such small region are already fully determined by the information on this region alone. This follows by application of the above definition to the special case that $(M_1,g_1)$ is a causally convex subset of $(M_2,g_2)$.

As shown in \cite{chap2_BFV}, the quantum field theory given by assigning the Borchers--Uhlmann algebra $\cA(M)$ of the free Klein--Gordon field to a spacetime $(M,g)$ is a locally covariant quantum field theory fulfilling causality and the time--slice axiom. This follows from the fact that the construction of $\cA(M)$ only employs compactly supported test functions and the causal propagator $E$. The latter is uniquely given on any globally hyperbolic spacetime, particularly, the causal propagator on a causally convex subset $(M_1,g_1)$ of a globally hyperbolic spacetime $(M_2,g_2)$ coincides with the restriction of the same propagator on $(M_2,g_2)$ to $(M_1,g_1)$. Finally, causality follows by the causal support properties of the causal propagator, and the time-slice axiom follows by Lemma \ref{lem_TimeSliceAxiomQuantumScalar}.

Let us now discuss the notion of a {\em locally covariant quantum field}. These fields are particular observables in a locally covariant quantum field theory which transform covariantly, i.e. loosely speaking, as a tensor and are in addition constructed only out of local geometric data. In categorical terms, this means that they are {\em natural transformations} between the functors $\cD$ and $\cA$. We refer to \cite{chap2_MacLane} for the notion of a natural transformation, however, its meaning in our context should be clear from the following definition.  

\begin{definition}
 \label{def_LCQField}
A {\em locally covariant quantum field} $\Phi$ is a natural transformation between the functors $\cD$ and $\cA$. Namely, for every object $(M,g)$ in $\mathfrak{Man}$ there exists a morphism $\Phi_{(M,g)}:\Sec(M,g)\to \cA(M,g)$ in $\mathfrak{Top}$ such that, for each morphism $\chi\in \hom_\mathfrak{Man}((M_1,g_1),(M_1,g_1))$, the following diagram commutes.
\begin{equation*}
\begin{CD}
\Sec(M_1,g_1) @>\Phi_{(M_1,g_1)}>> \cA(M_1,g_1)\\
@V{\chi_*}VV     @VV{\alpha_{\chi}}V\\
\Sec(M_2,g_2)@>>\Phi_{(M_2,g_2)}> \cA(M_2,g_2)
\end{CD}
\end{equation*}
Particularly, this entails that
$$\alpha_\chi\circ \Phi_{(M_1,g_1)}=\Phi_{(M_2,g_2)}\circ \chi_*\,.$$
\end{definition}

It is easy to see that the Klein--Gordon field $\phi(f)$ is locally covariant. Namely, the remarks on the local covariance of the quantum field theory given by $\cA(M)$ after Definition \ref{def_LCQFT} entail that an isometric embedding $\chi:(M_1,g_1)\hookrightarrow (M_2,g_2)$ transforms $\phi(f)$ as
$$\alpha_\chi\left(\phi(f)\right)=\phi(\chi_* f)\,,$$ or, formally,
$$\alpha_\chi\left(\phi(x)\right)=\phi\left(\chi(x)\right)\,.$$
Hence, local covariance of the Klein--Gordon field entails that it transforms as a `scalar'. While the locality and covariance of the Klein--Gordon field itself are somehow automatic, one has to take care that all extended quantities, like Wick powers and time--ordered products thereof, maintain these good properties. The prevalent paradigm in algebraic quantum field theory is that {\em all} pointlike observables should be theoretically modelled by locally covariant quantum fields.

A comprehensive review of further aspects and results of locally covariant quantum field theory may be found in \cite{chap2_Fewster:2015kua}.

\section{The Quantum Stress--Energy Tensor and the Semiclassical Einstein equation}
\label{sec:stresstensor_see}

The aim of this section is to discuss the semiclassical Einstein equation and the quantum--stress--energy tensor\index{stress--energy tensor} $\wick{T_{\mu\nu}}$ which is the observable whose expectation value enters this equation. As argued in the previous section, all pointlike observables such as the quantum stress--energy tensor should be locally covariant fields in the sense of Definition \ref{def_LCQField}. Rather than discussing local covariance for non--linear pointlike observables only at the example of $\wick{T_{\mu\nu}}$, it is instructive to review the construction of general local and covariant Wick polynomials\index{Wick polynomials}.

\subsection{Local and Covariant Wick Polynomials}
\label{sec:wick}

The first construction of local and covariant general Wick polynomials was given in 
\cite{chap2_HW01} based on ideas already implemented for the stress--energy tensor in \cite{chap2_WaldBook2}. Here we would like to review a variant of the construction of \cite{chap2_HW01} in the spirit of the functional approach to perturbative QFT on curved spacetimes, termed {\em perturbative algebraic quantum field theory}, cf.
\cite{chap2_Brunetti:2009qc,chap2_FredenhagenRejzner,chap2_FredenhagenRejzner2}. Essentially, this point of view on local Wick polynomials was already taken in \cite{chap2_HW04}. We review here only the case of the neutral Klein--Gordon field, however, the functional approach is applicable to general field theories \cite{chap2_Rejzner:2011au,chap2_FredenhagenRejzner,chap2_FredenhagenRejzner2}.

In the functional approach to algebraic QFT on curved spacetimes, one considers observables as functionals on the classical field configurations. Upon quantization, these functionals are endowed with a particular non--commutative product which encodes the commutation relations of quantum observables. We have already taken this point of view in the discussion of the Borchers--Uhlmann algebra $\cA(M)$ as the result of quantizing a classical symplectic space constructed in Section \ref{sec:linear_class}. In particular, the smeared quantum scalar field $\phi(f)$ was considered as the quantization of the linear functional $F_f:\Se(M)\to \bC$, $F_f(\phi)=\ip{f,\phi}$ with $f\in\Sec^\bC(M)$. The new aspect in the approach we shall review now is to consider a much larger class of functionals on $\Se(M)$. To this avail, we view $\Se(M)$ as the space of off--shell configurations of the scalar field, whereas $\Sol\subset \Se(M)$ is the space of on--shell configurations. For the purpose of perturbation theory it is more convenient to perform all constructions off--shell first and to go on--shell only in the end, and we shall follow this route as well, even though perturbative constructions are not dealt with in this monograph.

To this end, we call a functional $F:\Se(M)\to\bC$ smooth if the $n$--th functional derivatives
\begin{gather}\label{eq: smooth functional}
\left\langle F^{(n)}(\phi),\psi_1\otimes\dots\otimes\psi_n\right\rangle\doteq \left.\frac{d^n}{d\lambda_1 \dots d\lambda_n}F\left(\phi+\sum^n_{j=1}\lambda_j\psi_j\right)\right|_{\lambda_1=\dots=\lambda_n=0}
\end{gather}
exist for all $n$ and all $\psi_1,\dots,\psi_n\in\Sec(M)$ and if $F^{(n)}(\phi)\in\Se^\prime(M^n)$, i.e. $F^{(n)}(\phi)$ is a distribution. By definition $F^{(n)}(\phi)$ is symmetric and we consider only polynomial functionals, i.e. $F^{(n)}(\phi)$ vanishes for $n>N$ and some $N$. We define the support of a functional as 
\begin{align}
\supp F\doteq\{ & x\in M\,|\,\forall \text{ neighbourhoods }U\text{ of }x\ \exists \phi,\psi\in\Se(M),\,\supp\,\psi\subset U,
\\ & \text{ such that }F(\phi+\psi)\neq F(\phi)\}\,.\nonumber
\end{align}
which coincides with the union of the supports of $F^{(1)}(\phi)$ over all $\phi\in\Se(M)$. The relevant space of functionals which encompasses all observables of the free neutral scalar field is the space of {\em microcausal functionals}\index{microcausal functionals}\index{Fmuc@$\Fmuc$, the space of microcausal functionals}
\begin{align}\label{def:microcausal functionals}
\Fmuc &\doteq
\left\{
F:\Se(M)\to\bC\, |\, F\textrm{ smooth, compactly supported, }\right.\notag\\ 
&\qquad\qquad \left.\WF\left(F^{(n)}\right)\cap\left(\overline{V}^n_+\cup\overline{V}^n_-\right)=\emptyset
\right\},
\end{align}
where $V_{+/-}$ is a subset of the cotangent space formed by the elements whose covectors are contained in the future/past light cones and $\overline{V}_{+/-}$ denotes is closure. $\Fmuc$ contains two subspaces of importance. On the one hand, it contains the space $\Floc$ of {\em local functionals}\index{local functionals}\index{Floc@$\Floc$, the space of local functionals} consisting of sums of functionals of the form 
$$
F(\phi)=\int\limits_M \vol \cP[\phi]_{\mu_1\dots\mu_n}(x)\; f^{\mu_1\dots\mu_n}(x)
$$
where $\cP[\phi]\in\Se(T^*M^n)$ is a tensor such that $\cP[\phi](x)$ is a (tensor) product of covariant derivatives of $\phi$ at the point $x$ with a total order of $n$ and $f\in\Sec(TM^n)$ is a test tensor. The prime example of a local functional is  a smeared field monomial
\begin{gather}\label{eq:smearedpolynomial}
F_{k,f}(\phi)\doteq
\int\limits_M \vol \phi(x)^k f(x)\simeq \phi^k(f)\,,\qquad f\in\Sec^\bC(M)\,.
\end{gather}
One the other hand, $\Fmuc$ contains the space $\Freg$ of {\em regular functionals}\index{regular functionals}\index{Freg@$\Freg$, the space of regular functionals}, i.e. all microcausal functionals whose functional derivatives are smooth such that $F^{(n)}(\phi)\in\Sec^\bC(M^n)$ for all $\phi$ and all $n$. A prime example of a regular functional is a functional of the form
\begin{gather}\label{eq:regularfunctional}
F(\phi)=\prod^n_{j=1}\ip{f_j,\phi}\,,\qquad f_1,\dots,f_n\in\Sec^\bC(M)\,.
\end{gather}
Linear functionals are the only functionals which are both local and regular.

Given a bidistribution $H\in\Sec^{\prime\bC}(M^2)$ which a) satisfies the Hadamard condition Definition \ref{def_HadamardScalar}, b) has the antisymmetric part $H(x,y)-H(y,x)=iE(x,y)$ defined by the causal propagator $E$ and a real symmetric part
$$H_\sym(x,y)\doteq \frac12\left(H(x,y)+H(y,x)\right)\,, $$
and c) is a bisolution of the Klein--Gordon equation $P_xH(x,y)=P_y H(x,y)=0$,
we define a product indicated by $\star_H$\index{\$starH@$\star_H$, $\star$--product on $\Fmuc$ induced by $H$} on $\Fmuc$ via
\beq\label{def_startproduct}
(F\star_H G)(\phi) \doteq \sum^\infty_{n=0}\frac{1}{n!}\ip{F^{(n)}(\phi),H^{\otimes n}G^{(n)}(\phi)},
\eeq
i.e. by the sum of all possible mutual contractions of $F$ and $G$ by means of $H$. This is just an elegant way to implement Wick's theorem by an algebraic product as we shall explain in the following and may be understood in terms of {\em deformation quantization}, cf. \cite{chap2_Fredenhagen:2015iia} for a review of this aspect. Owing to the Hadamard property of $H$ and the microlocal properties of microcausal functionals, the $\star_H$--product is well--defined on $\Fmuc$ by \cite[Theorem 8.2.13]{chap2_Hormander} and one can show that $F\star_H G\in\Fmuc$ for all $F,G\in\Fmuc$, i.e. $\Fmuc$ is closed under $\star_H$.

It is not necessary to require that $H$ is positive and thus the two--point function $\omega_2$ of a Hadamard state $\omega$ on $\cA(M)$. However, independent of whether or not we require $H$ to be positive, its real and symmetric part is not uniquely fixed by the conditions we imposed. Given a different $H^\prime$ satisfying these conditions, it follows that $d\doteq H^\prime - H=H^\prime_\sym-H_\sym$ is real, symmetric and smooth and that the product $\star_{H^\prime}$ is related to $\star_{H}$ by
\beq\label{eq_stariso} 
F\star_{H^\prime} G = \alpha_d\left(\alpha_{-d}(F)\star_H\alpha_{-d}(H)\right)\,,
\eeq
where $\alpha_d:\Fmuc\to\Fmuc$ is the `contraction exponential operator'  
\beq\label{def_alpha}
\alpha_d \doteq \exp\left(\int\limits_{M^2}\vol d_gy \; d(x,y) \;\frac{\delta}{\delta \phi(x)}\frac{\delta}{\delta \phi(y)}\right)\,.
\eeq

The previous discussion implies that, given a $H$ satisfying the above--mentioned conditions, we can define a meaningful off--shell algebra $\cW^0_H(M)\doteq (\Fmuc,\star_H)$, and a corresponding on--shell algebra $\cW_H(M)\doteq \cW^0_H(M)/\cI$, where $\cI$ is the ideal generated by $\Fmucsol\subset \Fmuc$, the microcausal functionals which vanish on on--shell configurations $\phi\in\Sol\subset \Se(M)$. In fact, we have $\cW_H(M)=(\Fmuc/\Fmucsol,\star_H)$ and in the following we shall indicate an equivalence class $[F]\in \Fmuc/\Fmucsol$ by $F$ for simplicity. 

As a further implication of the previous exposition we have that $\cW_H(M)$ and $\cW_{H^\prime}(M)$ constructed with a different $H$ of the required type are isomorphic via $\alpha_d:\cW_H(M)\to \cW_{H^\prime}(M)$ with $d=H^\prime-H$. In this sense, we can consider $\cW_H(M)$ as a particular representation of an abstract algebra $\cW(M)$ which is independent of $H$, cf. \cite{chap2_Brunetti:2009qc}. $\cW_H(M)$ is in fact a $*$--algebra and $\alpha_d:\cW_H(M)\to \cW_{H^\prime}(M)$ is a $*$--isomorphism if we define the involution ($*$--operation) on $\Fmuc$ via complex conjugation by
$$
F^*(\phi)\doteq \overline{F(\phi)}
$$
which implies
$$
(F\star_H G)^* \doteq G^* \star_H F^* 
$$
by the conditions imposed on $H$. $\cW_H(M)$ may be endowed with a topology induced by the so--called H\"ormander topology \cite{chap2_BF00,chap2_Dabrowski:2013fua}, and one can show that all continuous states on $\cW_H(M)$ are induced by Hadamard states on the Borchers--Uhlmann algebra $\cA(M)$, cf.  \cite{chap2_HollandsRuan} (in combination with \cite{chap2_Sanders:2009sw}). In fact, we shall now explain why $\cW_H(M)$ may be considered as the `algebra of Wick polynomials' and in particular as an extension of $\cA(M)$.

To this avail, we first consider two linear functionals $\phi(f_i)\doteq F_{1,f_i}(\phi)$, $i=1,2$, cf. \eqref{eq:smearedpolynomial}, i.e. the classical field smeared with $f_1,f_2\in\Sec^\bC(M)$. The definition of the $\star_H$--product \eqref{def_startproduct} implies 
$$
\left[\phi(f_1),\phi(f_2)\right]_{\star_H}\doteq \phi(f_1)\star_H\phi(f_2)-\phi(f_2)\star_H\phi(f_1)= iE(f_1,f_2)\,.
$$
This indicates that the product $\star_H$ encodes the correct commutation relations among quantum observables. If we consider instead the quadratic local functionals $\phi^2(f_i)\doteq F_{2,f_i}(\phi)$, $i=1,2$, cf. \eqref{eq:smearedpolynomial}, i.e. the pointwise square of the classical field smeared with $f_1,f_2\in\Sec^\bC(M)$, we find
\begin{align*}
\phi^2(f_1)\star_H \phi^2(f_2) &= \phi^2(f_1)\phi^2(f_2)\\
 &\qquad + 4\int\limits_{M^2}\vol d_gy \; d(x,y) \; \phi(x)\phi(y)H(x,y)f_1(x)f_2(y) + 2 H^2(f_1,f_2)\,,
\end{align*}
which we may formally write as
$$
\phi^2(x)\star_H \phi^2(y) = \phi^2(x)\phi^2(y)+ 4 \phi(x)\phi(y)H(x,y) + 2 H^2(x,y)\,.
$$
This expression may be compared by the expression obtain via Wick's theorem
$$
\wick{\phi^2(x)}_H \wick{\phi^2(y)}_H = \wick{\phi^2(x)\phi^2(y)}_H+ 4 \wick{\phi(x)\phi(y)}_H H(x,y) + 2 H^2(x,y)\bI\,,
$$
if we define $\wick{\cdot}_H$ to be the Wick--ordering w.r.t. the symmetric part $H_\sym$ of $H$, e.g.
$$
\wick{\phi(x)\phi(y)}_H\doteq \phi(x)\phi(y)-H_\sym(x,y)\bI\,, \qquad \wick{\phi^2(x)}_H = \lim_{x\to y }\wick{\phi(x)\phi(y)}_H\,.
$$
Consequently, local functionals $F\in\Floc$ considered as elements of $F\in\cW_H(M)$ correspond to Wick polynomials Wick--ordered with respect to $H_\sym$, formally one may write $\wick{F}_H = \alpha_{-H_\sym}(F)$ with $\alpha_{-H_\sym}$ defined as in \eqref{def_alpha}. In fact, the algebra $F\in\cW_H(M)$ contains also time--ordered products of Wick polynomials \cite{chap2_BF00,chap2_HW01,chap2_HW04} and the perturbative construction of QFT based on $\cW_H(M)$ implies that `tadpoles' are already removed.

We have anticipated that $\cW_H(M)$ is an extension of the Borchers--Uhlmann algebra $\cA(M)$, which contains only products of the quantum field $\phi$ at {\em different} points. To see this, we consider the algebra $\cA^\prime(M)\doteq \alpha_{-H_\sym}\left((\Freg/\Fregsol,\star_H)\right)$, where $\Fregsol\doteq \Freg\cap \Fmucsol$, $(\Freg/\Fregsol,\star_H)$ is a subalgebra of $\cW_H(M)=(\Fmuc/\Fmucsol,\star_H)$ because $\star_H$ closes on regular functionals and we note that $\alpha_{-H_\sym}$ is well--defined on $\Freg$ although $H_\sym$ is not smooth. One can check that $\cA^\prime(M)$ is in fact the algebra $\cA^\prime(M)=(\Freg/\Fregsol,\star_{E})$ with the product $\star_{ E}$ given by \eqref{def_startproduct} with $H$ replaced by $i E/2$, and that $\cA^\prime(M)$ is isomorphic to the Borchers--Uhlmann algebra $\cA(M)$ defined in Definition \ref{def_BorchersUhlmannScalar}.

We have mentioned that Hadamard states $\omega$ on $\cA(M)$ induce meaningful states on $\cW_H(M)$ and that in fact all reasonable states on $\cW_H(M)$ are of this form. To explain this in more detail, we consider a Gaussian Hadamard state on $\omega$ on $\cA(M)$ with two--point function $\omega_2$ and the algebra $\cW_{\omega_2}(M)$ constructed by means of $\omega_2$. Given this, we can define a Gaussian Hadamard state on $\cW_{\omega_2}(M)$ by
$$
\omega(F)\doteq F(\phi=0)\,,\qquad \forall \;F\in\Fmuc
$$
which corresponds to the fact that Wick polynomials Wick--ordered w.r.t. to $\omega_2$ have vanishing expectation values in the state $\omega$, e.g. $\omega(\wick{\phi^2(x)}_{\omega_2})=0$. Note that this definition implies in particular that 
$$
\omega\left(\phi(f_1)\star_{\omega_2}\phi(f_2)\right)=\omega_2(f_1,f_2)\,,
$$
i.e. $\omega_2$ is, as required by consistency, the two--point correlation function of $\omega$ also in the functional picture. If we prefer to consider the extended algebra $\cW_H(M)$ constructed with a different $H$, then we can define the state $\omega$ on $\cW_H(M)$ by a pull--back with respect to the isomorphism $\alpha_d:\cW_H(M)\to\cW_{\omega_2}(M)$ with $d=\omega_2-H$ and $\alpha_d$ as in \eqref{def_alpha}. In other words, $\omega\circ \alpha_d$ defines a state on $\cW_H(M)$, and this definition corresponds to e.g. 
$$
\omega(\wick{\phi^2(x)}_H) = \lim_{x\to y}(\omega_2(x,y)-H(x,y))\,,
$$
i.e. to a {\em point--splitting renormalisation}\index{point--splitting renormalisation} of the expectation value of the observable $\phi^2(x)$.

We recall that observable quantities should be local and covariant fields as discussed in the previous section. However, not all local elements of the algebra $\cW_H(M)$ satisfy this property, i.e. not all local functionals $F\in\Floc$ considered as elements of $\cW_H(M)$ correspond to local and covariant Wick polynomials. In particular the functional $\phi^2(f)=F_{2,f}(\phi)\simeq \wick{\phi^2(f)}_H$, cf. \eqref{eq:smearedpolynomial}, does not correspond to a local and covariant Wick--square because $H$ is by assumption a bisolution of the Klein--Gordon equation and thus $\wick{\phi^2(f)}_H$ does not only depend on the geometry, i.e. the metric and its derivatives, in the localisation region of the test function $f$, but also on the geometry of the spacetime $(M,g)$ outside of the support of $f$ \cite{chap2_HW01}. This is related to the observation that quite generally local and covariant Hadamard states do not exist \cite{chap2_Fewster:2011pe}. Notwithstanding, local and covariant elements of $\cW_H(M)$ do exist and, following \cite{chap2_HW01,chap2_HW04}, they can identified by means of a map $W_H:\Floc\to\Floc\subset\cW_H(M)$ which should satisfy a number of conditions:
\begin{enumerate}
\item $W_H$ commutes with functional derivatives, i.e. $(W_H(F))^{(1)}=W_H(F^{(1)})$ and with the involution $W_H(F)^*=W_H(F^*)$.
\item $W_H$ satisfies the Leibniz rule.
\item $W_H$ is local and covariant.
\item $W_H$ scales almost homogeneously with respect to constant rescalings $m\mapsto \lambda m$ and $g\mapsto \lambda^{-2} g$ of the mass $m$ and metric $g$.
\item $W_H$ depends smoothly or analytically on the metric $g$, the mass $m$ and the coupling $\xi$ to the scalar curvature present in the Klein--Gordon equation \eqref{eq_KleinGordenMass}.
\end{enumerate}
We refer to \cite{chap2_HW04} for a detailed discussion of these conditions, and only sketch their meaning and physical motivation. To this avail, we consider the smeared local polynomials $\phi^k(f)=F_{k,f}(\phi)$ defined in \eqref{eq:smearedpolynomial} and identify $W_H(\phi^k(x))$ by $\wick{\phi^k(x)}$ omitting the smearing function for simplicity. The first of the above axioms then imply that $[\wick{\phi^k(x)},\phi(y)]=i m E(x,y)\wick{\phi^{m-1}(x)}$, i.e. the Wick--ordered fields satisfy standard commutation relations. The Leibniz rule further demands that $\nabla_\mu\wick{\phi^k(x)}=\wick{\nabla_\mu(\phi^k(x))}$, i.e. Wick--ordering  commutes with covariant derivatives, and the locality and covariance condition demands that $\wick{\phi^k(x)}$ is a local and covariant field in the sense of Definition \ref{def_LCQField}. Finally, the scaling condition requires that under constant rescalings $m\mapsto \lambda m$ and $g\mapsto \lambda^{-2} g$, $\wick{\phi^k(x)}$ scales (in four spacetime dimensions) as $\wick{\phi^k(x)}\mapsto \lambda^k \wick{\phi^k(x)} + \cO(\log \lambda)$, which among other things implies that $\wick{\phi^k(x)}$ has the correct `mass dimension', and the smoothness / analyticity requirement implies that e.g. $\wick{\phi^2(x)}$ may not contain a term like e.g. $\exp(\xi^{-1})m^{-4}R^{-1}(x)(R_{\mu\nu}(x)R^{\mu\nu}(x))^2 $ which would be allowed by the previous conditions.

It has been demonstrated in \cite{chap2_HW01,chap2_HW04} (see also \cite{chap2_Mo03}) that a prescription of defining local and covariant Wick polynomials exists, but that this prescription is not unique. In fact, if we consider (in a geodesically convex neighbourhood) a $H$ of the form \eqref{def_HadamardFormScalar} with $w=0$, i.e. a purely geometric Hadamard parametrix, then Wick--ordering w.r.t. to this $H$, e.g. $\wick{\phi^k(x)}\doteq \wick{\phi^k(x)}_H = \alpha_{-H_\sym}(\phi^k(x))$ satisfies all conditions reviewed above. However, this prescription is not the only possibility, but one can consider e.g.
$$
\wick{\phi^2(x)}\,=\, \wick{\phi^2(x)}_H + \alpha R(x) + \beta m^2
$$
with arbitrary real and dimensionless constants $\alpha$ and $\beta$ which are analytic functions of $\xi$. These constants parametrise the {\em renormalisation freedom} of Wick polynomials, or, in the context of perturbation theory, the renormalisation freedom inherent in removing tadpoles. Note that a change of scale $\lambda$ in \eqref{def_HadamardFormScalar} can be subsumed in this renormalisation freedom as a particular one--parameter family.

The coefficients parametrising the renormalisation freedom of local Wick poly-nomials may not be fixed within QFT on curved spacetimes, but have to be determined in a more general framework or by comparison with experiments. We will comment further on this point when discussing the renormalisation freedom of the stress--energy tensor in the context of cosmology in the next chapter. Note that, by local covariance, these coefficients are universal and may be fixed once and for all in all globally hyperbolic spacetimes (of the same dimension). Admittedly, in view of the above presentation of locality and general covariance, one might think that this holds only for spacetimes with isometric subregions (or spacetimes with conformally related subregions on account of \cite{chap2_NicolaConformal}). However, given two spacetimes $(M_1,g_1)$ and $(M_2,g_2)$ with not necessarily isometric subregions, one can employ the deformation argument of \cite{chap2_Fulling0} to deform, say, $(M_1,g_1)$ such that it contains a subregion isometric to a subregion of $(M_2,g_2)$. As the renormalisation freedom is parametrised by constants multiplying curvature terms or dimensionful constants which maintain their form under such a deformation, one can require that the mentioned constants are the same on $(M_1,g_1)$ and $(M_2,g_2)$ in a meaningful way.

\subsection{The Semiclassical Einstein Equation and Wald's Axioms}
\label{sec:see}

The central equation in describing the influence of quantum fields on the background spacetime -- i.e. their backreaction -- is the {\em semiclassical Einstein equation}\index{semiclassical Einstein equation}. It reads
\beq\label{eq_Semiclassical}G_{\mu\nu}(x)=8\pi G \;\omega(\wick{T_{\mu\nu}(x)})\,,\eeq
where the left hand side is given by the standard Einstein tensor $G_{\mu\nu}=R_{\mu\nu}-\frac12 R g_{\mu\nu}$, $G$ denotes Newton's gravitational constant, and we have replaced the stress--energy tensor of classical matter by the expectation value of a suitable Wick polynomial $\wick{T_{\mu\nu}(x)}$ representing the quantum stress--energy tensor evaluated in a state $\omega$. Considerable work has been invested in analysing how such equation can be derived via a suitable semiclassical limit from some potential quantum theory of gravity. We refer the reader to \cite[Section II.B]{chap2_FW96} for a review of several arguments and only briefly mention that a possibility to derive \eqref{eq_Semiclassical} is constituted by starting from the sum of the {\it Einstein--Hilbert action} $S_\mathrm{EH}(g)$ and the matter action $S_\text{matter}(g,\Phi)$\,,
\beq
\label{eq_EinsteinHilber} S(g,\Phi)\doteq S_\mathrm{EH}(g) - S_\text{matter}(g,\Phi)\,,
\eeq
$$S_\mathrm{EH}(g)\doteq \frac{1}{16\pi G}\int\limits_M \vol R =\frac{1}{16\pi G}\int\limits_M dx \sqrt{|\det g|}\;R$$
\noindent formally expanding a quantum metric and a quantum matter field around a classical (background) vacuum solution of Einstein's equation, and computing the equation of motion for the expected metric while keeping only `tree--level' ($\hbar^0$) contributions of the quantum metric and `loop--level' ($\hbar^1$) contributions of the quantum matter field. In this work, we shall not contemplate on whether and in which situation the above mentioned `partial one--loop approximation' is sensible, but we shall take the following pragmatic point of view: \eqref{eq_Semiclassical} seems to be the simplest possibility to couple the background curvature to the stress--energy of a quantum field in a non--trivial way. We shall therefore consider \eqref{eq_Semiclassical} as it stands and only discuss for which quantum states and Wick polynomial definitions of $\wick{T_{\mu\nu}(x)}$ it is a self-consistent equation. A rigorous proof that solutions of the semiclassical Einstein equation actually exist in the restricted case of cosmological spacetimes has been given in \cite{chap2_Pinamonti:2010is,chap2_Pinamonti:2013wya}.

We first observe that in \eqref{eq_Semiclassical} one equates a `sharp' classical quantity on the left hand side with a `probabilistic' quantum quantity on the right hand side. The semiclassical Einstein equation can therefore only be sensible if the fluctuations of the stress--energy tensor $\wick{T_{\mu\nu}(x)}$ in the considered state $\omega$ are small. In this respect, we already know that we should consider $\omega$ to be a Hadamard state and $\wick{T_{\mu\nu}(x)}$ to be a Wick polynomial Wick--ordered by means of a Hadamard bidistribution. Namely, the discussion in the previous sections tells us that this setup at least assures {\em finite fluctuations} of $\wick{T_{\mu\nu}(x)}$ as the pointwise products appearing in the computation of such fluctuations are well--defined distributions once their Hadamard property is assumed. In fact, this observation has been the main motivation to consider Hadamard states in the first place \cite{chap2_Wald2,chap2_WaldBook2}. However, it seems one can {\it a priori} not obtain more than these qualitative observations, and that quantitative statements on the actual size of the fluctuations can only be made {\it a posteriori} once a solution of \eqref{eq_Semiclassical} is found. An extended framework where the left hand side of the semiclassical Einstein equation is also interpreted stochastically is discussed in \cite{chap2_Pinamonti:2013zba}.   

Having agreed to consider only Hadamard states and Wick polynomials constructed by the procedures outlined in the last section, two questions remain. Which Hadamard state and which Wick polynomial should one choose to compute the right hand side of \eqref{eq_Semiclassical}? The first question can ultimately only be answered by actually solving the semiclassical Einstein equation. Observe that this actually poses a non--trivial problem as the formulation of the semiclassical Einstein equation in principle requires to specify a map which assigns to a metric $g$ a Hadamard state $\omega_g$, whereas we know that a covariant assignment of Hadamard states to spacetimes does not exist \cite{chap2_Fewster:2011pe}. This problem can be partially overcome by defining such a map only on a particular subset of globally hyperbolic spacetimes as we shall see in Section \ref{sec:cosmostates}. The question of which Wick polynomials should be taken as the definition of a quantum stress--energy tensor is also non--trivial, as Wick--ordering turns out to be ambiguous in curved spacetimes, see \cite{chap2_HW01,chap2_HW04} and the last section. We have already pointed out at several occasions that one should define the Wick polynomial $\wick{T_{\mu\nu}(x)}$ in a local, and, hence, state--independent way. In the context of the semiclassical Einstein equation the reason for this is the simple observation that one would like to solve \eqref{eq_Semiclassical} without knowing the spacetime which results from this procedure beforehand, but a state solves the equation of motion and, hence, already `knows' the full spacetime, thus being a highly non--local object. On account of the above considerations, we can therefore answer the question for the correct Wick polynomial representing $\wick{T_{\mu\nu}(x)}$ without having to choose a specific Hadamard state $\omega$ beforehand. The following review of the quantum stress--energy tensor will be limited to the case of the free neutral Klein--Gordon field. An analysis of the case of Dirac fields from the perspective of algebraic QFT on curved spacetimes may be found in \cite{chap2_Dappiaggi:2009xj}, whereas the case of interacting scalar fields in treated in \cite{chap2_HW04}.

To this avail, let us consider the stress--energy tensor of classical matter fields. Given a classical action $S_\text{matter}$, the related (Hilbert) stress--energy tensor can be computed as \cite{chap2_WaldBook,chap2_Forger}
\beq\label{eq_ClassicalStressTensor}T_{\mu\nu}\doteq \frac{2}{\sqrt{|\det g|}}\frac{\delta S_\text{matter}}{\delta g^{\mu\nu}}\,.\eeq
For the Klein--Gordon action $S(g,\phi)=\frac12 \ip{\phi,P\phi}$ we find 
\begin{align}\label{eq_STKG}
T_{\mu\nu}=&(1-2\xi)\phi_{;\mu}\phi_{;\nu}-2\xi\phi_{;\mu\nu}\phi+\xi G_{\mu\nu}\phi^2\\
&\qquad +g_{\mu\nu}\left\{2\xi (\square \phi)\phi+\left(2\xi-\frac12\right)\phi_{;\rho} \phi^{\phantom{;}\rho}_{;}-\frac12 m^2\phi^2\right\}\notag\,.\end{align}

A straightforward computation shows that the classical stress--energy tensor is covariantly conserved on--shell, i.e.
$$\nabla^{\mu}T_{\mu\nu}=-(\nabla_\nu \phi) P \phi=0\,.$$
Moreover, a computation of its trace yields
$$g^{\mu\nu}T_{\mu\nu}=(6\xi-1)\left(\phi\square\phi+\phi_{;\mu}\phi_{;}^{\phantom{;}\mu}\right)-m^2\phi^2-\phi P\phi\,.$$
Particularly, we see that in the conformally invariant situation, that is, $m=0$ and $\xi=\frac16$, the classical stress--energy tensor has vanishing trace on--shell. In fact, one can show that this is a general result, namely, the trace of a classical stress--energy tensor is vanishing on--shell if and only if the respective field is conformally invariant \cite[Theorem 5.1]{chap2_Forger}.

In view of the discussion of local and covariant Wick polynomials in the previous section, it seems natural to define the quantum stress--energy tensor $\wick{T_{\mu\nu}(x)}$ just as the Wick polynomial $\wick{T_{\mu\nu}(x)}_H$ Wick--ordered with respect to a purely geometric Hadamard parametrix $H=\frac{h}{8\pi^2}$ of the form \eqref{def_HadamardFormScalar} with $w=0$. Given a Hadamard state $\omega$ whose two--point function is (locally) of the form $\omega_2 = H + \frac{w}{8\pi^2}$, the expectation value of $\wick{T_{\mu\nu}(x)}_H$ in this state may be computed as
$$
\omega\left(\wick{T_{\mu\nu}(x)}_H\right)=\lim_{x\to y}\frac{D^\mathrm{can}_{\mu\nu}(x,y)w(x,y)}{8\pi^2}=\frac{\left[D^\mathrm{can}_{\mu\nu} w\right]}{8\pi^2}\,,
$$
where the bidifferential operator $D^\mathrm{can}_{\mu\nu}$ may be obtained from the classical stress--energy tensor as 
\begin{align*}
D^\mathrm{can}_{\mu\nu}(x,y)=\frac12 \frac{\delta^2 T_{\mu\nu}}{\delta \phi(x)\delta \phi(y)}&=
 (1-2\xi)g^{\nu^\prime}_\nu\nabla_{\mu}\nabla_{\nu^\prime}-2\xi\nabla_{\mu}\nabla_\nu+\xi G_{\mu\nu}\\
 &\qquad +g_{\mu\nu}\left\{2\xi \square_x+\left(2\xi-\frac12\right)g^{\rho^\prime}_\rho\nabla^\rho\nabla_{\rho^\prime}-\frac12 m^2\right\}\,,
\end{align*}
and where we recall the Synge bracket notation for coinciding point limits of bitensors, cf. Section \ref{sec:hadamard}. Here, unprimed indices denote covariant derivatives at $x$, primed indices denote covariant derivatives at $y$ and $g^{\nu^\prime}_\nu$ is the bitensor of parallel transport, cf. Section \ref{sec:hadamard}. However, as we shall discuss in a bit more detail in the following, this particular definition of a quantum stress--energy tensor is not satisfactory because it does not yield a covariantly conserved quantity which is a necessary condition for the semiclassical Einstein equation to be well--defined.

The first treatment of the quantum stress--energy tensor from an algebraic point of view was the analysis of Wald in \cite{chap2_Wald2}. At the time \cite{chap2_Wald2} appeared, workers in the field had computed the expectation value of the quantum stress--energy tensor by different renormalisation methods like adiabatic subtraction, dimensional regularisation, $\zeta$-function regularisation and DeWitt--Schwinger point--splitting regularisation (see \cite{chap2_Birrell} and also \cite{chap2_MorettiZeta,chap2_Hack:2012dm2}) and differing results had been found. From the rather modern point of view we have reviewed in the previous sections, it is quite natural and unavoidable that renormalisation in curved spacetimes is ambiguous. However, the axioms for the (expectation value of) the quantum stress--energy tensor introduced in \cite{chap2_Wald2} helped to clarify the case at that time  and to understand that in principle {\em all} employed renormalisation schemes were correct in physical terms. Consequently, the apparent differences between them could be understood on clear conceptual grounds. These axioms (in the updated form presented in \cite{chap2_WaldBook2}) are:\index{Wald's axioms}

\begin{enumerate}
 \item Given two (not necessarily Hadamard) states $\omega$ and $\omega^\prime$ such that the difference of their two--point functions $\omega_2-\omega^\prime_2$ is smooth, $\omega(\wick{T_{\mu\nu}(x)})-\omega^\prime(\wick{T_{\mu\nu}(x)})$ is equal to
$$\left[D^\text{can}_{\mu\nu}\;\left(\omega_2(x,y)-\omega^\prime_2(x,y)\right)\right]\,.$$
\item $\omega(\wick{T_{\mu\nu}(x)})$ is locally covariant in the following sense: let
$$\chi: (M,g) \hookrightarrow (M^\prime,g^\prime)$$ be defined as in Section \ref{sec:lcqft} and let $\alpha_\chi$ denote the associated continuous, unit-preserving, injective $*$--morphisms between the relevant enlarged (abstract) algebras $\cW(M,g)$ and $\cW(M^\prime,g^\prime)$. If two states $\omega$ on $\cW(M,g)$ and $\omega^\prime$ on $\cW(M^\prime,g^\prime)$ respectively are related via $\omega=\omega^\prime\circ \alpha_\chi$, then
$$\omega^\prime\left(\wick{T_{\mu^\prime\nu^\prime}(x^\prime)}\right)=\chi_*\omega\left(\wick{T_{\mu\nu}(x)}\right)\,,$$
where $\chi_*$ denotes the push--forward of $\chi$ in the sense of covariant tensors.
\item Covariant conservation holds, i.e.
$$\nabla^\mu\omega\left(\wick{T_{\mu\nu }(x)}\right)=0\,.$$
\item In Minkowski spacetime $\bM$, and in the relevant Minkowski vacuum state $\omega_\bM$
$$\omega_\bM\left(\wick{T_{\mu\nu}(x)}\right)=0\,.$$
\item $\omega(\wick{T_{\mu\nu}(x)})$ does not contain derivatives of the metric of order higher than 2.
\end{enumerate}

Some of these axioms are just special cases of the axioms for local Wick polynomials reviewed in Section \ref{sec:wick}. In fact, the first of these axioms is just a variant of the requirement that local Wick polynomials have standard commutation relations. In the case of $\wick{T_{\mu\nu}}$ this implies that two valid definitions of this observable can only differ by a term proportional to the identity. The second axiom is just the locality and covariance of Wick polynomials here formulated on the level of expectation values. The condition that the quantum--stress energy tensor has vanishing expectation value on Minkowski spacetime in the corresponding vacuum state is not compatible with the requirement that local Wick polynomials depend smoothly on the mass $m$, cf. \cite[Theorem 2.1 (e)]{chap2_Mo03} and can be omitted because it is not essential. The last axiom is motivated by the wish to ensure that the solution theory of the semiclassical Einstein equation does not depart `too much' from the one of the classical Einstein equation. In particular one would like to have that all solutions of the semiclassical Einstein equation behave well in the classical limit $\hbar \to 0$. Wald himself had realised, however, that this axiom could not be satisfied for massless theories without introducing an artificial length scale into the theory; therefore, the axiom has been discarded.

Using these axioms as well as a variant of the scaling requirement for local Wick polynomials, Wald could prove that a uniqueness result for $\omega(\wick{T_{\mu\nu}})$ (and thus for $\wick{T_{\mu\nu}}$ itself by the first axiom) can be obtained, namely that two valid definitions $\wick{T_{\mu\nu}}$ and $\wick{T_{\mu\nu}}^\prime$  of the quantum stress--energy tensor can only differ by a term of the form\index{renormalisation freedom of the quantum stress--energy tensor}
\beq\label{eq_renfreedom}
\wick{T_{\mu\nu}}^\prime - \wick{T_{\mu\nu}} = \alpha_1 m^4 g_{\mu\nu} + \alpha_2 m^2 G_{\mu\nu} + \alpha_3 I_{\mu\nu} + \alpha_4 J_{\mu\nu} + \epsilon K_{\mu\nu}\,,
\eeq
where $\alpha_i$ are real and dimensionless constants and the last three tensors appearing above are the conserved local curvature tensors
\begin{align} I_{\mu\nu}& \doteq \frac{1}{\sqrt{|\det g|}}\frac{\delta}{\delta g^{\mu\nu}}\int\limits_M\vol  \;R^2  = -g_{\mu\nu}\left(\frac{1}{2}R^2+2\square R\right)+2R_{;\mu\nu}+2R R_{\mu\nu}\notag\,,\\
J_{\mu\nu}&\doteq \frac{1}{\sqrt{|\det g|}}\frac{\delta}{\delta g^{\mu\nu}}\int\limits_M \vol \; R_{\alpha\beta} R^{\alpha\beta}\notag\\
&= -\frac{1}{2}g_{\mu\nu}(R_{\mu\nu} R^{\mu\nu}+\square R)+R_{;\mu\nu}-\square R_{\mu\nu} +2 R_{\alpha\beta}R^{\alpha\,\,\,\beta}_{\,\,\,\mu\,\,\,\nu}\,,\label{eq_ConservedLocal}\\
 K_{\mu\nu}&\doteq \frac{1}{\sqrt{|\det g|}}\frac{\delta}{\delta g^{\mu\nu}}\int\limits_M \vol  \;R_{\alpha\beta\gamma\delta} R^{\alpha\beta\gamma\delta} \notag\\ 
 &= -\frac{1}{2}g_{\mu\nu}R_{\alpha\beta\gamma\delta} R^{\alpha\beta\gamma\delta}+2R_{\alpha\beta\gamma\mu} R^{\alpha\beta\gamma}_{\phantom{\alpha\beta\gamma}\nu}+4 R_{\alpha\beta}R^{\alpha\phantom{\mu}\beta}_{\phantom{\alpha}\mu\phantom{\beta}\nu}\notag\\
 &\qquad\qquad -4R_{\alpha\mu}R^{\alpha}_{\,\,\,\nu}-4\square R_{\mu\nu}+2R_{;\mu\nu}\,.\notag
 \end{align}
This uniqueness result follows by using the first of Wald's axioms in order to observe that $\wick{T_{\mu\nu}}^\prime - \wick{T_{\mu\nu}}$ is a $c$--number. From the locality and covariance axiom  in combination with conservation it follows that this $c$--number must be a local and conserved curvature tensor whereas the scaling condition implies that it has the correct mass dimension. As \cite{chap2_Tichy} pointed out, these requirements do not fix $\wick{T_{\mu\nu}}^\prime - \wick{T_{\mu\nu}}$ to be of the above form, but demanding in addition that $\wick{T_{\mu\nu}}$ depends in a smooth or analytic way on $m$ and $g$ as in \cite{chap2_HW01,chap2_HW04} is sufficient to rule out the additional terms mentioned in \cite{chap2_Tichy} so that \eqref{eq_renfreedom} indeed classifies the full renormalisation freedom compatible with the axioms of local Wick polynomials introduced in \cite{chap2_HW01,chap2_HW04} and the conservation of $\wick{T_{\mu\nu}}$.

In fact, we will see in the next section that changing the scale $\lambda$ in the regularising Hadamard bidistribution amounts to changing $\wick{T_{\mu\nu}}$ exactly by a tensor of this
form and, furthermore, the attempt to renormalise perturbative Einstein--Hilbert quantum gravity at one loop order
automatically yields a renormalisation freedom in form of such a tensor as well \cite{chap2_tHooft0}\footnote{In fact, at least in the case of scalar fields, the combination of the local curvature tensors appearing as the finite renormalisation freedom in \cite{chap2_tHooft0} is, up to a term which seems to be an artifact of the dimensional regularisation employed in that paper, the same that one gets via changing the scale in the regularising Hadamard bidistribution.}. Moreover, using
the {\em Gauss--Bonnet--Chern theorem} in four dimensions, which states that  
$$\int\limits_{M}\vol \left( R_{\mu\nu\rho
\tau}R^{\mu\nu\rho\tau}-4R_{\mu\nu}R^{\mu\nu}+R^2\right)$$ 
is a topological invariant and, therefore, has a vanishing
functional derivative with respect to the metric \cite{chap2_Alty,chap2_tHooft0}, one can restrict the freedom even
further by removing $K_{\mu\nu}$ from the list of allowed local curvature tensors as $K_{\mu\nu}=4 J_{\mu\nu}-I_{\mu\nu}$. Finally, the above tensors all have a trace proportional to $\square R$ and thus the linear combination $I_{\mu\nu}-3 J_{\mu\nu}$ is traceless.

\subsection{A Conserved Quantum Stress--Energy Tensor}
\label{sec:stresstensor}

After some dispute about computational mistakes (see the discussion in \cite{chap2_Wald3}) it had soon be realised that the quantum stress--tensor $\wick{T_{\mu\nu}}_H$ defined by Wick--ordering the canonical expression by means of a purely geometric Hadamard bidistribution $H$ is not conserved although it satisfies the other conditions for local Wick polynomials mentioned at the end of Section \ref{sec:wick}. The reason for this is the fact that, in contrast to the two--point function of a Hadamard state, a purely geometric Hadamard bidistribution fails to satisfy the equation of motion. Consequently $\nabla^\mu\wick{T_{\mu\nu}}_H=\wick{\nabla^\mu T_{\mu\nu}}_H$ is (in four spacetime dimensions) the covariant divergence of a non--vanishing and non--conserved local curvature term $\nabla^\mu\wick{T_{\mu\nu}}_H = \nabla^\mu C_{\mu\nu}\neq 0$. The obvious solution to this problem has been to compute  $C_{\mu\nu}$ and then to define a conserved stress--tensor by 
\beq\label{eq_subtractconserved}
\wick{T_{\mu\nu}}\;\doteq\; \wick{T_{\mu\nu}}_H - C_{\mu\nu}\bI\,,
\eeq
i.e. by just subtracting this {\em conservation anomaly}. It was found that this conserved stress tensor still has a {\em trace anomaly}\index{trace anomaly}, i.e. the trace of any stress--energy tensor which satisfies the Wald axioms is non--vanishing in the conformally invariant case $m=0$, $\xi=\frac16$ \cite{chap2_Wald3}.

Two proposals have been made in order to motivate the ad--hoc subtraction \eqref{eq_subtractconserved} on conceptual grounds. In \cite{chap2_Mo03} it has been suggested that the Wick--ordering prescription should be kept, but that the classical expression $T_{\mu\nu}$ entering the definition of $\wick{T_{\mu\nu}}_H$ should be modified in such a way that it coincides with the canonical classical stress--energy tensor on--shell but gives a conserved observable upon quantization. In \cite{chap2_HW04} instead it has been argued that the Wick--ordering prescription should be modified without changing the classical expression. This point of view has the advantage that it fits into the general framework of defining local and covariant Wick polynomials introduced in \cite{chap2_HW01,chap2_HW04}. In particular it is possible to alter uniformly the definition of all local Wick polynomials induced by Wick--ordering with respect to a purely geometric Hadamard parametrix $H$ in such a way that a) all axioms for local Wick polynomials mentioned at the end of Section \ref{sec:wick} are still satisfied and b) the canonical stress--energy tensor Wick--ordered with respect to this prescription is conserved \cite{chap2_HW04}. In the aforementioned reference it has also been argued that this point of view has the further advantage that it is applicable even to perturbatively constructed interacting models. Notwithstanding, we shall review the approach of \cite{chap2_Mo03} in the following for ease of presentation.

To this avail, we modify the classical stress--energy tensor by setting 
$$T^c_{\mu\nu}\doteq T^\text{can}_{\mu\nu} + c g_{\mu\nu}\phi P \phi\,,$$
where $T^\text{can}_{\mu\nu}$ is the canonical expression and $c$ is a suitable constant to be fixed later. We then define $\wick{T_{\mu\nu}}$ by
$$
\wick{T_{\mu\nu}}\;\doteq\; \wick{T^c_{\mu\nu}}_H\,,
$$ 
where $\wick{\cdot}_H$ indicates Wick--ordering (i.e. point--splitting regularisation) with respect to a purely geometric $H=\frac{h}{8\pi^2}$ of the form \eqref{def_HadamardFormScalar} with $w=0$. Following the arguments of the previous section, the expectation value of $\wick{T_{\mu\nu}}$ in a Hadamard state $\omega$ whose two--point function is (locally) of the form $\omega_2 = H + \frac{w}{8\pi^2}$ may be computed as
\beq\label{def_improvedexpval}
\omega\left(\wick{T_{\mu\nu}(x)}\right)=\frac{\left[D^\mathrm{c}_{\mu\nu} w\right]}{8\pi^2}\,,
\eeq
where
\begin{align}\label{eq_ImprovedKGDiff}D^{c}_{\mu\nu}&\doteq D^\text{can}_{\mu\nu}+c g_{\mu\nu}P_x\\
 &=(1-2\xi)g^{\nu^\prime}_\nu\nabla_{\mu}\nabla_{\nu^\prime}-2\xi\nabla_{\mu}\nabla_\nu+\xi G_{\mu\nu}\notag\\&\quad+g_{\mu\nu}\left\{2\xi \square_x+\left(2\xi-\frac12\right)g^{\rho^\prime}_\rho\nabla^\rho\nabla_{\rho^\prime}-\frac12 m^2\right\}+c g_{\mu\nu}P_x\,.\notag
\end{align}
The following result can now be shown \cite{chap2_Mo03}.

\begin{theorem}
\label{thm_TensorScalar}
 Let $\omega(\wick{T_{\mu\nu}(x)})$ be defined as in \eqref{def_improvedexpval} with $c=\frac13$.
\begin{itemize}
 \item[a)] $\omega(\wick{T_{\mu\nu}(x)})$ is covariantly conserved, i.e. $$\nabla^\mu\omega\left(\wick{T_{\mu\nu}(x)}\right)=0\,.$$
 \item[b)] The trace of $\omega(\wick{T_{\mu\nu}(x)})$ equals
$$g^{\mu\nu}\omega\left(\wick{T_{\mu\nu}(x)}\right)=\frac{1}{4\pi^2}\left[v_1\right]-\frac{1}{8\pi^2}\left(3\left(\frac16-\xi\right)\Box+m^2\right)[w]$$
$$=\frac{1}{2880\pi^2}\left(\frac52\left(6\xi-1\right)R^2+6(1-5\xi)\Box R+ C_{\alpha\beta\gamma\delta}C^{\alpha\beta\gamma\delta}+R_{\alpha\beta}R^{\alpha\beta}-\frac{R^2}{3}\right)$$$$+\frac{1}{4\pi^2}\left(\frac{m^4}{8}+\frac{\left(6\xi-1\right)m^2R}{24}\right)-\frac{1}{8\pi^2}\left(3\left(\frac16-\xi\right)\Box+m^2\right)[w]\,,$$
which, for $m=0$ and $\xi=\frac16$, constitutes the {\em trace anomaly} of the quantum stress--energy tensor.
\item[c)] The conservation and trace anomaly are independent of the chosen scale $\lambda$ in the Hadamard parametrix $h$. Namely, a change
$$\lambda\;\mapsto\;\lambda^\prime$$
results in 
$$\omega(\wick{T_{\mu\nu}(x)})\;\mapsto\; \omega(\wick{T_{\mu\nu}(x)})^\prime=\omega(\wick{T_{\mu\nu}(x)})+\delta T_{\mu\nu}\,,
$$
where
\beq\label{eq_lchange}
\delta T_{\mu\nu}\doteq\frac{2\log{\lambda/\lambda^\prime}}{8\pi^2}\left[D^{c}_{\mu\nu}\;v\right]=\frac{2\log{\lambda/\lambda^\prime}}{8\pi^2}\left[D^\mathrm{can}_{\mu\nu}\;v\right]
\eeq
$$=\frac{2\log{\lambda/\lambda^\prime}}{8\pi^2}\left(\frac{m^2(6\xi-1)G_{\mu\nu}}{12}-\frac{m^4}{8}g_{\mu\nu}+\frac{1}{360}(I_{\mu\nu}-3J_{\mu\nu})-\frac{(6\xi-1)^2}{144}I_{\mu\nu}\right)
$$
is a conserved tensor which has vanishing trace for $m=0$ and $\xi=\frac16$.
\end{itemize}
\end{theorem}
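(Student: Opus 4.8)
The plan is to reduce all three items to coincidence-limit computations governed by tools already established: Synge's rule (Lemma~\ref{lem_SyngeRule}), which for a symmetric bitensor reads $[B_{;\mu^\prime}]=[B_{;\mu}]=\tfrac12[B]_{;\mu}$ and lets me commute the coinciding-point limit with covariant differentiation; the symmetry of the Hadamard coefficients $v_n$ (Theorem~\ref{thm_SymmetryHadamard}), which guarantees that this symmetric form of Synge's rule applies to $w$, $v$ and their derivatives; and the parametrix relations of Lemma~\ref{lem_PHadamardScalar} together with $P_x v=0$ from~\eqref{eq_PDEVScalar}. The common observation is that, since $\omega_2=H+\tfrac{w}{8\pi^2}$ is a bisolution and $8\pi^2 H=h$, one has $P_x w=-P_x h$ and $P_y w=-P_y h$, so that every occurrence of $P$ acting on $w$ can be traded for the purely geometric quantities $[P_x h]=-6[v_1]$, $[(P_x h)_{;\mu}]=-4[v_1]_{;\mu}$, $[(P_x h)_{;\mu^\prime}]=-2[v_1]_{;\mu}$ of Lemma~\ref{lem_PHadamardScalar}.

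For part~(a) I would start from $\nabla^\mu\omega(\wick{T_{\mu\nu}(x)})=\tfrac{1}{8\pi^2}\nabla^\mu[D^{c}_{\mu\nu}w]$ and push the divergence inside the coincidence limit with Synge's rule, writing $\nabla^\mu[B_\mu]=[\nabla^\mu B_\mu]+[\nabla^{\mu^\prime}B_\mu]$. Applied to the canonical part $D^{\mathrm{can}}_{\mu\nu}$, whose structure mirrors the classical tensor~\eqref{eq_STKG} obeying $\nabla^\mu T^{\mathrm{can}}_{\mu\nu}=-(\nabla_\nu\phi)P\phi$, the divergence collapses to coinciding-point limits of $P_x w$, $P_y w$ and their first derivatives, which by the previous paragraph are fixed multiples of $[v_1]$ and $\nabla_\nu[v_1]$. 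The upshot is that the canonical piece alone is a nonzero multiple of $\nabla_\nu[v_1]$, i.e. is not conserved, whereas the correction term contributes $\tfrac{c}{8\pi^2}\nabla_\nu[P_x w]=-\tfrac{c}{8\pi^2}\nabla_\nu[P_x h]=\tfrac{6c}{8\pi^2}\nabla_\nu[v_1]$. Matching the two contributions fixes $c=\tfrac13$ and yields conservation.

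For part~(b) I would compute the trace operator $g^{\mu\nu}D^{c}_{\mu\nu}$ directly from the classical trace, using $\phi\Box\phi+\phi_{;\mu}\phi^{;\mu}=\tfrac12\Box(\phi^2)$ so that $g^{\mu\nu}T^{c}_{\mu\nu}=(6\xi-1)\tfrac12\Box(\phi^2)-m^2\phi^2+(4c-1)\phi P\phi$. Passing to coincidence limits, the first two terms reduce via Synge's rule to the manifestly state-dependent expression $-\tfrac{1}{8\pi^2}\bigl(3(\tfrac16-\xi)\Box+m^2\bigr)[w]$, while the $\phi P\phi$ term produces $\tfrac{4c-1}{8\pi^2}[P_x w]=\tfrac{6(4c-1)}{8\pi^2}[v_1]$, which for $c=\tfrac13$ equals $\tfrac{1}{4\pi^2}[v_1]$. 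Substituting the explicit value of $[v_1]$ from~\eqref{eq_V1Scalar} gives the stated formula, and at $m=0$, $\xi=\tfrac16$ the $[w]$-dependent part drops, leaving the pure trace anomaly $\tfrac{1}{4\pi^2}[v_1]$.

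For part~(c) I would track how the logarithmic scale enters the parametrix: replacing $\lambda$ by $\lambda^\prime$ leaves the $u/\sigma$ and $v\log\sigma$ contributions untouched and only changes $h$ by the smooth term $2\log(\lambda/\lambda^\prime)\,v$, hence $w\mapsto w-2\log(\lambda/\lambda^\prime)\,v$ since $\omega_2$ is held fixed. Linearity of the point-splitting prescription then gives $\delta T_{\mu\nu}=\tfrac{2\log(\lambda/\lambda^\prime)}{8\pi^2}[D^{c}_{\mu\nu}v]$ as in~\eqref{eq_lchange} (the overall sign being fixed by the orientation of the scale ratio), and because $P_x v=0$ by~\eqref{eq_PDEVScalar} the correction term $c\,g_{\mu\nu}P_x$ in $D^{c}_{\mu\nu}$ annihilates $v$, so $[D^{c}_{\mu\nu}v]=[D^{\mathrm{can}}_{\mu\nu}v]$; conservation of $\delta T_{\mu\nu}$ is then automatic from part~(a) applied at both scales, and its trace $\tfrac{2\log(\lambda/\lambda^\prime)}{4\pi^2}[v_1]$ vanishes for $m=0$, $\xi=\tfrac16$. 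The heaviest, and to my mind decisive, step is the explicit evaluation of $[D^{\mathrm{can}}_{\mu\nu}v]$: this needs the coinciding-point limits of $v_0$, $v_1$ and several of their covariant derivatives, obtained by differentiating the recursion relations~\eqref{eq_PDEV0Scalar}--\eqref{eq_PDEVNScalar} and commuting derivatives to generate the Riemann-tensor terms, and it is here that $I_{\mu\nu}$, $J_{\mu\nu}$ emerge in the combination $\tfrac{m^2(6\xi-1)}{12}G_{\mu\nu}-\tfrac{m^4}{8}g_{\mu\nu}+\tfrac{1}{360}(I_{\mu\nu}-3J_{\mu\nu})-\tfrac{(6\xi-1)^2}{144}I_{\mu\nu}$. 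The same bookkeeping underlies the precise coefficient pinning $c=\tfrac13$ in part~(a); once the symmetry of the $v_n$ is invoked so that split and unsplit derivatives can be identified, these computations are routine but lengthy, and I would organise them entirely around the Synge-bracket calculus rather than in closed coordinate form.
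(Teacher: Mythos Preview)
Your overall strategy is the paper's own: parts (a) and (b) are handled exactly as in the text, by pushing derivatives through the coincidence limit with Synge's rule, reducing $\nabla^\mu[D^c_{\mu\nu}w]$ and $g^{\mu\nu}[D^c_{\mu\nu}w]$ to expressions in $[P_xw]$ and its first derivatives, and then invoking $P_xw=-P_xh$ together with Lemma~\ref{lem_PHadamardScalar} to land on $(6c-2)[v_1]_{;\nu}$ and $(4c-1)6[v_1]-\bigl(3(\tfrac16-\xi)\Box+m^2\bigr)[w]$. Your observation in (c) that conservation of $\delta T_{\mu\nu}$ follows by subtracting the two already--conserved tensors is a clean shortcut; the paper instead recomputes directly, noting that the divergence now involves $P_xv$ rather than $P_xw$ and hence vanishes by \eqref{eq_PDEVScalar}.

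There is, however, a genuine slip in your trace argument for (c). You write that the trace of $\delta T_{\mu\nu}$ is $\tfrac{2\log(\lambda/\lambda')}{4\pi^2}[v_1]$ and that this vanishes for $m=0$, $\xi=\tfrac16$. Neither claim is right: $[v_1]$ is precisely the trace anomaly and does \emph{not} vanish in the conformal case (cf.\ \eqref{eq_V1Scalar}). The $\tfrac{1}{4\pi^2}[v_1]$ in part (b) arose from $[P_xw]=-[P_xh]=6[v_1]$; when you replace $w$ by a multiple of $v$ this contribution disappears because $P_xv=0$ (a fact you correctly use one sentence earlier to identify $D^c$ with $D^{\mathrm{can}}$ on $v$). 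What survives is only the state--dependent piece of the trace formula, namely $-\tfrac{2\log(\lambda/\lambda')}{8\pi^2}\bigl(3(\tfrac16-\xi)\Box+m^2\bigr)[v]$, and it is this prefactor that vanishes for $m=0$, $\xi=\tfrac16$. Equivalently, you could have argued as you did for conservation: the $[v_1]$ term in (b) is purely geometric and hence scale--independent, so it cancels in the difference $g^{\mu\nu}\delta T_{\mu\nu}$.
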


\begin{proof}
\begin{enumerate}
\item Leaving $c$ undetermined and employing Synge's rule (cf.  Lemma \ref{lem_SyngeRule}), we compute
$$8\pi^2\nabla^\mu\omega\left(\wick{T_{\mu\nu}(x)}\right)=\nabla^\mu\left[D^{c}_{\mu\nu}\;w\right]=\left[(\nabla^\mu+g^\mu_{\mu^\prime}\nabla^{\mu^\prime})D^{c}_{\mu\nu}\;w\right]$$
$$=\left[-\left(g^{\nu^\prime}_\nu\nabla_{\nu^\prime}P_x+c(g^{\nu^\prime}_\nu\nabla_{\nu^\prime}P_x+\nabla_{\nu}P_x)\right)w\right]\,.$$
Let us now recall that $P_x(h+w)=0$ and, hence, $P_x w=-P_xh$. Inserting this and the identities found in Lemma \ref{lem_PHadamardScalar}, we obtain
$$8\pi^2\nabla^\mu\omega\left(\wick{T_{\mu\nu}(x)}\right)=-\left[\left(-g^{\nu^\prime}_\nu\nabla_{\nu^\prime}P_x+c(g^{\nu^\prime}_\nu\nabla_{\nu^\prime}P_x+\nabla_{\nu}P_x)\right)h\right]$$
$$=(6c-2)\left[v_1\right]_{;\nu}\,.$$
This proves the conservation for $c=\frac13$.
\item It is instructive to leave $c$ undetermined also in this case. Employing Synge's rule and the results of Lemma \ref{lem_PHadamardScalar}, we find
$$8\pi^2g^{\mu\nu}\omega\left(\wick{T_{\mu\nu}(x)}\right)=8\pi^2g^{\mu\nu}\left[D^{c}_{\mu\nu}\;w\right]$$
$$=(4c-1)[P_x w]-\left(3\left(\frac16-\xi\right)\Box+m^2\right)[w]$$
$$=(4c-1)6[v_1]-\left(3\left(\frac16-\xi\right)\Box+m^2\right)[w]\,.$$
Inserting $c=\frac13$ yields the wanted result.
\item The proof ensues without explicitly computing $\delta T_{\mu\nu}$ in terms of the stated conserved tensors from the following observation. Namely, a change of scale as considered transforms $w$ by a adding a term $2\log \lambda/\lambda^\prime v$. Hence, our computations for proving the first two statements entail
$$\frac{8\pi^2}{2\log \lambda/\lambda^\prime}\nabla^{\mu}\delta T_{\mu\nu}=\left[\left(-g^{\nu^\prime}_\nu\nabla_{\nu^\prime}P_x+c(g^{\nu^\prime}_\nu\nabla_{\nu^\prime}P_x+\nabla_{\nu}P_x)\right)v\right]\,,$$
$$\frac{8\pi^2}{2\log \lambda/\lambda^\prime}g^{\mu\nu}\delta T_{\mu\nu}=-(4c-1)[P_x v]-\left(3\left(\frac16-\xi\right)\Box+m^2\right)[v]\,.$$
The former term vanishes because $P_x v=0$ as discussed in Section \ref{sec:hadamard}, and the same holds for the latter term if we insert $\xi=\frac16$ and $m=0$.
\end{enumerate}
\end{proof}

The proof the second statement clearly shows that there is a possibility to assure vanishing trace in the conformally invariant case, but this possibility is not compatible with conservation. Moreover, we have stated the last result in explicit terms in order to show how a change of scale in the Hadamard parametrix is compatible with the renormalisation freedom of the quantum stress--energy tensor. Finally, we stress that the term added to the canonical stress--energy tensor is compatible with local covariance because the corresponding change of the quantum stress--energy tensor is proportional to $g_{\mu\nu}[v_1]$, i.e. a local curvature tensor.

We would also like to point out that the above explicit form of the trace anomaly has also been known before Hadamard point--splitting had been developed. Particularly, the same result had been obtained by means of so--called {\it DeWitt--Schwinger point--splitting} in \cite{chap2_Christensen0}. This renormalisation prescription is a priori not rigorously defined on Lorentzian spacetimes and the Hadamard point--splitting computation in \cite{chap2_Wald3} had therefore been the first rigorous derivation of the trace anomaly of the stress--energy tensor. However, DeWitt--Schwinger point--splitting can be reformulated on rigorous grounds, cf. \cite{chap2_Hack:2012dm2}.

\section{Further Reading}

The review of algebraic quantum field theory on curved spacetimes in this chapter has covered aspects of this framework which are relevant for the applications discussed in the following chapter. Recent reviews which deal with aspects and constructions not covered in the present chapter, or provide further details, are \cite{chap2_FredenhagenRejzner2, chap2_Hollands:2014eia} and  \cite{chap2_Fewster:2015kua, chap2_Benini:2015bsa, chap2_Khavkine:2014mta, chap2_Fredenhagen:2015iia}, which are part of \cite{chap2_bible}. A historical account of quantum field theory in curved spacetimes may be found in \cite{chap2_Wald:2006ty}.

\chapter{Cosmological Applications}
\label{sec:cosmo}

\abstract{In this chapter we discuss two cosmological applications of algebraic quantum field theory in curved spacetimes. In the Standard Model of Cosmology -- the $\Lambda$CDM--model --
the matter--energy content of the universe on large scales is modelled by a classical stress--energy tensor of perfect fluid form. Motivated by the fact that this matter--energy is considered to have a microscopic description in terms of a quantum field theory, we demonstrate as a first application how the classical perfect fluid stress--energy tensor in the the $\Lambda$CDM--model may be derived within quantum field theory on curved spacetimes by showing that there exist quantum states on cosmological spacetimes in which the expectation value of the quantum stress--energy tensor is qualitatively and quantitatively of the form assumed in the $\Lambda$CDM--model up to corrections which may have interesting phenomenological implications. In the simplest models of Inflation, it is assumed that a classical scalar field on a cosmological spacetime coupled to the metric via the Einstein equations drives an exponential phase of expansion in the early universe. As a second application, the standard approach to the quantization of the perturbations of this coupled system, which makes heavy use of the symmetries of cosmological spacetimes, is re--examined by comparing it with a more fundamental approach which consists of quantizing the perturbations of a scalar field and the metric field in a gauge--invariant manner on general backgrounds and then considering the symmetric cosmological backgrounds as a special case.}

\section{A Brief Introduction to Cosmology}
\label{sec:cosmobrief}

According to the well--known {\em cosmological principle}, our universe is {\em homogeneous} and {\em isotropic}. This postulate implies that, on large scales, the cosmos looks `the same' everywhere and in all directions, see \cite[Chapter 5]{chap3_WaldBook} for a precise definition and a discussion of these issues. A remarkable confirmation of the isotropy of our universe is the fact that the temperature of the {\em Cosmic Microwave Background} (CMB) is isotropic up to relative fluctuations of the order $10^{-5}$ \cite{chap3_Ade:2013sjv}. Based on the cosmological principle, we shall regard {\em Friedmann-Lema\^itre-Robertson-Walker (FLRW) spacetimes}\index{FLRW (Friedmann-Lema\^itre-Robertson-Walker) spacetimes} as the curved manifolds describing our universe on large scales, where here `large' means scales of the size of galaxy superclusters, i.e. around $10^8$ light--years or $10^{24}$m, which is just about a thousandth of the diameter of the observable universe! The underlying manifold of such spacetimes is $I_t\times \Sigma_\kappa$, where $I_t$ denotes an open interval in $\bR$ and $\Sigma_\kappa$ is a three--dimensional manifold of constant curvature $\kappa$, and their metric is given by the line element
$$ds^2=-dt^2+a^2(t)\left(\frac{dr^2}{1-\kappa r^2}+r^2d\bS^2(\theta,\varphi)\right).$$
Here, $a(t)$ is a strictly positive smooth function called the {\em scale factor}\index{a@$a(t)$, scale factor}\index{scale factor}, $t\in I_t$ denotes {\em cosmological time}\index{cosmological time}, and $\theta, \varphi$ are coordinates on the 2--sphere $\bS^2$, the canonical line element of which is denoted by $d\bS^2$. If we recall the discussion in Section \ref{sec:ghst}, we immediately realise that $\Sigma_\kappa$ (or rather $\{t\}\times \Sigma_\kappa$ for all $t\in I_t$) is a Cauchy surface, and hence all FLRW spacetimes are globally hyperbolic. In the following, we shall restrict attention to $\kappa=0$ as this is the situation strongly favoured by experimental data, the CMB measurements in particular \cite{chap3_Ade:2013zuv}. In this case, one speaks of {\em flat} FLRW spacetimes and $\Sigma_\kappa=\bR^3$, while $r$ denotes the Euclidean distance in $\bR^3$, which in the cosmological context is called the {\em comoving distance}\index{comoving distance}. 

Obviously, a possible time function on a FLRW spacetime is given by the cosmological time itself, but there are further possibilities, which are often more convenient: the {\em conformal time}\index{conformal time} $\eta$, the scale factor $a$ itself, and the {\em redshift}\index{redshift} $z$. These time variables are related by
\beq\label{eq_timevariables}
dt = a d\eta = \frac{da}{aH} = -\frac{dz}{(1+z)H}\,,
\eeq
$$
\Leftrightarrow\qquad \eta = \int^t_{t_0} \frac{1}{a(\tilde t)}d\tilde t\,,\qquad H\doteq \frac{\partial_t a}{a}\,, \qquad z=\frac{a_0}{a}-1
$$
where $H$ denotes the {\em Hubble rate}\index{Hubble rate} and $t_0$ and $a_0$ are conventional, as is the interval $I_\eta\ni \eta$ determined by $I_t$ and the above integral expression for $\eta$. In cosmology, $a_0$ is interpreted as the scale factor of today and usually chosen as $a_0=1$ so that the present redshift is $z=0$. The conformal time is always a well--defined time variable, whereas using $a$ and $z$ as time variables is only meaningful if the Hubble rate $H$ has a definite sign. Observations indicate $H>0$, i.e. an {\em expanding} universe, and we will make this assumption when discussing the cosmological evolution. In fact, as the name suggests, $z$ is a useful time variable in cosmology because it is actually a {\em direct observable} that can be measured by comparing the observed spectra of distant luminous objects with known spectra and assigning the `time--label' $z$ to the time when the light observed today was emitted from these objects. From the theoretical point of view, using $z$ as a time variable has the advantage that -- as we shall discuss in the following -- the (semiclassical) Einstein equation in the simplest cases becomes an ordinary differential equation for $H(z)$, whereas by using $t$ or $\eta$ as variables one has to deal with differential equations where both $H$ and $a$ appear as functions of $t$ or $\eta$. The latter variable is called conformal time because the metric of flat FLRW spacetimes in the Cartesian coordinate system $(\eta,\vec{x})\in I_\eta\times\bR^3$ reads
$$ds^2=a(\eta)^2\left(-d\eta^2+d\vec{x}^2\right)\,,$$
which indicates that flat FLRW spacetimes are {\em conformally flat}. The conformal time is a convenient time variable because one can eliminate terms which are first order in $\partial_t$ in the Klein--Gordon equation on FLRW by passing to $\eta$ as we shall see. In the following, we shall use $\dot{\;}$ and $^\prime$ to indicate derivatives with respect to $t$ and $\eta$ respectively.

The functional behaviour of the scale factor $a$ describes the `history' of our universe, which, according to General Relativity, is completely determined by the specification of the matter--energy  content of our universe in terms of the stress--energy tensor $T_{\mu\nu}$ and its coupling to gravity via the Einstein equation
\beq
\label{eq_EinsteinEquation}
G_{\mu\nu}=8\pi G T_{\mu\nu}\,.
\eeq
In order to obtain flat FLRW spacetimes as solutions of this equation, $T_{\mu\nu}$ must have the have the form of a stress--energy tensor of a {\em perfect fluid}\index{perfect fluid}, namely, in comoving coordinates $(t,\vec{x})$,
\beq
\label{eq_StressTensorPerfectFluid}T_{\mu\nu}=\mathrm{diag}(\varrho,p,p,p)
.\eeq
In \eqref{eq_StressTensorPerfectFluid}, the {\em energy density}\index{energy density} $\varrho$ and the {\em pressure}\index{pressure} $p$ of matter-energy are related by the {\em equation of state}\index{equation of state} \beq
\label{eq_EquationOfState}p=p(\rho)=w(\varrho)\varrho\,,
\eeq
where the last form of the equation of state is convenient because in cosmology $w(\varrho)$ is often a constant. 

In fact, according to the Standard Model of Cosmology\index{Standard Model of Cosmology} -- the $\Lambda$CDM--model\index{LCDM@$\Lambda$CDM--model} -- our universe contains matter, radiation, and Dark Energy, whose combined energy density determines the expansion of the universe. In the $\Lambda$CDM--model, these three kinds of matter--energy are modelled macroscopically by perfect fluids with equations of state $p = w \rho$, $w=0, \frac13, -1$ for matter, radiation and Dark Energy (assuming that the latter is just due to a cosmological constant), respectively. In the context of cosmology, the terms `matter' and `radiation' subsume all matter--energy with the respective macroscopic equation of state such that e.g. `radiation' does not comprise only electromagnetic radiation, but also the three left--handed neutrinos present in Standard Model of Particle Physics (SM) and possibly so--called Dark Radiation,  and `matter' includes both the baryonic matter which is well--understood in the SM and Dark Matter. Here, Dark Matter and Dark Radiation both quantify contributions to the macroscopic matter and radiation energy densities which exceed the ones expected from our knowledge of the SM and are believed to originate either from particles respectively fields not present in the SM or from geometric effects such as modifications of General Relativity. Observations indicate that the current matter--energy content of the universe is composed of roughly $30\%$ matter and $70\%$ Dark Energy, while the relative contribution of radiation is only of order $10^{-4}$, see e.g. \cite{chap3_Ade:2013zuv} for the exact numbers from the 2013 data release of the Planck satellite. The measured $w$ of Dark Energy is in good agreement with a constant $w=-1$ \cite{chap3_Ade:2013zuv}, where in this context Dark Energy is mostly just taken to be all matter--energy which does not behave macroscopically like matter or radiation.

In flat FLRW spacetimes, the Einstein equation reduces to the {\em Friedmann equations}\index{Friedmann equations}
\beq
\label{eq_FriedmannEquations}
H^2=\frac{8\pi G}{3}\varrho\,,\qquad \frac{\ddot{a}}{a}=-\frac{4\pi G}{3}(\varrho+3p)\,.\eeq
The Einstein equation is only consistent if the stress--energy tensor $T_{\mu\nu}$ is covariantly conserved $\nabla^\mu T_{\mu\nu}=0$. In (not necessarily flat) FLRW spacetimes, the covariant conservation of the stress-energy tensor implies
\beq
\label{eq_ConservationPerfectFluid}
\dot{\varrho}+3H(\varrho+p)=0\,,
\eeq
and this equation can be obtained directly from the Friedmann equations since $\nabla^\mu T_{\mu\nu}=0$ has been implicitly assumed in their derivation. Alternatively, given conservation of $T_{\mu\nu}$, it is sufficient to solve only the first Friedmann equation in \eqref{eq_FriedmannEquations}.

Under the assumption that interactions between the individual matter--energy components are negligible on large scales -- which is well--motivated for redshifts $z<10^9$, cf. Section \ref{sec:cosmo_outline_lcdm} --, the stress--energy tensor of each matter--energy component is conserved on its own and \eqref{eq_ConservationPerfectFluid} implies for the energy densities of radiation, matter and a cosmological constant $\Lambda$ respectively
\beq
\label{eq_ScalingMatter}
\frac{\varrho_\rad}{\varrho_0}=\frac{\Omega_\rad}{a^4}\,,\qquad\frac{\varrho_\mat}{\varrho_0}=\frac{\Omega_\mat}{a^3}\,,\qquad\frac{\varrho_\Lambda}{\varrho_0}=\Omega_\Lambda\,.\qquad
\eeq
Here, by convention, $\varrho_0$ is the energy density of today and thus the constants $\Omega_\rad\simeq 10^{-4}$, $\Omega_\mat\simeq 0.3$, $\Omega_\Lambda\simeq 0.7$ quantify the relative contributions of radiation, matter and the cosmological constant to the present energy density. Clearly, matter, radiation, and the cosmological constant have very different scaling behaviours with respect to $a$. The first Friedmann equation implies that, if $\varrho>0$ for all times and $\dot{a}>0$ at one instant of time, then $a$ will be strictly increasing for all times. Consequently, if we consider the present matter--energy content described above and assume that Dark Energy is a cosmological constant, then our universe must have had two phases of evolution preceding the present era dominated by Dark Energy: a phase where radiation has determined the behaviour of $a$ followed by a matter--dominated era. This motivates examining the solutions of the Friedmann equations separately for each matter--energy component and one finds
\beq
\label{eq_SolutionFriedmann}
a_\rad(t)\propto (t-t_0)^{\frac12}\,,\qquad a_\mat(t)\propto (t-t_0)^{\frac23}\,,\qquad  a_\Lambda(t)\propto e^{\sqrt{\frac{\Lambda}{3}}t}\,.
\eeq

The outcome of the preceding discussion is that, under the mentioned assumptions, our universe must have inevitably faced a {\em Big Bang}\index{Big Bang} at some point of time in the past, {\em  i.e.} there has been a $t_0>-\infty$ with $a(t_0)=0$ and we shall set $t_0=0$ in the following. Note that the occurrence of a Big Bang follows already from the second Friedmann equation and the assumptions $\dot{a}>0$, $\varrho>0$, since then $\varrho+3p>0$ and therefore $\ddot{a}<0$ if we take into account the sum of all three matter--energy constituents in the radiation--dominated era at early times.

The Big Bang scenario is known to have (at least) one hitch, usually termed {\em horizon problem}\index{horizon problem}. We refer to e.g. \cite{chap3_Dodelson,chap3_Ellis,chap3_Mukhanov:2005sc} for a quantitative discussion of this issue and only consider its qualitative aspects here. To wit, the isotropy of the temperature of the CMB radiation entails that the so--called {\em last scattering surface}, i.e. the region from where the CMB photons we see today have been emitted, must have lied in the forward lightcone of some event responsible for the thermal equilibrium of such region. The size of the last scattering surface is the radius $r_\text{em}$ of our past lightcone at the time $t_\text{em}$ of CMB photon emission, namely, the speed of light times the conformal time difference $\eta(t_\text{now})-\eta(t_\text{em})$. The isotropy of the CMB therefore entails that the following inequality must hold
\beq\label{eq_HorizonProblem}\eta(t_\text{em})-\eta(0)=\int\limits^{t_\text{em}}_{0}\frac{d\tilde{t}}{a(\tilde{t})}\ge r_\text{em}\eeq
and one can compute that this is not the case in the standard model of cosmology; this is precisely the horizon problem. A prominent possibility to solve the horizon problem is {\em Inflation}\index{Inflation}, see for instance \cite{chap3_Dodelson, chap3_Ellis, chap3_Mukhanov:2005sc, chap3_Straumann:2005mz}. In this scenario, one usually assumes that, in the very early universe, there has been an additional matter--energy component mimicking a large cosmological constant and thus leading to phase of exponential expansion. Inserting this assumption into \eqref{eq_HorizonProblem} leads to a large negative $\eta(0)$ and therefore allows for \eqref{eq_HorizonProblem} to be fulfilled. In the simplest models of Inflation the matter--energy component responsible for the exponential expansion is a classical scalar field with a suitable potential.

\section{The Cosmological Expansion in QFT on Curved Spacetimes}
\label{sec:lcdmqft}

Following the program outlined in Section \ref{sec:cosmo_outline_lcdm}, we shall now explain how the post--BBN cosmological evolution can be understood within QFT on curved spacetimes. In particular, we shall argue that there exist phenomenologically well--motivated Hadamard states for free quantum fields on FLRW spacetimes which approximately solve the semiclassical Einstein equation to a good degree in such a way that the energy density in these states and on the spacetime provided by the solution of the semiclassical equation is qualitatively and quantitatively of the form assumed in the $\Lambda$CDM--model up to small corrections whose possible interpretation we shall discuss. As argued in Section \ref{sec:cosmo_outline_lcdm}, we shall make the simplified assumption that matter and radiation are microscopically modelled by a pair of conformally coupled free neutral scalar fields which are massive and massless respectively.

\subsection{The Renormalisation Freedom of the Quantum Stress--Energy Tensor in the Context of Cosmology}
\label{sec_freedomcosmo}

We start our analysis by analysing the renormalisation freedom of the quantum stress--energy tensor in the cosmological context. To this avail, we recall the discussion in Section \ref{sec:stresstensor_see}. We consider a Hadamard state $\omega$ whose two--point function $\omega_2$ is (locally) of the form $\omega_2(x,y) = H(x,y) + \frac{w(x,y)}{8\pi^2}$, where $H$ is the purely geometric, singular and state--independent part and $w$ encodes the smooth and state--dependent part in \eqref{def_HadamardFormScalar}. The expectation value of $\wick{T_{\mu\nu}(x)}$ in the state $\omega$ reads
\beq\label{eq_deftmunu}
\omega\left(\wick{T_{\mu\nu}}\right)=\frac{[D_{\mu\nu}w]}{8\pi^2}+\alpha_1 m^4 g_{\mu\nu} + \alpha_2 m^2 G_{\mu\nu} + \alpha_3 I_{\mu\nu} + \alpha_4 J_{\mu\nu}\,,
\eeq
where $D_{\mu\nu}$ is the bidifferential operator
\begin{align}\label{eq_ImprovedKGDiffCOsmo}D_{\mu\nu}&=(1-2\xi)g^{\nu^\prime}_\nu\nabla_{\mu}\nabla_{\nu^\prime}-2\xi\nabla_{\mu}\nabla_\nu+\xi G_{\mu\nu}\notag\\&\quad+g_{\mu\nu}\left\{2\xi \square_x+\left(2\xi-\frac12\right)g^{\rho^\prime}_\rho\nabla^\rho\nabla_{\rho^\prime}-\frac12 m^2\right\}+\frac13 g_{\mu\nu}P_x\,,\notag
\end{align}
the bracket $[\cdot]$ denotes the coinciding point limit, $I_\mu$ and $J_\mu$ are conserved local curvature tensors \eqref{eq_ConservedLocal} and $\alpha_i$ are real dimensionless constants which are analytic functions of the coupling to the scalar curvature $\xi$. One should think of the $\alpha_i$ as encoding the {\em combined} renormalisation freedom of all quantum matter fields. In our model, only the massive scalar field `contributes to' $\alpha_1$ and $\alpha_2$, whereas $\alpha_3$ and $\alpha_4$ encode the combined freedom of the massive and massless conformally coupled scalar fields.

The parameters $\alpha_i$ are free parameters of the theory which are independent of the spacetime $(M,g)$ and can in principle be fixed by experiment, just like the mass $m$. The general physical interpretation of the occurrence of these a priori undetermined parameters is as follows. In usual particle physics experiments we always measure the {\em difference} of the expectation value of $\wick{T_{\mu\nu}}$ in two states, e.g. the vacuum and a many--particle state. However, gravity is sensitive to the {\em absolute} value of $\omega\left(\wick{T_{\mu\nu}}\right)$, thus the unambiguous specification of $\omega\left(\wick{T_{\mu\nu}}\right)$ would correspond to a specification of a `zero point' in the absolute energy scale, but this is impossible within quantum field theory in curved spacetime.

$\alpha_1$ and $\alpha_2$ can be interpreted as a renormalisation of the cosmological constant and a renormalisation of Newton's constant, respectively. In the following we will take the point of view that $\alpha_2$ is not a free parameter because Newton's constant has been measured already. In order to do this, we have to fix a value for the length scale $\lambda$ in the Hadamard parametrix $H$ \eqref{def_HadamardFormScalar} as the Definition of the smooth part $w$ and thus the expression $[D_{\mu\nu}w]$ depend on $\lambda$, cf. Theorem \ref{thm_TensorScalar} and in particular \eqref{eq_lchange}. We do this by confining $\lambda$ to be a scale in the range in which the strength of gravity has been measured. Because of the smallness of the Planck length, the actual value of $\lambda$ in this range does not matter as changing $\lambda$ in this interval gives a negligible contribution to $\omega\left(\wick{T_{\mu\nu}}\right)$. Moreover, in the case of conformal coupling $\xi=\frac16$, which we shall assume most of the time, $\alpha_2$ is independent of $\lambda$ as one can infer from \eqref{eq_lchange}. One could also take a more conservative point of view and consider $\alpha_2$ to be a free parameter, in this case comparison with cosmological data, e.g. from Big Bang Nucleosynthesis, would presumably constrain $\alpha_2$ to be very small once $\lambda$ is in the discussed range. Omitting the freedom parametrised by $\alpha_2$, we are left with three free parameters in the definition of $\omega\left(\wick{T_{\mu\nu}}\right)$. One of them corresponds to the cosmological constant which is already a free parameter in classical General Relativity, whereas the other two parameters do not appear there and thus will by themselves lead to an extension of the $\Lambda$CDM--model.

At this point, we would like to highlight the point of view on the so--called {\em cosmological constant problem}\index{cosmological constant problem} taken in this work, as well as in most works on algebraic QFT on curved spacetimes and e.g. the review \cite{chap3_Bianchi:2010uw}. It is often said that QFT {\em predicts} a value for the cosmological constant which is way too large in comparison to the one measured. This conclusion is reached by computing one or several contributions to the vacuum energy density in Minkowski spacetime $\Lambda_\text{vac}$ and finding them all to be too large, such that, at best, a fine--tuned subtraction in terms of a negative bare cosmological constant $\Lambda_\text{bare}$ is necessary in order to obtain the small value $\Lambda_\text{vac}+\Lambda_\text{bare}$ we observe. In this work, we assume as already mentioned the point of view that it is not possible to provide an {\em  absolute} definition of energy density within QFT on curved spacetimes, and thus neither $\Lambda_\text{vac}$ nor $\Lambda_\text{bare}$ have any physical meaning by themselves; only $\Lambda_\text{vac}+\Lambda_\text{bare}$ is physical and measurable and any cancellation which happens in this sum is purely mathematical. The fact that the magnitude of $\Lambda_\text{vac}$ depends on the way it is computed, e.g. the loop or perturbation order, cf. e.g. \cite{chap3_Sola:2013gha}, is considered to be unnatural following the usual intuition from QFT on flat spacetime. However, it seems more convincing to us to accept that $\Lambda_\text{vac}$ and $\Lambda_\text{bare}$ have no relevance on their own, which does not lead to any contradiction between theory and observations, rather than the opposite.

In the recent work \cite{chap3_Holland:2013xya} it is argued that a partial and unambiguous relevance can be attributed to $\Lambda_\text{vac}$ by demanding $\Lambda_\text{bare}$ to be analytic in all coupling constants, as required by the axioms for local and covariant Wick polynomials reviewed in Section \ref{sec:wick}; taking this point of view, one could give the contribution to $\Lambda_\text{vac}$ which is non--analytic in these constants an unambiguous meaning. Indeed the authors of \cite{chap3_Holland:2013xya} compute a non--perturbative and hence non--analytic contribution to $\Lambda_\text{vac}$ which turns out to be small. In the view of this, one could reformulate our statement in the above paragraph and say that contributions to $\Lambda_\text{vac}$ and $\Lambda_\text{bare}$ which are analytic in masses and coupling constants have no physical relevance on their own.

\subsection{States of Interest in Cosmological Spacetimes}
\label{sec:cosmostates}

We shall now discuss Hadamard states of interest in flat FLRW spacetimes. As argued in Section \ref{sec:cosmo_outline_lcdm}, in the context of describing the cosmological expansion within QFT on curved spacetimes we are interested in Hadamard states which may be phenomenologically interpreted as generalised thermal excitations of generalised vacuum states.

The spatial translation and rotation invariance of flat FLRW spacetimes allows to give a Fourier decomposition of solutions of the Klein--Gordon equation in terms of {\em modes}. To this end, we recall that the metric $g$ of a flat FLRW spacetime is conformally related to the Minkowski metric $g_0$ via $g=a^2 g_0$. This implies that the Klein--Gordon operator on FLRW spacetimes can be written as
$$P=-\Box+\xi R+m^2=\frac{1}{a^3}\left(\pa^2_\eta-\vec{\nabla}^2+a^2\left(\xi-\frac16\right)R+a^2m^2\right)a\,,$$
where $\vec{\nabla}$ denotes the gradient with respect to the (comoving) spatial coordinates. We thus see that scalar field in a FLRW spacetime is conformally equivalent to a scalar field with time--varying mass in Minkowski spacetimes. This leads us to define a mode solution $\phi_{\vec k}$ of $P\phi=0$ as
\beq
\label{eq_DefinitionModeScalar}
\phi_{\vec k}(\eta,\vec{x})\doteq \frac{\chi_k(\eta) e^{i\vec{k}\vec{x}}}{(2\pi)^{\frac32}a(\eta)}\,,
\eeq
where $k\doteq|\vec{k}|$ and the temporal mode $\chi_k(\eta)$ is a solution of the ordinary differential equation
\beq\label{eq_TimeODEScalar}\left(\pa^2_\eta+k^2+a(\eta)^2\left(\xi-\frac16\right)R(\eta)+a(\eta)^2m^2\right)\chi_k(\eta)=0\,,\eeq
which depends on $\vec k$ only via $k$. For each $k$, the solution space of this equation is two--dimensional and can be parametrised without loss of generality by any solution $\chi_k(\eta)$ and its complex conjugate $\overline{\chi_k(\eta)}$ which satisfies the normalisation condition 
\beq\label{eq_NormalisationScalarMode}\overline{\chi_k(\eta)}\pa_\eta \chi_k(\eta)-\chi_k(\eta)\pa_\eta\overline{\chi_k(\eta)}\equiv i\,.\eeq
Note that this condition is well--defined because the {\em Wronskian} on the right hand side of \eqref{eq_NormalisationScalarMode} is constant in $\eta$ on account of \eqref{eq_TimeODEScalar} and furthermore purely imaginary. Given a family of temporal modes $\chi_k$ which satisfy \eqref{eq_NormalisationScalarMode}, one can define a new family by a {\em Bogoliubov transformation}
\beq\label{eq_bolog}
\widetilde {\chi_k}= \lambda(k)\chi_k + \mu(k) \overline{\chi_k}\,,\qquad |\lambda(k)|^2-|\mu(k)|^2=1
\eeq
where the {\em Bogoliubov coefficients}\index{Bogoliubov coefficients} $\lambda(k)$ and $\mu(k)$ have to satisfy the normalisation condition above in order for \eqref{eq_NormalisationScalarMode} to hold. 

The normalisation condition \eqref{eq_NormalisationScalarMode} is chosen in such a way that a {\em pure} and {\em Gaussian} isotropic and homogeneous state for the Klein--Gordon field on a flat FLRW spacetime (satisfying mild regularity conditions) is determined by a two-point correlation function of the form \cite{chap3_Lueders:1990np}
$$\omega_2(x,y)=\lim_{\epsilon\downarrow 0}\frac{1}{8\pi^3 a(\eta_x)a(\eta_y)}\int\limits_{\mathbb{R}^3} d{\vec k}\, \chi_k(\eta_x)\overline{\chi_k(\eta_y)}e^{i\vec{k}(\vec{x}-\vec{y})}e^{-\epsilon k}\,.$$
In particular, we recall that, by Theorem \ref{thm_covariantequaltime}, the causal propagator $E(x,y)=i^{-1}(\omega_2(x,y)-\omega_2(y,x))$ satisfies 
$$
\nabla^x_N E(x,y)|_{\Sigma\times \Sigma}=\delta_\Sigma(\vec{x},\vec{y})
$$
for any Cauchy surface $\Sigma$ of the spacetime $(M,g)$ with future pointing normal vector field $N$, where $\delta_\Sigma(\vec{x},\vec{y})$ is the $\delta$--distribution with respect to the measure $d\Sigma$ on $\Sigma$ induced by the covariant volume measure $\vol$. Observing that $\{\eta\}\times \bR^3$ is a Cauchy surface with $N=a(\eta)^{-1}\partial_\eta$ and $d\Sigma = a(\eta)^3 d\vec{x}$, \eqref{eq_NormalisationScalarMode} ensures that the causal propagator satisfies
$$
\frac{1}{a(\eta_x)}\partial_{\eta_x}E(x,y)|_{\eta_x=\eta_y} = \frac{1}{a(\eta_x)^3}\delta(\vec{x}-\vec{y})\,.
$$

Choosing a pure, Gaussian, homogeneous and isotropic state $\omega$ of the quantized free Klein--Gordon field on a spatially flat FLRW spacetime amounts to choosing a solution of \eqref{eq_TimeODEScalar} and \eqref{eq_NormalisationScalarMode} for each $k$. In order for $\omega$ to be a Hadamard state, the temporal modes $\chi_k$ have to satisfy certain conditions in the limit of large $k$ which are difficult to formulate precisely. Heuristically, a necessary but not sufficient condition is that the dominant part of $\chi_k$ for large $k$, when the mass and curvature terms in \eqref{eq_TimeODEScalar} are dominated by $k^2$, is $\frac{1}{\sqrt{2k}} e^{-ik\eta}$, i.e. a {\em positive frequency} solution. A Bogoliubov transformation \eqref{eq_bolog} of a Hadamard state defined by a family of modes $\chi_k$ is Hadamard if and only if $\mu(k)$ is rapidly decreasing in $k$ \cite{chap3_Pinamonti:2010is,chap3_Zschoche:2013ola}.

A particular class of states often discussed in the literature are the {\em  adiabatic states} introduced in \cite{chap3_Parker:1969au}. They are specified by modes of the form
\beq\label{eq_adiabatic}
\chi_k(\eta)=\frac{1}{\sqrt{2\Omega(k, \eta)}}\exp\left(-i \int^\eta_{\eta_0}\Omega(k,\tilde\eta) d\tilde\eta\right),
\eeq
where $\Omega(k,\eta)$ solves a non--linear differential equation in $\eta$ obtained by inserting this ansatz into \eqref{eq_TimeODEScalar} and finding
$$\Omega(k,\eta)^2=f(\Omega(k,\eta)^{\prime\prime}, \Omega(k,\eta)^\prime, \Omega(k,\eta), a(\eta))$$
for a suitable function $f$. While this ansatz in principle holds for any state, the adiabatic states are specified by solving the differential equation for $\Omega(k,\eta)$ iteratively as
$$\Omega_{n+1}(k,\eta)^2\doteq f(\Omega_n(k,\eta)^{\prime\prime}, \Omega_n(k,\eta)^\prime, \Omega_n(k,\eta), a(\eta))$$
starting from $\Omega_0(k,\eta)=\sqrt{k^2+m^2a^2 + \left(\xi-\frac16\right)R a^2}$. Truncating this iteration after $n$ steps defines the adiabatic states of order $n$. Note that, while the resulting modes satisfy the normalisation condition \eqref{eq_NormalisationScalarMode} exactly, they satisfy \eqref{eq_TimeODEScalar} only up to terms which vanish in the limit of constant $a$ or of infinite $k$ and/or $m$. Thus they constitute only approximate states. This can be cured by using the adiabatic modes of order $n$ only for the specification of the initial conditions for exact solutions of \eqref{eq_TimeODEScalar}, see \cite{chap3_Lueders:1990np}. Regarding the regularity properties of such defined `proper' adiabatic states, it has been shown in \cite{chap3_Junker} (for spacetimes with compact spatial sections) that they are in general not as UV--regular as Hadamard states, but that they approach the UV--regularity of Hadamard states in a certain microlocal sense in the limit of large $n$. Consequently, while Hadamard states have sufficient regularity in order to compute products and expectation values of Wick polynomials with arbitrarily many derivatives, one has to consider adiabatic states of higher and higher order when dealing with Wick polynomials with an increasing number of derivatives, which is conceptually unsatisfactory. In the following we shall often use the `improper' adiabatic modes of order $0$, $\chi_{0,k}(\eta)\doteq\exp(-i\int^\eta_{\eta_0}\Omega_0(k,\tilde\eta) d\tilde\eta))/\sqrt{2 \Omega_0(k,\eta)}$. Adiabatic states have also been constructed for Dirac fields, see \cite{chap3_Hollands:1999fc,chap3_Landete:2013lpa}, and general curved spacetimes \cite{chap3_Hollands:1999fc,chap3_Junker}. 

A further class of states of interest in cosmology, and in fact our candidates for generalised vacuum states, are the {\em  states of low energy}\index{states of low energy} (SLE) introduced in \cite{chap3_Olbermann:2007gn}, motivated by results of \cite{chap3_Fewster:1999gj}. These states are defined by minimising the energy density per mode $\varrho_k$
$$\varrho_k\doteq\frac{1}{16 a^4 \pi^3}\left(|\chi_k^\prime|^2+\left(6\xi -1\right)aH\left(|\chi_k|^2\right)^\prime+\left(k^2+m^2a^2-\left(6\xi -1\right)H^2a^2\right)|\chi_k|^2\right)$$
integrated in (cosmological) time with a sampling function $f$ and thus loosely speaking minimise the energy in the time interval where the sampling function is supported. The minimisation is performed by choosing arbitrary basis modes $\chi_k$ and then determining the Bogoliubov coefficients $\lambda(k)$, $\mu(k)$ with respect to these modes, such that (for $\xi\in[0,\frac16]$) the resulting modes of the state of low energy are 
$$\chi_{f,k}=\lambda(k) \chi_k+\mu(k) \overline{\chi_k}$$
with
\begin{align*}
\lambda(k)\doteq e^{i(\pi -\arg c_2(k))}\sqrt{\frac{c_1(k)}{\sqrt{c_1(k)^2-|c_2(k)|^2}}+\frac12}\qquad\mu(k) \doteq \sqrt{|\lambda(k)|^2-1}
\end{align*}
\begin{align*}
c_1(k)&\doteq \frac{1}{2}\int dt f(t) \frac{1}{a^4}\left(|\chi_k^\prime|^2+\left(6\xi -1\right)aH\left(|\chi_k|^2\right)^\prime\right.\\
&\left.\qquad\qquad  +\left(k^2+m^2a^2-\left(6\xi -1\right)H^2a^2\right)|\chi_k|^2\right)
\end{align*}
\begin{align*}
c_2(k)&\doteq \frac{1}{2}\int dt f(t) \frac{1}{a^4}\left({\chi_k^\prime}^2+\left(6\xi -1\right)aH\left(\chi_k^2\right)^\prime\right.\\
&\left.\qquad\qquad  +\left(k^2+m^2a^2-\left(6\xi -1\right)H^2a^2\right)\chi_k^2\right)\,.
\end{align*}
\cite{chap3_Olbermann:2007gn} only discusses the case of minimal coupling, i.e. $\xi=0$ and proves that the corresponding SLE satisfy the Hadamard condition for sampling functions $f$ which are smooth and of compact support in time. However, we shall use these states for the case of conformal coupling $\xi=\frac16$, and, although we do not prove that they satisfy the Hadamard condition, we shall find them to be at least regular enough for computing the energy density. Moreover, it is not difficult to see that the SLE construction yields the {\em  conformal vacuum}\index{conformal vacuum}
$$\chi_{f,k}(\eta)=\frac{1}{\sqrt{2k}}e^{-ik\eta}\,,$$
and thus a Hadamard state \cite{chap3_Pinamonti:2008cx}, for all sampling functions $f$ in the massless case. This demonstrates both that the SLE construction for $\xi=\frac16$ yields Hadamard states at least in special cases and that states of low energy deserve to be considered as generalised vacuum states on curved spacetimes. In \cite{chap3_Daniel} it is conjectured that states of low energy in fact only exist for $\xi\in[0,\frac16]$ and that they are Hadamard in all these cases. The SLE construction has recently been generalised to spacetimes with less symmetry in \cite{chap3_Them:2013uka}.

A conceptual advantage of states of low energy is the fact that they can be consistently defined an all FLRW spacetimes at once just by specifying the sampling function $f$ once and for all (with respect to e.g. cosmological time and a fixed origin of the time axis). Thus, they solve the conceptual problem mentioned in Section \ref{sec:see}, namely the necessity to specify a state in a way which does not depend on the spacetime in order for the semiclassical Einstein equation to be a priori well--defined. Moreover, in  \cite{chap3_Degner} it has been proven that, on spacetimes which are asymptotically de Sitter towards $\eta\to-\infty$, every state of low energy converges to the Bunch--Davies state (the unique maximally symmetric Hadamard state) upon sending the support of $f(t(\eta))$ in $\eta$ to negative infinity. This is a rigorous variant of the statement that every state on de Sitter spacetimes converges to the Bunch--Davies state for positive asymptotic times.

We now proceed to construct the anticipated generalised thermal states on the basis of states of low energy. To this avail, we recall a result of \cite{chap3_Dappiaggi:2010gt}: given a pure, isotropic and homogeneous state, i.e. a set of modes $\chi_k$, one can construct generalised thermal states with a two--point correlation function of the form
\beq\label{eq_thermalfourier}
\omega_2(x,y)=\frac{1}{8\pi^3 a(\eta_x)a(\eta_y)}\int\limits_{\mathbb{R}^3} d{\vec k}\,e^{i\vec{k}(\vec{x}-\vec{y})}\left(\frac{\chi_k(\eta_x)\overline{\chi_k(\eta_y)}}{1-e^{-\beta k_0}}+\frac{\overline{\chi_k(\eta_x)}\chi_k(\eta_y)}{e^{\beta k_0}-1}\right)\,,
\eeq
with
$$k_0\doteq\sqrt{k^2+m^2a^2_F}\,.$$
It has been shown in \cite{chap3_Dappiaggi:2010gt} that for the case of conformal coupling, special FLRW spacetimes and particular generalised vacuum modes $\chi_k$ on these spacetimes, these states satisfy certain generalised thermodynamic laws and the Hadamard condition, and one can show that they satisfy the Hadamard condition on general FLRW spacetimes if the pure state specified by $\chi_k$ is already a Hadamard state by using results of \cite{chap3_Pinamonti:2010is,chap3_Zschoche:2013ola}. Essentially, this follows from the fact that the difference of the momentum space integrand in \eqref{eq_thermalfourier} and the corresponding integrand of the pure state induced by $\chi_k$ is rapidly decreasing for large $k$.

We shall assume in the following that the quantum fields in our model are in a generalised thermal state of the form as above, with generalised vacuum modes $\chi_k$ specified by a state of low energy with suitable sampling function $f$. If $m>0$, the phenomenological interpretation of these states is that they are the quantum state of a massive field which has been in thermal equilibrium in the hot early universe and has departed from this equilibrium at the `freeze--out time' $a=a_F$ and `freeze--out temperature' $T_F=a_F/\beta$. In the massless conformally coupled case, these states are just conformal rescalings of the thermal equilibrium (KMS) state with temperature $1/\beta$ in Minkowski spacetime\index{conformal KMS state}. 

The generalised thermal states we use here have also been constructed and analysed for Dirac fields, see \cite{chap3_Dappiaggi:2010gt}. Moreover, we would like to mention that several definitions of generalised thermal states on curved spacetimes have been proposed so far, including {\em  almost equilibrium states} \cite{chap3_Kusku:2008zz} and {\em  local thermal equilibrium states} \cite{chap3_Buchholz:2006iv,chap3_Schlemmer,chap3_VerchRegensburg}. A comparison of these different proposals in the context of cosmological applications would certainly be interesting, but is beyond the scope of this work.

\subsection{Setup and (Computational) Strategy for Approximately Solving the Semiclassical Einstein Equation}
\label{sec:comp_energydensity}

We now approach the first main result of Section \ref{sec:lcdmqft}, a demonstration that the energy density in the $\Lambda$CDM--model can be reproduced from first principles within quantum field theory in curved spacetime. To this avail, we consider the following setup: we model radiation by a conformally coupled massless scalar quantum field and matter by a conformally coupled massive scalar quantum field. We choose the conformal coupling also for the massive scalar field because this considerably simplifies analytical computations and we also found numerical computations to be more stable with this value of non--minimal coupling to the curvature.  Moreover, both quantum fields are assumed to be in generalised thermal equilibrium states as introduced in the previous section, where the state and field parameters $\beta$ (possibly different values for the two quantum fields), $m$ and $a_F$, as well as the sampling functions $f$ determining the generalised vacuum states of the two fields, are considered to be undetermined for the time being. Let us stress once more that there is no principal obstruction for formulating this model with more realistic quantum fields of higher spin, we just consider scalar quantum fields for simplicity and ease of presentation.

An exact computation of the energy density of the two quantum fields in the generalised thermal states would require to solve the coupled system -- the so-called {\em  backreaction problem} -- consisting of the quantum fields propagating on a FLRW spacetime, which in turn is a solution of the {\em  semiclassical Friedmann equation}\index{semiclassical Friedmann equation}
\begin{equation}
H^2 = \frac{8 \pi G}{3} \left(\varrho^0 + \varrho^m\right)\,,
\end{equation}
where $\varrho^m\doteq\omega^m( \wick{T^m_{00}})$, $\varrho^0\doteq\omega^0( \wick{T^0_{00}})$ are the energy densities of the two quantum fields in the respective generalised thermal states and the $00$-component of the stress--energy tensor is here taken with respect to cosmological time $t$. An exact solution of the backreaction problem is quite involved, as it requires solving simultaneously the mode equation \eqref{eq_TimeODEScalar} for all $k$ and the semiclassical Friedmann equation. Notwithstanding, there have been quantitative numerical treatments of the backreaction problem, see e.g. \cite{chap3_Anderson1, chap3_Anderson:1984jf, chap3_Anderson:1985cw, chap3_Anderson:1985ds, chap3_Baacke:2010bm}, as well as numerous qualitative treatments including \cite{chap3_Eltzner:2010nx}, where the backreaction problem in FLRW spacetimes is set up in full generality from the point of view of the algebraic approach to QFT on curved spacetimes, \cite{chap3_DFP}, where the same point of view is considered and the coupled system is solved exactly for conformally coupled massless scalar quantum fields and approximately for massive ones, and \cite{chap3_Pinamonti:2008cx,chap3_Pinamonti:2013wya}, where exact solutions of the backreaction problem are shown to exist.

However, in this work we follow a simplified strategy in order to avoid solving the full backreaction problem, which is justified in view of our aim. We assume that the two quantum fields in our model are propagating on a FLRW spacetime which is an exact solution of the Friedmann equation in the $\Lambda$CDM--model, i.e.
\begin{equation}\label{eq_friedmannlcdm}
\frac{H^2}{H^2_0}=\frac{\varrho_{\text{$\Lambda$CDM}}}{\varrho_0}=\Omega_\Lambda + \frac{\Omega_m}{a^3} + \frac{\Omega_r}{a^4}\,,
\end{equation}
where $H_0\simeq 10^{-33}$eV denotes the Hubble rate of today, the so--called {\em  Hubble constant}\index{Hubble constant}, $\varrho_0\simeq 10^{-11}$eV$^4$ is the energy density of today and $\Omega_\Lambda$, $\Omega_\rad$ and $\Omega_\mat$ denote respectively the present--day fraction of the total energy density contributed by the cosmological constant, matter and radiation, cf. Section \ref{sec:cosmobrief}. For definiteness we consider the sample values $\Omega_\mat=0.30$, $\Omega_\rad = 10^{-4}$, $\Omega_\Lambda=1-\Omega_\mat-\Omega_\rad$, rather than currently measured values from e.g. the Planck collaboration \cite{chap3_Ade:2013zuv}, because the exact values are not essential for our results. Given this background spacetime, we aim to prove that the field and state parameters of our model, as well as the SLE sampling functions, can be adjusted in such a way that the energy density of the quantum fields in our model matches the one in the $\Lambda$CDM--model up to negligible corrections for all redshifts $z\in[0,10^9]$, i.e.
$$\frac{\varrho^0 + \varrho^m}{\varrho_0}\simeq\Omega_\Lambda + \frac{\Omega_m}{a^3} + \frac{\Omega_r}{a^4}=\frac{\varrho_\text{$\Lambda$CDM}}{\varrho_0}\,.$$
Once we succeed to obtain this result, we have clearly solved the full coupled system in an approximative sense to a good accuracy in particular. We recall that the restriction to post--Big Bang Nucleosynthesis (BBN) redshifts $z\in[0,10^9]$ is motivated by the fact that interactions between matter fields are assumed to have negligible large--scale effects in this period of the cosmological history, cf. Section \ref{sec:cosmo_outline_lcdm}.

In order to compute the quantum energy density $\varrho^0 + \varrho^m$, we start from \eqref{eq_deftmunu} which parametrises the freedom in defining the energy density as an observable and gives a possible `model definition'. The renormalisation freedom for the energy density is readily computed via $g_{00}=-1$, $G_{00}=3H^2$ and
\begin{equation}\label{eq_J00}J_{00}=\frac13 I_{00}=6 \dot{H}^2 - 12 \ddot{H}H- 36 \dot{H} H^2\,.
\end{equation}
In order to compute the energy density for each quantum field following from \eqref{eq_deftmunu}, one has to first subtract the geometric Hadamard parametrix from the two--point function of the given state and then to apply a suitable bidifferential operator followed by taking the coinciding point limit. As the states we consider here are given as integrals over spatial momenta, it seems advisable to try to re--write the geometric Hadamard parametrix also in this form, in order to perform a mode--by--mode subtraction and a momentum space integral afterwards. This is indeed possible, as elaborated in \cite{chap3_DFP, chap3_Pinamonti:2010is, chap3_Eltzner:2010nx, chap3_Schlemmer, chap3_Degner, chap3_Zschoche:2013ola}. The details are quite involved, thus we omit them and present directly the result. To this avail we follow \cite{chap3_Degner}, where results of \cite{chap3_Schlemmer} are used. In \cite{chap3_Degner} only the minimally coupled case $\xi=0$ is discussed, but it is not difficult to generalise the results there to arbitrary $\xi$.

Doing this, we find the following result for the total energy density of the massless and massive conformally coupled scalar fields in the generalised thermal states.
\begin{equation}\label{eq_totaled}
\frac{\varrho^0 + \varrho^m}{\varrho_0}=\frac{\varrho^m_{\text{gvac}}+\varrho^0_{\text{gvac}}+\varrho^m_{\text{gth}}+\varrho^0_{\text{gth}}}{\varrho_0}+\gamma \frac{H^4}{H^4_0}+\Omega_\Lambda+\delta\frac{H^2}{H^2_0}+\epsilon\frac{J_{00}}{H^4_0}
\end{equation}
$$\gamma\doteq\frac{8\pi G H^2_0}{360 \pi^2}\qquad \Omega_\Lambda=\frac{8\pi G \alpha_1}{3 H^2_0}\qquad\delta\doteq\frac{8\pi G \alpha_2}{3 H^2_0}\qquad \epsilon\doteq\frac{8\pi GH^2_0}{3 }(3\alpha_3 + \alpha_4)\,.$$
Here $\Omega_\Lambda$, $\delta$ and $\epsilon$ parametrise the freedom in the definition of the energy density as per \eqref{eq_deftmunu}. The number of free parameters in this equation has been reduced to three, because $I_{\mu\nu}$ and $J_{\mu\nu}$ are proportional in Robertson-Walker spacetimes, cf. \eqref{eq_J00}. As already discussed in Section \ref{sec_freedomcosmo}, we omit the freedom parametrised by $\delta$ in the following, as it renormalises the Newton constant and we consider this to be already given as an external input. For now we will also neglect the contribution parametrised by $\epsilon$, as it turns out to be negligible for $0\le\epsilon\ll1$; we will analyse the influence of this new term, which does not appear in the $\Lambda$CDM--model, separately in the next section. Thus, for the remainder of this section,  $\Omega_\Lambda$ parametrises the residual freedom in the definition of the quantum energy density. 

The term proportional to $\gamma$, which is also not present in  the $\Lambda$CDM--model, is the contribution of the trace anomaly, cf. Section \ref{sec:stresstensor}. This term is fixed by the field content, i.e. by the number and spins of the fields in the model and always proportional to $H^4$, barring contributions proportional to $J_{00}$ which we prefer to subsume in the parameter $\epsilon$. We have given here the value of $\gamma$ for two scalar fields, see Table 1
on page 179 of \cite{chap3_Birrell:1982ix} for the values in case of higher spin. As $\gamma\simeq 10^{-122}$ and $H< H_0 z^2$ in the $\Lambda$CDM--model for large redshifts, this term can be safely neglected for $z<10^9$; we will comment on the relevance of this term for $z\gg 10^9$ in Section \ref{sec:dark}. 

Finally, the remaining terms in \eqref{eq_totaled} denote the genuinely quantum state--dependent contributions to the energy densities of the two quantum fields. We have split these contributions into parts which are already present for infinite inverse temperature parameter $\beta$ in the generalised thermal states, and thus could be considered as contributions due to the generalised vacuum states ($\varrho^m_\text{gvac}$, $\varrho^0_\text{gvac}$), and into the remaining terms, which could be interpreted as purely thermal contributions ($\varrho^m_\text{gth}$, $\varrho^0_\text{gth}$). Note that $\varrho^m_\text{gvac}$, $\varrho^0_\text{gvac}$ are not uniquely defined in this way, but only up to the general renormalisation freedom of the quantum energy density, i.e. one could ``shuffle parts of'' $\Omega_\Lambda$, $\delta$ and $\epsilon$ into e.g. $\varrho^m_\text{gth}$ and vice versa, without changing any physical interpretation of the total energy density. We have have fixed the renormalisation freedom of $\varrho^m_\text{gvac}$ (and thus $\varrho^0_\text{gvac}$ for $m=0$) in such a way that it has a particularly simple form in the conformally coupled case, cf. \eqref{eq_rhofinalconformal}.  With this in mind, the state--dependent contributions read as follows, where the massless case is simply obtained by inserting $m=0$, and we give here the result for arbitrary coupling $\xi$ for completeness.

\begin{align}
\varrho^m_{\text{gvac}} &= \frac{1}{2\pi^2}\int\limits_0^\infty dk k^2\left\{\frac{1}{2a^4}\left(|\chi_k^\prime|^2+\left(6\xi -1\right)aH\left(|\chi_k|^2\right)^\prime\phantom{\left(|\chi_k|^2\right)^\prime}\right.\phantom{\frac{\frac{ 72 H^2 m^2}{ 6\xi -1} }{16 k^3}}\right.\label{eq_rhofinal}\\
&\qquad \left.+\left(k^2+m^2a^2-\left(6\xi -1\right)H^2a^2\right)|\chi_k|^2\right)\notag\\
&\qquad -\frac{k}{2 a^4}-\frac{m^2-H^2(6\xi -1)}{4 a^2 k}\notag\\
&\qquad\left.-\Theta(k-ma)\frac{-m^4 + \left(\xi-\frac16\right)^2\left(-\frac{ 72 H^2 m^2}{ 6\xi -1} - 216 H^2 \dot{H} + 
 36 \dot{H}^2  - 72 H \ddot{H}\right) }{16 k^3}\right\} \notag\\
&\quad-\left(\xi-\frac16\right)\frac{72 H^4  + 72 H^2 \dot{H}  + 
 18 \dot{H}^2   - 216 H^2 \dot{H} (\xi -\frac16)- 
 108 \dot{H}^2 (\xi -\frac16)}{96 \pi^2}\notag\\
&\quad-\frac{1-4 \log 2}{128 \pi^2}m^4
 -\frac{H^2 m^2}{96 \pi^2}\notag\\
\label{eq_rhofinalth}\varrho^m_\text{gth} & = \frac{1}{2\pi^2}\int\limits_0^\infty dk k^2\frac{1}{a^4}\frac{1}{e^{\beta k_0}-1}\left(|\chi_k^\prime|^2+\left(6\xi -1\right)aH\left(|\chi_k|^2\right)^\prime\right.\\
&\left.\phantom{\left(|\chi_k|^2\right)^\prime}+\left(k^2+m^2a^2-\left(6\xi -1\right)H^2a^2\right)|\chi_k|^2\right)\notag
\end{align}
In the conformally coupled case $\xi=\frac16$ $\varrho^m_{\text{gvac}}$ simplifies considerably to
\begin{align}\label{eq_rhofinalconformal}
\varrho^m_\text{gvac}& = \frac{1}{2\pi^2}\int\limits_0^\infty dk k^2\left\{\frac{1}{2a^4}\left(|\chi_k^\prime|^2+\left(k^2+m^2a^2\right)|\chi_k|^2\right)\right.\\
&\qquad\qquad\left.-\left(\frac{k}{2 a^4}+\frac{m^2}{4 a^2 k}-\Theta(k-ma)\frac{m^4}{16 k^3}\right)\right\} -\frac{1-4 \log 2}{128 \pi^2}m^4
 -\frac{H^2 m^2}{96 \pi^2}\notag\\
 &=\frac{1}{2\pi^2}\int\limits_0^\infty dk k^2\frac{1}{2a^4}\left\{\left(|\chi_k^\prime|^2+\left(k^2+m^2a^2\right)|\chi_k|^2\right)\right.\notag\\
&\qquad\qquad -\left. \left(|\chi_{0,k}^\prime|^2+\left(k^2+m^2a^2\right)|\chi_{0,k}|^2\right)\right\}\,,\notag
\end{align}

\noindent where $\chi_{0,k}$ are the adiabatic modes of order 0, cf. the previous section. This implies that the so--called {\em  Hadamard point--splitting regularisation} of the energy density coincides with the so--called {\em  adiabatic regularisation of order zero}\index{adiabatic regularisation} up to the trace anomaly term and terms which can be subsumed under the regularisation freedom. 

\subsection{Computation of the Energy Density}

In the following we shall analyse the individual state--dependent terms in the energy density.

\subsubsection{Computation of $\varrho^m_\text{gvac}$}

Following our general strategy, we first aim to show that in states of low energy  defined by a sampling function of sufficiently large support in time on the FLRW spacetime specified by \eqref{eq_friedmannlcdm}, $\varrho^m_\text{gvac}$ is for all $z\in[0,10^9]$ negligible in comparison to the total energy density in the $\Lambda$CDM--model. Results in this direction have been reported in \cite{chap3_Degner} for the simplified situation of a de Sitter spacetime background  (corresponding to $\Omega_\mat=\Omega_\rad=0$). Here we generalise these results to $\Lambda$CDM--backgrounds. One can easily see that $\varrho^m_\text{gvac}=0$ in the case of $m=0$. For masses in the range of the Hubble constant $m\simeq H_0$ and states of low energy we have performed numerical computations and found $\varrho^m_\text{gvac}/\varrho_\text{$\Lambda$CDM} < 10^{-116}$, see Figures \ref{fig_rhonormsmallz}, \ref{fig_rhonormlargez}. To achieve this result, we have rewritten all expressions in terms of the redshift $z$ as a time variable and solved the equation \eqref{eq_NormalisationScalarMode} with initial conditions at $z=0$ given by the value and derivative of the adiabatic modes of order zero $\chi_{k,0}$ there. Note that a state of low energy does not depend on the choice of a mode basis, but the choice we made seemed to be numerically favoured. To fix the state of low energy, we chose a sampling function which was a symmetric bump function in $z$ supported in the interval $z\in (10^{-2}, 10^{-2}+10^{-4})$ for definiteness. In order to make the numerical computations feasible, we chose a logarithmic sampling of $k$ with $10^3$ sampling points, where the boundaries of the sampling region have been chosen such that the integrand of $\varrho^m_\text{gvac}$, cf. \eqref{eq_rhofinalconformal}, was vanishing in $k$--space to a large numerical accuracy outside of the sampling region for all $z\in[0,10^9]$. We have computed the mode coefficients $c_i(k)$ in the mode basis chosen at each sampling point by a numerical integration in $z$ and finally the energy density by means of a sum over the sampling points in $k$--space. Thus we have approximated the integral in \eqref{eq_rhofinalconformal} by a Riemann sum with logarithmic sampling. As our main aim here is to demonstrate that $\varrho^m_\text{gvac}/\varrho_\text{$\Lambda$CDM}\ll 1$ in general, we have not performed an extensive analysis of the dependence of $\varrho^m_\text{gvac}$ on the width of the sampling function, but we have observed that the maximum amplitude of $\varrho^m_\text{gvac}/\varrho_\text{$\Lambda$CDM}$ seems to be monotonically growing with shrinking width of the sampling function, in accordance with the computations of \cite{chap3_Degner} in de Sitter spacetime.

\begin{figure}[htb]
\includegraphics[width=1\columnwidth]{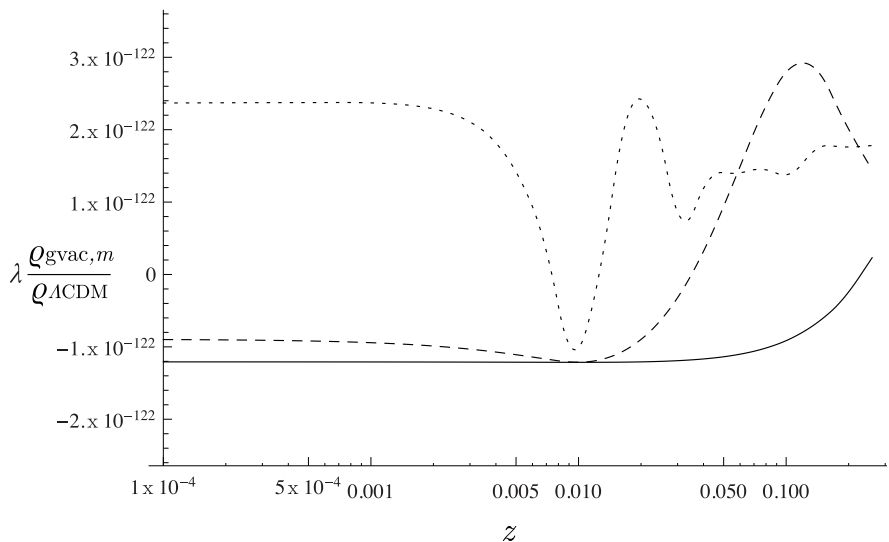}
\caption{\label{fig_rhonormsmallz}$\lambda \varrho^m_\text{gvac}/\varrho_\text{$\Lambda$CDM}$ for $z<1$ for various values of $m$ (rescaled for ease of presentation). The dotted line corresponds to $m=100H_0$ and $\lambda=10^{-2}$, the dashed line to $m=10H_0$ and $\lambda=1$ and the solid line to $m=H_0$ and $\lambda=10^{2}$. One sees nicely how the energy density is minimal in the support of the sampling function at around $z=10^{-2}$.}
\end{figure}

\begin{figure}[htb]
\includegraphics[width=1\columnwidth]{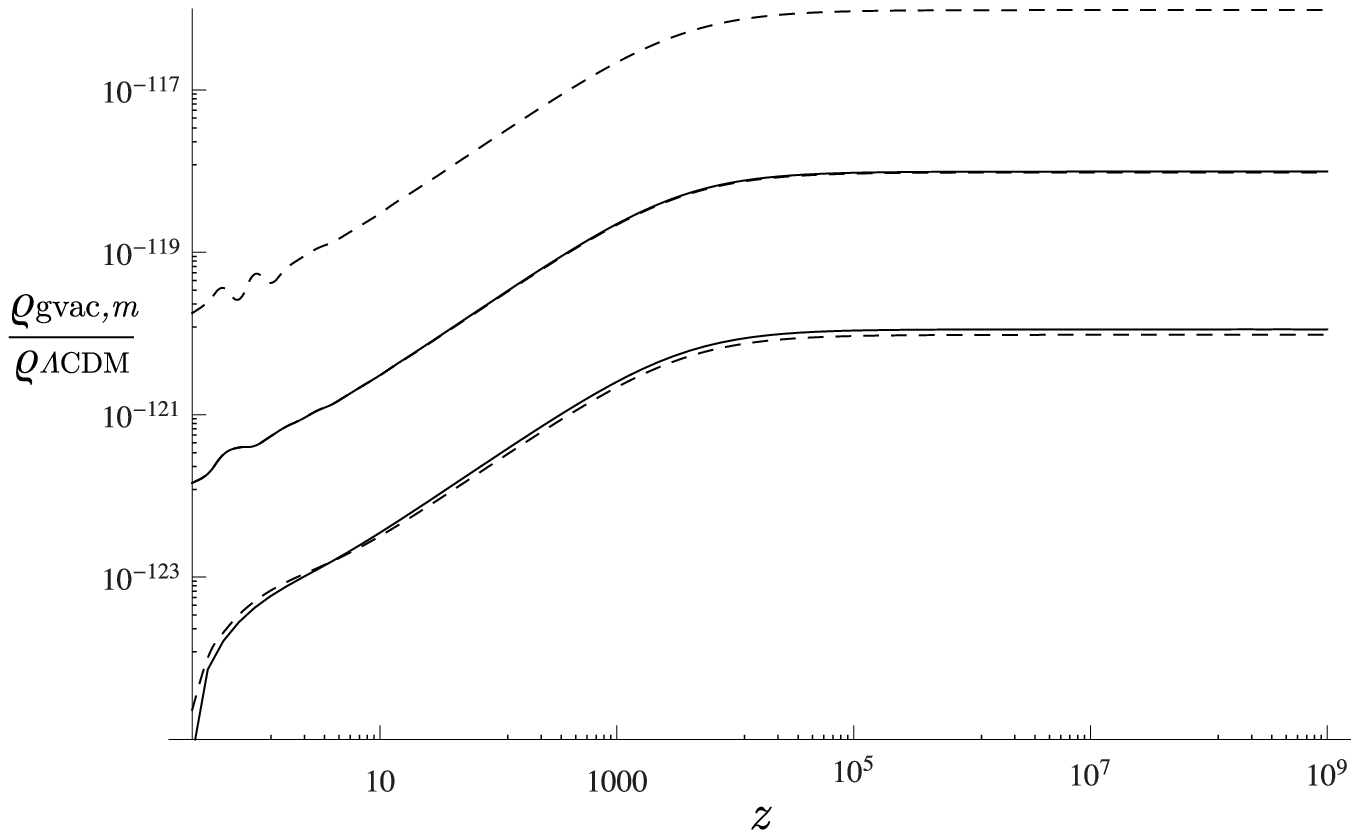}
\caption{\label{fig_rhonormlargez}$\varrho^m_\text{gvac}/\varrho_\text{$\Lambda$CDM}$ for $z>1$ for various values of $m$. The upper line corresponds to $m=100H_0$, the middle lines to $m=10H_0$ and the lower lines to $m=H_0$; solid lines (dashed lines) indicate results obtained with exact modes (zeroth order adiabatic modes). $\varrho^m_\text{gvac}/\varrho_\text{$\Lambda$CDM}$ becomes constant for large $z$ because there both energy densities scale like $a^{-4}$, c.f. \eqref{eq_scaling} and the related discussion.}
\end{figure}

Unfortunately, we have not been been able to compute $\varrho^m_\text{gvac}/\varrho_\text{$\Lambda$CDM}$ for $m>10^2 H_0$ in the way outlined above because for large masses the modes oscillate heavily, and thus it costs a lot of computer power to solve the mode equation for such a large $z$--interval we are interested in and to the numerical accuracy which is necessary to obtain reliable results for the coefficients of the state of low energy and $\varrho^m_\text{gvac}/\varrho_\text{$\Lambda$CDM}$, see also \cite[Section 8.4]{chap3_Daniel} for related considerations. However, realistic field masses in the GeV regime are rather of the order of $10^{42} H_0$. In the numerical computations outlined above we have observed that $\varrho^m_\text{gvac}/\varrho_\text{$\Lambda$CDM}$ seemed to grow quadratically with $m$, see Figure \ref{fig_rhonormlargez}, but looking at the results of \cite{chap3_Degner} in de Sitter spacetime one could maybe expect that $\varrho^m_\text{gvac}/\varrho_\text{$\Lambda$CDM}$ decreases for large masses. Moreover, even if a potential quadratic growth of $\varrho^m_\text{gvac}/\varrho_\text{$\Lambda$CDM}$ with $m$ would still imply $\varrho^m_\text{gvac}/\varrho_\text{$\Lambda$CDM}\ll 1$ for realistic masses and given $\varrho^m_\text{gvac}/\varrho_\text{$\Lambda$CDM} \sim 10^{-120}$ for $m= H_0$, it would be better to have a more firm understanding of the large mass regime.

In view of the numerical problems for large masses we had to resort to an approximation in order to be able to compute $\varrho^m_\text{gvac}/\varrho_\text{$\Lambda$CDM}$. In fact, we have taken the adiabatic modes of order zero as basis modes for computing the state of low energy. Of course these modes are not exact solutions of the mode equations, but the failure of these modes to satisfy the exact mode equation is decreasing with increasing mass and thus one can expect that the error in all quantities derived from these modes rather than exact modes is also decreasing with increasing mass. We have checked numerically that the energy density computed with adiabatic modes rather than exact modes matched the `exact' result quite well already for masses in the regime $m\simeq H_0$, see Figure \ref{fig_rhonormlargez}. For more details regarding error estimates for adiabatic modes we refer the reader to \cite{chap3_Olver}.

Inserting the adiabatic modes $\chi_{0,k}$ we obtain the following expressions for the coefficients $c_i(k)$ of the states of low energy.

$$c_1(k) = \int dz \,f(z) \left\{\frac{m^4 H}{16 \Omega_0(k)^5(1+z)} +\frac{\Omega_0(k)(1+z)^3}{2 H}\right\}$$
$$c_2(k) = \int dz \,f(z) \left\{\frac{m^4 H}{16 \Omega_0(k)^5(1+z)}-i\frac{m^2}{4\Omega_0(k)^2}\right\}\exp\left({-2i\int^z_{z_0}\frac{\Omega_0(k)}{H}dz^\prime}\right)$$

We now perform a further approximation. We take as a sampling function a Gaussian with mean $z_0$ and variance $\sigma\ll 1$
$$f(z)=\frac{1}{\sqrt{2\pi \sigma^2}}\exp\left(-\frac{(z-z_0)^2}{2\sigma^2}\right)$$
and take the zeroth term of the Taylor expansion of both the expressions in the curly brackets in the integrands of $c_i(k)$ and of the integrand appearing in the exponent of the exponential in $c_2(k)$. Without performing a detailed error analysis we note that this is justified for $\sigma\ll 1$ because the higher coefficients of the associated Taylor series differ from the lowest coefficient roughly by factors of either $\partial_zH/H|_{z=z_0}$ or $H(z_0)/m$, both of which are either smaller than or of order one under the assumption of large masses and a $\Lambda$CDM--background. We can now compute the $z$--integrals, which corresponds to considering the Fourier--transform of $f$ in the case of $c_2$. Using $H_0/m\ll 1$ (and thus $H(z_0)/m\ll1$), we can estimate the resulting coefficients as follows
$$c_1(k) > 1$$
$$|c_2(k)|< \exp\left(-\frac{k^2 \sigma^2}{H(z_0)^2}\right)\exp\left(-\frac{m^2\sigma^2}{H(z_0)^2(1+z_0)^2}\right)\,.$$
For $H(z_0)(1+z_0)/(m\sigma)\ll1$, $|c_2(k)|\ll 1$ and we can approximate the Bogoliubov coefficients $\lambda(k)$ and $\mu(k)$ as $\mu(k) \simeq \frac{|c_2(k)|}{2 c_1(k)}$, $\lambda(k)\simeq 1$ and thus estimate $\varrho^m_\text{gvac}$ as
\begin{align*}|\varrho^m_\text{gvac}|<&\frac{1}{4 a^4}\int\limits^\infty_0 dk k^2(\mu^2+\mu |\lambda|) \left(|\chi_{0,k}^\prime|^2+\Omega^2_0|\chi_{0,k}|^2\right)<\frac{1}{ a^4}\int\limits^\infty_0 dk k^2\mu |\lambda| \Omega_0\\<& \frac{1}{ a^4}\frac{H(z_0)^3 m}{\sigma^3}\exp\left(-\frac{m^2\sigma^2}{H(z_0)^2(1+z_0)^2}\right)
\end{align*}
such that, barring our approximations, we indeed get a result which shows that the energy density decreases -- exponentially -- for large masses. Note that for not too small $\sigma$ the bound we found is in general small compared to $\varrho_\text{$\Lambda$CDM}$ even if we forget about the exponential because $H_0 m$ is much smaller than the the square of the Planck mass, i.e. $1/G$. We also see that the bound grows with growing $z_0$, i.e. if we `prepare' the state of low energy further in the past, and that it diverges if the width of the sampling function goes to zero; this is in accord with the results of \cite{chap3_Degner} in de Sitter spacetime. Note that we could have chosen any rapidly decreasing or even compactly supported sampling function in order to obtain a bound which is rapidly decreasing in $m/H_0$, thus one could say that the result does not depend on the shape of the sampling function as long as its width is not too small. Finally, one could of course directly take the point of view that for large masses the adiabatic modes $\chi_{0,k}$ define `good states' themselves and conclude that in these states $\varrho^m_\text{gvac}\simeq 0$ on account of \eqref{eq_rhofinalconformal}.

\subsubsection{Computation of $\varrho^m_\text{gtherm}$}

We now proceed to analyse the thermal parts of the state--dependent contributions to the total energy density. Inserting $\xi=\frac16$ in \eqref{eq_rhofinalth}, we find
\begin{equation}\label{eq_thermaldensity}\varrho^m_\text{gth} = \frac{1}{2\pi^2}\frac{1}{a^4}\int\limits_0^\infty dk k^2\frac{1}{e^{\beta k_0}-1}\left(|\chi_k^\prime|^2+\left(k^2+m^2a^2\right)|\chi_k|^2\right)\end{equation}
with $k_0=\sqrt{k^2+a_F^2m^2}$. Before performing actual computations, we would like to mention a general result about the scaling behaviour of the energy density with respect to $a$ \cite{chap3_Pinamonti}. To wit, using the equation of motion \eqref{eq_NormalisationScalarMode} and the assumption that $H>0$, one can compute the derivative of $Q_k\doteq|\chi_k^\prime|^2+\left(k^2+m^2a^2\right)|\chi_k|^2$ with respect to $a$ and obtain the following inequalities
\begin{equation}\label{eq_scaling}\frac{k^2+a^2m^2}{k^2+m^2}\frac{Q_k(a=1)}{a^4}\leq\frac{Q_k(a)}{a^4}\leq \frac{Q_k(a=1)}{a^4}\,.\end{equation}
From these one can already deduce that $\varrho^m_\text{gth}$ has a scaling behaviour with respect to $a$ which lies between $a^{-2}$ and $a^{-4}$ and approaches $a^{-4}$ in the limit of vanishing $a$, in fact this still holds if we replace the Bose--Einstein factors in the generalised thermal states by arbitrary functions of $k$. Moreover \eqref{eq_scaling} also implies that $\varrho^m_\text{gvac}$ can not scale with a power of $a$ lower than $-4$ for small $a$ on $\Lambda$CDM backgrounds, cf. \eqref{eq_rhofinalconformal}.

Proceeding with actual computations we find that in the massless case $\varrho^m_{\text{gth}}$ can be computed exactly and analytically and the result is
\begin{equation}\label{eq_thermaldensitymassless}
\varrho^0_{\text{gth}}=\frac{\pi^2}{30}\frac{1}{\beta^4 a^4}\,.
\end{equation}
As in the massless case the state of low energy is the conformal vacuum and the associated generalised thermal state is the conformal temperature state with temperature parameter $\beta=1/T$, this result in fact holds for fields of all spin, i.e. the generalised thermal energy density in this case is always the one in Minkowski spacetime rescaled by $a^{-4}$. Thus a computation with e.g. photons or massless neutrinos yields the same result \eqref{eq_thermaldensitymassless} up to numerical factors due to the number of degrees of freedom and the difference between Bosons and Fermions.

In the massive case it is not possible to compute $\varrho^m_\text{gth}$ analytically and exactly, but we have to resort to approximations once more. We recall that the massive scalar field in our model should represent baryonic matter and Dark Matter in a simplified way. Thus we take typical values of $\beta$, $a_F$ and $m$ from Chapter 5.2 in \cite{chap3_Kolb:1990vq} computed by means of effective Boltzmann equations. A popular candidate for Dark Matter is a weakly interacting massive particle (WIMP), e.g. a heavy neutrino, for which  \cite{chap3_Kolb:1990vq} computes
\begin{equation}\label{eq_kolbvals}x_F \doteq \beta a_F m \simeq 15 + 3\log(m/\text{GeV})\,,\end{equation}
$$a_F \simeq 10^{-12}(m/\text{GeV})^{-1}\,.$$
We shall take these numbers as sample values although working with a scalar field, because for large masses $m\gg H_0$, the `thermal energy densities' $\varrho^m_\text{gth}$ in generalised thermal states for free fields of spin $0$ and $\frac12$ can be shown to approximately coincide up to constant numerical factors on the basis of the results of \cite{chap3_Dappiaggi:2010gt} and \cite[Section IV.5]{chap3_Hack:2010iw}.

Considering $m>1$GeV, we can compute $\varrho^m_\text{gth}$ approximatively as follows. We recall from the computation of $\varrho^m_\text{gvac}$ that for large masses $m\gg H_0$ one can consider the adiabatic modes of order zero $\chi_{0,k}$ as approximative basis modes for the computation of the state of low energy and that with respect to this basis one finds for the coefficients of the state of low energy $\lambda\simeq 1$, $\mu\simeq 0$, thus we can insert those modes in \eqref{eq_thermaldensitymassless} instead of the modes of the state of low energy. Using $m\gg H_0$ once more, we have $|\chi_{0,k}^\prime|^2+\left(k^2+m^2a^2\right)|\chi_{0,k}|^2\simeq \sqrt{k^2+m^2a^2}$ and using $x_F>15$ we can approximate the Bose--Einstein factor in \eqref{eq_thermaldensitymassless} as $1/(e^{\beta k_0}-1)\simeq e^{-\beta k_0}$. Finally, we can rewrite the integral in \eqref{eq_thermaldensitymassless} in terms of the variable $y=k/(a_F m)$ and compute, using $a/a_F\gg1$ for the redshift interval $z\in[0,10^9]$ we are interested in,
$$\varrho^m_\text{gth}\simeq \frac{1}{2\pi^2}\frac{a_F^3 m^4}{a^3}\int\limits^\infty_0 dy \;y^2 e^{-x_F\sqrt{y^2+1}}\,.$$
This already gives the desired result $\varrho^m_\text{gth}\propto a^{-3}$. The remaining integral can be computed numerically, however, for $x_F \gg 1$ only $y\ll 1$ contribute to the integral and one can approximate $\sqrt{y^2+1}\simeq 1+y^2/2$ and compute 
$$\varrho^m_\text{gth}\simeq \frac{1}{(2\pi)^{3/2}}\frac{m}{\beta^3 a^3}x^{\frac32}_F e^{-x_F}\,,$$
which for $a=a_F=1$ (unsurprisingly) coincides with the thermal energy density for massive scalar fields in Minkowski spacetime.

\subsubsection{The Total Energy Density}

Collecting the results of this section, we find for the total energy density of our model

$$\frac{\varrho^0 + \varrho^m}{\varrho_0}\simeq\Omega_\Lambda +  \frac{1}{(2\pi)^{3/2}}\frac{m}{\beta_1^3 a^3\varrho_0}x^{\frac32}_F e^{-x_F}\frac{1}{a^3} + \frac{\pi^2}{30\beta_2^4\varrho_0}\frac{1}{a^4}\,,$$
where we wrote $\beta_1$, $\beta_2$ in order to emphasise that the generalised thermal states for the massive and massless conformally coupled scalar fields can have different temperature parameters $\beta$. We recall that the thermal contribution of the massless scalar field has been computed exactly, while the one of the massive scalar field is an approximative result. The above result shows that we indeed succeeded in modelling radiation by a massless scalar field and matter by a massive scalar field in suitable generalised thermal states. Obviously, we can choose the free parameters $m$, $\beta_i$, $x_F$ in such a way that the prefactors of the matter and radiation terms have their correct $\Lambda$CDM--values $\Omega_\mat$ and $\Omega_\rad$, e.g. for the former we could choose the sample values \eqref{eq_kolbvals} with $m\simeq 100$GeV, and for the latter $1/\beta\simeq 1$K, i.e. the temperature of the CMB. Finally, we model Dark Energy simply by a cosmological constant, which in our context appears as a parametrisation of the freedom in defining energy density as an observable.

Since our description of the the standard cosmological model within quantum field theory on curved spacetime reproduces the energy density of the original $\Lambda$CDM--model up to negligible corrections, it can obviously be matched to the observational data as good as this model.

\subsection{Deviations from the Standard Model and their Phenomenological Consequences: $\epsilon J_{00}$ and Dark Radiation}
\label{sec:dark}

Our analysis in the previous sections implies that there exist quantum states in which the total energy density in quantum field theory on curved spacetimes differs from the one in the $\Lambda$CDM--model only by the higher derivative term $\epsilon J_{00}$ and terms which are generally negligible or become important only at redshifts $z\gg 10^9$. The prefactor $\epsilon$ of $J_{00}$ is not determined by the theory but a free parameter so far and it seems advisable to study its impact on the cosmological expansion. Indeed, as the second main result of Section \ref{sec:lcdmqft} we demonstrate in  the following that such a term can provide a natural explanation of Dark Radiation. The fact that the contribution $\epsilon J_{00}$ to the energy density can look like radiation for large $z$ has already been observed in e.g. \cite{chap3_Kofman:1985aw}, but to our knowledge the relevance of this for the phenomenon of Dark Radiation has not been discussed so far.

At this point we would like to mention that in some other works, see e.g. \cite{chap3_Shapiro:2003gm} and references therein, the parameter $\epsilon$ is not considered to be a free parameter, but it is rather taken as determined by the field content of the QFT model, just like the parameter $\gamma$ in \eqref{eq_totaled}. This is motivated by the fact that most common computational schemes for regularising the quantum stress--energy tensor yield the same result for $\epsilon$, which is thus taken to be the correct value. In this work we follow the point of view of \cite{chap3_Walda,chap3_HW04} and thus start from the premise that there is no physical principle within QFT on curved spacetimes which determines the value of $\epsilon$ {\em  a priori}; thus in particular we do not attribute any physical `meaning' to a specific computational scheme, even if most common schemes give the same result.

To start our analysis, we briefly review the notion of Dark Radiation and the related observations.
The fraction $\Omega_\rad$ of the radiation energy density in the $\Lambda$CDM--model is computed as
\begin{equation}\label{eq_radiation}\Omega_\rad=\Omega_\gamma\left(1+\frac78 \left(\frac{4}{11}\right)^{4/3} N_\text{eff}\right)\end{equation}
where $\Omega_\gamma\simeq 5\times 10^{-5}$ is the fraction due to electromagnetic radiation, which can be computed by inserting into
\eqref{eq_thermaldensitymassless} the CMB temperature $T_\text{CMB}\simeq 2.725$K, dividing by today's energy density $\varrho_0=3H^2_0/(8\pi G)\simeq 1.33 \times 10^{-11}$eV (and multiplying by two for the two degrees of freedom of the photon). Moreover, $N_\text{eff}$ is the number of neutrino families and the factor $7/8 (4/11)^{4/3}=0.2271$ takes into account that neutrinos are Fermions and `colder' than the CMB photons, because they have decoupled from the hot early bath in the universe earlier than electrons and positrons and have thus not been `heated up' by the decoupling of the latter in contract to the photons. The standard value for $N_\text{eff}$ is not 3 as one would expect, but rather $N_\text{eff}=3.046$ because the value $7/8 (4/11)^{4/3}$ in \eqref{eq_radiation} is computed assuming e.g. instantaneous decoupling of the neutrinos and corrections have to be taken into account in a more detailed analysis \cite{chap3_Mangano:2005cc}; it is customary to take these corrections into account by considering $N=3.046$ as the standard value of the `neutrino family number' rather than changing the factor $7/8 (4/11)^{4/3}$ in this formula, hence the nomenclature $N_\text{eff}$. Consequently, it is convenient to parametrise any contribution to $\Omega_\rad$ which is not due to electromagnetic radiation and the three neutrino families in the standard model of particle physics by $\Delta N_\text{eff}\doteq N_\text{eff}-3.046$.

One of the two main observational inputs to determine $\Omega_\rad$ and thus $N_\text{eff}$ is the primordial fraction of light elements in the early universe as resulting from the so-called Big Bang Nucleosynthesis (BBN), which has occurred at around $z\simeq 10^9$ and thus in the radiation--dominated era. This is the case because the nucleosynthesis processes which happened at that time depend sensitively on the expansion rate $H\simeq H_0 \sqrt{\Omega_\rad}/a^2$, see e.g. \cite{chap3_Kolb:1990vq,chap3_Kneller:2004jz,chap3_Ellis}. The other main observational source for the determination of $N_\text{eff}$ is the cosmic microwave background radiation (CMB). This radiation was emitted at about $z\simeq 1100$, but the CMB power spectrum is sensitive to the expansion before this point, e.g. to the redshift $z_\text{eq}$ at which the energy densities of matter and radiation were equal, see \cite[Section 6.3]{chap3_Ade:2013zuv} and the references therein for details; for standard values, $z_\text{eq}\simeq 3000$. 

The observations to date do not give a conclusive value for $N_\text{eff}$, and the value inferred from observational data depends on the data sets chosen. The Planck collaboration \cite{chap3_Ade:2013zuv} reports e.g. values of $N_\text{eff}=3.36^{+0.68}_{-0.64}$ at 95\% confidence level from combined CMB power spectrum data sets, $N_\text{eff}=3.52^{+0.48}_{-0.45}$ at 95\% confidence level from combining these data sets with direct measurements of the Hubble constant $H_0$ and of the power spectrum of the three-dimensional distribution of galaxies (so--called baryon acoustic oscillation, BAO), and $N_\text{eff}=3.41\pm 0.30$ at 68\% confidence level from combining the CMB power spectrum data sets with BBN data. Yet, one can infer from these values that there is a mild, but not very significant, preference for $\Delta N_\text{eff}>0$. Thus there has been an increasing interest in models which can explain a potential excess in radiation and thus $\Delta N_\text{eff}>0$. Most of these models assume additional particles/fields, e.g. a fourth, sterile, neutrino, whereas other consider geometric effects from e.g. modifications of General Relativity. Moreover, in most models $\Delta N_\text{eff}$ is constant and thus affects BBN and CMB physics alike, while in others $\Delta N_\text{eff}$ is generated only after BBN and thus affects only CMB physics. 

In the following we shall propose a new and alternative explanation for Dark Radiation which follows naturally from our analysis of the $\Lambda$CDM--model in quantum field theory on curved spacetimes and has the interesting characteristic that it generates a value of $\Delta N_\text{eff}$ which {\em  increases} with $z$ and thus affects BBN physics more than CMB physics.

Following the motivation outlined at the beginning of this section, we solve the equation
\begin{equation}
\label{eq_odebox}
\frac{H^2}{H^2_0}=\Omega_\Lambda + \frac{\Omega_\mat}{a^3} + \frac{\Omega_\rad}{a^4} + \epsilon \frac{J_{00}}{H^4_0}\,, 
\end{equation}
which can be rewritten as a second order ordinary differential equation for $H$ in $z$, numerically 
with $\Lambda$CDM--initial conditions $H(z=0)=H_0$, $\partial_zH(z=0)=H_0(3\Omega_\mat+4\Omega_\rad)/2$. As before, we consider for definiteness $\Omega_\mat=0.30$, $\Omega_\Lambda=1-\Omega_\mat-\Omega_\rad$, because the exact values of these parameters are not essential for our analysis. Looking at the characteristics of the solution to this ordinary differential equation, it turns out that a non--zero $\epsilon$ generates a time--varying $\Delta N_\text{eff}>0$. In more detail, we define for the solution $H$ of \eqref{eq_odebox}
$$\Delta N_\text{eff}(z)\doteq \frac{\frac{H^2}{H^2_0}-\Omega_\Lambda - \Omega_\mat(1+z)^3-\Omega_\rad(1+z)^4}{0.2271(1+z)^4}\,,$$
and sample this observable at the redshift $z=10^9$ associated to BBN physics and at the redshift $z=3000$ associated to CMB physics. We collect our results in Figure \ref{fig_deltaneffofalpha}.

\begin{figure}
\includegraphics[width=1\columnwidth]{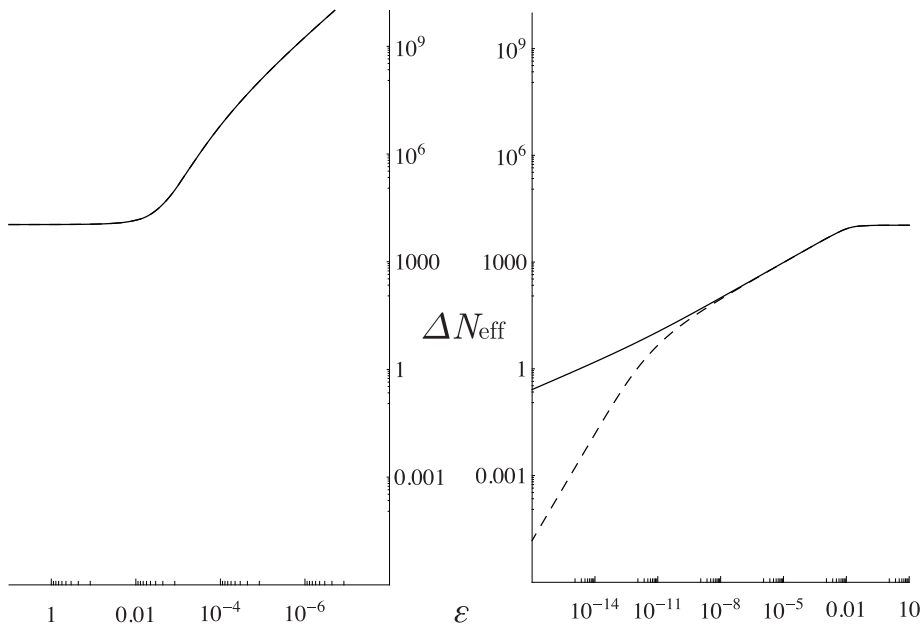}
\caption{\label{fig_deltaneffofalpha}$\Delta N_\text{eff}(z)$ depending on $\epsilon$ for $z=10^9$ (BBN, solid line) and $z=3\times10^3$ (CMB, dashed line). For $\epsilon<0$ and $\epsilon$ positive and large enough, the values at the two redshifts coincide because the maximum value of $\Delta N_\text{eff}(z)$ is reached already for $z<3\times10^3$ in these cases.}
\end{figure}

\begin{figure}
\includegraphics[width=1\columnwidth]{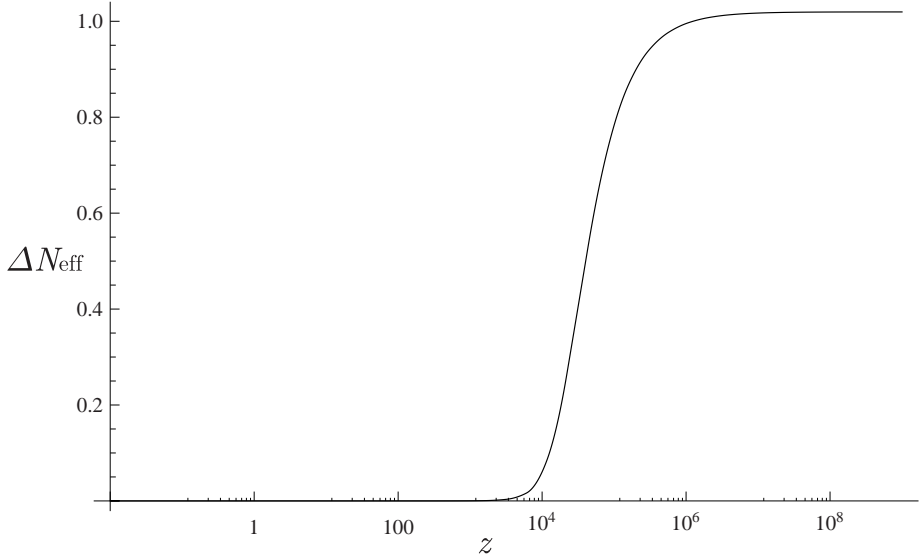}
\caption{\label{fig_deltaneffofz}$\Delta N_\text{eff}(z)$ for $\epsilon=2\times 10^{-15}$.}
\end{figure}

As can be inferred from this figure, $\Delta N_\text{eff}(z)$ is monotonically increasing in $\epsilon$, where positive and negative values of $\epsilon$ result in very different behaviours. For positive values of $\epsilon$ one finds that $\Delta N_\text{eff}(z)$ vanishes in the limit of vanishing $\epsilon$, as one would expect. On the other hand, it turns out that for negative values of $\epsilon$, $\Delta N_\text{eff}(z)$ diverges as $\epsilon$ approaches zero. While this seems to be puzzling at first sight, it fits well with previous qualitative analyses of the effect of the higher derivative term $J_{00}$. In fact, it is known that the inclusion of this higher derivative term can lead to unstable solutions of the semiclassical Einstein equations, where for $\epsilon<0$ ($\epsilon>0$) the class of solutions we consider here, effectively fixed by the $\Lambda$CDM initial conditions, turns out to be unstable (stable), see e.g. \cite{chap3_Anderson1, chap3_FW96, chap3_Haensel, chap3_Koksma:2009sa, chap3_ParkerSimon, chap3_Star} and the discussion at the end of this section. Thus, the divergence of $\Delta N_\text{eff}(z)$ as $\epsilon$ approaches zero from below can be just interpreted as a sign of this instability. 

In \cite{chap3_Wald4,chap3_DFP}, $\epsilon=0$ has been chosen on conceptual grounds in order to discard unstable solutions altogether. However, as we see here, a non--zero $\epsilon$ can have interesting phenomenological implications. After all, taking quantum field theory on curved spacetimes seriously, $\epsilon$ is a free parameter of the  theory, which we can only fix in a more fundamental theory or by observations. Indeed, we see  in Figure \ref{fig_deltaneffofalpha} that $\epsilon<0$, corresponding to an unstable solution of the semiclassical Einstein equation, is already ruled out by observations because it generally leads to $\Delta N_\text{eff}(z)\gg 1$ which is  certainly not compatible with value of $\Delta N_\text{eff}\simeq 0.5-1.0$ inferred from observations as mentioned above. On the other hand we see that in order to not exceed $\Delta N_\text{eff}=1$ at both BBN and CMB we have to choose $0\le \epsilon < 2\times 10^{-15}$, thus, without performing a detailed fit of BBN and CMB data, we can say that the values for $\Delta N_\text{eff}$ reported e.g. by the Planck collaboration in \cite{chap3_Ade:2013zuv} give an upper bound of about $2\times 10^{-15}$ for $\epsilon$. We plot $\Delta N_\text{eff}(z)$ for for this value of $\epsilon$ and redshifts $0<z<10^9$ in Figure \ref{fig_deltaneffofz}. As already anticipated in Figure \ref{fig_deltaneffofalpha}, one can nicely see how $\Delta N_\text{eff}(z)$ is monotonically growing in $z$, with $\Delta N_\text{eff}(z=0)=0$ as fixed by our initial conditions. Moreover one can see clearly that if one wants to meet the bounds on $\Delta N_\text{eff}$ at the BBN redshift, the excess in the effective number of neutrinos at the CMB is negligible, which is the characteristic signature of this potential explanation for Dark Radiation. We have not considered the influence of the initial conditions for \eqref{eq_odebox} on $\Delta N_\text{eff}$, but we expect that for the initial conditions compatible with low--$z$ observational data such as Supernova type Ia data and Baryon Acoustic Oscillation data, $\Delta N_\text{eff}$ will not differ considerably from the form we found as these data will not allow for large deviations in the initial conditions from the $\Lambda$CDM ones we chose.

As a further, rather pedagogical remark, we would like to comment on the fact that, for large absolute values of $\epsilon$, $\Delta N_\text{eff}$ does not depend on the sign of $\epsilon$, as can be seen from Figure. \ref{fig_deltaneffofalpha}. This phenomenon can be understood as follows. Naturally, for large absolute values of $\epsilon$, the other terms in \eqref{eq_odebox} become negligible and one effectively solves for $J_{00}=0$. The solution of this ordinary differential equation with initial conditions $H(z=0)=c$, $\partial_zH(z=0)=d$ is
$$H(z)=\frac{c^{1/3}d^{2/3}\left(\frac{2c}{d}-1+(1+z)^3\right)^{2/3}}{2^{2/3}}$$
and thus, inserting the $\Lambda$CDM initial conditions $H(z=0)=H_0$, $\partial_zH(z=0)=H_0(3\Omega_\mat+4\Omega_\rad)/2$ we, find for large $z$, $N_\text{eff}(z)\simeq 10^4$ as in Figure \ref{fig_deltaneffofalpha}.

The above analysis may be interpreted as providing an upper bound for the parameter $\epsilon$ in $\epsilon J_{00}$, which quantifies higher order derivative corrections to classical General Relativity, based on BBN and CMB data. Two further ways to obtain bounds on $\epsilon$ are known to us. To this avail, let us recall that the tensors $I_{\mu\nu}$ and $J_{\mu\nu}$ in \eqref{eq_deftmunu} can be obtained as variational derivatives with respect to the metric of the Lagrangean densities $\sqrt{-g}R^2$ and $\sqrt{-g}R_{\mu\nu}R^{\mu\nu}$. Consequently, an upper bound on $\epsilon$ can be obtained by studying modifications of the Newtonian potential arising from such higher--derivative modifications of the gravitational action. To wit, the Lagrangean
$$L=\sqrt{-g}\left(\frac{R}{16\pi G}+c_1 R^2 + c_2 R_{\mu\nu}R^{\mu\nu}\right)$$
leads to the Newtonian potential of a point mass $m$ \cite{chap3_Stelle:1977ry}
\begin{equation}\label{eq_newton}
\phi = \frac{-m G}{r}\left(1+\frac{1}{3}e^{-m_1 r}-\frac{4}{3}e^{-m_2 r}\right)\end{equation}
$$m_1=\frac{1}{\sqrt{32 \pi G(-3c_1-c_2)}}\qquad m_2=\frac{1}{\sqrt{16\pi G c_2}}\,.$$
Using recent data \cite{chap3_Kapner:2006si} from torsion--balance experiments to test the gravitational inverse--square law at $\sim 10^{-4}$m and assuming that the two Yukawa corrections don't cancel each other at this length scale, one obtains $-c_1,c_2<10^{61}$ \cite{chap3_Calmet:2008tn}. To compare this with our results, we recall that in our treatment these higher curvature terms appear on the right hand side of the semiclassical Einstein equation and that we have computed in units of $H_0$, thus we have 
$$\epsilon=\frac{(-3c_1-c_2)8\pi G H^2_0}{3}\simeq (-3c_1-c_2)\times 10^{-121}$$
which would imply $\epsilon < 10^{-60}$ and thus a stronger bound than the one we inferred from cosmological observations. Of course such a low value of $\epsilon$ leads to $\Delta N_\text{eff}\ll 1$ at both BBN and CMB and thus the amount Dark Radiation generated by this term would be negligible.

Notwithstanding, there are still several aspects of our analysis which are of interest. First of all, our bound on $\epsilon$ is completely independent from the one inferred from laboratory experiments and can thus be considered as an additional confirmation of those results. Moreover, it is still possible that the Yukawa corrections in \eqref{eq_newton} cancel each other on the length scales relevant for the experiments described in \cite{chap3_Kapner:2006si}, such that $\epsilon$ could be as large as our upper bound, which in this case would give a real bound on one and hence both Yukawa corrections. Finally, the bounds inferred from \cite{chap3_Kapner:2006si} and from our analysis stem from phenomena on completely different length scales. As a rough estimate we note that the diameter of our observable universe, which today is about $6/H_0\simeq 10^{27} $m, was at e.g. $z=10^9$ still $10^{18}$m and thus much larger than the submillimeter scales relevant for the torsion--balance experiments. Thus it could be that effects we have not considered so far, e.g. state--dependent effects which are due to the small--scale structure of the quantum states we have fixed only on cosmological scales so far, affect the comparison between the two different sources of input for the determination of $\epsilon$.

Finally, a further bound on $\epsilon$ can be obtained by analysing the effects of higher derivative contributions to the gravitational Lagrangean in the context of Inflation. In fact, the first inflationary model proposed by Starobinsky in \cite{chap3_Star} was based on the observation that the semiclassical Friedmann equation for massless (and in the scalar case conformally coupled) quantum fields in the conformal vacuum state, namely,
\begin{equation}\label{eq_staro}\frac{H^2}{H^2_0}=\gamma \frac{H^4}{H^4_0}+\epsilon\frac{J_{00}}{H^4_0}+\Omega_\Lambda\,,
\end{equation}
has solutions which correspond to a de Sitter phase $a(t)\propto \exp(H_+t)$ with $H_+ \simeq H_0/\sqrt{\gamma}$ which for $\epsilon>0$ `decays' to a de Sitter phase $a(t)\propto \exp(H_-t)$ with $H_- \simeq H_0 \sqrt{\Omega_\Lambda}$. For various reasons, this model was discarded as a viable inflationary model, see e.g. the discussion in Section 9.4 of \cite{chap3_Linde:2005ht}, and Starobinsky instead proposed an explanation for Inflation based only on \eqref{eq_staro} with $\gamma=0$ in \cite{chap3_Starobinsky:1983zz}; thus Inflation in this model occurs due to an $R^2$-modification of the gravitational action. This model is now widely known as the {\em  Starobinsky model}\index{Starobinsky model} and still in very good agreement with CMB data \cite{chap3_Ade:2013uln};  interestingly, it is mostly attributed to Ref. \cite{chap3_Star} rather than Ref. \cite{chap3_Starobinsky:1983zz}, although the two models are quite different. 

Neglecting the trace anomaly term $\gamma H^4/H^4_0$ in \eqref{eq_staro} can be justified if $\epsilon\gg \gamma$ and if $\partial_z H/H$ is not too small, but note that $\gamma>0$ is in contrast to $\epsilon>0$ an {\em  a priori} {\em  prediction} of QFT on curved spacetimes and that the trace anomaly term will always dominate for $z$ large enough in almost all solutions of \eqref{eq_staro}. However, this occurs generally where $H$ of the order of the Planck mass $M_P=1/\sqrt{G}\simeq 10^{19}$GeV (recall that $\gamma\simeq 10^{-122}$ in the case of a scalar field and thus $H_+ \simeq H_0/\sqrt{\gamma}\simeq M_P$ in the original Starobinsky model), and one can argue that the semiclassical Einstein equation should not be trusted in this regime, see e.g. \cite{chap3_Wald4}. Notwithstanding, taking \eqref{eq_staro} seriously and solving it with $\Lambda$CDM-initial conditions, i.e. requiring that today we are in the `lower de Sitter branch' ruled by $H_-$, already leads to the lower bound $\epsilon>0$ because with $\epsilon=0$ and these initial conditions, $H$ becomes imaginary for large $z$, cf. \cite{chap3_Wald4,chap3_DFP}.

A more concrete lower bound on $\epsilon$ can be obtained by confronting the Starobinsky model with CMB data. In this model, the parameter here called $\epsilon$ (not to be confused with the usual slow--roll parameter $\epsilon$ in Inflation) is quantified in terms of the `scalaron mass' \cite{chap3_Star,chap3_Starobinsky:1983zz}
$$M=\frac{H_0}{\sqrt{6\epsilon}}\,,$$
whose magnitude determines the amplitude of the primordial fluctuations, which turns out to be linear in $M$ \cite{chap3_Linde:2005ht}. Comparison with CMB data yields \cite{chap3_Kaneda:2010ut}
\begin{equation}\label{eq_starovalues}
M\simeq 10^{-5} M_P\qquad \Rightarrow \qquad \epsilon\simeq 10^{-113}\,.
\end{equation}
Thus, if Inflation occurred due to the $\epsilon J_{00}$ contribution to the energy density, then $\epsilon$ is too small for generating a considerable amount of Dark Radiation. However, if Inflation has a different origin, then we can interpret the value in \eqref{eq_starovalues} as an an upper bound on $M$ and thus obtain the lower bound $\epsilon > 10^{-113}$.

After discussing potential phenomenological implications of the energy density contribution $\epsilon J_{00}$ which is not present in the $\Lambda$CDM--model, we note that potential further corrections to the $\Lambda$CDM--model can come from specifics of the quantum state we have neglected in our analysis. We have chosen the quantum states in our discussion such that their characteristic energy density was entirely of thermal nature, but we have seen that also pure, non--thermal states can have contributions to the energy density which scale like $a^{-4}$, cf. Figure \ref{fig_rhonormlargez}. It could be that there exist states which are compatible with observations and have sizable energy--density contributions of this kind; these states would then provide a further alternative explanation for Dark Radiation which does not call for the introduction of new fields and the associated particles.

\section{A Birds--Eye View of Perturbations in Inflation}
\label{sec:inflation}

A prominent application of quantum field theory in curved spacetime is the analysis of perturbations in Inflation. In the simplest models of Inflation, the characteristic exponential expansion is driven by the energy density of a classical scalar field $\varphi$. The perturbation $\phi$ of this field, combined in a gauge--invariant way with the perturbations $\gamma$ of the classical metric $g$, are considered as quantum fields propagating on the classical background $M_\varphi=(M,g,\varphi)$, where $M=(M,g)$ is a spatially flat FLRW spacetime and $\varphi$ depends only on time in the homogeneous and isotropic FLRW coordinates. Rephrased in more abstract terms, one may say that the analysis of perturbations in inflation consists in quantizing the field theory of the tuple $\Gg = (g,\varphi)^t$ perturbatively around a background which satisfies the classical Einstein equation and is of FLRW--type. This perturbative quantum field theory is usually truncated at linear order.

The standard textbook treatment of the quantum theory of perturbations in Inflation, see e.g. \cite{chap3_Straumann:2005mz}, consists in using the FLRW--symmetry of the background $M_\varphi$ in order to split the metric perturbation $\gamma$ into components which transform as scalars, vectors and tensors under the isometry group of FLRW--backgrounds, the Euclidean group ${\mathbb E}^3$. Subsequently, gauge--invariant linear combinations of these components and the scalar field perturbation $\phi$ are identified and quantized in a canonical fashion. As the above--mentioned splitting is non--local and depends heavily on the FLRW--symmetry, it is a priori not clear whether it captures all local observables of the theory. A systematic analysis of this issue from the point of view of algebraic quantum field theory has been performed in \cite{chap3_ThomasInflation}. In this section we shall review the main steps and results of this analysis. For simplicity we consider the special and commonly assumed case where the scalar field is minimally coupled to the metric, the case of general coupling is treated in \cite{chap3_ThomasInflation}.

\subsection{The Linearised Einstein--Klein--Gordon System on General Curved Spacetimes}
\label{sec:lekgs}

We first consider the quantization of the linearised Einstein--Klein--Gordon system on arbitrary backgrounds $M_\varphi=(M,g,\varphi)$, such that $M$ is four--dimensional, $M=(M,g)$ is globally hyperbolic, and $\Gg = (g,\varphi)^t$ satisfies the coupled Einstein--Klein--Gordon equations. Later we point our attention towards such backgrounds $M_\varphi$ which are in addition of FLRW--type and compare the general quantization procedure with the usual approach to the quantization of perturbations in Inflation.

The linearised Einstein--Klein--Gordon system is a Bosonic gauge theory with linearised diffeomorphisms as local gauge transformations. Thus, the strategy to quantize this system is to show that it fits in the axiomatic framework of general linear gauge theories on curved spacetimes reviewed in Section \ref{sec:linear_gauge}. In particular, we have to show that this model satisfies the axioms of Definiton \ref{def_lineargauge}. 

To this avail, we introduce the vector bundles over $M$ $\cV\doteq \bigvee^2 T^*M \oplus \left(M\times \RR\right)$, where $\bigvee$ denotes the symmetric tensor product, and $\cW \doteq TM$. The space of smooth sections of a vector bundle such as $\cV$ will be denoted by $\Se(\cV)$ as before. We recall that important subspaces of $\Se(\cV)$ are $\Sec(\cV)$ and $\Sesc(\cV)$ the space of smooth sections of compact and space--like compact support, respectively. Both the background fields $\Gg = (g,\varphi)^t$ and their perturbations $\Gamma = (\gamma,\phi)^t$ are elements of $\Se(\cV)$, whereas gauge--transformations (linearised diffeomorphisms) will be parametrised by $\varsigma\in\Se(\cW)$. We introduce on such sections symmetric and non--degenerate bilinear forms by
\beq\label{eq_formV}
\langle \Gamma_1, \Gamma_2\rangle_\cV \doteq \int\limits_M \vol \left(g^{\alpha\beta} g^{cd}\gamma_{1,ac}\gamma_{2,bd}+\phi_1\phi_2\right)\,,
\eeq
\beq\label{eq_formW}
\langle\varsigma_1, \varsigma_2\rangle_\cW  \doteq \int\limits_M  \vol g^{\alpha\beta}\varsigma_{1,a}\varsigma_{2,b}\,.
\eeq
These bilinear forms are well--defined for pairs of sections with compact overlapping support.

The starting point of the analysis is the Einstein--Hilbert--Klein--Gordon action for $\Gg = (g,\varphi)^t\in\Se(\cV)$
$$
S(\Gg)=\int\limits_M \vol\left(\frac{R}{2}-\frac{(\nabla_\mu\varphi)\nabla^\mu\varphi}{2}-V(\varphi)\right),
$$
where $V(\varphi)$ is an arbitrary smooth potential. In this section, we shall use units in which $8\pi G = 1$, $G$ being Newton’s gravitational constant; this renders the Klein--Gordon field, and thus $\Gg = (g,\varphi)^t$, dimensionless. The Euler--Lagrange equations of $S(\Gg)$ are the Einstein--Klein--Gordon--equations
\beq\label{eq_fullcoupled}
\Se(\cV)\ni \EL (\Gg)=\left(\begin{array}{c}
E^2_{\alpha\beta}\\E^0
\end{array}\right)\doteq\left(\begin{array}{c}
\frac{1}{2}(G_{\alpha\beta}- T_{\alpha\beta})\\
-\Box\varphi + \partial_\varphi V
\end{array}\right)=0\,.
\eeq

In order to obtain the linearised theory, we split $\Gg$ into a background $\Gg$ (which, slightly abusing notation, we denote by the same symbol) and a perturbation $\Gamma$ and formally expand (omitting boundary terms)
$$
S(\Gg+\Gamma)=S(\Gg)-\left\langle \EL (\Gg),\Gamma\right\rangle_\cV -\frac12\left\langle\Gamma,P\Gamma\right\rangle_\cV+\cO(\Gamma^3).
$$
Here $P$ is the second order partial differential operator 
\begin{gather}P:\Se(\cV)\to\Se(\cV)\qquad P=\begin{pmatrix}P_0 & P_2\\P_3 & P_1\end{pmatrix}\label{eq_originaleom}\\
(P_0 \gamma)_{\alpha\beta}=\frac{1}{4}\left(-\nabla^\mu\nabla_\mu \gamma_{\alpha\beta} + 2 \nabla^\mu\nabla_{(\alpha}\gamma_{\beta)\nu}-g_{\alpha\beta}\nabla^\mu\nabla^\nu\gamma_{\mu\nu}-\nabla_\alpha\nabla_\beta{\gamma}_{\mu}^{\phantom{c}\mu}\right.\notag\\
\left. +g_{\alpha\beta}\nabla_\mu\nabla^\mu{\gamma}_{\nu}^{\phantom{c}\nu}+g_{\alpha\beta}R^{\mu\nu}\gamma_{\mu\nu}-R\gamma_{\alpha\beta}\right)+
\left(\frac{1}{4}(\nabla_\mu\varphi)(\nabla^\mu\varphi)+\frac12 V\right) \gamma_{\alpha\beta}+\notag\\
-\frac{1}{4}g_{\alpha\beta}\gamma_{\mu\nu}(\nabla^\mu\varphi)\nabla^\nu\varphi \notag- 2 {E^2}^\mu_{\phantom{\alpha}(\alpha} \gamma_{\beta)\mu}+\frac12 E^2_{\alpha\beta} {\gamma}_{\mu}^{\phantom{c}\mu}\\
(P_2\phi)_{\alpha\beta}=\left\{g_{\alpha\beta}\left(\frac{1}{2}(\nabla^\mu\varphi)\nabla_\mu+\frac12 \partial_\varphi V\right)-\left(\nabla_{(\alpha}\varphi\right)\nabla_{\beta)}\phantom{\frac12}\!\!\!\!\right\}\phi\notag\\
P_3 \gamma = \left((\nabla^\mu\varphi)\nabla^\nu-\frac12 (\nabla^\alpha\varphi)\nabla_\alpha g^{\mu\nu}+(\nabla^\mu\nabla^\nu\varphi)+\frac12 E_0 g^{\mu\nu}\right) \gamma_{\mu\nu}\notag\\
P_1\phi=\left(-\nabla_\mu\nabla^\mu+\partial^2_\varphi V\right)\phi\notag\,.
\end{gather}
$P$ is formally self--adjoint with respect to $\langle \cdot,\cdot\rangle_\cV$ even if the background is off--shell, i.e. even if $\EL (\gG)\neq 0$, cf. the discussion in \cite[Section 2.2.1]{chap3_ThomasInflation}.

$S(\Gg+\Gamma)$ is invariant under diffeomorphisms of $M$, in particular under those generated by an arbitrary but fixed compactly supported vector field $\varsigma \in\Sec(\cW)$. Given such a diffeomorphism, $\Gg+\Gamma$ transforms as
$$
\Gg+\Gamma\mapsto \Gg+\Gamma+\cL_\varsigma \Gg+\cL_\varsigma \Gamma  +\cO(\varsigma^2)\,,
$$
where $\cL_\varsigma$ denotes the Lie derivative w.r.t. $\varsigma$. To first order in $\varsigma$ and $\Gamma$, the diffeomorphism-invariance of $S(\Gg+\Gamma)$ reads
$$
P\cL_\varsigma \Gg=\cL_\varsigma \EL (\Gg)\,,
$$
where the term on the right hand side arises from the $\cL_\varsigma \Gamma$ contribution of the transformed $\Gg+\Gamma$. These observations imply the following: we may consistently truncate the diffeomorphism--invariant field theory for $\Gg+\Gamma$ at joint linear order in $\Gamma$ and $\varsigma$ if and only if we assume that the background $\Gg$ is on--shell, i.e. $\EL (\Gg)=0$\footnote{Strictly speaking $\cL_\varsigma \EL (\Gg)=0$ is satisfied even if $E_0=-\Box\varphi +  \partial_\varphi V=c$ with $c$ constant but non--zero. However, one may absorb $c$ by redefining $V(\varphi)$.}; for this reason we shall only consider on--shell backgrounds in the following. In this case, one may think of the linearised Einstein--Klein--Gordon theory as originating from the quadratic action
$$
S^{(2)}(\Gamma)\doteq-\frac12\left\langle\Gamma,P\Gamma\right\rangle_\cV\,,
$$
which is for all $\varsigma\in\Sec(\cW)$ invariant under the affine transformation
$$
\Gamma\mapsto \Gamma+\cL_\varsigma \Gg\,.
$$
Defining $K:\Se(\cW)\to\Se(\cV)$ by $K\varsigma \doteq\cL_\varsigma \Gg$, we may express this gauge--invariance as
$$
P\circ K=0\,.
$$
We recall that this automatically implies that the equation of motion $P\Gamma=0$ for the perturbations does not have a well--posed Cauchy problem as non--trivial solutions with compact support exist.

In order to cast the linearised Einstein--Klein--Gordon theory into a form which satisfies Definition \ref{def_lineargauge}, such that the results of Section \ref{sec:linear_gauge} may be readily applied, we introduce a field redefinition which generalises the trace--reversal well--known from linearised gravity (the appearing numerical factors are introduced in order to homogenise the normalisation of the $\gamma$--$\gamma$ and $\phi$--$\phi$ components of the principal symbol of $P$).
$$
\widebar {\,\cdot\,}:\Se(\cV)\mapsto \Se(\cV)\qquad \Gamma=\left(\begin{array}{c}\gamma_{\alpha\beta}\\\phi\end{array}\right)\mapsto\widebar {\Gamma}=\left(\begin{array}{c}\displaystyle\frac{1}{4}\left(\gamma_{\alpha\beta}-\frac12 g_{\alpha\beta}{\gamma}_{\mu}^{\phantom{c}\mu}\right)\\\displaystyle\phi\end{array}\right)
$$
Using this field redefinition and observing that it is invertible, we may now define 
$$
{\widebar {P}}\doteq P\circ \widebar {\,\cdot\,}^{-1}\qquad \widebar {K}\doteq\widebar {\,\cdot\,}\circ K\qquad \langle \cdot ,\cdot\rangle_{\widebar {\cV}}  \doteq \langle \widebar {\,\cdot\,}^{-1} ,\cdot\rangle_\cV\qquad \Theta\doteq\begin{pmatrix}\theta_{\alpha\beta}\\\phi\end{pmatrix}\doteq\widebar {\Gamma}\,.
$$
These definitions are tailored in such a way that the second order action for $\Gamma$ may now be re--written as
$$
S^{(2)}(\Gamma) = -\frac12 \langle \Gamma, P \Gamma\rangle_\cV=-\frac12\langle \Theta,{\widebar {P}}\Theta\rangle_{\widebar {\cV}} \doteq \widebar {S}^{(2)}(\Theta)\,.
$$
Moreover, $\widebar {S}^{(2)}(\Theta)$ is invariant under the affine transformation $\Theta\mapsto \Theta +\widebar K\varsigma$ for all $\varsigma\in\Sec(\cW)$, $\langle \cdot ,\cdot\rangle_{\widebar {\cV}}$ is symmetric and non--degenerate and ${\widebar {P}}$ is formally self--adjoint with respect to the redefined bilinear form. 

We now provide the expressions for ${\widebar {P}}$, $\widebar {K}$ and $\langle \cdot, \cdot\rangle_{\widebar {\cV}} $. As we always assume that the background metric and scalar field satisfy the full Einstein--Klein--Gordon equations, we have used these to simplify ${\widebar {P}}$.

\begin{equation}\label{eq_redefinedP}
\widebar {P}:\Se(\cV)\to\Se(\cV)\qquad \widebar {P}=\begin{pmatrix}\widebar {P}_0 & \widebar {P}_2\\\widebar {P}_3 & \widebar {P}_1\end{pmatrix}\end{equation}
%%%%
$$(\widebar {P}_0 \theta)_{\alpha\beta}=-\nabla_\mu\nabla^\mu \theta_{\alpha\beta}+2\nabla^\mu \nabla_{(\alpha } \theta_{\beta )\mu} - g_{\alpha\beta} \nabla^\mu\nabla^\nu \theta_{\mu\nu}-2 V\theta_{\alpha\beta}$$
%%%%
$$\left(\widebar {P}_2 \phi\right)_{\alpha\beta} = \left(-(\nabla_{(\alpha }\varphi)\nabla_{\beta )}+\frac{1}{2}g_{\alpha\beta}(\nabla^\mu \varphi)\nabla_\mu+\frac{\partial_\varphi V}{2}g_{\alpha\beta}\right)\phi$$ 
%%%%
$$\widebar {P}_3 \theta = 4\left(-\frac{1}{2}\partial_\varphi V g^{\alpha\beta}+\left(\nabla^\alpha\varphi\right)\nabla^\beta\right)\theta_{\alpha\beta}$$
 %%%%
$$\widebar {P}_1 \phi = \left(-\nabla_\alpha\nabla^\alpha+\partial^2_\varphi V\right)\phi\,.$$

\begin{equation}\label{eq_redefinedK}\widebar {K}:\Se(\cW)\to\Se(\cV)\qquad \widebar {K}\varsigma=\begin{pmatrix}\displaystyle\frac{1}{2}\left( \nabla_{(\alpha }\varsigma_{\beta )}-\frac12 g_{\alpha\beta} \nabla_\mu\varsigma^c\right)\\\displaystyle\varsigma^c\nabla_\mu\varphi\end{pmatrix}\end{equation}

\begin{equation}\label{eq_redefinedForm}\langle \Theta_1,\Theta_2\rangle_{\widebar {\cV}} =
\int\limits_M \vol \left(4{\theta_1}_{\alpha\beta}\theta_2^{\alpha\beta}-2{\theta_1}_{\mu}^{\phantom{\mu}\mu}{\theta_2}_{\nu}^{\phantom{\mu}\nu}+\phi_1\phi_2\right)\end{equation}

In order to satisfy the remaining conditions of Definition \ref{def_lineargauge}, we need to check whether $R\doteq \widebar {K}^\dagger\circ \widebar {K}$ is Cauchy--hyperbolic and whether there exists a `gauge--fixing operator' $T:\Se(\cW)\to\Se(\cV)$ such that the `gauge--fixing equation of motion operator' $\widetilde P \doteq \widebar{P}+T\circ \widebar{K}^\dagger$ is Green--hyperbolic and $Q\doteq \widebar {K}^\dagger\circ T$ is Cauchy--hyperbolic as well. It is not difficult to check that these conditions are satisfied by choosing $T\doteq 2\widebar{K}$, cf. \cite{chap3_ThomasInflation}. In particular, with this choice of $T$, $\widetilde P$ is of the form
\begin{equation}\label{eq_PTilde}
\widetilde{P}:\Se(\cV)\to\Se(\cV)\qquad \widetilde{P}=\begin{pmatrix}\widetilde{P}_0 & \widetilde{P}_2\\\widetilde{P}_3 & \widetilde{P}_1\end{pmatrix}
\end{equation}
%%%
$$(\widetilde{P}_0 \theta)_{\alpha\beta}=-\nabla_\mu\nabla^\mu \theta_{\alpha\beta}-2R_{\alpha\phantom{b}\beta}^{\phantom{a}\mu\phantom{a}\nu}\theta_{\mu\nu}+2(\nabla^\mu \varphi)(\nabla_{(\alpha} \varphi)\theta_{\beta)\mu}$$
%%%
$$(\widetilde{P}_2 \phi)_{\alpha\beta} =(\nabla_\alpha\nabla_\beta\varphi)\phi$$
$$\widetilde{P}_3 \theta = 4\left((\nabla^\alpha\nabla^\beta\varphi)-\frac{1}{2}\partial_\varphi V g^{\alpha\beta}\right)\theta_{\alpha\beta}$$
%%%
$$\widetilde{P}_1 \phi = \left(-\nabla_\alpha\nabla^\alpha+\partial^2_\phi V+\left(\nabla\varphi\right)^2\right)\phi\,.$$ 
Consequently, we have the following result.

\begin{proposition}
\label{prop_propLEKGS}The linearised Einstein--Klein--Gordon system defined by\\ $(M_\varphi,\cV,\cW,{\widebar {P}},\widebar {K})$, where 
\begin{itemize}
\item $M_\varphi\doteq(M,g,\varphi)$ with $(M,g)$ a four-dimensional globally hyperbolic spacetime and $(g,\varphi)$ a solution of the Einstein--Klein--Gordon equations \eqref{eq_fullcoupled}
\item $\cV\doteq \bigvee^2 T^*M \oplus \left(M\times \bR\right)$ and $\cW\doteq TM$ real vector bundles over $M$
\item the spaces of smooth sections $\Se(\cV)$ and $\Se(\cW)$ of $\cV$ and $\cW$ over $M$ are endowed with the bilinear forms $\langle\cdot,\cdot\rangle_{\widebar {\cV}} $ \eqref{eq_redefinedForm} and $\langle\cdot,\cdot\rangle_\cW$ \eqref{eq_formW}
\item ${\widebar {P}}$ is the differential operator ${\widebar {P}}:\Se(\cV)\to\Se(\cV)$ defined in \eqref{eq_redefinedP} and $\widebar {K}$ is the differential operator $\widebar {K}:\Se(\cW)\to\Se(\cV)$ defined in \eqref{eq_redefinedK}
\end{itemize}
satisfies the axioms of linear gauge theories in Definition \ref{def_lineargauge}:
\begin{enumerate}
\item $\langle\cdot,\cdot\rangle_{\widebar {\cV}} $ and $\langle\cdot,\cdot\rangle_\cW$ are symmetric and non--degenerate.
\item ${\widebar {P}}$ is formally selfadjoint with respect to $\langle\cdot,\cdot\rangle_{\widebar {\cV}} $ and satisfies ${\widebar {P}}\circ \widebar {K}=0$.
\item The differential operator $R:\Se(\cW)\to\Se(\cW)$, $R\doteq\widebar {K}^\dagger \circ \widebar {K}$, with $\widebar {K}^\dagger:\Se(\cV)\to\Se(\cW)$, $\langle \widebar {K}^\dagger\cdot,\cdot \rangle_\cW\doteq\langle \cdot,\widebar {K}\cdot \rangle_{\widebar {\cV}} $ is a multiple of a normally hyperbolic operator and thus has a well--posed Cauchy problem.
\item There exists a differential operator $T:\Se(\cW)\to\Se(\cV)$, e.g. $T=2\widebar {K}$, such that $\widetilde{P}:\Se(\cV)\to\Se(\cV)$, $\widetilde{P}\doteq {\widebar {P}} + T\circ \widebar {K}^\dagger$ \eqref{eq_PTilde} and $Q:\Se(\cW)\to\Se(\cW)$, $Q\doteq \widebar {K}^\dagger \circ T$ are normally hyperbolic and have a well--posed Cauchy problem.
\end{enumerate}
\end{proposition}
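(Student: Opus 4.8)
The plan is to verify the four axioms of Definition~\ref{def_lineargauge} in turn, exploiting the fact that the whole construction is obtained from the formally self--adjoint operator $P$ of \eqref{eq_originaleom} by conjugation with the invertible field redefinition $\widebar{\,\cdot\,}$.

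For the first two items I would avoid recomputing anything from scratch and instead transport structure along $\widebar{\,\cdot\,}$. Symmetry and non--degeneracy of $\langle\cdot,\cdot\rangle_\cW$ are immediate, since it is built from the non--degenerate Lorentzian metric contraction. For $\langle\cdot,\cdot\rangle_{\widebar{\cV}}$ I would read off symmetry directly from \eqref{eq_redefinedForm}, and establish non--degeneracy by noting that its fibrewise part on $\bigvee^2 T^*M$ is the contraction metric composed with $4$ times the trace reversal $\theta\mapsto\theta-\tfrac12 g\,\theta^\mu_\mu$; in four dimensions this trace reversal is an involution, hence invertible, and the scalar slot is manifestly non--degenerate. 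The key observation for item (2) is that $\widebar{\,\cdot\,}$ is itself formally self--adjoint with respect to $\langle\cdot,\cdot\rangle_\cV$ (both trace reversal and the scalar factor are), so with $B\doteq\widebar{\,\cdot\,}$, $\widebar P=PB^{-1}$ and $\langle\cdot,\cdot\rangle_{\widebar{\cV}}=\langle B^{-1}\cdot,\cdot\rangle_\cV$ one gets $\langle\Theta_1,\widebar P\Theta_2\rangle_{\widebar{\cV}}=\langle B^{-1}\Theta_1,PB^{-1}\Theta_2\rangle_\cV$, and two applications of self--adjointness (of $P$ and of $B^{-1}$) move $\widebar P$ to the first slot. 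Formal self--adjointness of $\widebar P$ follows, while $\widebar P\circ\widebar K=PB^{-1}BK=PK=0$ reduces to the gauge invariance $P\circ K=0$ already obtained from the diffeomorphism invariance of the action.

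The third item is a principal--symbol computation that I would carry out covariantly rather than by writing out $\widebar K^\dagger$. Since $R=\widebar K^\dagger\widebar K$ satisfies $\langle\varsigma,R\varsigma\rangle_\cW=\langle\widebar K\varsigma,\widebar K\varsigma\rangle_{\widebar{\cV}}$, its principal symbol is obtained by inserting the principal symbol of $\widebar K$, namely $\varsigma\mapsto\tfrac12(k_{(\alpha}\varsigma_{\beta)}-\tfrac12 g_{\alpha\beta}k_\mu\varsigma^\mu)$ in the tensor slot (the scalar component $\varsigma^\mu\nabla_\mu\varphi$ is of order zero and drops out), into the fibre form \eqref{eq_redefinedForm}. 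A short contraction yields $\tfrac12\,g^{\mu\nu}k_\mu k_\nu\,\varsigma_\rho\varsigma^\rho$, so $R=-\tfrac12\Box+\text{lower order}$ is $\tfrac12$ times a normally hyperbolic operator and has a well--posed Cauchy problem. Consequently $Q=\widebar K^\dagger\circ T=2\widebar K^\dagger\widebar K=2R$ is exactly normally hyperbolic, settling half of item (4) for free.

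The real content, and the step I expect to be the main obstacle, is showing that $\widetilde P=\widebar P+2\,\widebar K\,\widebar K^\dagger$ is normally hyperbolic. This is the gravitational analogue of de Donder gauge fixing: the operator $\widebar P_0$ of \eqref{eq_redefinedP} contains the second--order terms $2\nabla^\mu\nabla_{(\alpha}\theta_{\beta)\mu}-g_{\alpha\beta}\nabla^\mu\nabla^\nu\theta_{\mu\nu}$ whose symbol is not proportional to $g^{\mu\nu}k_\mu k_\nu$, and the gauge--fixing term $2\widebar K\widebar K^\dagger$ must reproduce exactly these terms with the opposite sign. I would therefore compute $\widebar K^\dagger$ explicitly from \eqref{eq_redefinedForm}, \eqref{eq_formW} and \eqref{eq_redefinedK}, form $2\widebar K\widebar K^\dagger$, and check at the level of the full (not merely principal) second--order part that the divergence and symmetrised--divergence terms cancel, leaving the displayed form \eqref{eq_PTilde}; here the background Einstein--Klein--Gordon equations \eqref{eq_fullcoupled} are essential to absorb the curvature and $\nabla\varphi$ contributions into lower--order coefficients. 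Once \eqref{eq_PTilde} is in hand the conclusion is immediate: the off--diagonal blocks $\widetilde P_2,\widetilde P_3$ and the coupling $\widetilde P_2\phi=(\nabla_\alpha\nabla_\beta\varphi)\phi$ are of order at most one, so the principal symbol of $\widetilde P$ is block--diagonal and equal to $g^{\mu\nu}k_\mu k_\nu$ times the identity on each factor, which is normal hyperbolicity. Normal hyperbolicity of $R$, $Q$ and $\widetilde P$ then yields the well--posed Cauchy problems and Green--hyperbolicity demanded by Definition~\ref{def_lineargauge} through the standard theory recalled in Theorem~\ref{thm_FundamentalSolutionScalar}.
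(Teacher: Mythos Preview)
Your proposal is correct and follows essentially the same route as the paper: the discussion preceding the proposition establishes formal self--adjointness of $P$ and $P\circ K=0$ from the variational origin and diffeomorphism invariance, transports these along the field redefinition to $\widebar P$ and $\widebar K$, and then simply exhibits $T=2\widebar K$ together with the explicit form \eqref{eq_PTilde}, deferring the verification of that formula and of the hyperbolicity of $R$ to \cite{chap3_ThomasInflation}. Your principal--symbol computation for $R$ and your plan to check \eqref{eq_PTilde} by computing $2\widebar K\widebar K^\dagger$ explicitly are exactly the details the paper leaves implicit.
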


We may now follow the construction reviewed in Section \ref{sec:linear_gauge} in order to construct a pre--symplectic space of gauge--invariant observables which may then be canonically quantized as outlined in Section \ref{sec:linear_quant}. To this end, we consider the spaces
$$
\wSol\doteq\{\Theta\in\Se(\cV)\,|\,\widebar P \Theta = 0\}\,,\qquad 
\wSolsc  \doteq\wSol\cap \Sesc(\cV)\,,
$$
$$
\wcG\doteq\widebar K\left[\Se(\cW)\right]\,,\qquad \wcGsc\doteq \wcG\cap \Sesc(\cV)\,,\qquad \wcGscc\doteq  \widebar K\left[\Sesc(\cW)\right]\,.
$$
The space of gauge--equivalence classes of solutions $\widebar\Sol/\widebar \cG$ may be interpreted as the space of pure states in the classical linearised Einstein--Klein--Gordon field theory. The space of (regular) linear functionals on $\widebar\Sol/\widebar \cG$ is parametrised by the labelling space
$$
\widebar\cE\doteq\kerc\left(\widebar K^\dagger \right)\left/\widebar P\left[\Sec(\cV)\right]\right.\,,\qquad \kerc\left(\widebar K^\dagger \right)\doteq\{h\in\Sec(\cV)\,|\,\widebar K^\dagger h=0\}.
$$
Indeed, the dual pairing 
$$
\widebar{\Sol}/\widebar{\cG}\times \widebar{\cE}\ni \left([\Theta],[f]\right)\mapsto \left\langle[\Theta],[f]\right\rangle\doteq\left\langle\Theta,f\right\rangle_{\widebar{\cV}}
$$
is by construction independent of the representatives and thus well--defined. $\widebar\cE$ thus conveniently parametrises linear and gauge--invariant local on--shell observables in the classical field theory. In order to endow $\widebar\cE$ with a pre--symplectic structure, we consider the causal propagator $E^{\widetilde P}$ of the normally hyperbolic gauge--fixed equation of motion operator $\widetilde P=\widebar P + T\circ \widebar K^\dagger$ and define a bilinear form $\widebar\tau$ on $\widebar\cE$ by
$$
\widebar\tau([f_1],[f_2])\doteq\left\langle f_1,E^{\widetilde P}f_2\right\rangle_{\widebar \cV}\,.
$$
Proposition \ref{prop_prop_G} and Proposition \ref{prop_presymplectic_gauge} then imply:

\begin{enumerate}
\item $\widebar\tau$ is independent of the representatives and thus well--defined.
\item $\widebar\tau$ is antisymmetric.
\item $\widebar\tau$ is independent of the gauge--fixing operator $T$ appearing in the definition of $\widetilde P$.
\item $(\widebar \cE,\widebar \tau)$ is a well--defined pre--symplectic space.
\end{enumerate}

An `equal--time version' of the covariant pre--symplectic space $(\widebar \cE,\widebar \tau)$ can be constructed by setting 
$$
\ip{\Theta_1,\Theta_2}_{\wSol}\doteq \ip{\widebar P\Theta_1^+,\Theta_2}_{\widebar \cV}\,,
$$
where $\Theta_1^+$ is the future part of $\Theta_1$. Theorem \ref{thm_proptau2} then implies that $(\wSolsc/\wcGscc,\ip{\cdot,\cdot}_\wSol)$ is a well--defined pre--symplectic space and that the causal propagator of $\widetilde P$  descends to an isomorphism 
$$E^{\widetilde P}:\left(\widebar \cE,\widebar \tau\right)\to\left(\wSolsc/\wcGscc,\ip{\cdot,\cdot}_\wSol\right)\,.$$
Moreover, given the current $\widebar {j}$
\begin{equation}
\label{eq_current}\widebar {j}:\Se(\cV)\times \Se(\cV)\to  T^*M
\end{equation}
\begin{align*}(\Theta_1, \Theta_2) \mapsto \widebar {j}_\alpha(\Theta_1,\Theta_2)\doteq &-4{\theta_1}^{\mu\nu}\nabla_\alpha{\theta_2}_{\mu\nu}+8{\theta_1}_{\alpha}^{\phantom{c}\nu}\nabla^\mu{\theta_2}_{\mu\nu}+2{\theta_1}_{\mu}^{\phantom{c}\mu}\nabla_\alpha{\theta_2}_{\nu}^{\phantom{c}\nu}-\\
&-\phi_1\nabla_\alpha\phi_2+4\phi_1 \left(\nabla^\mu\varphi\right){\theta_2}_{\alpha\mu}-\text{`}1\leftrightarrow 2\text{'}\end{align*}
whose covariant divergence satisfies
\begin{equation}\label{eq_currentconservation}
 \nabla^\alpha\widebar {j}_\alpha(\Theta_1,\Theta_2)=\left\langle\left\langle \Theta_1,{\widebar {P}}\Theta_2\right\rangle\right\rangle_{\widebar {\cV}} -\left\langle\left\langle {\widebar {P}}\Theta_1,\Theta_2\right\rangle\right\rangle_{\widebar {\cV}}\,, 
\end{equation}
where $\ipp{\cdot,\cdot}_{\widebar {\cV}}$ denotes the integrand of $\ip{\cdot,\cdot}_{\widebar {\cV}}$, Theorem \ref{thm_proptau2} further implies that
$$
\ip{\Theta_1,\Theta_2}_{\wSol}=\int\limits_\Sigma d\Sigma\; N^\alpha \widebar j_\alpha(\Theta_1,\Theta_2)
$$
for an arbitrary but fixed Cauchy surface $\Sigma$ with future pointing unit normal vector $N$.

The field redefinition $\widebar\cdot$ was helpful for uncovering important structural properties of the linearised Einstein--Klein--Gordon system $(M_\varphi, \cV, \cW, \widebar P, \widebar K)$, and to show that it fits into the axioms listed in Definition \ref{def_lineargauge}. The original system $(M_\varphi, \cV, \cW, P,  K)$ does not directly satisfy these axioms because e.g. the operator $K^\dagger \circ K$ does not have a well--posed Cauchy problem. Yet, the field redefinition is merely a computational trick and not of physical significance. To see this explicitly, we consider the spaces related to the original system $(M_\varphi, \cV, \cW, P,  K)$:
$$
\Sol\doteq\{\Gamma\in\Se(\cV)\,|\, P \Gamma = 0\}\,,\qquad 
\Solsc  \doteq\Sol\cap \Sesc(\cV)\,,
$$
$$
\cG\doteq K\left[\Se(\cW)\right]\,,\qquad \cGsc\doteq \cG\cap \Sesc(\cV)\,,\qquad \cGscc\doteq   K\left[\Sesc(\cW)\right]\,.
$$
$$
\cE\doteq\kerc\left(K^\dagger \right)\left/P\left[\Sec(\cV)\right]\right.\,,\qquad \kerc\left( K^\dagger \right)\doteq\{h\in\Sec(\cV)\,|\, K^\dagger h=0\}.
$$
It is not difficult to see that $\widebar\cE=\cE$ and as before, we may now observe that $\cE$ parametrises  linear functionals on $\Sol/\cG$. Due to
$$
\langle \Gamma,h\rangle_\cV=\langle \widebar\Gamma,h\rangle_{\widebar\cV}
$$
the physical interpretation of these linear functionals in terms of local observables is manifestly independent of the field redefinition. Finally, it follows from the previous discussion that 
$$
\widetilde P\circ \widebar{\,\cdot\,}=P+T\circ K^\dagger\circ \widebar{\,\cdot\,}
$$
is not normally hyperbolic, but still Green--hyperbolic. The causal propagator of $\widetilde P\circ \widebar{\,\cdot\,}$ is $\widebar{\,\cdot\,}^{\,-1}\circ E^{\widetilde P}$ and the induced bilinear form on $\cE$
$$
\tau([h_1],[h_2])\doteq\left\langle h_1,\left[\widebar{\,\cdot\,}^{\,-1}\circ E_{\widetilde P}\right]h_2\right\rangle_\cV
$$
manifestly equals $\widebar\tau$. Similarly, $(\Solsc/\cGscc,\ip{\cdot,\cdot}_\Sol)$ with 
$$
\ip{\Theta_1,\Theta_2}_{\Sol}\doteq \ip{P\Theta_1^+,\Theta_2}_{\cV}\,,
$$
is a pre--symplectic space which is isomorphic to $(\wSolsc/\wcGscc,\ip{\cdot,\cdot}_\wSol)$ via the field redefinition map. Consequently, we have the isomorphisms
$$
\left(\wSolsc/\wcGscc,\ip{\cdot,\cdot}_\wSol\right)\simeq \left(\widebar \cE,\widebar \tau\right)=\left(\cE,\tau\right)\simeq \left(\Solsc/\cGscc,\ip{\cdot,\cdot}_\Sol\right)\,.
$$
These considerations in classical field theory clearly carry over to the quantum theory constructed as in Section \ref{sec:linear_quant} and the algebras of quantum observables induced by the pre--symplectic spaces $\left(\cE,\tau\right)$ and $\left(\widebar\cE,\widebar\tau\right)$ coincide.

For the analysis of perturbations in Inflation it is useful to observe that, for an arbitrary but fixed Cauchy surface $\Sigma$ with future pointing unit normal vector $N$,
$$
\ip{\Gamma_1,\Gamma_2}_{\Sol}=\int\limits_\Sigma d\Sigma\; N^\alpha j_\alpha(\Gamma_1, \Gamma_2)
$$
with the current $j$
\begin{equation}
\label{eq_currentorig}j:\Se(\cV)\times \Se(\cV)\to  T^*M
\end{equation}
\begin{align*}j_\alpha(\Gamma_1, \Gamma_2)\doteq &-\frac{1}{4}{\gamma_1}^{\mu\nu}\nabla_\alpha {\gamma_2}_{\mu\nu}+\frac{1}{2}{\gamma_1}_{\alpha}^{\phantom{c}\nu}\nabla^\mu {\gamma_2}_{\mu\nu}+\frac{1}{4}{\gamma_1}_{\nu}^{\phantom{c}\nu}\nabla_\alpha {\gamma_2}_{\mu}^{\phantom{c}\mu}-\frac{1}{4}{\gamma_1}_{\nu}^{\phantom{c}\nu}\nabla_\mu{\gamma_2}_{\alpha}^{\phantom{c}\mu}\notag\\
&-\frac{1}{4}{\gamma_1}_{\alpha\mu}\nabla^\mu {\gamma_2}_{\nu}^{\phantom{c}\nu}-\phi_1\nabla_\alpha \phi_2+\frac{1}{2}{\gamma_1}_{\nu}^{\phantom{c}\nu}\left(\nabla_\alpha \varphi\right)\phi_2+\phi_1 \left(\nabla^\mu \varphi\right){\gamma_2}_{\alpha\mu}\\
&-\text{`}1\leftrightarrow 2\text{'}\end{align*}
whose covariant divergence satisfies
\begin{equation}\label{eq_currentconservationorig}
 \nabla^\alpha{j}_\alpha(\Gamma_1,\Gamma_2)=\left\langle\left\langle \Gamma_1,{P}\Gamma_2\right\rangle\right\rangle_{ {\cV}} -\left\langle\left\langle {P}\Gamma_1,\Gamma_2\right\rangle\right\rangle_{ {\cV}}\,, 
\end{equation}
where $\ipp{\cdot,\cdot}_{ {\cV}}$ denotes the integrand of $\ip{\cdot,\cdot}_{ {\cV}}$. This follows from the identity $j(\Gamma_1,\Gamma_2)=\widebar j(\widebar\Gamma_1,\widebar\Gamma_2)$.

\subsection{The Standard Approach to Perturbations in Inflation and Comparison of Approaches}
\label{sec:inflation_standard}

After discussing the gauge--invariant quantization of the linearised Einstein--Klein--Gordon system on general on--shell backgrounds $M_\varphi=(M,g,\varphi)$, we consider the special case of FLRW--type on--shell backgrounds with flat spatial sections, which is the field--theoretic model of perturbations in Inflation. We recall that these backgrounds are characterised by
$$
g=a(\eta)^2(-d\eta^2+d{\vec{x}}^2)\,,\qquad \varphi(\eta,{\vec{x}})=\varphi(\eta)\,.
$$
Two reoccurring important quantities are 
$$
\Hc\doteq\frac{a^\prime}{a}=aH\,, \qquad z\doteq\frac{a\varphi^\prime}{\Hc}
$$
and we shall indicate spatial indices in the homogeneous and isotropic (i.e. comoving) FLRW--coordinates by Latin letters $i, j, k, \ldots$. We shall use the convention that these indices will be raised and lowered by means of the Euclidean metric $\delta_{ij}$ rather than by means of the induced metric $a(\eta)^2\delta_{ij}$. For the discussion of perturbations in Inflation it is convenient to work with the original formulation $(M_\varphi, \cV, \cW, P,  K)$ of the linearised Einstein--Klein--Gordon system rather than its redefined version $(M_\varphi, \cV, \cW, \widebar P,  \widebar K)$.

In the following discussion, a special role is played by sections which vanish at spatial infinity (with all derivatives).
$$
\Seinf(\cV)\doteq\{\Gamma\in\Se(\cV)\,|\,\partial_{i_1}\cdots\partial_{i_n}\Gamma(\eta,{\vec{x}}) \text{ vanishes for $|{\vec{x}}|\to\infty$ for all $n\in\NN_0$}\}
$$
$$
\Seinf(\cW)\doteq\{\varsigma\in\Se(\cW)\,|\,\partial_{i_1}\cdots\partial_{i_n}\varsigma(\eta,{\vec{x}}) \text{ vanishes for $|{\vec{x}}|\to\infty$ for all $n\in\NN_0$}\}
$$
Namely, one can uniquely decompose $\Gamma=(\gamma_{\alpha\beta}, \phi)^t\in \Seinf(\cV)$ as
$$\gamma_{\alpha\beta}=a(\eta)^2\left(\begin{array}{cc}
2 A & \left(\partial_i B-V_i\right)^t\\
\;\partial_i B-V_i\; & -2\left(\partial_i\partial_j C
+\delta_{ij}D+\partial_{(i} W_{j)}+T_{ij}\right)
\end{array}\right)
$$
$$
A,B,C,D\in C_\infty^\infty(M,\RR)\,,\quad V, W\in  C_\infty^\infty(M,\RR^3)\,,\quad \partial^i V_i=\partial^i W_i=0
$$
$$
T\in C_\infty^\infty(M,{\textstyle\bigvee^2}\RR^3)\,,\quad T_i^i=0\,,\quad \partial^i T_{ij}=0\,.
$$
The components $B, C, D, W_i$ are solutions of certain Poisson equations, e.g. 
$$
\Delta B = \frac{\partial^i\gamma_{0i}}{a^2}\,,\qquad \Delta = \partial_i \partial^i
$$
\index{Delta@$\Delta$, Laplace operator on $\bR^3$}and the uniqueness of the above decomposition results from the unique solvability of such equations under the assumption that the solutions vanish at infinity. For the same reason, such a decomposition for general $\Gamma\in\Se(\cV)$ can only be unique up to harmonic functions. In fact, this non--uniqueness is a non--trivial obstacle for proving that a decomposition which is smooth in $\eta$ exists in general, though we presume that this is the case. Notwithstanding, we shall only need the existence and uniqueness of the decomposition for $\Seinf(\cV)$ in the following. 

Owing to their transformation properties under the Euclidean group ${\mathbb E}^3$, the components $A,B,C,D,\phi$ of a section in $\Gamma\in\Seinf(\cV)$ are called `scalar', $V_i, W_j$ are called `vector' and $T_{ij}$ are called `tensor' and similarly for the components of $\varsigma\in\Seinf(\cW)$. Following this nomenclature, we say that $\Gamma = (\gamma_{\alpha\beta},\phi)^t\in \Se(\cV)$ ...
\begin{itemize}
\item[...] is of scalar type if $\gamma_{\alpha\beta}$ can be decomposed as above with $V_i=W_i=T_{ij}=0$.
\item[...] is of vector type if $\phi=0$ and $\gamma_{\alpha\beta}$ can be decomposed as above with $A=B=C=D=T_{ij}=0$.
\item[...] is of tensor type if $\phi=0$ and $\gamma_{\alpha\beta}$ can be decomposed as above with $A=B=C=D=V_i=W_i=0$.
\end{itemize}
Based on this we define the following section spaces.
$$
\Se^{S/V/T}(\cV)\doteq\{\Gamma\in\Se(\cV)\,|\,\Gamma \text{ is of scalar/vector/tensor type}\}
$$
$$
\Se^{S/V/T}_{\infty/0}(\cV)\doteq \Se_{\infty/0}(\cV)\cap\Se^{S/V/T}(\cV)
$$
The existence and uniqueness of the decomposition of sections vanishing at spatial infinity may be cast in the following form
$$
\Seinf(\cV)=\Seinf^S(\cV)\oplus\Seinf^V(\cV)\oplus\Seinf^T(\cV)
$$
and one may check that the decomposition is orthogonal w.r.t. $\langle\cdot,\cdot\rangle_\cV$ and $\langle\cdot,\cdot\rangle_\cW$. However, the decomposition is non--local in space, and thus one has
$$
\Sec^S(\cV)\oplus\Sec^V(\cV)\oplus\Sec^T(\cV)\subsetneq\Sec(\cV)\,!
$$
Similar decomposition results hold for $\varsigma\in\Seinf(\cW)$ and we extend the above nomenclature to this case in the obvious way. 

By the existence and uniqueness of the decomposition, the individual components induce well--defined functionals on $\Seinf(\cV)$ and $\Seinf(\cW)$, e.g. $A:\Seinf(\cV)\to C^\infty_\infty(M,\RR)$. This existence and uniqueness further implies that there exist projectors 
$$
\cP^{S/V/T}_\cV:\Seinf(\cV)\to\Seinf^{S/V/T}(\cV)\qquad\text{and}\qquad \cP^{S/V}_\cW:\Seinf(\cW)\to\Seinf^{S/V}(\cW)
$$ 
which are formally selfadjoint w.r.t. $\langle\cdot,\cdot\rangle_\cV$ and $\langle\cdot,\cdot\rangle_\cW$. It is not difficult to check that the gauge transformation operator $K$ and the equation of motion operator $P$ commute with these projectors, i.e.
$$
\cP_\cV^{S/V}\circ K|_{\Seinf(\cW)}=K\circ \cP_\cW^{S/V}\,,\qquad \cP_\cV^{T}\circ K|_{\Seinf(\cW)}=0\,,
$$
$$
\cP^{S/V/T}_\cV\circ P|_{\Seinf(\cV)}=P\circ \cP^{S/V/T}_\cV\,.
$$
Thus the equations of motions and gauge transformations decouple for sections which vanish at spatial infinity and we may consider subspaces 
$$
\Sol^{S/V/T}_\infty\subset \Sol_\infty \qquad\text{and}\qquad\cG^{S/V}_\infty\subset \cG_\infty
$$
which are defined in the obvious way.

The decomposed equations of motion may be expressed in terms of gauge--invariant linear combinations of the decomposition components, i.e. in terms of functionals on $\Sol_\infty/\cG_\infty$.
$$
\Psi\doteq A-(\partial_\eta + \Hc)(B+C^\prime)\qquad\Phi\doteq D-\Hc(B+C^\prime)
$$
$$
\chi\doteq \phi - \varphi^\prime(B+E^\prime)\qquad \mu\doteq-\frac{z}{a}\Phi+\chi
$$
$$
X_i\doteq W_i^\prime-V_i\qquad T_{ij}
$$
$\Psi$ and $\Phi$ are the so-called {\em Bardeen potentials}\index{Bardeen potentials} whereas $\mu$ is the {\em Mukhanov--Sasaki variable}\index{Mukhanov--Sasaki variable} and is of particular physical significance because it is related to the perturbation of the scalar curvature of the spatial slices. In terms of these gauge--invariant quantities, the equations of motion $P\Gamma=0$ for $\Gamma\in\Seinf(\cV)$ read (see \cite{chap3_ThomasInflation} for details):

\begin{itemize}
\item[] scalar:
$$
P^\mu\mu\doteq\left(\nabla_\mu\nabla^\mu + \frac{R}{6}-\frac{z^{\prime\prime}}{z a^2}\right)\mu=0 
$$
$$
\Psi=\frac{\cH}{2a^2 \alpha}\left(\int\limits^\eta_{\eta_0}d\tilde\eta\; a z\mu+\lambda_0\right),\qquad\Phi=-\Psi\,,\qquad \chi=\frac{2}{ \varphi^\prime}(\partial_\eta+\cH)\Psi\,,
$$
where $\eta_0$ is arbitrary and $\lambda_0$ is the unique solution of
$$
\left.\Delta\lambda_0=az\left(\mu^\prime+\left(\frac{\cH^\prime}{\cH}-\frac{\varphi^{\prime\prime}}{\varphi^\prime}\right)\mu\right)\right\vert_{\eta=\eta_0}.
$$
\item[] vector:
$$
\Delta X_i=0\,,\qquad (\partial_\eta+2\Hc)X_i=0
$$
\item[] tensor:
$$
P^TT_{ij}\doteq\frac{1}{a^2}\left((\partial_\eta+2\Hc)\partial_\eta- \Delta \right)T_{ij}=0
$$
\end{itemize}

We see that $\mu$ is a conformally coupled scalar field with a particular time--dependent mass and that $\Psi$, $\Phi$ and $\chi$ are non--local functionals of $\mu$, whereas no non--trivial vector solutions vanishing at spatial infinity exist. Moreover, the tensor components $T_{ij}$ satisfy a normally hyperbolic equation. The last statement may have been deduced directly from the previous analysis by observing that the field redefinition $\widebar{\,\cdot\,}$ acts trivially on $\Seinf^T(\cV)$ and that $\Seinf^T(\cV)$ lies in the kernel of $K^\dagger$. Thus $P$ restricted to $\Seinf^T(\cV)$ coincides with the normally hyperbolic $\widetilde P=P\circ \widebar{\,\cdot\,}^{\,-1}+T\circ K^\dagger$.

The standard treatment of the quantum theory of perturbations in Inflation may be rephrased in the present context as follows. One considers the symplectic spaces $(\cE^\mu,\tau^\mu)$ 
$$
\cE^\mu\doteq C^\infty_0(M,\RR)/P^\mu\left[C^\infty_0(M,\RR)\right]
$$
$$
\cE^\mu\times\cE^\mu\ni ([f_1],[f_2])\mapsto \tau^\mu([f_1],[f_2])\doteq\langle f_1, E^\mu f_2\rangle
$$
$$
E^\mu \text{ causal propagator of $P^\mu$}
$$
and $(\cE^{TT},\tau^T)$ 
$$
\cE^{TT} \doteq C^\infty_0(M,T)/P^T\left[C^\infty_0(M,T)\right]
$$
$$
C^\infty(M,T)\doteq \{T\in C^\infty(M,{\textstyle\bigvee^2}\RR^3)\,|\, T_i^i=0,  \partial^i T_{ij}=0\}
$$
$$
\cE^{TT}\times\cE^{TT}\ni ([f_1],[f_2])\mapsto \tau^T([f_1],[f_2])\doteq\langle f_1, E^T f_2\rangle_T
$$
$$
E^T \text{ causal propagator of $P^T$}
$$
$$
 C^\infty(M,T)^2\ni(f_1,f_2)\mapsto \langle f_1,f_2\rangle_T\doteq\int\limits_M \vol \delta^{ik}\delta^{jl}f_{1,ij} f_{2,kl}\,.
$$
$P^\mu$ and $P^T$ are both formally self--adjoint and one may show that $\langle \cdot,\cdot\rangle_T$ is non--degenerate on $C_\infty^\infty(M,T)$ \cite{chap3_ThomasInflation}. Thus $\tau^\mu$ and $\tau^T$ are antisymmetric and non--degenerate by standard results reviewed in Section \ref{sec_generalnongauge}.

In the standard treatment of perturbations in Inflation one effectively assumes that all gauge--invariant (polynomial) local observables in the quantum theory are spanned by the local observables obtained from the canonical quantization of $(\cE^\mu,\tau^\mu)$ and $(\cE^{TT},\tau^T)$. On the other hand, one may take the point of view that the construction outlined in the previous discussion of the linearised Einstein--Klein--Gordon system, i.e. the canonical quantization of the pre--symplectic space $(\cE,\tau)$ should yield all (polynomial) local gauge--invariant observables. In order to compare these two approaches, we define
$$
\kerc^{S/V/T}\left(K^\dagger\right)\doteq\kerc\left(K^\dagger\right)\cap \Se^{S/V/T}(\cV)
$$
$$
\cE^{S/V/T}\doteq\kerc^{S/V/T}\left(K^\dagger\right)\left/P\left[\Sec^{S/V/T}(\cV)\right]\right.\,.
$$ 
One may then prove the following results \cite{chap3_ThomasInflation}.
\begin{theorem}
The following relations hold.
\begin{enumerate}
\item $\cE^{S/V/T}\subset \cE$ .
\item $\cE^V=\{0\}$.
\item $\left(\cE^S,\tau\right)$ and $\left(\cE^\mu,\tau^\mu\right)$ are isomorphic.
\item $\left(\cE^T,\tau\right)$ and $\left(\cE^{TT},\tau^T\right)$ are isomorphic. 
\item $\cE^S\oplus\cE^T\subsetneq \cE$.
\item $\cE^S\oplus\cE^T$ is separating on $\Sol_\infty/\cG_\infty=\Sol^S_\infty/\cG^S_\infty\oplus \Sol^T_\infty/\cG^T_\infty$.
\item $\tau$ is non--degenerate on $\cE$.
\end{enumerate}
\end{theorem}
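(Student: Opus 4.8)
The plan is to prove the seven-part theorem by systematically exploiting the orthogonal decomposition of sections vanishing at spatial infinity into scalar, vector, and tensor types, and by leveraging the machinery developed for general linear gauge theories in Section \ref{sec:linear_gauge}. The backbone of every statement is the fact, already established, that both $K$ and $P$ commute with the projectors $\cP^{S/V/T}_\cV$ and $\cP^{S/V}_\cW$, together with the explicit decoupled equations of motion for $\mu$, $\Psi$, $\Phi$, $\chi$, the vector quantities $X_i$, and the tensor field $T_{ij}$.

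For the inclusions and the elementary identifications, I would proceed as follows. Part (1), $\cE^{S/V/T}\subset\cE$, is immediate from the definitions: $\kerc^{S/V/T}(K^\dagger)\subset\kerc(K^\dagger)$ and $P[\Sec^{S/V/T}(\cV)]\subset P[\Sec(\cV)]$, so the quotient maps inject; one needs only verify that the subtype structure is compatible with taking equivalence classes, which follows from the commutation of $P$ with the projectors. Part (2), $\cE^V=\{0\}$, I would read off directly from the decoupled vector equations $\Delta X_i=0$ and $(\partial_\eta+2\Hc)X_i=0$: these force the gauge-invariant vector combination to vanish on $\Sol_\infty$, so every vector-type test section in $\kerc^V(K^\dagger)$ lies in $P[\Sec^V(\cV)]$ up to the usual arguments. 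Parts (3) and (4) are the heart of the ``bottom-up equals top-down'' comparison: I would construct explicit isomorphisms $(\cE^S,\tau)\cong(\cE^\mu,\tau^\mu)$ and $(\cE^T,\tau)\cong(\cE^{TT},\tau^T)$ by mapping a gauge-invariant test section to the test function pairing against the Mukhanov--Sasaki variable $\mu$ (respectively the tensor field $T_{ij}$), using the already-noted facts that $P$ restricted to $\Seinf^T(\cV)$ coincides with the normally hyperbolic $\widetilde P$ and that $\Seinf^T(\cV)\subset\ker K^\dagger$. The preservation of the symplectic forms then reduces, via Theorem \ref{thm_proptau2} and the current formula \eqref{eq_currentorig}, to matching the Cauchy-surface integrals, where the scalar case requires carefully tracking the non-local relations expressing $\Psi,\Phi,\chi$ in terms of $\mu$.

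For the strict inclusion (5) and the separation property (6), I would argue that $\cE^S\oplus\cE^T$ is built only from test sections of pure scalar or tensor type with \emph{compact} support, whereas the decomposition $\Sec^S(\cV)\oplus\Sec^V(\cV)\oplus\Sec^T(\cV)\subsetneq\Sec(\cV)$ is strict precisely because the projectors are spatially non-local and do not preserve compact support. Thus there exist compactly supported gauge-invariant test sections in $\cE$ that are not sums of compactly supported pure-type sections, giving (5). Statement (6) then says that, despite this, the pure-type observables suffice to separate gauge-equivalence classes of \emph{solutions vanishing at spatial infinity}: here I would use the orthogonal (and for $\Seinf$, unique) decomposition $\Sol_\infty/\cG_\infty=\Sol^S_\infty/\cG^S_\infty\oplus\Sol^T_\infty/\cG^T_\infty$ and show that if a solution pairs to zero against all of $\cE^S\oplus\cE^T$, then each of its scalar and tensor gauge-invariant components vanishes, invoking the non-degeneracy of $\tau^\mu$ and $\tau^T$ transported through the isomorphisms of (3)--(4).

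The final and genuinely hardest step is (7), the non-degeneracy of $\tau$ on the full $\cE$. By Theorem \ref{thm_proptau2}(7), $\tau$ is degenerate exactly when $\wcGscc\subsetneq\wcGsc$, so I would either verify the equality $\wcGscc=\wcGsc$ directly for this model, or argue non-degeneracy intrinsically. The main obstacle is that the separation result (6) only controls observables on the subspace $\Sol_\infty/\cG_\infty$ of configurations vanishing at spatial infinity, whereas non-degeneracy of $\tau$ must be tested against \emph{all} of $\Solsc/\cGscc$, including modes that do not fall off spatially; the scalar/tensor decomposition is unavailable outside $\Seinf$. I expect to bridge this gap by combining Theorem \ref{prop_prop_G}, which characterizes $\kerc(K^\dagger)\cap\{E^{\widetilde P}h\in\cGscc\}$ as exactly $P[\Sec(\cV)]$, with the explicit Green-hyperbolicity of $\widetilde P$: if $\tau([f],\cE)=0$ for some $f\in\kerc(K^\dagger)$, then $E^{\widetilde P}f$ is gauge-equivalent to zero in $\Solsc/\cGscc$, and I would show $E^{\widetilde P}f\in\cGscc$ forces $f\in P[\Sec(\cV)]$, i.e. $[f]=[0]$. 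Making this argument rigorous for the coupled metric-scalar system — in particular confirming that spacelike-compact pure-gauge solutions are always induced by spacelike-compact gauge parameters in this specific $(M_\varphi,\cV,\cW,\widebar P,\widebar K)$ setting — is where the real work lies, and I would expect to lean on the concrete normally hyperbolic form of $\widetilde P$ in \eqref{eq_PTilde} and the well-posed Cauchy problems guaranteed by Proposition \ref{prop_propLEKGS}.
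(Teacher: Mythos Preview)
Your overall plan for parts (1)--(6) matches the paper's own sketch quite closely. The paper itself gives no detailed proof here, remarking only that ``the proofs of most of these statements are unfortunately quite cumbersome'' and that one must analyse how the solution theory of hyperbolic operators and Theorem~\ref{thm_proptau2} intertwine with the scalar/vector/tensor decomposition; for (3) and (4) it says the result ``follow[s] essentially from the fact that the equal--time versions of the compared symplectic spaces are isomorphic because the corresponding symplectic forms are given by Cauchy surface--integrals of (normal components of) `conserved currents' which can be seen to match by explicit computations.'' Your strategy of passing to the equal--time picture via Theorem~\ref{thm_proptau2} and the current \eqref{eq_currentorig}, and then matching against the currents for $\mu$ and $T_{ij}$, is exactly this. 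Your reasoning for (5) via the spatial non--locality of the projectors, and for (6) via the isomorphisms of (3)--(4) together with the non--degeneracy of $\tau^\mu$ and $\tau^T$, is also the natural route.

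For part (7), however, your argument has a genuine logical gap. You write that ``by Theorem~\ref{thm_proptau2}(7), $\tau$ is degenerate exactly when $\wcGscc\subsetneq\wcGsc$,'' but that theorem states only the implication $\wcGscc\subsetneq\wcGsc\Rightarrow\tau$ degenerate, not the converse. Hence verifying $\wcGscc=\wcGsc$ is \emph{not} by itself sufficient to conclude non--degeneracy. Your alternative route via Theorem~\ref{prop_prop_G} has a related circularity: from $\tau([f],\cE)=0$ you obtain only that $E^{\widetilde P}f$ lies in the \emph{radical} of $\langle\cdot,\cdot\rangle_{\wSol}$ on $\wSolsc/\wcGscc$, not that $E^{\widetilde P}f$ is gauge--equivalent to zero. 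Passing from ``in the radical'' to ``in $\wcGscc$'' is precisely the non--degeneracy statement you are trying to prove, so invoking Theorem~\ref{prop_prop_G}(1) at that point begs the question. What is actually required is a direct argument that any $\Theta\in\wSolsc$ with $\langle\Theta,\Psi\rangle_{\wSol}=0$ for all $\Psi\in\wSolsc$ must lie in $\wcGscc$; this is a model--specific computation on the FLRW background, not a consequence of the general gauge--theory machinery, and the paper defers it entirely to \cite{chap3_ThomasInflation}.
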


The proofs of most of these statements are unfortunately quite cumbersome. In particular, one needs to analyse how the solution theory of hyperbolic operators and the results of Theorem \ref{thm_proptau2} intertwine with the decomposition of perturbations into scalar, vector and tensor contributions. Once this is understood, the third and fourth statement e.g. follow essentially from the fact that the equal--time versions of the compared symplectic spaces are isomorphic because the corresponding symplectic forms are given by Cauchy surface--integrals of (normal components of) `conserved currents' which can be seen to match by explicit computations.

 Even if the proof of the above theorem is rather lengthy, the statements have a straightforward physical interpretation, which is valid both for the classical and the quantum theory of perturbations in Inflation as should be clear from the discussion of the quantization of pre--symplectic spaces in Section \ref{sec:linear_quant}. The first statement implies that there are local observables which  may be meaningfully classified as `scalar' and `tensor'. This may seem surprising in view of the fact that the decomposition of configurations in scalar/vector/tensor components is a priori non--local. However the second statement entails that there are indeed no non--trivial local `vector' observables. From the third and fourth statement one can infer that the standard treatment of perturbations in Inflation captures the same local scalar and tensor observables that one obtains from the general gauge--invariant quantization of the linearised Einstein--Klein--Gordon system. An interesting result found in \cite{chap3_Eltzner:2013soa} implies that the scalar field $\mu$ is in fact the unique field with a normally hyperbolic equation of motion whose associated symplectic space is equivalent to $\left(\cE^S,\tau\right)$.

However, statement 5. implies that not all local observables of the linearised Einstein--Klein--Gordon system are spanned by local observables of scalar and tensor type. In this sense, the standard approach to the quantization of perturbations in inflation `misses' some local observables. However, the sixth statement entails that the observables captured in the standard approach are still sufficient to measure configurations of the perturbations which vanish at spatial infinity. These configurations are considered to be `small' in a certain sense. Presumably this statement can be generalised by proving that local observables of scalar and tensor type separate quantum states whose correlation functions vanish at spatial infinity in each argument.

Finally, the last statement is somewhat independent of the others and may be interpreted such as to say that the quantum theory of the linearised Einstein--Klein--Gordon system on FLRW backgrounds does not contain non--trivial classical observables.

%\begin{figure}[b] %[t]
%\sidecaption %[t]
%% Use the relevant command for your figure-insertion program
%% to insert the figure file.
%% For example, with the option graphics use
%\includegraphics[scale=.65]{figure}
%%
%% If not, use
%%\picplace{5cm}{2cm} % Give the correct figure height and width in cm
%%
%\caption{If the width of the figure is less than 7.8 cm use the \texttt{sidecapion} command to flush the caption on the left side of the page. If the figure is positioned at the top of the page, align the sidecaption with the top of the figure -- to achieve this you simply need to use the optional argument \texttt{[t]} with the \texttt{sidecaption} command}
%\label{fig:1} 
%\end{figure}

%\paragraph{Paragraph Heading} %

%\begin{proof}
%\smartqed
%Proof text goes here.
%\qed
%\end{proof}

%%
%\begin{acknowledgement}
%If you want to include acknowledgments of assistance and the like at the end of an individual chapter please use the \verb|acknowledgement| environment -- it will automatically render Springer's preferred layout.
%\end{acknowledgement}
%%
%\section*{Appendix}
%\addcontentsline{toc}{section}{Appendix}
%%
%When placed at the end of a chapter or contribution (as opposed to at the end of the book), the numbering of tables, figures, and equations in the appendix section continues on from that in the main text. Hence please \textit{do not} use the \verb|appendix| command when writing an appendix at the end of your chapter or contribution. If there is only one the appendix is designated ``Appendix'', or ``Appendix 1'', or ``Appendix 2'', etc. if there is more than one.

%\include{appendix}

\backmatter%%%%%%%%%%%%%%%%%%%%%%%%%%%%%%%%%%%%%%%%%%%%%%%%%%%%%%%
\printindex

\end{document}